\newtheorem{theorem}{Theorem}
\newtheorem{result}{Result}
\newtheorem{lemma}[theorem]{Lemma}
\newtheorem{corollary}[theorem]{Corollary}
\newtheorem{conjecture}[theorem]{Conjecture}
\newtheorem{proposition}[theorem]{Proposition}
\newtheorem{definition}[theorem]{Definition}
\newtheorem{example}[theorem]{Example}
\newtheorem*{algo*}{Algorithm}
\newenvironment{algoboxed}
 	{\begin{mdframed}\begin{algo*}}
 	{\end{algo*}\end{mdframed}}
\newcommand{\mN}{\ensuremath{\mathcal{N}}}
\newcommand{\mV}{\ensuremath{\mathcal{V}}}
\newcommand{\mW}{\ensuremath{\mathcal{W}}}
\newcommand{\mH}{\ensuremath{\mathcal{H}}}
\newcommand{\bz}{\ensuremath{\mathbf{z}}}
\newcommand{\bd}{\ensuremath{\mathbf{d}}}
\newcommand{\bw}{\ensuremath{\mathbf{w}}}
\newcommand{\ba}{\ensuremath{\mathbf{a}}}
\newcommand{\bb}{\ensuremath{\mathbf{b}}}
\newcommand{\bU}{\ensuremath{\mathbf{U}}}
\newcommand{\bx}{\ensuremath{\mathbf{x}}}
\newcommand{\by}{\ensuremath{\mathbf{y}}}
\newcommand{\bc}{\ensuremath{\mathbf{c}}}
\newcommand{\bi}{\ensuremath{\mathbf{i}}}
\newcommand{\bv}{\ensuremath{\mathbf{v}}}
\newcommand{\bn}{\ensuremath{\mathbf{n}}}
\newcommand{\bp}{\mbox{\boldmath$\rho$}}
\newcommand{\bETA}{\mbox{\boldmath$\eta$}}
\newcommand{\mD}{\ensuremath{\mathcal{D}}}
\newcommand{\mWR}{\ensuremath{\mathcal{W}_{\mathbb{R}}}}
\newcommand{\mWRs}{\ensuremath{\mathcal{W}_{\mathbb{R}^*}}}
\newcommand{\bzeta}{\ensuremath{\boldsymbol \zeta}}
\newcommand{\bzer}{\ensuremath{\mathbf{0}}}
\newcommand{\bs}{\ensuremath{\boldsymbol \sigma}}
\newcommand{\bff}{\ensuremath{\mathbf{f}}}
\newcommand{\bbf}{\ensuremath{\mathbf{f}}}
\newcommand{\bZ}{\ensuremath{\mathbf{Z}}}
\newcommand{\bQ}{\ensuremath{\mathbf{Q}}}
\newcommand{\sH}{\ensuremath{\mathscr{H}}}
\newcommand{\sC}{\ensuremath{\mathscr{C}}}
\newcommand{\sL}{\ensuremath{\mathscr{L}}}
\def\Res{\operatorname{Res}}
\def\Abs{\operatorname{Abs}}
\def\boundary{\partial\mD}
\newcommand{\htbz}{\hat{\bz}}
\newcommand{\blambda}{\ensuremath{\boldsymbol \lambda}}
\date{}
\title{Effective Coefficient Asymptotics of Multivariate Rational
Functions via Semi-Numerical Algorithms for Polynomial Systems}
\author{Stephen Melczer\thanks{University of Pennsylvania, Department of Mathematics, 209 South 33rd Street, Philadelphia, PA 19104, USA (smelczer@sas.upenn.edu)} 
\and 
Bruno Salvy\thanks{Univ Lyon, Inria, CNRS, ENS de Lyon, Universit\'e Claude
Bernard Lyon 1, LIP UMR 5668, Lyon, France (Bruno.Salvy@inria.fr)} }
\begin{document}
\maketitle

\begin{abstract}
The coefficient sequences of multivariate rational functions appear in many areas
of combinatorics. Their diagonal coefficient sequences enjoy nice
arithmetic and asymptotic properties, and the field of analytic combinatorics in
several variables (ACSV) makes it possible to compute asymptotic expansions.
We consider these methods from the point of view of effectivity.  
In particular, given a rational
function, ACSV requires one to determine a (generically) finite
collection of points that are called critical and minimal. 
Criticality is an algebraic condition, meaning it is well treated by 
classical methods in computer algebra, while minimality is a
semi-algebraic condition describing points on the boundary of the
domain of convergence of a multivariate power series. 
We show how to obtain dominant asymptotics for the diagonal
coefficient sequence of multivariate rational functions under some
genericity assumptions using symbolic-numeric techniques.  
To our knowledge, this is the first completely automatic treatment and
complexity analysis for the asymptotic enumeration of rational
functions in an arbitrary number of variables.
\end{abstract}

\noindent{\em Keywords}: Analytic Combinatorics in Several Variables, Asymptotic Enumeration, Kronecker Representation, Symbolic-Numeric Algorithms

\section{Introduction}
\subsection{Analytic Combinatorics}\label{sec:introAC}
Analytic combinatorics is a powerful technique to compute the
asymptotic behaviour of univariate sequences of complex numbers $
(f_k)_{k \geq 0}$ when the generating function of the sequence, $F(z)
:= \sum_{k \geq 0} f_kz^k$, is analytic in a neighbourhood of the
origin. 
The sequence is recovered by a Cauchy integral 
\begin{equation} f_k = \frac{1}{2\pi i} \int_C \frac{F(z)}{z^k} \cdot \frac{dz}{z}, \label{eq:uCIF} 
\end{equation}
where $C$ is any counter-clockwise circle sufficiently close to the
origin. The asymptotic analysis of this integral as~$k$ tends
to~$\infty$ is then obtained
by deforming the contour of integration so that it gets closer to
the singularities of minimal modulus (called \emph{dominant
singularities}). This process relates the
asymptotic behaviour of the sequence $(f_k)$ to the local behaviour of its generating function $F(z)$ near
these singularities. In particular, in the very frequent case where
the generating function has finitely many singularities in the complex
plane and at each dominant singularity~$\rho$ the
function admits a local expansion as a sum of monomials of the form
\[C(1-z/\rho)^\alpha\log^r\frac1{1-z/\rho},\quad
z\rightarrow\rho,\]
with $r\in\mathbb{N}$, then each such monomial with~$\alpha\in
\mathbb{C}\setminus\mathbb{N}$
contributes 
\begin{equation}\label{eq:sing-behav}
\frac{C}{\Gamma(-\alpha)}\rho^{-k}k^{-\alpha-1}\log^rk\,
(1+\dotsb),\quad
k\rightarrow\infty
\end{equation}
to the asymptotic behaviour of the coefficients (the ellipsis
`$\dotsb$' above corresponds to a full asymptotic
expansion given by Jungen~\cite{Jungen1931}).
When~$\alpha\in\mathbb{N}$ and~$r\neq0$, a simpler formula is
available; the terms with~$r=0,\alpha\in\mathbb{N}$ do not contribute.
Summing these contributions over all dominant
singularities gives arbitrarily many
terms of the asymptotic expansion of~$f_k$ as~$k\rightarrow\infty$.

In many cases, the combinatorial or probabilistic
origin of the sequence translates into simple equations for~$F(z)$,
from where location of singularities and local behaviour can be
computed. This is the heart of analytic combinatorics, for which we
refer to the now standard book of Flajolet and Sedgewick~\cite{FlajoletSedgewick2009}, where
the theory is introduced in detail, with proper handling of singular
behaviour more general than~\eqref{eq:sing-behav}, along with many illuminating
examples. 

\subsection{Analytic Combinatorics in Several Variables (ACSV)}
Over a series of recent papers culminating in a textbook compiling
their results, Pemantle and Wilson~\cite{PemantleWilson2002,PemantleWilson2004,PemantleWilson2008,PemantleWilson2013} 
and their collaborators have developed a theory of analytic
combinatorics in \emph{several variables}. Our aim in this work
is to automate some of this theory and analyze the complexity of
this approach. 

To a multivariate
sequence~$(f_{i_1,\dots,i_n})_{(i_1,\dots,i_n) \in \mathbb{N}^n}$ is
associated a multivariate generating function 
\[ F(\bz)= F
(z_1,\dots,z_n) = \sum_{(i_1,\dots,i_n) \in \mathbb{N}^n} f_
{i_1,\dots,i_n} z_1^{i_1} \cdots z_n^{i_n} = \sum_{\bi \in \mathbb{N}^n} f_{\bi} \bz^{\bi}.\]
As in the univariate case, when this function is analytic in the
neighbourhood of the origin, now in~$\mathbb{C}^n$, the
coefficient sequence is recovered by
a Cauchy integral, 
\[f_{i_1,\dots,i_n} = \frac{1}{(2\pi i)^n} \int_T \frac{F(\bz)}{
z_1^{i_1}\dotsm z_n^{i_n}} \cdot \frac{dz_1 \cdots dz_n}{z_1\cdots
z_n},\]
where the domain of integration~$T$ is now a polytorus sufficiently
close to the origin; i.e., a product of sufficiently small circles. 
The asymptotic analysis of the multivariate sequence of coefficients is turned into a problem of univariate asymptotics by selecting
diagonal rays in the index space: the vector $(i_1,\dots,i_n)/(i_1+\dots+i_n)$ varies in a neighbourhood
of a fixed direction. In our work, we restrict further to the main
diagonal where $i_1=\dots=i_n$, but it is important to note that the
theory brings insight on the uniformity of these results with respect
to the direction. Even under these restrictions, the
asymptotic analysis is made significantly
more delicate than in the univariate case by topological issues
related to how the domain of
integration can be deformed in~$\mathbb{C}^n$ while avoiding the singularities of the integrand. 
Pemantle and
Wilson show that an important part is played by those singularities
of~$F$ that are \emph{critical points} of the map
\[\Abs:(z_1,\dots,z_n)\mapsto|z_1\dotsm
z_n|,\]
on the set of singularities of~$F$ (precise definitions are
given in Section~\ref{sec:ACSV}). Among those critical points, one has
to determine the \emph{minimal} ones, which lie on the boundary of the
domain of convergence of the generating function. The determination of these minimal critical points is the main focus of the present work.

\subsection{Rational Functions and their Diagonals}
In order to automate this approach in computer algebra, we first
restrict the class of functions and sequences under consideration and
study only multivariate \emph{rational} generating functions: $F
(\mathbf{z})=G(\mathbf{z})/H(\mathbf{z})$ with $G$ and~$H$
polynomials in~$\mathbb{Z}[z_1,\dots,z_n]$ and~$H(\mathbf{0})\neq0$. One
motivation is that all the tools of computer algebra related to
polynomial systems become available to us. Another motivation comes
from structural properties. The generating function of the
diagonal coefficients is a classical object called the 
\emph{diagonal} of~$F$, denoted $\Delta F$ and defined by:
\[\Delta F(t):=\sum_{k\ge0}{f_{k,\dots,k}t^k}.\]
Diagonals of rational functions in~$
\mathbb{Q}(\bz)$ form an important class of power series that
contains the algebraic power series~\cite{Furstenberg1967} and 
is contained in the set of
\emph{differentially finite} power series~\cite{Christol1984}; these are the
power series solutions of linear differential equations with
polynomial coefficients. 
Among differentially finite power series, diagonals of
rational power series enjoy special properties: all their
singularities are regular with rational exponents~\cite{Katz1970,ChudnovskyChudnovsky1985,Andre2000a}. This implies that
the asymptotic expansion of their coefficients is a linear combination
of expressions of the form $C^kk^\alpha\log^p k$,
with $C$ an algebraic number, $\alpha$ a \emph{rational} number and
$p$ a non-negative integer. Despite these special properties, a conjecture of
Christol~\cite[Conjecture 4]{Christol1990} asserts that the generating functions of
univariate
\emph{integer} sequences having a finite nonzero radius of
convergence and satisfying a linear differential equation with
polynomial coefficients are all diagonals of rational functions.
Thus, diagonals of rational functions form an important class 
from the point of view of applications.

For generic rational functions, the critical points mentioned above
are obtained as solutions of a system of polynomial equations
\begin{equation}\label{eq:critical}
H=z_1\frac{\partial H}{\partial z_1}=z_2\frac{\partial H}{\partial
z_2}=\dots=z_n\frac{\partial H}{\partial z_n}.
\end{equation}
In the most common situations considered in this article, we can
avoid the use
of amoebas and Morse theory that are developed by Pemantle and Wilson
in their most recent works.
Instead, the computations are reduced to problems of complex
roots of
polynomial systems such as~\eqref{eq:critical} for the determination of critical
points, and
real roots of polynomial systems with inequalities for the selection
of the minimal ones.

\subsection{Combinatorial Case} 
We start with a special case that often arises
in practice, and where determining the minimal critical points
is greatly simplified. A rational function
$F(\bz)$ is called \emph{combinatorial} if every coefficient of its power series expansion is non-negative. This usually occurs when the rational function has been obtained by a combinatorial process. In general, it cannot be detected automatically \emph{a priori}. Indeed, even in the univariate case, the question of nonnegativity of the sequence of Taylor coefficients of a rational function is only conjectured to be decidable in general~\cite{OuaknineWorrell2012,OuaknineWorrell2014}.

Informally, our first main result is the following, which is stated
precisely in Theorem~\ref{thm:final-complexity} below and which we
gave earlier without the genericity analysis~\cite{MelczerSalvy2016}.
\begin{result}\label{result:1}
Let $G(\bz)$ and~$H(\bz)$ be polynomials in~$\mathbb{Z}
[z_1,\dots,z_n]$ of degrees at most~$d$, with coefficients of absolute
value at most $2^h$ and assume that~$H(\mathbf{0})\neq0$. Assume that
$F(\bz)=G(\bz)/H(\bz)$  is combinatorial, has a minimal critical
point, and satisfies certain verifiable assumptions stated in Section~%
\ref{sec:CombCaseResults}, that hold generically.  Then there exists a
probabilistic algorithm computing dominant asymptotics of the diagonal
sequence in $\tilde{O}(hd^{4n+1}+h^3d^{3n+3})$ bit operations\footnote{We 
write $f = \tilde{O}(g)$ when $f=O(g\log^kg)$ for some $k\ge0$; see 
Section~\ref{sec:Algorithms} for more information on our complexity model 
and notation.}.  The algorithm returns three rational functions 
$A,B,C \in \mathbb{Z}(u)$, a square-free polynomial $P \in \mathbb{Z}[u]$ 
and a list $U$ of roots of $P(u)$, specified by isolating regions, such that
\begin{equation}\label{eq:asymptcoeffs}
 f_{k,\dots,k} = (2\pi)^{(1-n)/2}\left(\sum_{u \in U} A(u)\sqrt{B(u)} \cdot C(u)^k \right)k^{(1-n)/2}\left(1 + O\left(\frac{1}{k}\right) \right). 
\end{equation}
The values of $A(u), B(u),$ and $C(u)$ can be determined to precision
$2^{-\kappa}$ at all elements of $U$ in $\tilde{O}(d^{n}\kappa + h^3d^{3n+3})$ bit operations.
\end{result}

\begin{example}\label{ex:Apery1} The sequence of Ap\'ery numbers
$A_k=\sum_{i=0}^k{\binom{k}{i}^2\binom{k+i}{i}^2}$, appearing in Ap\'ery's proof
of the irrationality of $\zeta(3)$, is, like all
multiple binomial sums, the diagonal of a rational function that can
be determined algorithmically by the methods of
Bostan et al.~\cite{BostanLairezSalvy2016}. In this example, they are given for
instance as the diagonal of a rational function in 4~variables:
\[\sum_{k\ge0}A_kt^k=\Delta\left(\frac1{1-z(1+a)(1+b)(1+c)
(1+b+c+bc+abc)}\right).\]
From there, our Maple implementation\footnote{The code for the
examples in this article is available at \url{http://diagasympt.gforge.inria.fr}.}
gives
\begin{Verbatim}
> F:=1/(1-z*(1+a)*(1+b)*(1+c)*(1+b+c+b*c+a*b*c)):
> A, U := DiagonalAsymptotics(numer(F),denom(F),[a,b,c,z],u,k):
> evala(allvalues(subs(u=U[1],A)));
\end{Verbatim}
\[{\color{blue}
\frac{\sqrt{2}\sqrt{24+17\sqrt{2}}\,\left(17+12\sqrt{2}\right)^k}
{8\pi^{3/2}k^
{3/2}}
}\]
\end{example}
In general, it is not possible to provide such an
explicit closed form for the quantities involved in the asymptotic behaviour. The output will then be a combination of an exact symbolic representation and a precise numerical estimate.
\begin{example}
In Pemantle and Wilson~\cite[Section 4.9]{PemantleWilson2008}, the authors study sequence alignment problems with application to molecular biology. The authors give asymptotics for a family of sequences parametrized by natural numbers $k$ and $b$; for any fixed $k$ and $b$ we can automatically recover their results. For instance, when $k=b=2$ one wants to derive asymptotics of
\[ \Delta\left(\frac{x^2y^2-xy+1}{1-(x+y+xy-xy^2-x^2y+x^2y^3+x^3y^2)}\right), \]
which can be shown to be combinatorial through generating function manipulations. Let $F(x,y)$ be this bivariate rational function. Running 
\begin{Verbatim}
> A, U := DiagonalAsymptotics(numer(F),denom(F),[x,y],u,k, u-x-t,t):
\end{Verbatim}
which specifies the optional linear form $u=x+t$ to be used in the algorithm (see below for details) and simplifies the output, returns $A$ equal to
\[{\color{blue}
\,{\frac {\left( 4\,{u}^{4}-14\,{u}^{3}+14\,{u}^{2}-2\,u+
2 \right) }{\sqrt {n}\sqrt {2 \pi} \left( 10\,{u}^{4}-40\,{u}^{3}+54\,{u
}^{2}-26\,u+4 \right) } \left( {\frac {10\,{u}^{4}-40\,{u}^{3}+54\,{u}
^{2}-26\,u+4}{4\,{u}^{4}-19\,{u}^{3}+25\,{u}^{2}-4\,u-6}} \right) ^{n}
\sqrt {{\frac {10\,{u}^{4}-40\,{u}^{3}+54\,{u}^{2}-26\,u+4}{4\,{u}^{4}
-16\,{u}^{3}+20\,{u}^{2}-8\,u+4}}}} } \]
and $U$ equal to
\[{\color{blue}
[\text{RootOf}(2\_Z^5-10\_Z^4+18\_Z^3-13\_Z^2+4\_Z-2, \, 1.4704170\dots)]}
\]
where 366 decimal places are recorded: this is an upper bound on the accuracy needed by the algorithm to rigorously decide numerical equalities and inequalities. Asymptotics are given by evaluating $A$ at the degree 5 algebraic number defined by the single element of $U$ (which is not expressible in radicals).
\end{example}

\subsection{Non-Combinatorial Case}
We also propose an algorithm finding minimal critical points in many
cases, even when combinatoriality is not assumed, at the price of an
increase in complexity. Our result in that case is the
following, which is stated precisely in Theorem~\ref{thm:final-complexity} below.
\begin{result} 
Let $F(\bz) \in \mathbb{Z}(z_1,\dots,z_n)$ be a rational function with
numerator and denominator of degrees at most $d$ and coefficients of
absolute value at most $2^h$.  Assuming that $F$ satisfies certain
verifiable assumptions stated in  Section~\ref{sec:GenResults}, then
$F$ admits a finite number of minimal critical points that can be
determined in $\tilde{O}\left(hd^{9n+5}2^{3n}\right)$ bit operations.
From there, the asymptotics of the diagonal coefficients follow with
the same complexity as in Result~\ref{result:1}.
\end{result}
Aside from the existence of minimal critical points, we
conjecture that the assumptions on $F$ required to apply Theorem~%
\ref{thm:final-complexity} in the non-combinatorial case hold
generically.

\subsection{Previous work}
A very useful introduction to the asymptotics of sequences is given in
the extensive survey by Odlyzko~\cite{Odlyzko1995}. Here, we focus on the case
of coefficients of rational functions and on effective
methods.

\paragraph{Univariate case} Finding the asymptotic behaviour of the
coefficients of a univariate rational function is equivalent to
finding that of a linear recurrence with constant coefficients.
Decision procedures rely on the ability to determine whether two complex algebraic numbers have the same modulus. This can be done purely algebraically, and a semi-numerical algorithm has been given by Gourdon and Salvy~\cite{GourdonSalvy1996}. Some of the ingredients are common with the current work, in particular a semi-numerical approach to those types of decision problems for algebraic numbers.

\paragraph{Probabilistic approach}
Many combinatorial sequences are given as sums of non-negative terms and several techniques are available in that case, surveyed in the classic book by de Bruijn~\cite{De-Bruijn1981}. For instance, completely explicit formulas can be derived for sums of products of binomial coefficients~\cite{McIntosh1996}.

Given the combinatorial generating function~$F(\bz)$, the normalized
sequence~$f_
{i_1,\dots,i_n}/\sum_{j_1+\dots+j_n=k}{f_{j_1,\dots,j_n}}$, where
$k=i_1+\dots+i_n$, is a
discrete probability for which central and local limit theorems
for large $k$ have
been derived in the bivariate case by Bender~\cite{Bender1973} and
later extended by Bender, Richmond, and Gao~\cite{BenderRichmond1983,BenderRichmond1999,
GaoRichmond1992}. The local limit theorems are the most relevant to
our discussion. Let $x_1,\dots,x_n$ be real positive numbers and
consider the
univariate generating function $f_\bx(z)=F(zx_1,\dots,zx_n)$.
Assume that in a neighbourhood of~$\bx$, there exists an analytic
root~$\lambda(\bx)$ of the denominator of~$f_\bx$ such that the other
roots have strictly larger modulus. Then, if the numerator of $F$ does
not vanish in a neighbourhood of~$\lambda(\bx)$ and the 
matrix $(x_1x_j\partial^2\lambda/\partial x_i\partial x_j)$ is not
singular, the monomials $f_{\bi}\bx^{\bi}$ satisfy a local limit theorem with mean
$k(x_1\partial\lambda/\partial x_1,\dots,x_n\partial\lambda/\partial
x_n)$. Shifting the mean by choosing $\bx$ so that this mean is on the
diagonal then gives the desired asymptotic behaviour, provided that
$\lambda(\bx)$ still satisfies the required assumptions. The
derivatives of~$\lambda$ are related to those of the denominator~$H
(\bz)$ of the rational function~$F(\bz)$, and the equality of the
coordinates above then amounts to
\[z_1\frac{\partial H}{\partial z_1}=\dots=z_n\frac{\partial H}
{\partial z_n}\qquad\text{at}\quad\bz=\bx.\]
These are the same \emph{critical point equations} as Equation~\eqref{eq:critical}, to
which we devote most
of this work. (See the precise version given by Gao and Richmond~\cite{GaoRichmond1992}
in their Theorem~4, where the result is expressed in terms of a
$\boldsymbol{t}$, which is what we compute and for which we give
complexity estimates.) As in the case of ACSV, by restricting to the
case of rational functions, we can bring in tools for computer algebra
and design complete algorithms, along with a complexity analysis.

\paragraph{Bivariate ACSV}
Similarly, Pemantle and Wilson's analytic combinatorics in several
variables apply much more generally than for combinatorial rational
generating functions. In terms of algorithms, the situation is much
harder. Only the case of \emph{bivariate} rational functions $F(x,y)$
that are not required to be
combinatorial and under a smoothness hypothesis do we have an
algorithm, due to 
de Vries et al.~\cite{DeVriesHoevenPemantle2011}. It is not immediately clear how to
generalize this technique beyond the bivariate case and keep it 
effective. While our algorithms apply in higher dimension, they work
under stronger minimality assumptions on critical points.

\paragraph{Previous implementations}
A Sage package of Raichev~\cite{Raichev2012} determines asymptotic
contributions of non-de\-gen\-er\-ate critical points where the zero
set
of~$H$ is smooth or locally the transverse intersection of smooth
algebraic varieties. It relies on the assumption that minimality
of these points has been proved beforehand, the most difficult
step of the analysis.

\paragraph{Creative telescoping}
Another approach to the computation of these asymptotic behaviours
exploits the fact that diagonals of rational functions are
differentially finite. A possible starting point is to use an integral
representation for the diagonal as a multidimensional residue:
\[\Delta F(t)=\frac1{(2\pi i)^{n-1}}\oint{F\!\left
(z_1,\dots,z_{n-1},\frac{t}{z_1\dotsm z_{n-1}}\right)\frac{dz_1\dotsm
dz_{n-1}}{z_1\dotsm z_{n-1}}}.\]
Next, a technique called the Griffiths-Dwork method performs a
succession of computations modulo a polynomial ideal. For our
case of a rational function~$F=G/H$, this is the ideal generated by
the
critical point equations~\eqref{eq:critical} again. The result of this
method is a linear differential equation satisfied by~$\Delta F$. An
efficient algorithm with arithmetic complexity in~$d^{O(n)}$ has been
given by Bostan et al.~\cite{BostanLairezSalvy2013} and improved by
Lairez~\cite{Lairez2016}.

From this differential equation, univariate singularity analysis
applies, following Flajolet and Sedgewick~\cite[\S VII.9.1]{FlajoletSedgewick2009}. First,
the possible locations of singularities are the zeros of the
leading coefficient of the equation. 
Next, at such a point~$\rho$,
since the equation is Fuchsian with rational exponents, there exists a
basis of local expansions of the form
\[(z-\rho)^\alpha\left(\phi_r
(z)\log^r\frac1
{1-z/\rho}+\dots+\phi_0(z)\right),\]
where~$r\in\mathbb N$, $\alpha\in\mathbb Q$ and
the~$\phi_k$ are convergent
power series in powers of~$(z-\rho)$
that can be computed to arbitrary order. 
The generating
function~$\Delta F$, known at the origin to arbitrary order can be
analytically continued numerically to~$\rho$ and its coefficients~$c_
{\alpha,r}$ in
that basis can be computed numerically efficiently with arbitrary
precision~\cite{Mezzarobba2010,Mezzarobba2016}. From there, as
outlined in~\S\ref{sec:introAC}, a contribution to the asymptotic
expansion of~$f_{k,\dots,k}$ follows for all~$\alpha,r$ not
in~$\mathbb{N}\times\{0\}$ such that~$c_{\alpha,r}\neq0$.
In the common
case when the coefficient~$c_{\alpha,r}$ corresponding to the dominant
part of the asymptotic behaviour is nonzero, then 
it can be recognized to be so from a certified numerical approximation
and the
asymptotic behaviour follows. It has the same shape as in Equation~%
\eqref{eq:asymptcoeffs}, with three main differences: the constant in
front is given only numerically, approximated rigorously to any fixed accuracy; 
the exponent is not restricted to
being a half-integer; a full asymptotic expansion is easily produced.
This last point in particular shows that when both methods apply, they
are complementary: ACSV yields a closed-form expression for the
relevant scalar factor~$c_{\alpha,0}$ and from there, a full
asymptotic expansion is easily computed from the differential
equation derived by creative telescoping. The methods of ACSV
are also capable of deriving higher order terms in the asymptotic 
expansion, however at a higher computational cost.

\paragraph{Polynomial systems}
There is an extensive literature in computer algebra on the complexity
of analyzing the roots of a polynomial system such as the one provided
by the critical point equations~\eqref{eq:critical}. Our work on
this system relies on ideas by a variety of authors~\cite{GiustiHeintzMoraisMorgensternPardo1998,GiustiLecerfSalvy2001,Schost2001,KrickPardoSombra2001} 
on the use of the Kronecker representation in complex or real geometry, which go far beyond the simple systems we consider here. More precisely, we make use of the recent work of Safey El Din and Schost~\cite{Safey-El-DinSchost2018}, who take into account multi-homogeneity and provide estimates on the height of the representations and the bit complexities of their algorithms.
Note that as this work reached completion, a new preprint by
van der Hoeven and Lecerf~\cite{HoevenLecerf2018} appeared that points to the possibility of
improving further the exponent of $d^n$ in our results, while
retaining the same approach.
The Kronecker representation, and similar constructions, have also appeared in the literature under the name `rational univariate representation'~\cite{Rouillier1999,BasuPollackRoy2006}.  To the best of our knowledge, the connection between the good properties of the Kronecker representation in terms of bit size and the fast and precise algorithms operating on univariate polynomials had not been explored before Melczer and Salvy~\cite{MelczerSalvy2016}, except in the case of bivariate systems~\cite{BouzidiLazardPougetRouillier2015,KobelSagraloff2015}.

\subsection{Outline}
This article is structured as follows.
In Section~\ref{sec:ACSV}, we give an almost
self-contained introduction to Analytic
Combinatorics in Several Variables at a more elementary level than in
the book of Pemantle and Wilson~\cite{PemantleWilson2013}. This can serve as an
introduction to the subject for combinatorialists already acquainted
with analytic combinatorics in one variable. It can also be skipped by
those readers who are only interested in the algorithms. They will
find
in Section~\ref{sec:algo-overview} an overview of the operations that
need be performed in order to compute the asymptotic behaviour. Next,
in Section~\ref{sec:Algorithms}, we introduce the Kronecker
representation. The results of Safey El Din and Schost~\cite{Safey-El-DinSchost2018}
that we need are recalled. They are used to analyze the cost of
several other operations
on solutions of polynomial systems. These results are illustrated
on the polynomial systems arising in ACSV.
Section~\ref{sec:numkro}
turns
to the semi-numerical part of the computation. A numerical Kronecker
representation is defined and the precision required for several
decision problems is analyzed. Again, these are illustrated by
families of examples from ACSV. We then turn back to ACSV in
Section~\ref{sec:MainAlgos}, where the algorithms outlined in
Section~\ref{sec:algo-overview} can finally be specified more
precisely thanks to our semi-numerical tools. Section~%
\ref{sec:Examples} gives a few more examples and Section~\ref{sec:generic}
addresses the genericity of our assumptions in the combinatorial case.

\section{Analytic Combinatorics in Several Variables for Rational Functions}
\label{sec:ACSV}
We give an almost self-contained introduction to the
part of the theory of analytic combinatorics in
several variables that we need, introducing the definitions and
notation
for the rest of the article. Since our algorithms only address 
situations
where the geometry is sufficiently simple, we stick to the ``surgery
method'' of the early works of Pemantle and Wilson~\cite{PemantleWilson2002} and avoid any
mention of Morse theory so that the text is more
accessible to combinatorialists already familiar with the univariate
situation. We also avoid amoebas; while they give a simple
understanding of some properties of domains of convergence, they
introduce logarithms that we want to avoid in the computations.

\subsection{Domains of convergence and minimal points}
For basic
properties of analytic functions in several variables, we refer
to Krantz~\cite{Krantz1992} and Hormander~\cite{Hormander1990}.
We consider a multivariate rational function 
$F(\bz) = {G(\bz)}/{H(\bz)}$ with $G$ and $H$ co-prime polynomials
and $H(\bzer)\neq0$. (A large part of the analysis holds more
generally for meromorphic
functions, $G$ and $H$ being co-prime analytic functions.) The Taylor
expansion at the origin
\begin{equation}\label{eq:powerseries}
F(\bz)= \sum_{\bi \in
\mathbb{N}^n} f_{\bi} \bz^{\bi}
\end{equation}
has a nonempty open domain of convergence $\mD \subset \mathbb{C}^n$.

The main properties of the multivariate case that we use are:
\begin{itemize}
	\item[--] a point~$\mathbf{z}:=(z_1,\dots,z_n)$ is in the closure
$\overline{\mD}$ of $\mD$ if and only if the open
polydisk $D(
\mathbf{z}):=\{\mathbf{w}\in
\mathbb{C}^n\mid |w_i|<|z_i|, i=1,\dots,n\}$ is a
subset of~$\mD$;
\item[--] the domain $\mD$ is \emph{logarithmically convex}:
if
the points $\bz=(z_1,\dots,z_n)$ and $\bzeta=(\zeta_1,\dots,\zeta_n)$
are in~$\mD$, then so are the
points $(|z_1|^t|\zeta_1|^{1-t},\dots,|z_n|^t|\zeta_n|^{1-t})$ for
$t\in[0,1]$.
\end{itemize}

The boundary~$\boundary=\overline{
\mD}\setminus\mD$ of the domain of convergence plays
an important role in the analysis. At its points, the
series~\eqref{eq:powerseries} is not absolutely convergent.
The following
result summarizes
the relation between this boundary and
the algebraic set $\mV:=\{\bz\in\mathbb{C}^n\mid H(\bz)=0\}$, which Pemantle and Wilson call the \emph{singular variety}
of~$F$. 
\begin{lemma}\label{lemma:domconv}
(i) If $\bw\in\boundary$, then there exists $\bz\in\mV$ in
its
polytorus $T(\bw):=\{(w_1e^{i\theta_1},\dots,w_ne^{i\theta_n})\mid 
(\theta_1,\dots,\theta_n)\in\mathbb{R}^n\}$.\par
\noindent
(ii) The intersection $\mV\cap\mD$ is empty. (iii) If
$\bw$ is in $\mV$ and $\mV\cap D(\bw)$ is empty, then
$\bw$ belongs to $\boundary$.
\end{lemma}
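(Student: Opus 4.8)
The plan is to prove the three parts of Lemma~\ref{lemma:domconv} by combining the two stated structural properties of the domain of convergence $\mD$ (the polydisk characterization of $\overline{\mD}$ and logarithmic convexity) with the basic fact that a power series converges absolutely on the interior of its domain of convergence and that $F=G/H$ is analytic precisely away from $\mV$ (since $G,H$ are coprime). Throughout I would use the elementary observation that on a polytorus $T(\bw)$ the absolute values $|\bz^{\bi}|$ are constant, equal to $|\bw^{\bi}|=|w_1|^{i_1}\dotsm|w_n|^{i_n}$, so absolute convergence of the Taylor series at one point of a polytorus is equivalent to absolute convergence at every point of it; in particular whether a polytorus meets $\overline{\mD}$, $\mD$, or $\boundary$ depends only on the vector of moduli $(|w_1|,\dots,|w_n|)$.

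For part~(ii), suppose toward a contradiction that some $\bz\in\mV\cap\mD$. Then $F$ is analytic at $\bz$ (being represented there by a convergent power series), yet $H(\bz)=0$ while $G(\bz)\neq0$ because $G$ and $H$ are coprime (if $G(\bz)=0$ too one argues that a common factor would have to vanish, or more carefully uses that the pole set of $G/H$ is exactly $\mV$ in a neighbourhood where they share no factor) — hence $F$ has a pole arbitrarily close to $\bz$, contradicting analyticity on the open set $\mD$. The cleanest route is: $\mD$ is open, $F$ is holomorphic on $\mD$, but $F$ is not holomorphic at any point of $\mV$ in a neighbourhood of which $G\not\equiv 0$; coprimality of $G,H$ guarantees $\mV$ is contained in the polar set, so $\mV\cap\mD=\emptyset$.

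For part~(i), let $\bw\in\boundary$. By the polydisk characterization, $D(\bw)\subset\mD$ but $\bw\notin\mD$, so the series does not converge absolutely at $\bw$, hence (by the remark above) it does not converge absolutely anywhere on $T(\bw)$; were $F$ analytic on a neighbourhood of the entire compact set $T(\bw)$ one could, by a Cauchy-estimate/polydisk argument, enlarge the domain of absolute convergence past $\bw$ — contradiction. Therefore $F$ fails to be analytic at some point of $T(\bw)$, and since the only obstruction to analyticity is the vanishing of $H$, there exists $\bz\in\mV\cap T(\bw)$. I expect this to be the part requiring the most care: making rigorous the step ``if $F$ were holomorphic on a neighbourhood of $T(\bw)$ then the power series would converge absolutely strictly beyond $\bw$'' needs either a multivariate Cauchy integral estimate over a slightly inflated polytorus or an appeal to the standard description of domains of convergence of power series (they are the interiors of their logarithmically convex, complete Reinhardt hulls). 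I would phrase it via the Cauchy integral formula on the polytorus of radius $(1+\epsilon)(|w_1|,\dots,|w_n|)$ to bound $|f_{\bi}|$ geometrically, contradicting $\bw\notin\overline{\mD}$.

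For part~(iii), assume $\bw\in\mV$ and $\mV\cap D(\bw)=\emptyset$. By~(ii), $\bw\notin\mD$ — indeed more directly $\bw\in\mV$ and $\mV\cap\mD=\emptyset$ gives $\bw\notin\mD$. On the other hand, since $\mV\cap D(\bw)=\emptyset$, $F=G/H$ is holomorphic on the open polydisk $D(\bw)$, and this polydisk contains the origin; a holomorphic function on a polydisk centered at $\bzer$ equals its (absolutely convergent) Taylor series there, so $D(\bw)\subset\mD$. Invoking the polydisk characterization once more, $D(\bw)\subset\mD$ forces $\bw\in\overline{\mD}$. Combining $\bw\in\overline{\mD}$ with $\bw\notin\mD$ yields $\bw\in\boundary$, as claimed. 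The only subtlety here is the assertion that holomorphy on the polydisk implies the Taylor series at $\bzer$ converges on all of it and agrees with $F$ there — this is a standard fact about holomorphic functions of several variables (Krantz~\cite{Krantz1992}), which I would cite rather than reprove.
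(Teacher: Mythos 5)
Parts (i) and (iii) of your proposal follow essentially the same route as the paper: for (i), cover the compact polytorus $T(\bw)$ by polydisks of uniform radius $\rho$ on which $F$ is analytic, then apply a Cauchy estimate on the inflated polytorus to conclude the Taylor series converges strictly beyond $\bw$, a contradiction; for (iii), holomorphy of $F=G/H$ on $D(\bw)$ yields $D(\bw)\subset\mD$, so $\bw\in\overline{\mD}$, and combined with (ii) this gives $\bw\in\boundary$. Those two parts are fine.

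Your part (ii) has a genuine gap. You claim $H(\bz)=0$ forces $G(\bz)\neq0$ ``because $G$ and $H$ are coprime,'' and then in the parenthetical you try to repair the case $G(\bz)=0$ by asserting that ``a common factor would have to vanish.'' Neither of these is correct in $n\geq 2$ variables: two coprime polynomials can absolutely share common zeros without sharing a common factor. For instance $G=z_1$, $H=z_2$ are coprime in $\mathbb{C}[z_1,z_2]$ yet both vanish on the codimension-2 set $\{z_1=z_2=0\}$. So coprimality does not by itself rule out $\bz$ being a common zero. Your ``cleanest route'' formulation —  ``coprimality of $G,H$ guarantees $\mV$ is contained in the polar set'' — is exactly the assertion that has to be proved, not something one can quote for free. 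The paper closes this gap with a resultant-type argument: after renumbering so that $\partial H/\partial z_n\neq 0$, coprimality of $G$ and $H$ viewed as polynomials in $z_n$ over $\mathbb{C}(z_1,\dots,z_{n-1})$ yields a nonzero $W\in\mathbb{C}[z_1,\dots,z_{n-1}]$ with $W=UG+VH$; this shows the common zero locus of $G$ and $H$ is nowhere dense inside $\mV$, so there exist points $\bw'$ arbitrarily close to $\bw$ where $H(\bw')=0$ but $G(\bw')\neq0$. At such $\bw'$ the function $F$ has a genuine pole, and since $\mD$ is open this forces $\bw\notin\mD$. Without some argument of this kind, your treatment of the common-zero case does not go through.
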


\begin{proof}
\emph{(i)}. 
If no point of $T(\bw)$
belongs to~$\mV$, then the
function~$F$
admits a convergent power series expansion at each of these points. As~$T(\bw)$
is compact, there is a $\rho>0$ such that all these power
series converge in a polydisk of radius~$\rho$. These allow for an
analytic continuation of $F$ in a polydisk $D((|w_1|
(1+\rho/2),\dots |w_n|(1+\rho/2)))$, which implies that
the power series~\eqref{eq:powerseries} is absolutely convergent
at~$\bw$, contradicting the
fact that $\bw\in\boundary$.

\emph{(ii).} If $\bw\in\mV$, then $H(\bw)=0$. If $G(\bw)\neq0$ then
$F$ is infinite at~$\bw$ and thus
its Taylor series does not converge in its neighbourhood. Otherwise,
up to renumbering the variables we can assume that $\partial
H/\partial z_n\neq0$ and then,
since $H$ and $G$ are coprime, their gcd when
viewed as
polynomials in~$\mathbb{C}(z_1,\dots,z_{n-1})[z_n]$ is~1 and 
there exist polynomials $U$, $V$ in $\mathbb{C}[\bz]$ and $W$
nonzero in $\mathbb{C}[z_1,\dots,z_{n-1}]$ such that
$W=UG+VH$.
If there was a neighborhood of $\bw$ where $G=H=0$ then the
nonzero polynomial $W$ would be~0
in a neighborhood of $(w_1,\dots,w_{n-1})$, but this is
impossible. Thus
there exists $\bw'$ arbitrarily close to~$\bw$, where $H(\bw')=0$ and
$G(\bw')\neq0$ and therefore $\bw\not\in\mD$.

\emph{(iii).} The proof is similar to that of \emph{(i)}. At any
point
$\bz$ of $D(\bw)$, the function $F$ admits an analytic continuation,
which
implies that the power series~\eqref{eq:powerseries} is absolutely
convergent at $\bz$ and thus that $D(\bw)\subset\mD$. Thus $\bw$ is in
$\overline\mD$. By  \emph{(ii)}, it is not in $\mD$ so that it
belongs to~$\boundary$.
\end{proof}

\noindent Thus a special role is played by points in~$\partial
\mD\cap\mV$.
\begin{definition}
The
elements of $\partial\mD\cap\mV$ are called \emph{minimal points}.
A minimal point~$\mathbf{z}$ is called \emph{finitely minimal} when
its
polytorus~$T(\mathbf{z})$ intersects~$\mathcal{V}$ in finitely many
points. It is called \emph{strictly minimal} when this intersection is
reduced to~$\{\mathbf{z}\}$. It is called \emph{smooth} when the
gradient $\nabla H$ does not vanish at~$\bz$.
\end{definition}


\begin{figure}
\centerline{
\includegraphics[width=.3\textwidth]{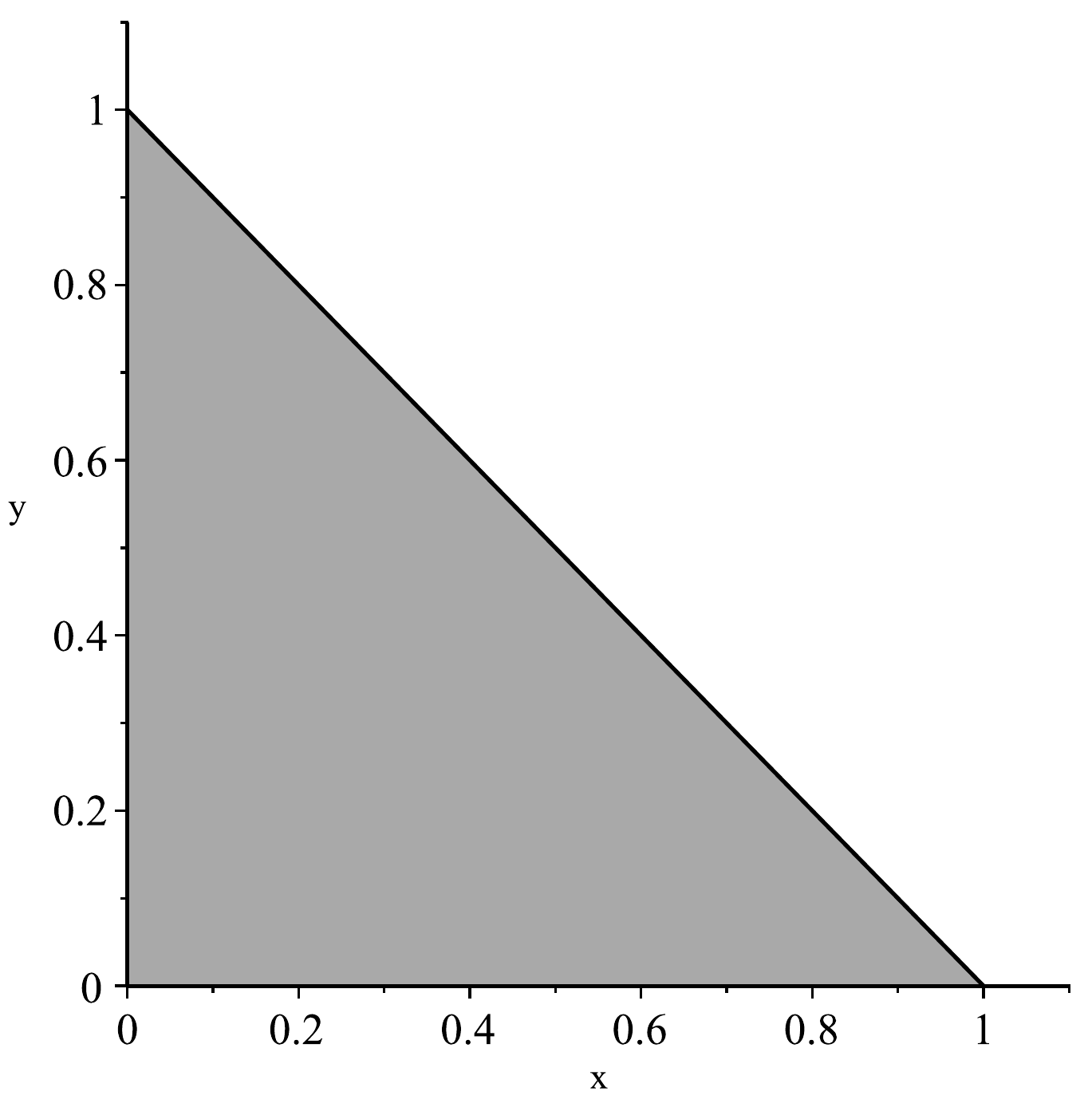}
\qquad
\includegraphics[width=.3\textwidth]{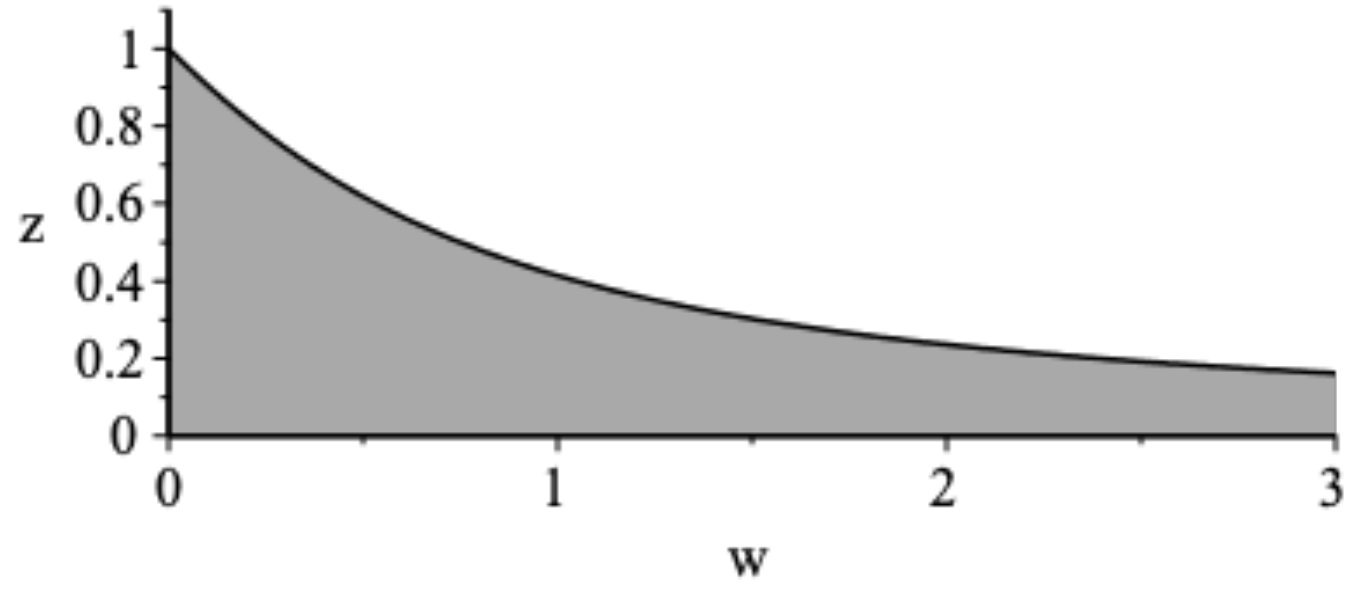}
\qquad
\includegraphics[width=.3\textwidth]{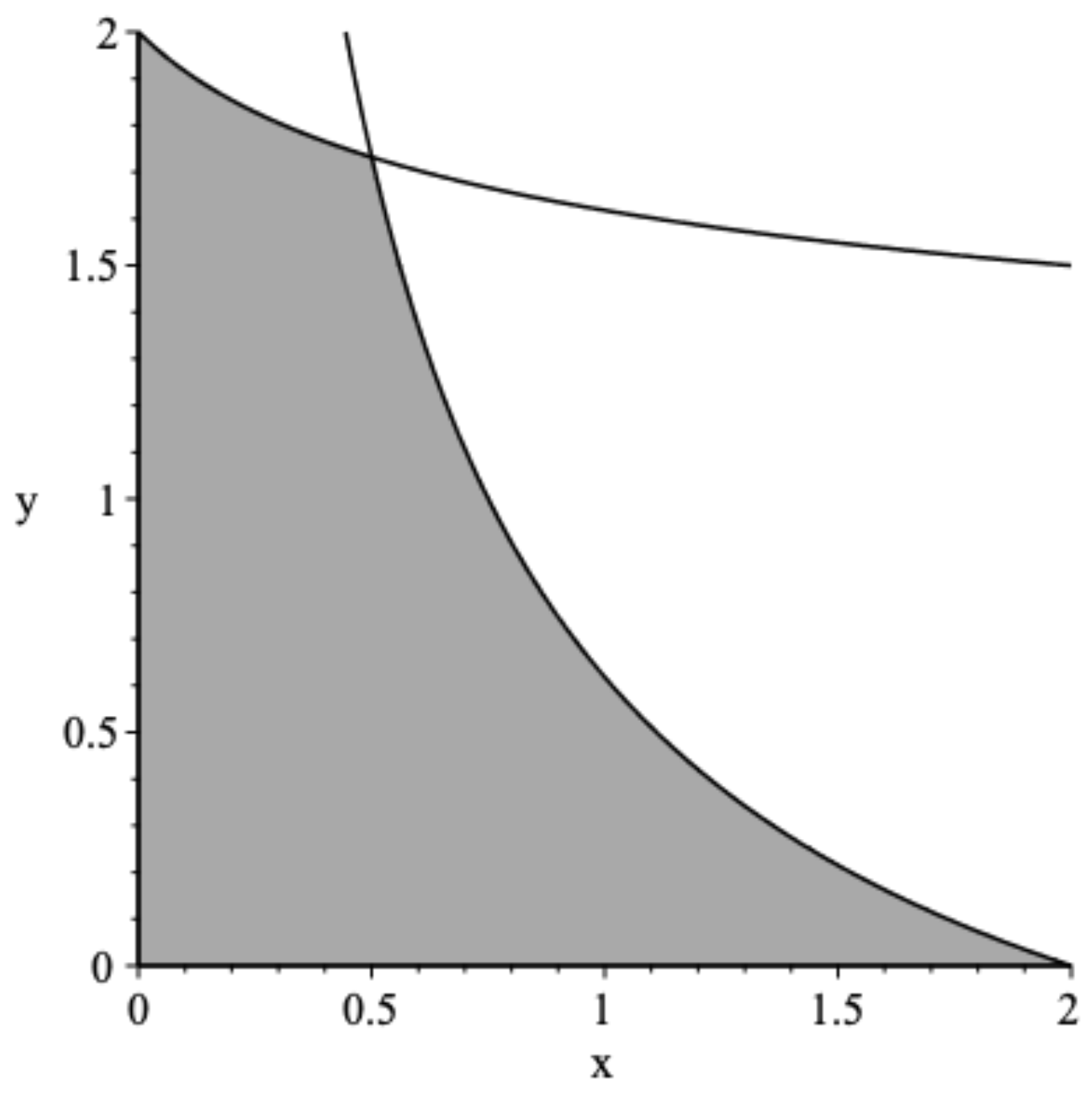}
}
\caption{The gray areas give the intersections of the domains of
convergence with $
\mathbb{R}_+^2$ in three examples. Left: $1/(1-x-y)$; middle: $1/
(1-2wz-z^2)$; right: $1/(2+y-x(1+y)^2)$.
\label{fig:ex-domain}}
\end{figure}

\begin{example}\label{ex:easy1}
If $F(x,y)=1/(1-x-y)$, the singular
variety~$
\mathcal{V}$ is parameterized by~$(x,1-x)$ for $x\in\mathbb{C}$. All
its points are smooth: the gradient is the constant vector~$(-1,-1)$.
A point of~$\mathcal{V}$ is minimal when there does not exist another
point~$(x',y')$ in~$\mathcal{V}$ with $|x'|<|x|$ and~$|y'|<|1-x|$. By
continuity of $1/F$, it is sufficient to check that there does not
exist a
minimal point where one of these inequalities becomes an equality. 

No point of~$\mathcal{V}$ with $|x|>1$ is minimal, since for such
a point, $(1,0)\in\mathcal{V}$ has smaller modulus coordinate-wise. If
$|x|\le 1$
and $x$ is not
real or is negative, then $0\le 1-|x|<|1-x|$, so that the existence
of the point $
(|x|,1-|x|)\in\mathcal{V}$ prevents~$(x,1-x)$ from being minimal.

The conclusion is that the only possible minimal points are of the
form~$(x,1-x)$ with $x$ real in $[0,1]$. These are indeed minimal
since any
point~$(x',y')$ with $|x'|<x$ and~$|y'|<1-x$ satisfies $|x'+y'|<1$ and
thus lies inside the domain of convergence. (See Figure~%
\ref{fig:ex-domain}, left.)
\end{example}
\begin{example}\label{ex:ex7} Consider the rational function~$F
(w,z)=1/(1-2wz-z^2)$.
Its singular variety~$\mathcal{V}$ is parameterized by~$(w(z),z)$ with
$w(z)=(1-z^2)/
(2z)$ and~$z\in\mathbb{C}\setminus\{0\}$. All its points are
smooth: the gradient~$(-2z,-2w-2z)$ does not vanish on~$\mV$.

None of those points with~$|z|>1$ can be
minimal: for the same value of~$w$ the denominator
of~$F$ has another root of
smaller modulus~$1/|z|$.
Similarly, if $|z|<1$ and $z$ it not real, then~$z'=|z|$ is such that
$|1-z'^2|<|1-z^2|$ so that again there is another point of~$\mathcal{V}$
with smaller modulus.  Finally, minimal points with~$|z|=1$ must also
be
real: if $z=\exp(i\theta)$, then $|w(z)|=\left|\sin\theta\right|$
which is
minimal
when~$\theta=0\bmod \pi$. 

In summary, the only possible minimal points are of the form~$
(-u/2+1/(2u),u)$ for $u\in[-1,1]\setminus\{0\}$. These are indeed
minimal
as a consequence of~$u\mapsto-u/2+1/(2u)$ being decreasing for
positive~$u$. Each of them is finitely minimal, its opposite also 
being in~$\mathcal{V}$. (See Figure~\ref{fig:ex-domain}, middle.)
\end{example}
\noindent Smooth minimal points play an important part in this theory.
Their role is explained by the following result.
\begin{proposition}[{PemantleWilson~\cite[Lemma~2.1]{PemantleWilson2002} and 
Pemantle and Wilson~\cite[Prop.~3.12]{PemantleWilson2008}}]\label{prop:3.12}
Let $\bw$ be a smooth minimal point. Then there exist non-negative
\emph{real} numbers $\lambda_1,\dots,\lambda_n$, not all zero, such that:
\begin{enumerate}
	\item   $\left
(w_1\frac{\partial
H}{\partial z_1}(\bw),\dots,w_n\frac{\partial H}{\partial z_n}
(\bw)\right)$ and
$(\lambda_1,\dots,\lambda_n)$ are colinear;
\item
the point $\bw$ is a local maximizer of
the map $\bz\mapsto \left|z_1^{\lambda_1}\dotsm z_n^
{\lambda_n}\right|$ on~$\boundary$.
\end{enumerate}\label{prop:lambda}
\end{proposition}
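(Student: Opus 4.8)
The plan is to run the standard Lagrange-multiplier argument for the constrained optimization problem hidden inside the minimality condition, using logarithmic coordinates to linearize the torus action. Let $\bw$ be a smooth minimal point. By Lemma~\ref{lemma:domconv}(ii), $\mV\cap\mD=\emptyset$, and since $\bw\in\mV\cap\partial\mD$, the polydisk $D(\bw)$ lies in $\mD$ and hence $\mV$ does not meet the open set $\{|z_i|<|w_i|\ \forall i\}$. In other words, among all points $\bz\in\mV$ lying in the polytorus direction (i.e.\ with prescribed arguments, or more robustly among all $\bz\in\mV$ with $|z_i|\le|w_i|$), there is no point with every modulus strictly smaller. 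The idea is to phrase this as: the function $\bz\mapsto |z_1^{\lambda_1}\cdots z_n^{\lambda_n}|$ cannot be pushed strictly above its value at $\bw$ while staying on $\boundary$, for a suitable nonnegative weight vector $\blambda$.

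The key steps, in order. First I would pass to real logarithmic coordinates: write $x_j=\log|z_j|$ and recall that $\mD$ is logarithmically convex (second bulleted property before Lemma~\ref{lemma:domconv}), so its $\Relog$-image is a convex set $B\subset\mathbb R^n$, with $\bw$ mapping to a boundary point $\bx_0$ of $B$. Because $B$ is convex, it has a supporting hyperplane at $\bx_0$: there is a nonzero vector $\blambda\in\mathbb R^n$ with $\langle \blambda,\bx\rangle\le\langle\blambda,\bx_0\rangle$ for all $\bx\in B$. The property that $D(\bw)\subset\mD$ forces $B$ to contain the negative orthant shifted to $\bx_0$, i.e.\ $\bx_0-\mathbb R_{\ge0}^n\subset B$; feeding the vectors $\bx_0-\be_j$ into the supporting-hyperplane inequality yields $\lambda_j\ge0$ for each $j$, giving the nonnegativity in the statement. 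This establishes item~2 in the form: $\bw$ maximizes $|z_1^{\lambda_1}\cdots z_n^{\lambda_n}|$ over $\overline\mD$, hence \emph{a fortiori} over the subset $\boundary$.

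Second, I would extract item~1 by a local analysis near $\bw$ on $\mV$ itself. Since $\bw$ is smooth, $\nabla H(\bw)\ne 0$, so near $\bw$ the variety $\mV$ is a smooth complex hypersurface and we may parametrize it (locally, after renumbering so that $\partial H/\partial z_n(\bw)\ne0$) by $z_n=\psi(z_1,\dots,z_{n-1})$ with $\partial\psi/\partial z_j=-(\partial H/\partial z_j)/(\partial H/\partial z_n)$ at $\bw$. The function $g(\bz)=\sum_j\lambda_j\log|z_j|$ restricted to this local chart attains a local maximum on $\mV\cap\partial\mD$ at $\bw$; since all nearby points of $\mV$ with the same arguments as $\bw$ stay in $\boundary$ (they cannot enter $\mD$ by Lemma~\ref{lemma:domconv}(ii)), $\bw$ is in fact a local max of $g$ along the real slice of $\mV$ through $\bw$ obtained by varying only the moduli. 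Writing the first-order stationarity condition $\nabla g=0$ along this slice and substituting the derivatives of $\psi$, one obtains
\[
\frac{\lambda_j}{w_j}-\frac{\lambda_n}{w_n}\cdot\frac{\partial H/\partial z_j(\bw)}{\partial H/\partial z_n(\bw)}=0,\qquad j=1,\dots,n-1,
\]
which rearranges to the colinearity of $\bigl(w_1\,\partial H/\partial z_1(\bw),\dots,w_n\,\partial H/\partial z_n(\bw)\bigr)$ with $(\lambda_1,\dots,\lambda_n)$. This is item~1.

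The main obstacle is the step where one must argue that the local variation of moduli along $\mV$ genuinely stays on $\boundary$ rather than dipping into $\mD$ — i.e.\ justifying that the optimization is really an \emph{unconstrained} local max in the right coordinates, so that the naive Lagrange condition applies. This is exactly where minimality (as opposed to mere membership in $\mV$) is used, via Lemma~\ref{lemma:domconv}(ii): no point of $\mV$ lies in the open domain $\mD$, so small perturbations of $\bw$ inside $\mV$ either leave $\overline\mD$ or stay on $\boundary$, never improving the objective; combined with $\bw$ maximizing $g$ on all of $\overline\mD$ (from the convexity argument), stationarity along $\mV$ is forced. One should also treat the degenerate possibility that the supporting hyperplane is "vertical" with respect to some coordinates — but nonnegativity of $\blambda$ together with $\blambda\ne\bzer$ handles this, and smoothness guarantees the gradient vector $\bigl(w_j\,\partial H/\partial z_j(\bw)\bigr)_j$ is nonzero, so the colinear scalar multiple is well-defined.
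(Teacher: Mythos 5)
Your derivation of item~2 via logarithmic convexity and a supporting hyperplane is clean and correct, and it is genuinely different from the paper (which deduces item~2 from the tangency condition obtained while proving item~1). That part is a nice simplification: the supporting hyperplane at $\bx_0=\Relog(\bw)$ of the convex set $B=\Relog(\overline\mD)$, together with $\bx_0-\mathbb R_{\ge0}^n\subset B$, immediately gives a nonnegative $\blambda\neq\bzer$ for which $\bw$ maximizes $|z_1^{\lambda_1}\dotsm z_n^{\lambda_n}|$ on $\overline\mD$, hence on $\boundary$.

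The derivation of item~1, however, has a genuine gap. You vary the \emph{moduli} along $\mV$ with fixed arguments and assert that ``all nearby points of $\mV$ with the same arguments as $\bw$ stay in $\boundary$,'' so that the constraint $g\le g(\bw)$ from the supporting hyperplane can be read as an unconstrained local-max condition on that slice. This premise is false in general. Lemma~\ref{lemma:domconv}~\emph{(ii)} only says $\mV\cap\mD=\emptyset$, so nearby points of $\mV$ in that slice may lie \emph{outside} $\overline\mD$, where the supporting hyperplane gives no upper bound on $g$. The paper's own Example~\ref{example:no-critical-point} (picked up again in Example~\ref{example:no-critical-point2}) exhibits exactly this: $\bw=(1/2,\sqrt3)$ is a smooth minimal point on $\mV=\{((2+y)/(1+y)^2,y)\}$, yet the points with $y$ slightly larger than $\sqrt3$ on the real slice of $\mV$ are not minimal and lie outside $\overline\mD$. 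So ``maximizing $g$ on all of $\overline\mD$'' does not force stationarity along $\mV$, because the slice leaves $\overline\mD$.

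The fix — and what the paper actually does — is to vary the \emph{arguments} of $z_1,\dots,z_{n-1}$ while keeping their moduli fixed, parametrizing $\mV$ by $z_n=g(\hat\bz)$. If $|g(\hat\bw e^{i\btt})|<|w_n|$ for some small $\btt$, then by continuity of $g$ one could shrink all $n-1$ moduli slightly and still have $|g|<|w_n|$, producing a point of $\mV$ strictly inside $D(\bw)\subset\mD$, contradicting minimality. Hence $|g(\hat\bw e^{i\btt})|\ge|w_n|$, so $|w_n|$ is a local \emph{minimum} along this torus slice; the first-order condition for that is exactly the colinearity in item~1, and the sign of the quadratic/first-order data forces $\lambda_j\ge0$. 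Your approach establishes item~2 more elegantly but must be complemented with this torus-direction argument (or something equivalent) to recover item~1; as written it does not.
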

\begin{proof}
Since $\bw$ is a minimal point, the open polydisk $D(\bw)$ is included
in~$\mD$ and by Lemma~\ref{lemma:domconv}~(ii), it does not
contain any element of~$\mV$. Thus the tangent lines to the torus $T
(\bw)$ at~$\bw$ must belong to the tangent space to~$\mV$ at~$\bw$.
Since $\nabla H$ does not vanish at $\bw$, this leads to relations
between the partial derivatives of~$H$, obtained as follows.

Without loss of
generality,
we assume that~$(\partial H/\partial z_n)(\bw)\neq0$. 
By the implicit function
theorem, there exists an analytic function $g(\htbz)$ where $\htbz := 
(z_1,\dots,z_{n-1})$,
such that $(z_1,\dots,z_{n-1},g(\htbz))$ is a parameterization of
$\mV$ in a
neighbourhood of $\bw$: $w_n=g(\hat\bw)$, $H(\htbz,g(\htbz))=0$ and
$g$ is locally
one-to-one. 
For any~$j\in\{1,\dots,n-1\}$, 
differentiating $H(\htbz,g(\htbz))=0$ with respect to~$z_j$ yields
\[\frac{\partial H}{\partial z_j}(\bz)+\frac{\partial H}
{\partial z_n}(\bz)\frac{\partial g}{\partial z_j}
(\htbz)=0,\]
so that the vector $(\bzer,1,\bzer,\partial g/\partial z_j(\hat\bw))$
with $1$ in the $j$th position lies in the tangent space to~$\mV$
at~$\bw$. In a neighbourhood of~$\theta=0$, the image of~$\left(\hat\bw,g(\hat\bw)\right)$ 
when $w_j$ is replaced by $w_je^{i\theta}$ moves along~$iw_j(\bzer,1,\bzer,\partial
g/\partial z_j(\hat\bw))$, and minimality of $|w_n|$ implies that this
vector should be tangent to the torus; i.e., there exists a real
$\lambda_j$ such that this vector equals
$(\bzer,iw_j,\bzer,-i\lambda_j w_n)$. Moreover, the presence of $
(\bzer,w_j,\bzer,-\lambda_jw_n)$ in the tangent plane to~$\mV$ at~$\bw$
implies $\lambda_j\ge0$, since otherwise~$\mD$ would intersect~$\mV$.
In summary, we have obtained the existence of
$\lambda_1,\dots,\lambda_{n-1},\lambda_n$, with $\lambda_n=1$, all
real and non-negative, such that
\[\lambda_nw_j\frac{\partial H}{\partial z_j}(\bw)=\lambda_j w_n
\frac{\partial H}{\partial z_n}(\bw),\qquad j\in\{1,\dots,n-1\}.\]
Linear combinations give
\[\lambda_kw_j\frac{\partial H}{\partial z_j}
(\bw)=\lambda_k\lambda_jw_n
\frac{\partial H}{\partial z_n}(\bw)=\lambda_jw_k\frac{\partial H}
{\partial z_i}(\bw),\]
which concludes the proof of the first part of the proposition.

That each of the vectors $(\bzer,i\lambda_n w_j,\bzer,i\lambda_j w_n)$
is
tangent to the torus $T(\bw)$ at $\bw$ is equivalent
to the product of $w_je^{i\lambda_j\theta_j}$ for $j=1,\dots,n$ being
multiplied by complex numbers of modulus~1 locally; i.e., its
modulus is locally constant. It then has to be a local maximum
by minimality of~$\bw$ and nonnegativity of the
$\lambda_j$s. 
\end{proof}
\begin{example}\label{example:no-critical-point}
By a reasoning similar to that of the previous
examples, the minimal points of the rational function~$F=1/(2+y-x
(1+y)^2)$ are all smooth and of the form~$((2+y)/(1+y)^2,y)$
for $y\in[-2,-\sqrt{3}]\cup
[0,\sqrt{3}]$ (See Figure~\ref{fig:ex-domain}, right.). At these
points, $(x\partial
H/\partial
x,y\partial
H/\partial y)$ is colinear to the real vector~$(\lambda_1,\lambda_2)=
(2+y,2+y-2/(1+y))$. This is never colinear to $(1,1)$, which puts
this function outside of the scope of our methods, as shown in 
Example~\ref{example:no-critical-point2} below.
\end{example}

\subsection{Exponential Growth}
The starting point in the asymptotic analysis is a Cauchy integral
representation of the diagonal
coefficients: for any~$k\in\mathbb{N}$,
\begin{equation} f_{k,\dots,k} = \frac{1}{(2\pi i)^n} \int_T \frac{F(\bz)}{(z_1\cdots z_n)^k} \frac{dz_1 \cdots dz_n}{z_1\cdots z_n}, \label{eq:mCIF}
\end{equation}
where $T$ is a polytorus sufficiently close to
the origin.

The first step is to determine the exponential growth of the diagonal sequence, 
\[\rho := \limsup_{k \rightarrow \infty} |f_{k,\dots,k}|^{1/k}.\]
A consequence of the integral
representation~\eqref{eq:mCIF} is Cauchy's inequality on the
coefficients of an analytic function, implying $\rho
\leq |z_1 \cdots z_n|^
{-1}$ for
any~$\bz\in\mD$. Then, by the maximum principle, it follows that
\[\rho \leq |z_1 \dotsm z_n|^{-1},\quad\text{for
any~$\bz\in\partial\mD$}\]
while by Lemma~\ref{lemma:domconv}~\emph{(i)}, 
\[\inf_{\mathbf{z}\in\partial D}|z_1\dotsm z_n|^{-1}=\inf_{
\mathbf{z}\in\partial D\cap\mathcal{V}}|z_1\dotsm z_n|^{-1}.\]
Minimal points are those to which the cycle of
integration
$T$ in the Cauchy integral representation~\eqref{eq:mCIF} may be taken
arbitrarily close without changing the value of the integral. In the
neighbourhood of a finitely minimal point where $|z_1\dotsm z_n|$ is
maximal, the contour can be further
deformed so as to capture the contribution of that point to the
asymptotic
behaviour of the integral. This is done in~\S\ref{sec:acsv-local}.

\subsection{Critical Points}
Instead of computing the set of minimal points first and then looking
for those that maximize $|z_1\dotsm z_n|$, it turns out to be easier
to compute a somewhat related set formed by the extrema of~$|z_1\dotsm
z_n|$ on subsets of~$ \mathcal{V}$ and then select its elements that are minimal. In many
cases,
those points are sufficient to complete the asymptotic analysis.

Thus the next step is to focus on the map
\[\Abs:\mathbf{z}\mapsto|z_1\dotsm z_n|,\]
and study its extrema on~$\mathcal{V}$.
These extrema can be obtained as solutions of an optimization
problem for the
map~$\Abs$ from $\mathbb{R}^{2n}$
to~$\mathbb{R}$, restricted to the set~$\mathcal{V}$, viewed as a
subset of~$\mathbb{R}^{2n}$.
A real-valued Lagrangian associated to this
optimization problem is $L(\mathbf{z},\lambda):=\mathbf{z}\overline{\mathbf{z}}
+2\Re(\lambda H(\mathbf{z}))$, 
where $\mathbf{z}\overline{\mathbf{z}}=z_1\overline{z_1}+\cdots+z_n\overline{z_n}$ 
and $\Re(w)$ denotes 
the real part of $w\in\mathbb{R}$. Standard arguments
make it possible to work with complex derivatives only 
(see~\cite[App.2]{SchreierScharf2010},
\cite[ch.1,\S4]{Remmert1991}, \cite{Brandwood1983}): for a 
\emph{real
valued}
function~$f$ of $z=x+iy$ and~$\overline{z}=x-iy$ that is
differentiable as a function of~$(x,y)$, the
simultaneous vanishing of~$\partial
f/\partial x$ and $\partial f/\partial y$ is equivalent to the
vanishing of $\partial f/\partial z=(\partial f/\partial x-i\partial
f/\partial y)/2$, or equivalently to the
vanishing of $\partial f/\partial\overline{z}=(\partial f/\partial
x+i\partial f/\partial y)/2$.

The extrema can only be reached in three 
situations: either at points of~$\mV$ where one of the  
coordinates~$z_i$ is~0, where $\Abs$ is not differentiable, or at 
\emph{critical points} of~$\Abs$, where either
the gradient with
respect to the complex coordinates $\nabla
H:=(\partial H/\partial z_1,\dots,\partial H/\partial z_n)$ is~0 or
where the
optimality condition~$\nabla L=0$ holds. (In
this
last case, the gradients~$\nabla |\mathbf{z}|^2$ and~$\nabla H$ are
colinear, so
that the level surface of~$\Abs(\mathbf{z})$ is tangent to~$
\mathcal{V}$.) 
\begin{lemma}\label{lemma:extrema-crit-pts} The critical points
of the map~$\Abs:\bz\mapsto|z_1\dotsm z_n|$ on~$\mathcal{V}$
are located at solutions of the equations
\begin{equation} H(\mathbf{z})=0, \qquad z_1\frac{\partial H}
{\partial z_1} = \dots = z_n\frac{\partial H}{\partial
z_n} \label{eq:critpt}.
\end{equation} 
\end{lemma}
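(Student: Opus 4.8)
The statement is a containment --- every critical point of $\Abs$ on $\mathcal{V}$ lies in the solution set of~\eqref{eq:critpt} --- and, following the trichotomy recalled just above, I would split into the case where $\nabla H$ vanishes and the genuine Lagrange case. If $\nabla H(\bz^\star)=0$ at a point $\bz^\star\in\mathcal{V}$, then $z_j^\star\,\partial H/\partial z_j(\bz^\star)=0$ for every $j$, so all these quantities trivially agree, and $H(\bz^\star)=0$ since $\bz^\star\in\mathcal{V}$; hence $\bz^\star$ solves~\eqref{eq:critpt}. So the real content is the smooth case, where every coordinate of $\bz^\star$ is nonzero and $\nabla H(\bz^\star)\neq0$.

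For that case the plan is to localize $\mathcal{V}$ by the implicit function theorem exactly as in the proof of Proposition~\ref{prop:lambda}: assuming $\partial H/\partial z_n(\bz^\star)\neq0$, write $z_n=g(\htbz)$ with $\htbz=(z_1,\dots,z_{n-1})$, $g$ holomorphic near $\htbz^\star$ and $H(\htbz,g(\htbz))\equiv0$. On this chart $\Abs$ restricted to $\mathcal{V}$ is the map $\htbz\mapsto\bigl|\phi(\htbz)\bigr|$ with $\phi(\htbz):=z_1\dotsm z_{n-1}\,g(\htbz)$, regarded as a function of the real and imaginary parts of $z_1,\dots,z_{n-1}$. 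Because every coordinate of $\bz^\star$ is nonzero, $\phi$ is holomorphic and nonvanishing near $\htbz^\star$, so on a small polydisk $\log\bigl|\phi\bigr|=\Re(\log\phi)$ is the real part of a holomorphic function; and since $\Abs|_{\mathcal{V}}>0$ there, $\bz^\star$ is a critical point of $\Abs|_{\mathcal{V}}$ iff it is a critical point of $\log\Abs|_{\mathcal{V}}=\Re(\log\phi)$. The one general fact I would invoke is the Wirtinger identity $\partial(\Re\Phi)/\partial z_j=\tfrac12\,\partial\Phi/\partial z_j$ for $\Phi$ holomorphic, which, since $\Re\Phi$ is real-valued, shows that the full real gradient of $\Re\Phi$ vanishes at a point iff $\partial\Phi/\partial z_j=0$ there for all $j$. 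Applied with $\Phi=\log\phi$, the critical point condition becomes $\frac{1}{z_j}+\frac{1}{g}\,\frac{\partial g}{\partial z_j}=0$, i.e. $z_j\,\partial g/\partial z_j=-g$, for $j=1,\dots,n-1$.

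It then remains to re-express $\partial g/\partial z_j$ through $H$: differentiating $H(\htbz,g(\htbz))=0$ with respect to $z_j$ gives $\partial H/\partial z_j+(\partial H/\partial z_n)\,\partial g/\partial z_j=0$, so $\partial g/\partial z_j=-(\partial H/\partial z_j)/(\partial H/\partial z_n)$; substituting into $z_j\,\partial g/\partial z_j=-g=-z_n$ and clearing the nonzero factor $\partial H/\partial z_n(\bz^\star)$ yields $z_j\,\partial H/\partial z_j(\bz^\star)=z_n\,\partial H/\partial z_n(\bz^\star)$ for $j=1,\dots,n-1$, which together with $H(\bz^\star)=0$ is precisely~\eqref{eq:critpt}. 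I do not expect a real obstacle: the argument is just the local holomorphic parametrization plus one line of Cauchy--Riemann, and the only points needing care are bookkeeping --- that ``critical point of $\Abs$ on $\mathcal{V}$'' is the correct notion at singular points of $\mathcal{V}$ (absorbed into the trivial $\nabla H=0$ case), and that passing to the holomorphic $\phi$ genuinely uses both the nonvanishing of the coordinates and the smoothness of $\mathcal{V}$ at $\bz^\star$. One could alternatively run the Lagrange-multiplier condition displayed before the lemma, with an essentially dual computation.
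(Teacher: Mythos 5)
Your proof is correct but takes a genuinely different route from the paper's. The paper works directly with the real Lagrangian $L(\bz,\lambda)=\bz\overline{\bz}+2\Re(\lambda H(\bz))$ and reads the relations $z_1\partial H/\partial z_1=\dots=z_n\partial H/\partial z_n$ off the Wirtinger first-order conditions $\partial L/\partial z_i=\bz\overline{\bz}/z_i+\lambda\,\partial H/\partial z_i=0$; for that derivative formula to hold, $\bz\overline{\bz}$ must be the product $\prod_j|z_j|^2=\Abs(\bz)^2$ (the natural objective), not the sum displayed a few lines earlier, which appears to be a typo. You instead localize $\mathcal V$ by the implicit function theorem and characterize critical points of $|\phi|$, with $\phi=z_1\dotsm z_{n-1}g(\htbz)$, via $\log|\phi|=\Re(\log\phi)$ and the identity $\partial(\Re\Phi)/\partial z_j=\tfrac12\,\partial\Phi/\partial z_j$ for $\Phi$ holomorphic, then substitute $\partial g/\partial z_j=-(\partial H/\partial z_j)/(\partial H/\partial z_n)$. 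That computation --- logarithmic differentiation of $\phi$ plus elimination of $\partial g/\partial z_j$ from $H(\htbz,g(\htbz))=0$ --- is precisely what the paper carries out later in the proof of Lemma~\ref{lemma:critpointpsi}, so you have in effect proved Lemma~\ref{lemma:extrema-crit-pts} as a corollary of that later computation plus one line of Cauchy--Riemann. The Lagrangian route is shorter and does not require choosing a distinguished coordinate; your chart-based route is more elementary, makes Lemma~\ref{lemma:critpointpsi} come for free, and is explicit about why the nonvanishing of the coordinates and the smoothness of $\mathcal V$ are needed (both arguments correctly dispose of the $\nabla H=0$ case trivially, and both implicitly restrict to nonzero coordinates since $\Abs$ is not differentiable otherwise).
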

\begin{proof}
Clearly the equations hold when the gradient of~$H$ is~0. The
remaining case is obtained by writing out the equations for the
coordinates of~$\nabla L=0$. From
\[L=\mathbf{z}\overline{\mathbf{z}}+\lambda H(
\mathbf{z})+\overline{\lambda H(\mathbf{z})},\]
it follows that for~$i\in\{1,\dots,n\}$, 
\[\frac{\partial L}{\partial z_i}=\frac{\mathbf{z}\overline{
\mathbf{z}}}{z_i}+\lambda\frac{\partial H(\mathbf{z})}{\partial
z_i}.\]
The solutions with nonzero coordinates of $\partial L/\partial z_i=0$ for
$i=1,\dots,n$ are
precisely those of Equation~\eqref{eq:critpt}.
\end{proof}
\begin{definition}
The Equations~\eqref{eq:critpt} are called the 
\emph{critical-point equations}. Their solutions are called 
\emph{critical points}, the map $\Abs$ being implicit. Those that do
not cancel the gradient $\nabla
H$ are called \emph{smooth}. 
\end{definition}
\begin{example}\label{ex:easy2}
Consider again the polynomial~$H=1-x-y$ from
Example~\ref{ex:easy1}.
The critical point equations~\eqref{eq:critpt} reduce to
$\{1-x-y=0,x=y\}$,
so that they have a unique solution~$x=y=1/2$, where the level surface
of $|xy|$ is tangent to $\mV$ (see Figure~\ref{fig:ex}, left). This
point is also
minimal, as shown in Example~\ref{ex:easy1}.

This critical point is neither a
maximum nor a minimum of~$\Abs:\bz\mapsto|xy|$ on~$\mV$, but only a
saddle point. This can be seen
either by considering the principal minors of the bordered Hessian
of~$L$, or directly, by observing that for small real positive~$t$,
the points $(1/2+t,1/2-t)$ and
$(1/2+it,1/2-it)$ lie on~$\mV$, while the map $\Abs$
takes values~$1/4-t^2$ and $1/4+t^2$ on them.

It is however a maximum of~$\Abs$ on~$\partial\mD$. Indeed, by
Lemma~\ref{lemma:domconv} and Example~\ref{ex:easy1}, the elements $
(x,y)$ of~$\partial\mD\cap\mV$ satisfy~$|y|=1-|x|$, so that the maximum of~$|xy|$ is
reached when~$|x|=|y|=1/2$. 
\end{example}
The last feature of this example is a reflection of a more general
phenomenon relating minimal critical points and maximizers of $\Abs$.
\begin{lemma}[{Pemantle and Wilson~\cite[Prop.~3.12]{PemantleWilson2008}}]
\label{lemma:min-crit}
If $\bw$ is a smooth minimal critical point, then it is a local
maximizer of the map $\Abs:\bz\mapsto|z_1\dotsm z_n|$ on
$\partial\mD$.
\end{lemma}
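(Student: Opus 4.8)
The plan is to obtain this as a short consequence of Proposition~\ref{prop:3.12}. That proposition already produces nonnegative reals $\lambda_1,\dots,\lambda_n$, not all zero, such that $\bigl(w_1\frac{\partial H}{\partial z_1}(\bw),\dots,w_n\frac{\partial H}{\partial z_n}(\bw)\bigr)$ is colinear with $(\lambda_1,\dots,\lambda_n)$ and $\bw$ is a local maximizer on $\boundary$ of $\bz\mapsto|z_1^{\lambda_1}\dotsm z_n^{\lambda_n}|$. The only extra hypothesis here is that $\bw$ is a \emph{critical} point, and the whole point of the proof is that this collapses all the weights $\lambda_i$ to a common positive value, turning the weighted maximization of Proposition~\ref{prop:3.12}(2) into the unweighted one we want.

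Concretely, I would first argue that the common value
\[c:=w_1\frac{\partial H}{\partial z_1}(\bw)=\dots=w_n\frac{\partial H}{\partial z_n}(\bw)\]
provided by the critical-point equations~\eqref{eq:critpt} is nonzero. Indeed, all coordinates of $\bw$ are nonzero (a critical point of $\Abs$ with a vanishing coordinate falls into the non-differentiable case and is outside the situation considered), so if $c$ were $0$ we could divide each equality $w_i\,\partial H/\partial z_i(\bw)=c$ by the nonzero $w_i$ to get $\nabla H(\bw)=\bzer$, contradicting smoothness. Hence $\bigl(w_1\frac{\partial H}{\partial z_1}(\bw),\dots,w_n\frac{\partial H}{\partial z_n}(\bw)\bigr)=c\,(1,\dots,1)$ with $c\neq0$. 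Being colinear with this nonzero vector, the weight vector $(\lambda_1,\dots,\lambda_n)$ from Proposition~\ref{prop:3.12} must itself be a scalar multiple $\mu\,(1,\dots,1)$; since the $\lambda_i$ are nonnegative and not all zero, $\mu>0$, so $\lambda_1=\dots=\lambda_n=\mu>0$.

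Finally, with equal weights the function maximized in Proposition~\ref{prop:3.12}(2) is $|z_1^{\mu}\dotsm z_n^{\mu}|=\Abs(\bz)^{\mu}$, and since $t\mapsto t^{\mu}$ is strictly increasing on $[0,\infty)$, a point of $\boundary$ is a local maximizer of $\Abs^{\mu}$ if and only if it is a local maximizer of $\Abs$; as $\bw$ is the former, it is the latter, which is the claim. The only genuinely delicate point is the bookkeeping around degenerate critical points having a zero coordinate, where $\Abs(\bw)=0$ and $c=0$ automatically and the argument above does not apply; I expect this to be excluded by the standing convention that critical points of $\Abs$ have nonzero coordinates (or otherwise dispatched by a direct argument via Lemma~\ref{lemma:domconv}), and pinning that down is the main thing I would want to state carefully.
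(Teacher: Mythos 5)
Your proposal is correct and follows the same route as the paper: derive the result from Proposition~\ref{prop:3.12} together with the colinearity implied by the critical-point equations. The paper's own proof is a terse two-sentence version of exactly this argument; you simply spell out the steps the paper leaves implicit --- that the common value $c=w_i\,(\partial H/\partial z_i)(\bw)$ is nonzero by smoothness (so the weight vector from Proposition~\ref{prop:3.12} must collapse to a common positive multiple of $(1,\dots,1)$), and that maximizing $\Abs^{\mu}$ is equivalent to maximizing $\Abs$ --- and you correctly flag that the argument tacitly restricts to critical points with nonzero coordinates, a convention the paper also adopts.
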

\begin{proof}
Since $\bw$ is critical, the vectors $(z_1\partial H/\partial
z_1,\dots,z_n\partial H/\partial z_n)$ and $(1,\dots,1)$ are colinear.
The conclusion follows from
Proposition~\ref{prop:3.12}.
\end{proof}
\noindent It is important to note that it can also happen that the
critical-point equations
do not have any solution.
\begin{example}\label{ex:12}
The generating function of Example~\ref{ex:ex7} leads
to the critical-point equations 
\[H=1-2wz-z^2=0,\quad -2wz-2z^2=-2wz.\]
The second equation forces $z=0$, which is incompatible with
the first one. There is no critical point in that case. In Figure~%
\ref{fig:ex-domain} (middle), this is reflected by the fact that no
level curve of $|wz|$ is tangent to the boundary of the domain of
convergence.
\end{example}
\begin{example}\label{example:no-critical-point2}Example~%
\ref{example:no-critical-point} is another example of a generating
function that
does not have any minimal critical point. Furthermore, it does not
have
critical points at all: the critical-point equations are
\[H=2+y-x(1+y)^2=0,\quad x+y-xy^2=0;\]
multiplying the first equation by $1-y$, the second one by~$
1+y$ and adding gives $2=0$, showing that this system does not have
any solution. Note however that the point~$(1/2,
\sqrt{3})$ is a local
maximizer of
$|xy|$ on~$\partial\mD$ where the critical-point equations are not
satisfied. 

This is due to a phenomenon shown in
Figure~\ref{fig:ex-domain} (right). Suppose $\bp_1=(1/2,\sqrt{3})$ and
$\bp_2=(1/2,-\sqrt{3})$
are the two points in $\mV$ with coordinate-wise moduli 
$(1/2,\sqrt{3})$, and let $\mN_1$ and $\mN_2$ be two small neighbourhoods of 
$\bp_1$ and $\bp_2$ in $\mV$. Then $(1/2,\sqrt{3})$ is not a local 
maximizer of $|xy|$ on the images of 
$\mN_1$ and $\mN_2$ under $(x,y) \mapsto (|x|,|y|)$, it only becomes
a local maximizer on $\partial\mD$ because the images of $\mN_1$ and
$\mN_2$ intersect after taking coordinate-wise moduli (Baryshnikov and 
Pemantle call this a `ghost intersection'). 
In particular, the level surface of $\Abs(x,y)$ is not tangent
to~$\mV$ at either of $\bp_1$ and $\bp_2$.s
\end{example}

The critical-point equations~\eqref{eq:critpt} form a system of $n$ equations in
$n$ unknowns. They are our starting point in this
work, where we focus
on the case when they have finitely many solutions, among which the
minimal points have to be found. Slightly more general cases can be
handled by small variations. For instance, when $H$ is not
square-free, its gradient vanishes, but one can
recover the relevant set of critical points by replacing
$H$ with its square-free part in Equations~\eqref{eq:critpt}. More
involved geometries of~$\mathcal{V}$ become difficult to analyze by
elementary means. This is what led Pemantle and Wilson~\cite{PemantleWilson2013} to
develop
their work in the language of Morse theory. 

\subsection{Asymptotic Analysis}\label{sec:acsv-local}
We first illustrate the main steps of the derivation on a simple
example.

\begin{example}\label{ex:central-binomial}
The power series $F=1/(1-x-y)$ has for diagonal 
\[\Delta\left(\frac1{1-x-y}\right)=\sum_{k\ge0}{\binom{2k}{k}t^k}
=\frac1{\sqrt{1-4t}}.\]
The asymptotic behavior of the diagonal coefficients is easily
seen to be $4^k/\sqrt{k\pi}$, e.g., by Stirling's formula. The
derivation of this result by ACSV starts with the
integral representation
\[a_k
= \frac1{2\pi i}\int_{|x|=r}\left(\frac{1}{2\pi i}\int_{|y|=r}{\frac1{1-x-y}\frac{dy}{(xy)^{k+1}}}\right)dx\]
for any $0<r<1/2$. 
For a fixed $x$ on the circle~$|x|=r$, the
integrand 
admits a unique pole, at $y=1-x$, outside of the initial circle of
integration.
Deforming the contour as indicated in Figure~%
\ref{fig:ex} (middle)
 shows that the integral with respect to~$y$ is
the sum of an integral over a contour~$|y|=1/(3r)>1/2$, and the
opposite of the residue at~$y=1-x$, namely
$1/(x(1-x))^{k+1}$. As $k$ increases, the factor~$(xy)^{-k-1}$ in the
integral over the large circle makes it grow exponentially
like~$(|xy|)^{-k}=3^k$.
The coefficient~$a_k$ thus behaves asymptotically like
\[a_k=\frac1{2\pi i}\oint_{|x|=r}{\frac{dx}{(x(1-x))^{k+1}}}+O(
c^k),\qquad c<4.\]

\begin{figure}
\centerline{
\includegraphics[height=4cm]{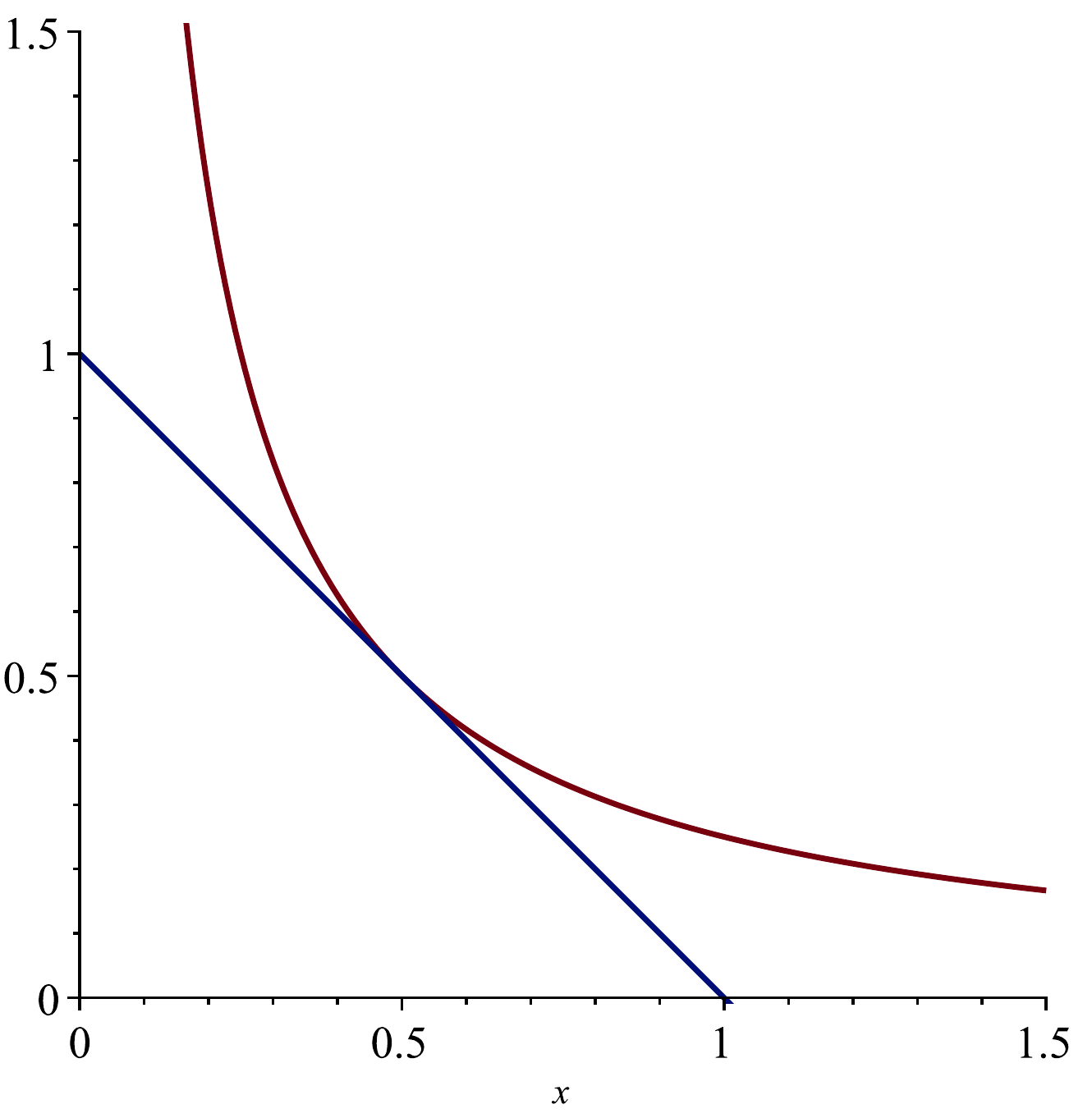}
\qquad
\begin{tikzpicture}
\draw (-2,0) -- (2,0);
\draw (0,-2) -- (0,2);
\draw (0,0) circle [radius=1.5];
\draw (1.05,0.65) circle [radius=0.265];
\fill (1.05,0.65) circle [radius=0.05];
\draw (-0.2,-0.2) node {0};
\draw (1.45,.8) node {$1-x$};
\draw[->] ([shift=(-40:1.6)]0,0) arc (-40:-20:1.6);
\draw[->] ([shift=(60:1.6)]0,0) arc (60:80:1.6);
\draw[->] ([shift=(-90:0.35)]1.05,0.65) arc [start angle=-90, end
angle=-160,radius=0.35];
\end{tikzpicture}\qquad
\includegraphics[height=4cm]{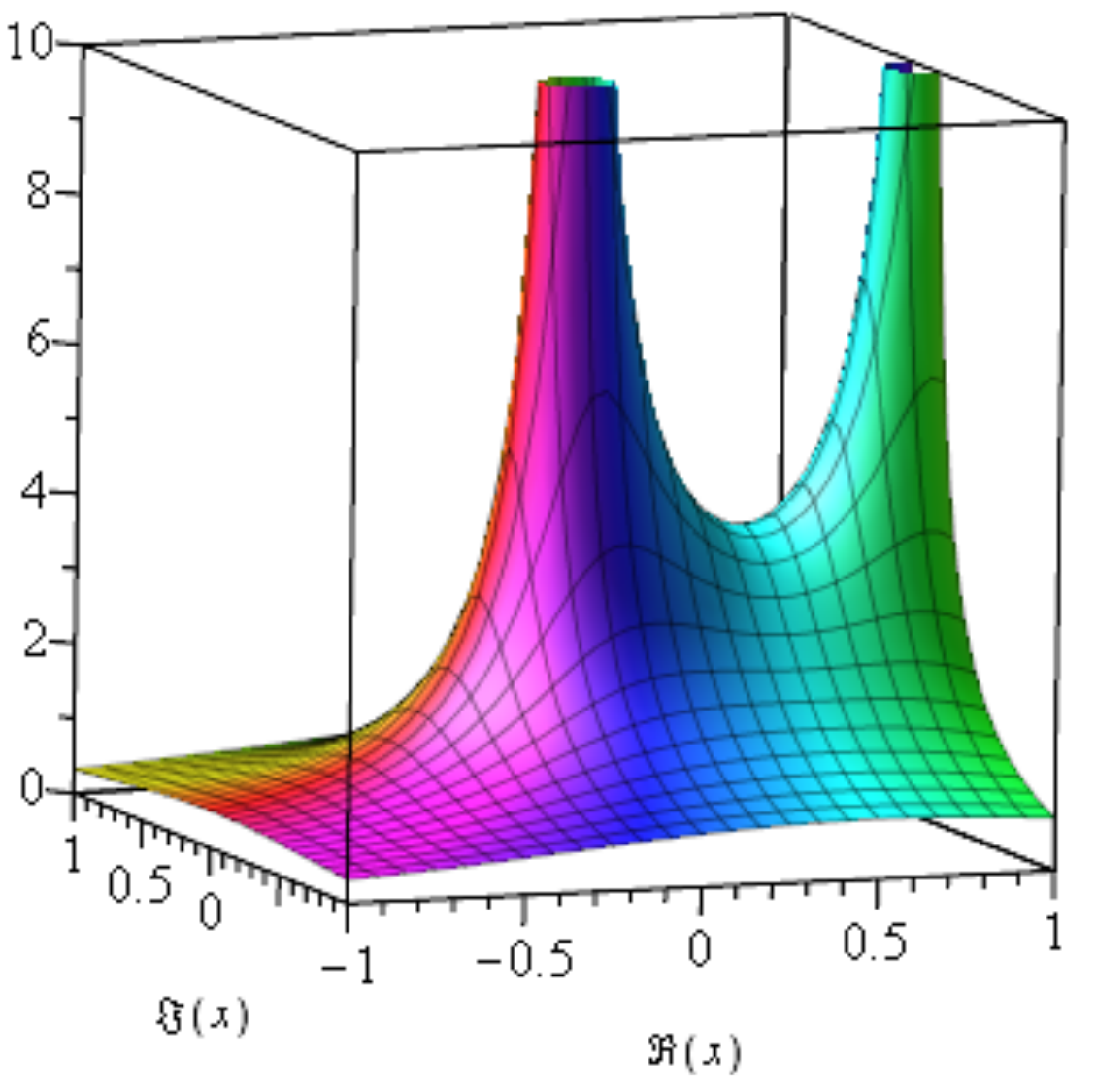}
}
\caption{Example~\ref{ex:central-binomial} in three steps: critical
points where the level curve of $|xy|$ is tangent to the singular
variety (left); contour of integration in the $y$-plane (middle);
modulus of the integrand $1/(x(1-x))$ with a saddle point at $x=1/2$
(right).\label{fig:ex}}
\end{figure}

This last integrand has a saddle point in the complex plane
at~$x=1/2$ (Figure~\ref{fig:ex}, right), where the integral concentrates
asymptotically. The classical saddle-point method (see~%
Olver~\cite{Olver1974}) then consists in: deforming the contour so
that it passes through the saddle point in the direction of
the imaginary axis; changing the variable into~$x=1/2+it$ and
observing that the integrand behaves
locally as
\[(x(1-x))^{-k-1}= 4^{k+1}e^{-4(k+1)t^2}(1+O(t^3)),\quad
t\rightarrow0;\]
reducing the asymptotic behaviour 
to that of a Gaussian integral, thus recovering the expected~$4^
{k}/\sqrt{k\pi}$.
\end{example}
The saddle-point integral in Example~\ref{ex:central-binomial} arose
because of a minimal critical point at $(1/2,1/2)$. 
One aspect of the computation that is missing from this simple example
is the selection of those critical points that are minimal. In
the context of this work, this is
the most expensive step computationally. It is discussed in the
next sections.

The techniques used in this example generalize.
If $\bzeta\in\mathcal{V}$ is a smooth point, then as in the proof of
Proposition~\ref{prop:3.12}, we assume without loss of generality that $(\partial
H/\partial z_n)(\bzeta) \neq 0$ and introduce the implicit
function~$g(\htbz)$ such that $H(\htbz,g(\htbz))=0$  and $g$ is
locally one-to-one in the neighbourhood of~$\bzeta$. Next, we consider
\[\psi(\htbz)=z_1\dotsm z_{n-1}g(\htbz).\]
\begin{lemma}\label{lemma:critpointpsi}With these notations, the
critical point equations are
equivalent to~$\nabla\psi
(\hat\bzeta)=0$.
\end{lemma}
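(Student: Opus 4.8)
The plan is to compute $\nabla\psi(\hat\bzeta)$ explicitly by differentiating the defining relation $H(\htbz,g(\htbz))=0$, and to show that the resulting equations are a rescaling of the critical-point equations~\eqref{eq:critpt}. First I would fix $j\in\{1,\dots,n-1\}$ and apply the product rule to $\psi(\htbz)=z_1\dotsm z_{n-1}g(\htbz)$:
\[\frac{\partial\psi}{\partial z_j}(\htbz)=\frac{z_1\dotsm z_{n-1}}{z_j}\,g(\htbz)+z_1\dotsm z_{n-1}\frac{\partial g}{\partial z_j}(\htbz).\]
Evaluating at $\hat\bzeta$ and using $g(\hat\bzeta)=\zeta_n$, this is $(z_1\dotsm z_n/z_j)\big(1+(z_j/z_n)\,\partial g/\partial z_j\big)$ evaluated at $\bzeta$. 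Next I would bring in the relation obtained in the proof of Proposition~\ref{prop:3.12} by differentiating $H(\htbz,g(\htbz))=0$ with respect to $z_j$, namely
\[\frac{\partial H}{\partial z_j}(\bzeta)+\frac{\partial H}{\partial z_n}(\bzeta)\frac{\partial g}{\partial z_j}(\hat\bzeta)=0,\]
which gives $\partial g/\partial z_j(\hat\bzeta)=-(\partial H/\partial z_j)(\bzeta)\big/(\partial H/\partial z_n)(\bzeta)$, using that $(\partial H/\partial z_n)(\bzeta)\neq0$.

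Substituting this into the expression for $\partial\psi/\partial z_j(\hat\bzeta)$ yields
\[\frac{\partial\psi}{\partial z_j}(\hat\bzeta)=\frac{\zeta_1\dotsm\zeta_n}{\zeta_j}\cdot\frac{1}{\zeta_n\,(\partial H/\partial z_n)(\bzeta)}\left(\zeta_n\frac{\partial H}{\partial z_n}(\bzeta)-\zeta_j\frac{\partial H}{\partial z_j}(\bzeta)\right).\]
Since all coordinates $\zeta_i$ are nonzero (a minimal or smooth critical point lies off the coordinate hyperplanes, and in any case the equivalence is being asserted for points of $\mathcal{V}$ with nonzero coordinates, consistent with Lemma~\ref{lemma:extrema-crit-pts}) and $(\partial H/\partial z_n)(\bzeta)\neq0$, the prefactor $\zeta_1\dotsm\zeta_n/(\zeta_j\zeta_n(\partial H/\partial z_n)(\bzeta))$ is a nonzero scalar. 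Hence $\partial\psi/\partial z_j(\hat\bzeta)=0$ if and only if $\zeta_j(\partial H/\partial z_j)(\bzeta)=\zeta_n(\partial H/\partial z_n)(\bzeta)$. Ranging over $j=1,\dots,n-1$, the simultaneous vanishing $\nabla\psi(\hat\bzeta)=0$ is equivalent to
\[\zeta_1\frac{\partial H}{\partial z_1}(\bzeta)=\dots=\zeta_n\frac{\partial H}{\partial z_n}(\bzeta),\]
and together with the already-assumed $H(\bzeta)=0$ (since $\bzeta\in\mathcal{V}$), this is exactly~\eqref{eq:critpt}.

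I expect the only delicate point to be bookkeeping about which coordinates vanish and handling the case distinction cleanly: the lemma is stated for a smooth point $\bzeta$ where $(\partial H/\partial z_n)(\bzeta)\neq0$, and the manipulation above divides by $\zeta_n$ and by several $\zeta_j$, so one must either restrict attention to critical points with all coordinates nonzero (which is the relevant regime, matching the proof of Lemma~\ref{lemma:extrema-crit-pts}) or note that the equivalence asserted is precisely between $\nabla\psi(\hat\bzeta)=0$ and the critical-point equations holding at $\bzeta$, both understood at such points. No serious analytic obstacle arises: the implicit function $g$ is guaranteed by smoothness exactly as in Proposition~\ref{prop:3.12}, and everything reduces to the chain rule plus one linear solve.
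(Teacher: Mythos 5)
Your proposal is correct and follows essentially the same route as the paper: differentiate $H(\htbz,g(\htbz))=0$ to obtain $\partial g/\partial z_j=-(\partial H/\partial z_j)/(\partial H/\partial z_n)$, substitute into the derivative of $\psi$, and observe that vanishing is equivalent to $\zeta_j(\partial H/\partial z_j)(\bzeta)=\zeta_n(\partial H/\partial z_n)(\bzeta)$ for all $j$. The only cosmetic difference is that the paper packages the product rule as a logarithmic derivative, $\frac{1}{\psi}\frac{\partial\psi}{\partial z_j}=\frac{1}{z_j}+\frac{1}{g}\frac{\partial g}{\partial z_j}$, whereas you expand it directly and then factor out the nonzero prefactor; the substance, including the implicit reliance on nonzero coordinates, is identical.
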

\begin{proof}For $i\in\{1,\dots,n-1\}$, differentiating~$H(\hat\bz,g
(\hat\bz))=0$ with respect to~$z_i$ yields
\[\frac{\partial g}{\partial z_i}=-\frac{\partial
H}{\partial z_i}\big/\frac{\partial
H}{\partial z_n},\]
which can be injected into
\[\frac1\psi\frac{\partial\psi}{\partial
z_i}=\frac1{z_i}+\frac{1}{g(\hat\bz)}\frac{\partial g}{\partial z_i},
\]
and the conclusion follows from~$\zeta_n=g(\hat\bzeta)$.
\end{proof}
Thus locally on~$\mV$ in the neighbourhood of a smooth critical
point~$\bzeta$, the function $\psi$ behaves like
\begin{equation} \psi(\hat\bz):=\zeta_1\dotsm\zeta_n+\frac12
(\hat\bz-\hat\bzeta)^t\cdot\mH(\bzeta)\cdot(\hat\bz-\hat\bzeta)+O\left(\left|\hat\bz-\hat\bzeta\right|^3\right), \label{eq:psi_Taylor} \end{equation}
where $\mH$ is the Hessian matrix of~$\psi$ (the $(n-1)\times(n-1)$
matrix whose entry $(i,j)$ is $\partial^2\psi/\partial z_i\partial z_j
$).
\begin{definition}
The critical point $\bzeta$ is called \emph{non-degenerate} when the
Hessian $\mH$ of $\psi$ is non-singular at $\htbz = \hat{\bzeta}$.
\end{definition}
\noindent In this favourable situation, the main result of ACSV is
the
following.
\begin{proposition}{Pemantle and Wilson~\cite[Theorem 9.2.7, Corollary 9.2.8]{PemantleWilson2013}}
\label{prop:diag_asm}
Suppose $F(\bz)$ has a smooth, strictly minimal, non-degenerate
critical point with non-zero coordinates at $\bzeta$.  Then the
diagonal coefficients satisfy
\begin{equation}
f_{k,\dots,k} = \bzeta^{-k}k^{\frac{1-n}{2}}
\left(\frac{(2\pi)^{(1-n)/2}}{\sqrt{(\bzeta^{3-n}/\zeta_n^2)|\mH
(\bzeta)|}}  \cdot  \frac{-G(\bzeta)}{\zeta_n\frac{\partial
H}{\partial_{z_n}}(\bzeta)} + O\left(\frac{1}{k}\right)\right),\qquad k\rightarrow\infty,
\label{eq:diag_asm}
\end{equation}
where $|\mH(\bzeta)|$ is the determinant of the Hessian of
$\psi$.
\end{proposition}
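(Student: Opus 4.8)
The plan is to run the ``surgery method'' of Example~\ref{ex:central-binomial} in $n$ variables and then finish with a multivariate saddle-point estimate, exactly as Pemantle and Wilson do in the smooth case.

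First I would start from the Cauchy integral~\eqref{eq:mCIF}, written as an iterated integral over a polytorus close to the origin with $z_n$ innermost. Because $\bzeta$ is strictly minimal, $T(\bzeta)\cap\mV=\{\bzeta\}$; using Lemma~\ref{lemma:domconv} and compactness of $T(\bzeta)$ one deforms the outer $(n-1)$-torus up to a cycle $T'$ around $\hat{\bzeta}$ that stays in $\mD$ except inside a small ball $B$ around $\bzeta$. Inside $B$ we may assume $(\partial H/\partial z_n)(\bzeta)\neq0$ and use the implicit function $g$ of Lemma~\ref{lemma:critpointpsi}, so that $\mV\cap B$ is the graph $z_n=g(\htbz)$; for each fixed $\htbz$ the inner $z_n$-contour, initially of radius slightly below $|\zeta_n|$, can be pushed to radius slightly above $|\zeta_n|$, crossing the single simple pole $z_n=g(\htbz)$. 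Outside $B$ the whole cycle can be pushed strictly past $T(\bzeta)$ and so contributes $O(c^k)$ for some $c<|\zeta_1\cdots\zeta_n|^{-1}$, by the discussion of exponential growth in Section~\ref{sec:ACSV}. Collecting the residues, and using $\psi(\htbz)=z_1\cdots z_{n-1}g(\htbz)$ so that $(z_1\cdots z_n)^{k+1}\big|_{z_n=g(\htbz)}=\psi(\htbz)^{k+1}$, this yields
\[
 f_{k,\dots,k} = \frac{-1}{(2\pi i)^{n-1}}\int_{T'}\frac{G(\htbz,g(\htbz))}{\frac{\partial H}{\partial z_n}(\htbz,g(\htbz))}\,\frac{dz_1\cdots dz_{n-1}}{\psi(\htbz)^{k+1}} \;+\; O(c^k).
\]

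Next I would evaluate this $(n-1)$-dimensional integral by the saddle-point (Laplace) method. Writing $z_j=\zeta_j e^{i\theta_j}$ turns it into $\frac{1}{(2\pi)^{n-1}}\int_{[-\pi,\pi]^{n-1}} A\big(\htbz(\boldsymbol{\theta})\big)\,(z_1\cdots z_{n-1})\,\psi^{-k-1}\,d\boldsymbol{\theta}$, with $A:=-G/(\partial H/\partial z_n)$ evaluated along the graph. Put $\phi=\log\psi\circ\htbz$, so $\psi^{-k-1}=\psi(\hat{\bzeta})^{-k-1}\exp\!\big(-(k+1)(\phi(\boldsymbol{\theta})-\phi(0))\big)$. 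Since $\nabla\psi(\hat{\bzeta})=0$ by Lemma~\ref{lemma:critpointpsi}, we have $\nabla_{\boldsymbol{\theta}}\phi(0)=0$, and by Proposition~\ref{prop:3.12} and Lemma~\ref{lemma:min-crit} the point $\bzeta$ maximizes $\Abs$ on $\boundary$, which forces $|\psi|$ to attain a local minimum at $\hat{\bzeta}$ along $T'$; hence $\Re(\phi(\boldsymbol{\theta})-\phi(0))\ge0$ near the origin. Non-degeneracy says the Hessian $B:=\operatorname{Hess}_{\boldsymbol{\theta}}\phi(0)$ is nonsingular, and since $\nabla\psi(\hat{\bzeta})=0$ the chain rule gives $B=\psi(\hat{\bzeta})^{-1}J^{t}\mH(\bzeta)J$ with $J=\operatorname{diag}(i\zeta_1,\dots,i\zeta_{n-1})$, whence $\det B=(-1)^{n-1}\zeta_n^{-2}\,\bzeta^{3-n}\det\mH(\bzeta)$. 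The Fourier--Laplace estimate then gives
\[
 \frac{1}{(2\pi)^{n-1}}\int e^{-\frac{k+1}{2}\boldsymbol{\theta}^{t} B\,\boldsymbol{\theta}}\,d\boldsymbol{\theta} \;\sim\; \frac{(2\pi)^{(1-n)/2}}{k^{(n-1)/2}\sqrt{\det B}},
\]
and multiplying by the leading values $\psi(\hat{\bzeta})^{-k-1}=\bzeta^{-k}/(\zeta_1\cdots\zeta_n)$, $A(\hat{\bzeta})=-G(\bzeta)/(\partial H/\partial z_n)(\bzeta)$ and $(z_1\cdots z_{n-1})|_{\hat{\bzeta}}=\zeta_1\cdots\zeta_{n-1}$, together with the determinant identity above, collapses to~\eqref{eq:diag_asm}; the relative error is $O(1/k)$ because $g$, $\psi$ and $A$ are analytic and $B$ is nonsingular (Watson-type expansion). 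The hypothesis that $\bzeta$ has nonzero coordinates is what makes $\psi$, the change of variables, and the $\zeta_j$ appearing in denominators legitimate.

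I expect the contour surgery of the first step to be the main obstacle: the deformation of the torus into ``$T(\bzeta)$ plus the residue cycle'' requires a careful compactness/partition argument to glue the local picture near $\bzeta$ to the global one, and it is exactly here that strict minimality and smoothness are used --- without them one must invoke amoebas or Morse theory as in Pemantle and Wilson~\cite{PemantleWilson2013}. The saddle-point half is standard once the integral has the one-dimensional Gaussian form above, but matching the constant precisely --- tracking the powers of $i$, the sign $(-1)^{n-1}$, and the branch of $\sqrt{\det B}$ so the answer appears as $\sqrt{(\bzeta^{3-n}/\zeta_n^2)|\mH(\bzeta)|}$ --- is the delicate bookkeeping there; it simplifies in the combinatorial case, where $\bzeta$ is real positive and $\mH(\bzeta)$ is negative semidefinite, so that $(-1)^{n-1}\det\mH(\bzeta)=|\mH(\bzeta)|>0$.
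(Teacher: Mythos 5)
Your proposal is correct and follows essentially the same route as the paper's own sketch (which is attributed to the surgery method of Pemantle and Wilson~\cite{PemantleWilson2002}): iterated Cauchy integral with $z_n$ innermost, split the outer torus into a neighbourhood $T'$ of $\hat\bzeta$ and a complement that contributes an exponentially smaller error, take the residue at $z_n=g(\htbz)$ to reduce to the $(n-1)$-fold integral of $\psi^{-k-1}$, and then apply the saddle-point method at $\hat\bzeta$. The only difference is that you spell out the Gaussian bookkeeping (the identity $\det B=(-1)^{n-1}(\bzeta^{3-n}/\zeta_n^2)\det\mH(\bzeta)$ and the branch of the square root) which the paper compresses into a citation to Wong's book.
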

The branch of the square-root in Equation~\eqref{eq:diag_asm} is
determined by the saddle-point integral arising in the proof of Proposition~\ref{prop:diag_asm}.
\begin{proof}[Sketch of the proof following~\cite{PemantleWilson2002}]
The starting point is the Cauchy integral~\eqref{eq:mCIF} and the idea
is to first perform the
integration with respect to~$z_n$.
Initially, the domain of integration is the product of the 
polytorus~$T(\hat\bzeta)$ and the circle $|z_n|=|\zeta_n|-\epsilon$ for a small
positive $\epsilon$, so that for $\bz$ on the contour, the open
polydisk $D
(\bz)$ is included in the domain of convergence of the power
series~\eqref{eq:powerseries}.  
For $\hat\bz\in T(\hat\bzeta)$ bounded away from $\hat\bzeta$, the
radius of convergence of $F(\hat\bz,z_n)$ as a function of~$z_n$ is
larger
than~$|\zeta_n|$ as a consequence of the minimality of~$\bzeta$
and
thus the inner integral is bounded by $(|\zeta_n|+\delta)^{-k}$ for
some uniform $\delta>0$,
so that that part of the integral is asymptotically exponentially
smaller than
$|\zeta_1\dotsm\zeta_n|^{-k}$. 

In the remaining part~$T'\ni\bzeta$ of the domain, since
$\bzeta$ is smooth, one
can
use the
implicit function~$g$ defined above. In a sufficiently small
neighbourhood of $\zeta$, the implicit function theorem even
ensures that 
there exists a larger disk~$D'=D((1+\epsilon)\zeta_n)$ such that
the
only singularity of the inner integrand inside~$D'$ with respect to
the
variable~$z_n$ is the simple pole at~$g(\hat\bz)$. There, its
residue
is
\[ 
\operatorname{Res}\left(\left.\frac{G(\bz)}{H(\bz)(z_1\cdots
z_n)^
{k+1}}\right| z_n= g(\htbz)\right) 
= \frac{G(\htbz,g(\htbz))}{\frac{\partial H}{\partial z_n}(\htbz,g
(\htbz))} \frac{1}{\psi(\htbz)^{k+1}}.
\]
By Cauchy's residue theorem, the integral over 
$T'\times D(|\zeta_n|-\epsilon)$ is thus equal to
the integral over~$T'\times D(|\zeta_n|+\epsilon)$ minus this residue
multiplied by $2\pi i$. 
The integral over~$T'\times D(|\zeta_n|+\epsilon)$
decreases asymptotically like~$|\bzeta|^{-k}(1+\epsilon)^{-k}$ as
$k\rightarrow\infty$, so that 
\begin{equation}\label{eq:proof-prop-14}
f_{k,\dots,k}=\frac{1}{(2\pi i)^{n-1}}\int_{T'}{\frac{-G(\htbz,g(\htbz))}{
\frac{\partial H}{\partial z_n}(\htbz,g
(\htbz))} \frac{dz_1\dotsm dz_{n-1}}{\psi(\htbz)^{k+1}}}+O\left(
(|\bzeta|
(1+\epsilon))^{-k}\right).
\end{equation}
Now, this integrand has a saddle point at~$\hat\bzeta$, in the
neighbourhood of which the integral concentrates asymptotically.
There, the Taylor expansion of the integrand is
\[\frac{-G(\bzeta)}{\frac{\partial H}{\partial z_n}(\bzeta)\psi(\bzeta)^{k+1}}\exp\left(-\frac{k+1}{2\psi(\bzeta)}(\hat\bz-\hat\bzeta)^t\cdot\mH(\bzeta)\cdot(\hat\bz-\hat\bzeta) + O\left(\left|\hat\bz-\hat\bzeta\right|^3\right)\right).\]
Since $\mH$ is non-singular, the integrand behaves locally
like a Gaussian integral and saddle-point methods can be applied to
obtain asymptotics (see Wong~\cite{Wong1989}).
\end{proof}

All the proofs up to this point reduce to deformations of univariate
integrals. A genuinely multivariate deformation of the
contour makes it possible to avoid $\mV$ while extending the domain of
integration beyond the minimal points that are not critical, 
leading to the following\footnote{The results of Baryshnikov and Pemantle~\cite{BaryshnikovPemantle2011}
include as a hypothesis that all minimizers of $|z_1\cdots z_n|^{-1}$ on $\partial\mD$
have the same coordinate-wise moduli, however the methods of that paper never
use this property under our conditions.}.

\begin{proposition}[Baryshnikov and Pemantle~\cite{BaryshnikovPemantle2011}]
\label{prop:add-crit-pts}
If the point $\bzeta$ of the previous proposition is not necessarily
\emph{strictly} minimal but $T(\bzeta)$ contains only a finite number
of critical points, all of them being smooth and non-degenerate, then 
asymptotics of the diagonal coefficients of $F(\bz)$ are obtained by
summing up the contributions~\eqref{eq:diag_asm} given by each
of these points.
\end{proposition}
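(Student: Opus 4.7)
The plan is to generalize the proof of Proposition~\ref{prop:diag_asm} by localising the Cauchy integral~\eqref{eq:mCIF} around each of the finitely many critical points on $T(\bzeta)$, using a smooth partition of unity on the torus. Write $\bzeta^{(1)},\dots,\bzeta^{(m)}$ for the critical points on $T(\bzeta)$, all smooth and non-degenerate by hypothesis. Since all of them have the same coordinate-wise moduli, they correspond to the same value of $|\zeta_1\dotsm\zeta_n|^{-1}$, so each should contribute to the same dominant exponential order.

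First, I would start from the integral representation~\eqref{eq:mCIF} taken on a polytorus slightly inside $T(\bzeta)$ (say at moduli $(|\zeta_1|-\epsilon,\dots,|\zeta_n|-\epsilon)$), which is legitimate by minimality of $\bzeta$. I would cover $T(\bzeta)$ by open sets $U_1,\dots,U_m$ (small neighborhoods of each $\bzeta^{(j)}$ in which the implicit-function machinery of Lemma~\ref{lemma:critpointpsi} applies) together with an open set $U_0$ covering the remaining compact portion $T(\bzeta)\setminus\bigcup_j U_j$, which contains no critical points and no other points of $\mV$ at modulus $|\zeta_1\dotsm\zeta_n|^{-1}$ or smaller. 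A subordinate partition of unity splits the integrand into $m+1$ pieces.

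On each $U_j$ for $j\ge1$, the analysis of the proof of Proposition~\ref{prop:diag_asm} applies verbatim: perform the univariate contour deformation in $z_n$ past the locally defined sheet $g_j$ of $\mV$, pick up the residue, and apply the saddle-point method of Wong~\cite{Wong1989} to the resulting $(n-1)$-dimensional integral whose integrand concentrates at $\hat\bzeta^{(j)}$. Because $\bzeta^{(j)}$ is smooth and non-degenerate, the Gaussian integral evaluates to the expression~\eqref{eq:diag_asm} with $\bzeta$ replaced by $\bzeta^{(j)}$, and summing these $m$ contributions gives the claimed asymptotics.

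The main obstacle is handling the piece supported on $U_0$: since $\bzeta$ is no longer strictly minimal, one cannot simply invoke a single univariate deformation to show this piece is exponentially negligible. Here I would invoke the genuinely multivariate contour-deformation construction of Baryshnikov and Pemantle~\cite{BaryshnikovPemantle2011}: away from critical points on $T(\bzeta)$, the Morse-theoretic flow can be used to push the cycle locally past $T(\bzeta)$ into a region where $|z_1\dotsm z_n|^{-1}$ is uniformly strictly smaller than at $\bzeta$, so that the contribution of $U_0$ is $O(c^k)$ for some $c<|\zeta_1\dotsm\zeta_n|^{-1}$. This step requires the finiteness of the critical set on $T(\bzeta)$ (to ensure a uniform gap in a compact neighborhood) and the smoothness of $\mV$ at the relevant points (so that the tangent-space argument of Proposition~\ref{prop:3.12} can be inverted to produce the required outward deformation). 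Once this bound is established, the partition-of-unity pieces combine additively to yield the stated sum of critical-point contributions.
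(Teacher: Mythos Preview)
The paper does not actually prove this proposition: it is stated with attribution to Baryshnikov and Pemantle~\cite{BaryshnikovPemantle2011}, and the only argument the paper offers is the single sentence preceding it, namely that ``a genuinely multivariate deformation of the contour makes it possible to avoid $\mV$ while extending the domain of integration beyond the minimal points that are not critical.'' Your sketch is entirely consistent with this: you localise via a partition of unity, run the residue-plus-saddle-point argument of Proposition~\ref{prop:diag_asm} near each critical point on $T(\bzeta)$, and defer to the Baryshnikov--Pemantle multivariate deformation for the remaining piece $U_0$. That is exactly the shape of the argument the paper is pointing to, so in that sense your proposal matches.

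One inaccuracy worth flagging: your description of $U_0$ as containing ``no other points of $\mV$ at modulus $|\zeta_1\dotsm\zeta_n|^{-1}$ or smaller'' is not correct and in fact contradicts the setup. Since $\bzeta$ is not assumed strictly minimal, the torus $T(\bzeta)$ may intersect $\mV$ at non-critical points lying in $U_0$; this is precisely why the univariate deformation of Proposition~\ref{prop:diag_asm} fails there and why the genuinely multivariate deformation is required. You recognise this in your next paragraph, so the slip is only in the description of $U_0$, not in the logic of the argument.
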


Note that one can apply Proposition~\ref{prop:diag_asm} using any
coordinate $z_k$ such that $(\partial H/\partial z_k)(\bzeta) \neq 0$,
and in the context of Proposition~\ref{prop:add-crit-pts}, this coordinate may change
depending on the minimal critical point under consideration.

When the numerator $G(\bz)$ is 0 at the strictly minimal critical
point $\bzeta$ then Equation~\eqref{eq:diag_asm} gives only an order
bound on the asymptotics of the diagonal sequence.  Generically the numerator does not vanish at the critical points of $F$, however when this does happen one can typically determine dominant asymptotics by computing further terms of the Taylor expansion for $G(\bz)$ in a neighbourhood of $\bzeta$.

The Hessian $\mH(\bzeta)$ in Equation~\eqref{eq:diag_asm} can be
expressed in terms of that of~$H$ itself.  For a critical
point $\bzeta$, define $\lambda$ to be the common value of $\zeta_k
(\partial H/\partial z_k)(\bzeta)$ ($1 \leq k \leq n$) and for $1\leq
k,\ell \leq n$, set
\begin{equation}\label{eq:Ukl}
U_{k,\ell}:= \zeta_k\zeta_\ell\frac{\partial^2H}{\partial z_k\partial z_\ell}(\bzeta).
\end{equation}
Basic multivariate calculus shows that the $(n-1)\times(n-1)$ Hessian matrix $\mH$ at $\bzeta$ has $(i,j)^\text{th}$ entry
\begin{equation} 
\label{eq:Hess}
\mH_{i,j} = 
\begin{cases}
\frac{\zeta_1 \cdots \zeta_n}{\lambda\zeta_i\zeta_j}(U_{i,n}+U_
{j,n}-U_{i,j}-U_{n,n}-\lambda) &\text{if $i \neq j$,} \\[+2mm]
\frac{\zeta_1 \cdots \zeta_n}{\lambda\zeta_i^2}(2U_{i,n}-U_{i,i}-U_
{n,n}-2\lambda) &\text{if $i=j$.}
\end{cases}
\end{equation}
This makes it simple to compute the asymptotic contribution~\eqref{eq:diag_asm} at a non-degenerate minimal critical point $\bzeta$.

\subsection{Combinatorial Case}
\noindent The determination of the minimal points among the critical
points is
significantly easier when further positivity conditions hold.
\begin{definition}A
rational function $F(\bz) = {G(\bz)}/{H(\bz)}$ with $H(\mathbf{0})\neq0$ is
called \emph{combinatorial} when the coefficients of its Taylor
expansion at
the origin are all non-negative.
\end{definition}
In these conditions, the following result is a
multivariate variant,
that
actually extends to meromorphic functions, of a classical theorem in
the univariate case usually attributed to one of Pringsheim, Borel or
Vivanti, see~\cite{Hadamard1954,Vivanti1893}. 
(Pemantle and Wilson~\cite[Prop. 8.4.3]{PemantleWilson2013} give a stronger statement.)
\begin{lemma}
\label{lem:combCase}
If $F(\bz)$ is combinatorial and $\bw$
belongs to the boundary
$\partial\mD$ of its domain of convergence,
then the point $|(\bw)|=
(|w_1|,\dots,|w_n|)$ is minimal.
If moreover
$\bw$ is smooth and
critical, then $|(\bw)|$ is critical too. 
\end{lemma}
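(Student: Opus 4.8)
The plan is to prove the two assertions in turn: the first as a multivariate form of the Pringsheim--Borel--Vivanti theorem, the second by transferring the critical-point condition from $\bw$ to $|(\bw)|$ along the polytorus $T(\bw)$. Throughout I work, as elsewhere in the paper, under the hypothesis that the coordinates of $\bw$ are nonzero; the degenerate case needs separate bookkeeping and is discussed at the end.

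For the first assertion, set $\br:=(|w_1|,\dots,|w_n|)$. Since $D(\br)=D(\bw)$ depends only on coordinate moduli, $D(\br)\subset\mD$ and hence $\br\in\overline{\mD}$; and if $\br$ were interior, openness of $\mD$ would give $\varepsilon>0$ with $(1+\varepsilon)\br\in\mD$, whence $D((1+\varepsilon)\br)\subset\mD$ by the first listed property of $\mD$, contradicting $\bw\in D((1+\varepsilon)\br)\subset\mD$ and $\bw\in\boundary$. So $\br\in\boundary$. Combinatoriality enters only to place $\br$ on $\mV$: the expansion \eqref{eq:powerseries} is not absolutely convergent at points of $\boundary$, and as all $f_\bi\ge0$ this forces $\sum_\bi f_\bi\br^\bi=+\infty$, so by monotone convergence $F(s\br)=\sum_\bi f_\bi(s\br)^\bi\to+\infty$ as $s\uparrow1$ along $s\br\in D(\br)\subset\mD$; since $G$ and $H$ are polynomials, $G(s\br)\to G(\br)$ and $H(s\br)\to H(\br)$, so the limit can be infinite only if $H(\br)=0$. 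Thus $\br\in\boundary\cap\mV$ is minimal. This is the analogue for rational functions of Lemma~\ref{lemma:domconv}~(i) in which combinatoriality lets one take the point of $\mV$ on $T(\bw)$ to be $|(\bw)|$ itself.

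For the second assertion, $\bw$ is itself a minimal point (it is on $\boundary$ by hypothesis and on $\mV$ because it is critical), so by Lemma~\ref{lemma:min-crit} it is a local maximizer of $\Abs$ on $\boundary$. Since $\mD$ is a complete Reinhardt domain, $\boundary$ is invariant under the torus action $(z_k)\mapsto(z_ke^{i\theta_k})$ and $\Abs$ depends only on moduli; the continuous map $\bv\mapsto(|v_k|\,w_k/|w_k|)_k$ sends $\br$ to $\bw$, preserves both $\boundary$ and $\Abs$, and therefore pulls the local-maximality of $\bw$ back to $\br$, so $\br$ is a local maximizer of $\Abs$ on $\boundary$ as well. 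If $\nabla H(\br)=0$ then \eqref{eq:critpt} holds trivially at $\br$; otherwise pass to the variables $\bx=\Relog(\bz)=(\log|z_1|,\dots,\log|z_n|)$. The set $\mathcal{L}:=\Relog(\mD)$ is open, convex (logarithmic convexity), and downward closed (completeness), so near $\bx_0:=\Relog(\br)\in\partial\mathcal{L}$ its boundary is the graph of a finite convex function over a full neighbourhood of the first $n-1$ coordinates; by the first assertion applied in the positive orthant, $\partial\mathcal{L}$ near $\bx_0$ lies inside the zero set of $\bx\mapsto H(e^{x_1},\dots,e^{x_n})$, which near $\bx_0$ is itself a smooth graph because $\nabla H(\br)\neq0$, so the two graphs coincide there and $\partial\mathcal{L}$ is a $C^1$ hypersurface at $\bx_0$ with normal $(r_k\,\partial H/\partial z_k(\br))_k$. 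Since $\bx_0$ maximizes the linear form $\bone\cdot\bx$ on this hypersurface, $\bone$ must be colinear with that normal, which is precisely the critical-point system \eqref{eq:critpt} at $\br$.

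I expect the last step to be the main obstacle: that $\br$ maximizes $\Abs$ on $\boundary$ is, by itself, not enough to give criticality, since a maximizer of $\Abs$ on $\boundary$ can fail to be critical through a ``ghost intersection'' as in Example~\ref{example:no-critical-point2}. Combinatoriality is exactly what rules this out: through the first assertion it forces $\boundary$ to coincide with $\mV$ near $\br$ in the positive orthant, hence $\partial\mathcal{L}$ to be smooth there, so that its outer normal is unique and must therefore be the diagonal direction dictated by $\Abs$. The remaining loose end is the case in which some coordinate of $\bw$ vanishes, where several of the polydisk and $\Relog$ arguments degenerate; this can be handled by restricting to the coordinate subspace on which the vanishing coordinates are zero, but it does not arise under the paper's standing nonvanishing hypothesis.
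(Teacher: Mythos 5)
Your first assertion is proved essentially the same way as the paper does: $D(\br)=D(\bw)$ places $\br$ on $\partial\mD$, and monotone convergence (from nonnegativity of the $f_\bi$) forces $F(s\br)\to\infty$ as $s\uparrow 1$, whence $H(\br)=0$. Your second assertion, however, follows a genuinely different route. The paper applies Proposition~\ref{prop:3.12} directly at $\br=|(\bw)|$ to obtain a weight vector $\blambda$ and then rules out $\blambda\neq\bone$ by a perturbation $(w_je^{\epsilon v_j})$ with $\bv\cdot\bone=0$, $\bv\cdot\blambda>0$, playing the local maximality of $\bw$ for $\Abs$ against that of $\br$ for $\Abs_\blambda$. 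You instead (i) transfer local maximality of $\Abs$ on $\partial\mD$ from $\bw$ to $\br$ through the modulus-preserving map $\bv\mapsto(|v_k|w_k/|w_k|)_k$, and then (ii) pass to $\Relog$ coordinates, use the first assertion to show $\partial\mathcal{L}$ lies inside the smooth hypersurface $\{H\circ\exp=0\}$, argue the two sets coincide locally, and read off the critical-point equations from the normal direction. Step (ii) is in effect a proof of the paper's later unnumbered Proposition (``every point in a neighbourhood of $\bw$ in $\mV\cap\mathbb{R}^n$ is minimal'') and its Corollary; so your argument is self-contained and perhaps more transparently geometric, at the cost of front-loading that local-coincidence work. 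Two small technical points you should tidy: the claim that $\partial\mathcal{L}$ is a graph ``over the first $n-1$ coordinates'' needs the graph direction chosen so that the corresponding $\partial H/\partial z_j(\br)\neq 0$ (downward closedness only gives $-\bone$, not each $-\mathbf{e}_j$, as a strict recession direction), and the statement that the convex graph and the smooth graph ``coincide'' deserves a line: since $\mD\cap\mV=\emptyset$, $\mathcal{L}$ is disjoint from $\{H\circ\exp=0\}$, which together with $\bx_0-\delta\bone\in\mathcal{L}$ and $\bx_0+\delta\bone\notin\mathcal{L}$ forces $\mathcal{L}$ to fill one side of the hypersurface near $\bx_0$.
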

Note that the hypothesis can be weakened to allow
a finite number of negative coefficients in the Taylor series of
$F
(\bz)$, by subtracting the corresponding polynomial.

\begin{proof}Since $\bw$ belongs to
the boundary of the domain of convergence, the Taylor expansion of $F$
does not converge absolutely as~$\bz$ tends to~$\bw$ inside $D(\bw)$. Non-negativity of
the
coefficients then implies that the Taylor expansion of~$F$ does
not converge as~$\bz$ tends to~$(|w_1|,\dots,|w_n|)$ inside $D(\bw)$.
Since the function is meromorphic, this implies that it tends to
$\infty$ and that that point is a zero of~$H$. It is therefore both on the boundary
of the domain of convergence and on~$\mV$, as was to be proved.

If $|(\bw)|$ is not smooth, then it is critical. Otherwise, by
Proposition~\ref{prop:3.12}, there exists $\blambda=
(\lambda_1,\dots,\lambda_n)$ with non-negative real coordinates
 such that $(|w_1|\partial H/\partial z_1
(|(\bw)|),\dots,|w_n|\partial H/\partial z_n(|(\bw)|))$ and $
\blambda$ are colinear 
and $|(\bw)|$ is a local
maximizer
of
the map $\Abs_{\blambda}:\bz\mapsto|z_1^{\lambda_1}\dotsm z_n^
{\lambda_n}|$ on
$\boundary$.
It remains to show that $\blambda=(1,\dots,1)$. If
not, there exists $\bv\in\mathbb{R}^n$ such that $\bv\cdot(1,\dots,1)=0$
and $\bv\cdot\blambda>0$. Since $\bw$ is smooth and critical,  by
 Lemma~%
\ref{lemma:min-crit} it is a local maximizer of $\Abs$ on
$\boundary$. Note that
for small enough $\epsilon>0$, the point $(w_1e^{\epsilon
v_1},\dots,w_ne^{\epsilon v_n})$ belongs to $\overline{\mD}$. Then so
does
$(|w_1|e^{\epsilon
v_1},\dots,|w_n|e^{\epsilon v_n}),$ but this contradicts the local
maximality of $(|\bw|)$ for $\Abs_{\blambda}$.
\end{proof}
\noindent The first part of this result is the basis of the following test leading to an
efficient 
algorithm in the next section.
\begin{proposition}
\label{prop:lineMin}
If $F$ is combinatorial and $\bw$ is in~$\mV$, 
then $\bw$
is a minimal point if and only if the line segment
$\{(t|w_1|,\dots,t|w_n|) : 0 < t < 1\}$
does not intersect $\mV$. 
\end{proposition}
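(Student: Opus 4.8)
The plan is to establish both directions by tracing through the relationship between the real line segment from the origin to $(|w_1|,\dots,|w_n|)$, the polydisk $D(|(\bw)|)$, and the singular variety $\mV$, leveraging Lemma~\ref{lem:combCase} (combinatoriality forces $|(\bw)|$ onto $\partial\mD$ whenever $\bw\in\partial\mD\cap\mV$) and the characterizations of $\overline{\mD}$ in Lemma~\ref{lemma:domconv}. First I would observe that, because $F$ is combinatorial, for any $\bz\in\mathbb{C}^n$ the power series converges absolutely at $\bz$ if and only if it converges at $(|z_1|,\dots,|z_n|)$; hence $\bz\in\mD$ iff $(|z_1|,\dots,|z_n|)\in\mD$, and the domain $\mD$ is determined by its intersection with the positive orthant. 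In particular, for $\bw\in\mV$, being a minimal point (i.e.\ $\bw\in\partial\mD$) is equivalent to $(|w_1|,\dots,|w_n|)\in\partial\mD$, which by Lemma~\ref{lemma:domconv}~(i)--(iii) and logarithmic convexity can be detected by looking only along the ray through $(|w_1|,\dots,|w_n|)$.

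For the forward direction, suppose $\bw$ is minimal, so $(|(\bw)|)\in\partial\mD$. By Lemma~\ref{lemma:domconv}~(ii), $\mV\cap\mD=\emptyset$; and by the characterization of $\overline{\mD}$, the open polydisk $D(|(\bw)|)$ is contained in $\mD$. The open line segment $\{(t|w_1|,\dots,t|w_n|) : 0<t<1\}$ lies inside $D(|(\bw)|)\subseteq\mD$, so it cannot meet $\mV$. This direction uses only that minimal points lie on $\partial\mD$ together with Lemma~\ref{lemma:domconv}, and does not actually need combinatoriality — but stating it in the combinatorial setting is harmless.

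For the converse, assume the open segment $\{(t|w_1|,\dots,t|w_n|) : 0<t<1\}$ misses $\mV$; we must show $\bw\in\partial\mD$, equivalently $(|(\bw)|)\in\partial\mD$. I would argue that the point $(|(\bw)|)$ itself lies in $\overline{\mD}$: since $\bw\in\mV$ and $H(\bzer)\neq0$, the origin is not on $\mV$, and for $t$ slightly less than $1$ the point $(t|w_1|,\dots,t|w_n|)$ avoids $\mV$; the key is to upgrade "avoids $\mV$" to "lies in $\mD$." Here combinatoriality enters: if some $(t_0|w_1|,\dots,t_0|w_n|)$ with $t_0<1$ were \emph{not} in $\mD$, then since the coefficients are non-negative and $F$ is meromorphic, the series would have to blow up there, forcing that point onto $\mV$ — contradicting the hypothesis that the whole open segment avoids $\mV$. (More carefully: the set of $t\in(0,1]$ for which $(t|w_1|,\dots,t|w_n|)\in\mD$ is an interval containing a neighbourhood of $0$; if it did not contain all of $(0,1)$ its supremum $t^\ast\le1$ would be a boundary point of $\mD$ on the ray, and by Lemma~\ref{lemma:domconv} applied in the combinatorial case — as in the proof of Lemma~\ref{lem:combCase} — $F$ blows up there, so $t^\ast\in\mV$; since the open segment avoids $\mV$ we get $t^\ast=1$.) Thus $D(|(\bw)|)\subseteq\mD$ and so $(|(\bw)|)\in\overline{\mD}$; since $(|(\bw)|)\in\mV$, Lemma~\ref{lemma:domconv}~(ii) gives $(|(\bw)|)\notin\mD$, hence $(|(\bw)|)\in\partial\mD$, i.e.\ $(|(\bw)|)$ — and therefore $\bw$ — is a minimal point.

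The main obstacle is the converse direction, specifically the step that promotes "the ray avoids $\mV$ below $t=1$" to "the ray lies in $\mD$ below $t=1$": one needs the monotonicity of the domain of convergence along positive-real rays together with the Pringsheim-type blow-up argument from Lemma~\ref{lem:combCase} to rule out the radius of convergence being cut short by something other than a point of $\mV$. Everything else is a direct application of Lemmas~\ref{lemma:domconv} and~\ref{lem:combCase}.
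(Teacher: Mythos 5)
Your proof is essentially the same argument as the paper's: the forward direction via Lemma~\ref{lemma:domconv}~\emph{(ii)} is identical (and you correctly note it does not use combinatoriality), and the converse mirrors the paper's proof: form the set $\mathcal{S}=\{t\ge 0 : (t|w_1|,\dots,t|w_n|)\in\mD\}$ (an interval, by the Reinhardt property), take its supremum $\theta$, observe $\theta|\bw|\in\partial\mD$, invoke Lemma~\ref{lem:combCase} to get $\theta|\bw|\in\mV$, and conclude $\theta=1$ from the hypothesis. Your first formulation (``if some $t_0|\bw|$ with $t_0<1$ were not in $\mD$, the series would blow up there'') is too strong as written --- a point outside $\overline{\mD}$ need not be a singularity --- but the parenthetical remark about the supremum being a boundary point is the correct version and matches the paper.

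There is one genuine misstep at the end: you assert ``since $(|(\bw)|)\in\mV$'' to conclude $(|(\bw)|)\notin\mD$. But $\bw\in\mV$ does \emph{not} imply $(|w_1|,\dots,|w_n|)\in\mV$: for instance $H=2-x-y$ and $\bw=(1+i,1-i)\in\mV$ have $|(\bw)|=(\sqrt2,\sqrt2)\notin\mV$. The fact you actually need, namely $(|(\bw)|)\notin\mD$, follows instead from $\bw\in\mV\Rightarrow\bw\notin\mD$ (Lemma~\ref{lemma:domconv}~\emph{(ii)}) combined with the observation you yourself made at the start, that $\bz\in\mD$ iff $(|z_1|,\dots,|z_n|)\in\mD$. (One also needs this to know $\theta\le1$ before you begin, which your restriction of $t$ to $(0,1]$ implicitly assumes rather than proves.) With that substitution the argument is complete and coincides with the paper's; the claim that $(|(\bw)|)\in\mV$ is in fact a \emph{consequence} of the proposition once $\theta=1$ is established, not an ingredient.
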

\begin{proof}
One direction is straightforward and does not depend on $F$ being
combinatorial: if $\bw$ is minimal, its polydisk $D
(\bw)$ is a subset of the domain
of convergence, so that it cannot contain any point~$
(t|w_1|,\dots,t|w_n|)\in\mV$ with $t \in (0,1)$ by Lemma~%
\ref{lemma:domconv}~\emph{(ii)}.

Conversely, let $\bw$ be in $\mV$, assume that the line segment of the
proposition does not intersect~$\mV$ and let
\[\mathcal{S}:=\{t\ge0\mid (t|w_1|,\dots,t|w_n|)\in\mD\}.\]
Since $H(\mathbf{0})\neq0$ the set~$\mathcal{S}$ is not empty. It is also included
in~$(0,1)$ since $\bw$ is a singularity of~$F$. Thus
$\theta:=\sup\mathcal{S}$ is well defined and finite. 
The point $\theta|\bw|:=(\theta|w_1|,\dots,\theta|w_n|)$ belongs
to~$\partial D$. This means that its torus intersects~$\mV$, by
Lemma~\ref{lemma:domconv} and by the
previous lemma that $\theta|\bw|$ itself belongs to~$\mV$. The
hypothesis then implies~$\theta=1$ and thus that $\bw$ is minimal.
\end{proof}

\begin{example}\label{ex:criteqApery}
The generating function of
Example~\ref{ex:Apery1},
whose diagonal coefficients are the Apery numbers, is
combinatorial. The
critical-point equations are
\begin{gather*}
H(a,b,c,z)=1-z(1+a)(1+b)(1+c)(1+b+c+bc+abc)=0,\\
\begin{align*}
z(1+a)(1+b)(1+c)(1+b+c+bc+abc)
&=az(1+b)(1+c)(1+b+c+2bc+2abc)\\
&=bz(1+a)(1+c)(2+2b+2c+ac+2bc+2abc)\\
&=cz(1+a)(1+b)(2+2b+2c+ab+2bc+2abc).
\end{align*}
\end{gather*}
They have only two solutions with $abc\neq0$:
$a=1\pm\sqrt2, b=\pm{\sqrt{2}}/2, c=\pm{\sqrt{2}}/
{2}, z=-82\pm58\sqrt2$.
Only the solution with $+\sqrt{2}$ has non-negative coordinates so that
it is the only one that is possibly minimal.
Adding the equation $H(ta,tb,tc,tz)=0$, eliminating the
variables $a,b,c,z$ and discarding the point $t=1$ produces the polynomial
\[t^{12}-2t^{11}+t^{10}+4t^9-24t^8-8t^7+20t^6-20t^5+212t^4-400t^3+820t^2-664t-4\]
with no root in the interval $(0,1)$. This proves the minimality of
that solution.
\end{example}
\begin{proposition}If $F$ is combinatorial and $\bw$ is a smooth
minimal
point with positive coordinates, then every point in a neighbourhood
of $\bw$ in $\mV\cap\mathbb{R}^n$ is minimal.
\end{proposition}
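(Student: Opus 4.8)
The plan is to combine the segment criterion for minimality in the combinatorial case (Proposition~\ref{prop:lineMin}) with the colinearity property of smooth minimal points (Proposition~\ref{prop:3.12}), through a short compactness argument. Since $\bw$ has positive coordinates, I would first fix a neighbourhood $N$ of $\bw$ contained in the positive orthant $\mathbb{R}_{>0}^n$; then for every $\bw'\in N$ one has $|w'_i|=w'_i$, so by Proposition~\ref{prop:lineMin} a point $\bw'\in\mV\cap N$ is minimal precisely when the open segment $\{t\bw':0<t<1\}$ avoids $\mV$. The goal thus reduces to shrinking $N$ so that this holds for all $\bw'\in\mV\cap N$.

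I would argue by contradiction. If no such $N$ works, there is a sequence $(\bw'_k)$ in $\mV\cap\mathbb{R}_{>0}^n$ with $\bw'_k\to\bw$ and parameters $\tau_k\in(0,1)$ such that $\tau_k\bw'_k\in\mV$. After passing to a subsequence, $\tau_k\to\tau_\ast\in[0,1]$, and since $\mV$ is closed, $\tau_\ast\bw\in\mV$. On the other hand $H(\bzer)\neq0$, and for $0<t<1$ the point $t\bw$ lies in the polydisk $D(\bw)$, which is contained in $\mD$ because $\bw\in\overline{\mD}$; by Lemma~\ref{lemma:domconv}\,(ii) this polydisk is disjoint from $\mV$, so $t\bw\notin\mV$ for all $0\le t<1$. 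Hence the only possibility is $\tau_\ast=1$: the ``blocking'' parameters along the nearby rays are pushed towards $1$.

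For the contradiction I would then use a one-variable Rolle argument. For each $k$, the real polynomial $t\mapsto H(t\bw'_k)$ vanishes at $t=\tau_k$ and at $t=1$, so there is $\sigma_k\in(\tau_k,1)$ with $0=\frac{d}{dt}H(t\bw'_k)\big|_{t=\sigma_k}=\sum_{i=1}^n w'_{k,i}\,\frac{\partial H}{\partial z_i}(\sigma_k\bw'_k)$. Since $\tau_k\to1$ we get $\sigma_k\to1$, hence $\sigma_k\bw'_k\to\bw$, and passing to the limit yields $\sum_{i=1}^n w_i\,\frac{\partial H}{\partial z_i}(\bw)=0$. By Proposition~\ref{prop:3.12} the vector $\bigl(w_i\,\partial H/\partial z_i(\bw)\bigr)_i$ is a scalar multiple $c(\lambda_1,\dots,\lambda_n)$ of a nonzero vector of non-negative reals; summing coordinates gives $c\sum_i\lambda_i=0$, so $c=0$, so $w_i\,\partial H/\partial z_i(\bw)=0$ for all $i$, and since the $w_i$ are positive this forces $\nabla H(\bw)=\bzer$, contradicting the smoothness of $\bw$.

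I expect the only genuinely delicate step to be the identification $\tau_\ast=1$ — that is, the fact that along rays through real points of $\mV$ near $\bw$ the first singularity encountered cannot slip strictly inside the unit segment; everything afterwards (Rolle, passing to the limit, the colinearity bookkeeping) is routine. Two minor points also deserve a check: that $N$ may indeed be taken inside the positive orthant (clear, as positivity is an open condition), and that $t\mapsto H(t\bw)$ is not identically zero, which holds since $H(\bzer)\neq0$, so the polynomial manipulations in the variable $t$ are legitimate.
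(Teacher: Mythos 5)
Your proof is correct, and it takes a genuinely different route from the paper's. Both arguments begin identically by assuming a sequence $\bw'_k\to\bw$ of non-minimal points in $\mV\cap\mathbb{R}^n_{>0}$ with witnesses $\tau_k\in(0,1)$ such that $\tau_k\bw'_k\in\mV$, and both establish that $\tau_k\to 1$ using Lemma~\ref{lemma:domconv}(ii). From there the paper invokes the implicit function theorem: it considers the two-equation system $H(\bz)=H(t^{\lambda_1}z_1,\dots,t^{\lambda_n}z_n)=0$ (deforming along the curved ray determined by $\blambda$), checks that the $t$-derivative of the second equation at $(\bw,1)$ is $\sum_i\lambda_i w_i\,\partial H/\partial z_i(\bw)=c\sum_i\lambda_i^2\neq 0$, and deduces a unique local branch $t=T(\hat\bz)$, so that for large $k$ the pair $t_k\neq 1$ cannot coexist with $t=1$. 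You instead stay on the \emph{straight} ray $t\mapsto t\bw'_k$ and apply Rolle between the two zeros $\tau_k$ and $1$ of $H(t\bw'_k)$, then pass to the limit to get $\sum_i w_i\,\partial H/\partial z_i(\bw)=0$; Proposition~\ref{prop:3.12} turns this into $c\sum_i\lambda_i=0$ with $c\neq 0$ and $\sum_i\lambda_i>0$, a contradiction. The two derivations rest on the same engine (smoothness plus colinearity with a nonzero $\blambda\geq 0$ makes the relevant directional derivative nonzero), but your Rolle argument is more elementary and avoids the implicit-function-theorem machinery; the paper's choice of the curved ray $t^{\lambda_i}$ makes $c\sum_i\lambda_i^2\neq 0$ transparent without needing to observe that $\sum_i\lambda_i>0$ already suffices, which is exactly the small extra observation your straight-ray version exploits. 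One remark: the step you flag as "genuinely delicate" ($\tau_*=1$) is actually routine given $D(\bw)\subset\mD$ disjoint from $\mV$; the real crux in both proofs is the nonvanishing of the directional derivative at $\bw$, and you handle it correctly.
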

\begin{proof}
By Proposition~\ref{prop:lambda}, there exists
non-negative real $(\lambda_1,\dots,\lambda_n)$, not all zero, such
that $(\lambda_1,\dots,\lambda_n)$ is colinear with $
(w_1\partial H/\partial z_1(\bw),\dots,w_n\partial H/\partial z_n
(\bw))$. 

Assume, towards a contradiction, that there exists a sequence of
non-minimal points~$(\bx^{
(k)})$ converging to $\bw$ in~$\mV\cap{\mathbb{R}}^n$. Then, by a
generalization of
Proposition~\ref{prop:lineMin}, with the same proof, there
exists a sequence $(t_k)$ in $(0,1)$ such
that $(t_k^{\lambda_1} x^{(k)}_1,\dots,t_k^{\lambda_n}x^{(k)}_n)$
belongs
to~$\mV\cap\partial\mD$ and, since $\bw$
is minimal, $t_k$ tends to~1. 

Now, consider the system 
\[H(\bz)=H(t^{\lambda_1}z_1,\dots,t^{\lambda_n}z_n)=0\]
in the neighbourhood of its solution $(\bw,1)$.
Since $\bw$ is smooth, $\partial H/\partial z
(\bw)\neq0$ and without loss of generality we can assume that $
(\partial H/\partial z_n)(\bw)\neq0$.
Thus there exists an analytic
function $g(\hat\bz)$ such
that $(z_1,\dots,z_{n-1},g(\hat\bz))$ is a parameterization of $\mV$
in a neighbourhood of $\bw$ and $g$ is locally one-to-one.
Similarly, the derivative of the second equation
with respect to~$t$ at $(\bw,1)$, namely
\[\lambda_1w_1\partial H/\partial z_1(\bw)+\dots+\lambda_nw_n\partial
H/\partial z_n(\bw),\]
is a nonzero multiple of $\lambda_1^2+\dots+\lambda_n^2$ and therefore
nonzero itself, which shows the existence of an analytic function $T
(\hat\bz)$ such that $(\hat\bz,g(\hat\bz),T(\hat\bz))$ parameterizes a
solution of the
system in the neighbourhood of~$(\bw,1)$ and $T$ is locally
one-to-one. 

Thus for $k$ large enough, the system cannot be satisfied by both $
(\bx^{
(k)},1)$ and $(\bx^{(k)},t_k)$ with $t_k<1$, giving the desired
contradiction.
\end{proof}
\begin{corollary}If $F$ is combinatorial and $\bw$ is a smooth
minimal point with real positive coordinates that is a
local
maximizer of $\Abs$ on $\partial\mD$,
then it is critical.
\end{corollary}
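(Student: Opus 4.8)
The plan is to turn the hypothesis into a statement about a constrained local maximum on a smooth real hypersurface and then invoke Lagrange multipliers. First I would apply the preceding proposition: since $F$ is combinatorial and $\bw$ is a smooth minimal point with positive coordinates, every point of $\mV\cap\mathbb{R}^n$ in some neighbourhood $N$ of $\bw$ is minimal, hence $N\subseteq\partial\mD$. Shrinking $N$ if necessary, every point of $N$ still has strictly positive coordinates (as $\bw$ does), so on a neighbourhood of $\bw$ in $\mathbb{R}^n$ the map $\Abs$ coincides with the smooth function $\bz\mapsto z_1\cdots z_n$, whose gradient at $\bw$ has $i$th coordinate $\prod_{j\neq i}w_j>0$. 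Moreover, since $H$ has real coefficients and $\nabla H(\bw)\neq0$ by smoothness, the real implicit function theorem shows that, after shrinking, $N$ is a smooth $(n-1)$-dimensional submanifold of $\mathbb{R}^n$ whose normal direction at $\bw$ is spanned by $\nabla H(\bw)$.

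Next, because $\bw$ is assumed to be a local maximizer of $\Abs$ on $\partial\mD$, and $N$ is a neighbourhood of $\bw$ inside $\partial\mD$, the point $\bw$ is a local maximizer of the restriction of the smooth function $\bz\mapsto z_1\cdots z_n$ to the smooth manifold $N$. The Lagrange multiplier condition at such a constrained extremum then gives $\nabla\Abs(\bw)=\mu\,\nabla H(\bw)$ for some $\mu\in\mathbb{R}$. Since the coordinates of $\nabla\Abs(\bw)$ are the nonzero products $\prod_{j\neq i}w_j$, necessarily $\mu\neq0$.

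Finally I would multiply the $i$th component of the Lagrange identity, $\prod_{j\neq i}w_j=\mu\,\partial H/\partial z_i(\bw)$, by $w_i$ to get $w_1\cdots w_n=\mu\,w_i\,\partial H/\partial z_i(\bw)$ for every $i$; hence $w_i\,\partial H/\partial z_i(\bw)=(w_1\cdots w_n)/\mu$ is independent of $i$, so $z_1\partial H/\partial z_1=\dots=z_n\partial H/\partial z_n$ holds at $\bw$. Together with $H(\bw)=0$ these are exactly the critical-point equations~\eqref{eq:critpt}, so $\bw$ is critical. The only delicate point is the first paragraph: one must use the previous proposition correctly to ensure that a full neighbourhood of $\bw$ among the real points of $\mV$ lies on $\partial\mD$, so that $\bw$ is genuinely a constrained local maximum on a smooth hypersurface and Lagrange multipliers legitimately apply; granting that, the remainder is a one-line computation.
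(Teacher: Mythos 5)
Your proof is correct and follows essentially the same route as the paper: invoke the preceding proposition to show that a neighbourhood of $\bw$ in $\mV\cap\mathbb{R}^n$ lies in $\partial\mD$, so that $\bw$ is a local maximizer of $\Abs$ on (a real slice of) $\mV$, which forces the critical-point equations. The paper compresses the final step into ``i.e., a critical point'' (implicitly appealing to Lemma~\ref{lemma:extrema-crit-pts}), whereas you usefully spell out the real Lagrange-multiplier computation; the content is the same.
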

\begin{proof}The previous proposition shows that a neighbourhood
of $\bw$ in $\mV\cap\mathbb{R}^n$ is included in~$\partial\mD$, so
that $\bw$ is a local maximizer of $\Abs$ in $\mV$, i.e., a
critical point.
\end{proof}
In some degenerate cases, there are several minimal critical points
with positive coordinates,
but these can be detected easily thanks to the following observation.
\begin{lemma}\label{lemma:finite-nb-min-crit} If $F(\bz)=G(\bz)/H
(\bz)$ is
combinatorial, if $\nabla H$ does not vanish on~$\mV\cap\boundary$ and
if $F$ admits
two
distinct minimal critical points with positive coordinates, then
it
admits an infinite number of
them.
\end{lemma}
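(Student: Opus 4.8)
The plan is to pin down exactly which positive-coordinate points are minimal critical points, and then to use logarithmic convexity of $\mD$ to show that a second such point drags along a whole one-parameter family. Throughout, set $M := \sup_{\bz\in\boundary}\Abs(\bz)\in(0,+\infty]$.

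First I would establish the following characterisation: \emph{a point $\bw\in\mV$ with positive coordinates is a minimal critical point if and only if $\bw\in\boundary$ and $\Abs(\bw)=M$} (so, in particular, $M<\infty$ as soon as such a point exists). The direction ``$\Leftarrow$'' is easy: such a $\bw$ lies in $\boundary\cap\mV$ and is therefore minimal; it is smooth because $\nabla H$ does not vanish on $\mV\cap\boundary$; it is trivially a local maximiser of $\Abs$ on $\boundary$; and the corollary following the proposition on neighbourhoods of smooth minimal points (which uses combinatoriality of $F$) then yields that $\bw$ is critical. For ``$\Rightarrow$'', Lemma~\ref{lemma:min-crit} already gives that a smooth minimal critical point with positive coordinates is a \emph{local} maximiser of $\Abs$ on $\boundary$, so the only work is to upgrade ``local'' to ``global''. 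Here I would pass to logarithmic coordinates $\bx=\Relog(\bz)$: the image of $\mD\cap\mathbb{R}_{>0}^n$ is open, convex (this is exactly logarithmic convexity of $\mD$), and downward closed (a consequence of the first listed property of $\mD$, i.e. of $\mD$ being a complete Reinhardt domain). Hence near a positive-coordinate boundary point the boundary is the graph $x_n=h(x_1,\dots,x_{n-1})$ of a concave function $h$, and along that graph $\log\Abs = x_1+\dots+x_{n-1}+h(x_1,\dots,x_{n-1})$ is concave; since a local maximiser of a concave function is a global one, we get $\Abs(\bw)=M<\infty$.

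Next I would run the doubling argument. Assume $\bw^{(0)}\ne\bw^{(1)}$ are two minimal critical points with positive coordinates; by the characterisation, $\Abs(\bw^{(0)})=\Abs(\bw^{(1)})=M$. For $t\in[0,1]$ set $\bw^{(t)}:=\bigl((w^{(0)}_1)^{1-t}(w^{(1)}_1)^{t},\dots,(w^{(0)}_n)^{1-t}(w^{(1)}_n)^{t}\bigr)$; these points are pairwise distinct since $\bw^{(0)}$ and $\bw^{(1)}$ differ in some coordinate. Logarithmic convexity of $\mD$ (applied along sequences in $\mD$ converging to $\bw^{(0)}$ and $\bw^{(1)}$) gives $\bw^{(t)}\in\overline{\mD}$, and $\log\Abs(\bw^{(t)}) = (1-t)\log M + t\log M = \log M$, so $\Abs(\bw^{(t)})=M$. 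On the other hand $\Abs<M$ on the interior of $\mD$ (any interior point can be scaled slightly outward while staying in $\mD$, which strictly increases $\Abs$), so $\bw^{(t)}\in\boundary$. Since $F$ is combinatorial and $\bw^{(t)}=|(\bw^{(t)})|$ has positive coordinates, Lemma~\ref{lem:combCase} gives $\bw^{(t)}\in\mV$, and then the ``$\Leftarrow$'' direction of the characterisation shows that each $\bw^{(t)}$ is again a minimal critical point with positive coordinates. This exhibits infinitely many of them.

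The only genuinely delicate point is the upgrade from a local to a global maximiser of $\Abs$ on $\boundary$ in the ``$\Rightarrow$'' part of Step~1: this is precisely where combinatoriality enters, through the concavity of $\log\Abs$ along $\boundary$ in logarithmic coordinates forced by logarithmic convexity of $\mD$, and it also calls for a little care about the parametrisation of $\boundary$ near a positive-coordinate point and about the finiteness of $M$. Once that characterisation is in hand, the remaining steps are routine.
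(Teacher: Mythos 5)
Your proof is correct, and it reaches the conclusion by the same geometric mechanism the paper uses (logarithmic convexity of $\mD$, the geometric-mean path $\bw^{(t)}$, and the corollary that turns a positive-coordinate smooth minimal local maximiser of $\Abs$ on $\boundary$ into a critical point), but it organises the argument differently. The paper works entirely with \emph{local} maximality: it first shows $\Abs(\ba)=\Abs(\bb)$ by a contradiction argument against local maximality along the path $\bc_r$, and then uses the maximum modulus principle for $z_1\dotsm z_n$ to exclude $\bc_r$ from the interior of $\mD$. You instead first upgrade local to \emph{global} maximality via the concavity of $\log\Abs$ on $\boundary$ in logarithmic coordinates (the boundary of the log-domain is the graph of a concave function, since the log-domain is open, convex, and downward closed), obtaining the clean characterisation ``positive-coordinate minimal critical point $\iff$ $\Abs=M$''; the doubling step then falls out directly. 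Your route is a little longer up front but yields a stronger intermediate statement and makes the key step of the paper's proof — why $\Abs(\ba)=\Abs(\bb)$ and why $\bc_r\in\boundary$ — fully explicit rather than implicit in the contradiction-plus-maximum-principle phrasing. Two details worth flagging but not fatal (the paper elides them too): (a) to justify ``$\Abs<M$ on $\mD\cap\mathbb{R}_{>0}^n$'' by outward scaling, one should note that the ray $\{t\bz:t\ge1\}$ must exit $\mD$ at a finite $\theta$ (otherwise $\mD\supseteq\bigcup_{t\ge1}D(t\bz)=\mathbb{C}^n$, contradicting $\mV\neq\emptyset$); and (b) when you upgrade from a local to a global maximum of the concave function $\psi$, you should say a word about why the graph parametrisation $x_n=h(\htbz)$ is valid over a full neighbourhood of $\hat{\bx}_0$ in the effective domain, which follows because the log-domain is open and the sup $M$ attained at $\hat{\bx}_0$ is finite.
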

\begin{proof} This is a consequence of the logarithmic convexity
of domains of
convergence. 

Let $\ba$ and $\bb$ be two such distinct points. By
Lemma~\ref{lemma:min-crit}, they are local maxima of $\Abs$
on~$\boundary$.
For
any $r\in(0,1)$, the point $\bc_r:=(a_1^rb_1^{1-r},\dots,a_n^rb_n^
{1-r})$ belongs to $\overline\mD$. 

Necessarily, $\operatorname{Abs}(\ba)=\operatorname{Abs}
(\bb)$. Indeed, if for
instance $\operatorname{Abs}(\bb)>
\operatorname{Abs}(\ba)$ then the points $\bc_r$ in a neighbourhood
of~$\ba$
would also satisfy~$\operatorname{Abs}(\bc_r)>\operatorname{Abs}(\ba)$,
which contradicts the fact that $\ba$ is a local maximizer of~$
\operatorname{Abs}$. 
Thus all $\bc_r$ for $r\in(0,1)$ satisfy $\operatorname{Abs}(\bc_r)=
\operatorname{Abs}(\ba)=\operatorname{Abs}(\bb)$. 
If $\bc_r$ lay inside $\mD$, then
so would a neighbourhood of~$\bc_r$. By the
maximum principle, this neighbourhood would contain a point giving a
larger value to~$\Abs$, but this is again a contradiction. This proves
that for all $r\in(0,1)$, $\bc_r$ belongs to~$\boundary$. It is minimal
by Lemma~\ref{lem:combCase} and a local
maximizer of $\operatorname{Abs}$ on~$\boundary$. The conclusion
follows from the previous corollary.
\end{proof}

\section{Overview of Algorithms for ACSV}\label{sec:algo-overview}

We now give a high-level overview of the main algebraic calculations
that must be performed, together with the assumptions that we 
make. These algorithms will be revisited in more detail in
Section~\ref{sec:MainAlgos}, after the tools that we use for the
required
decisions are introduced in Sections~\ref{sec:Algorithms}
and~\ref{sec:numkro}.

\begin{algoboxed}[\textsf{ACSV}]\leavevmode
\begin{enumerate}
\item Determine the set $\mathcal{C}$ of critical points, given
as zeros of the
polynomial system
\begin{equation}\label{eq:critsys}
 \operatorname{Crit} = \left(H,  z_1\frac{\partial H}{\partial
 z_1}-\lambda,
	\dots,z_n\frac{\partial H}{\partial z_n}-\lambda\right)
\end{equation}
in the variables $\bz,\lambda$. If $\mathcal C$ is not finite, FAIL.
\item Construct the subset~$\mathcal{U}\subset\mathcal{C}$ of 
\emph{minimal} critical
points.
\item If $G$ vanishes at all the elements of~$\mathcal{U}$ or if the
matrix $\mH$ defined by Equation~\eqref{eq:Hess} is singular at an element $\bzeta\in\mathcal{U}$, FAIL.\\
Otherwise, return the sum of the asymptotic contributions determined
by
 Equation~%
\eqref{eq:diag_asm} at all the elements of~$\mathcal{U}$.
\end{enumerate}
\end{algoboxed}

\subsection{Algorithm for Minimal Critical Points in the Combinatorial
Case}
\label{sec:CombCaseResults}
The difficult part of the computation in Algorithm~\textsf{ACSV} is
Step~2, where the minimal
critical points are computed. For that step, we start with the case
when $F(\bz)$ is combinatorial, where minimality is easier to prove in light of Proposition~\ref{prop:lineMin}.

\begin{algoboxed}[\textsf{Minimal Critical Points in the
Combinatorial Case}]\leavevmode
\begin{enumerate}
\item Determine the set $\mathcal{S}$ of zeros of the
polynomial system
\begin{equation}\label{eq:extended-sys}
 \bbf = \left(H,  z_1\frac{\partial H}{\partial z_1}-\lambda,
	\dots,z_n\frac{\partial H}{\partial z_n}-\lambda, H	
	(tz_1,\dots,tz_n)\right)
\end{equation}
in the variables $\bz,\lambda,t$. If $\mathcal S$ is not finite, FAIL.
\item Find $\bzeta\in\mathbb{R}_{>0}^{n}$ such that there exists $
(\bzeta,\lambda,t) \in \mathcal{S}$ and for all
such triples,
$t\not\in(0,1)$. If the
number of such $\bzeta$'s is not exactly 1 or if there are such
points with $\lambda=0$, FAIL.
\item Identify~$\bzeta$ among the elements of~$\mathcal{C}$ from
Equation~\eqref{eq:critsys}.
\item  Return $\displaystyle\mathcal{U}:=\left\{\bz\in
\mathbb{C}^n\mid\exists
(\bz,\lambda)\in\mathcal{C},
|z_1|=|\zeta_1|,\dots,|z_n|=|\zeta_n|\right\}$.
\end{enumerate}
\end{algoboxed}
\noindent Section~\ref{sec:MainAlgos} reviews in more detail how
these steps
can be carried out effectively and efficiently.
\begin{example}
Example~\ref{ex:criteqApery} shows the result of Steps~1 and~2 in the combinatorial
case algorithm for the rational function of Example~\ref{ex:Apery1} 
(Ap\'ery numbers), and finds a minimal critical point.
The only other
critical point is the one with the choice $-\sqrt{2}$ and it does not
belong to the polytorus of the first one, so that there is only one
contribution in the asymptotics. The Hessian with respect to~$a,b,c,z$
is computed by first evaluating the coefficients~$U_{k,l}$ from
Equation~\eqref{eq:Ukl}, giving $\lambda=-1$, $U_{1,4}=U_{2,4}=U_{3,4}=-1$, $U_{4,4}=0$
and
\[U_{1,1}=1-\sqrt{2},\quad
U_{1,2}=U_{1,3}=1-\frac32\sqrt{2},\quad
U_{2,2}=U_{3,3}=2(4-3\sqrt2),\quad
U_{2,3}=6-5\sqrt{2}.
\]
The matrix $\mathcal H$ follows from Equation~\eqref{eq:Hess}:
\[\mH=(239-169\sqrt{2})\begin{pmatrix}1&1&1\\
1&4(2+\sqrt{2})&2\\
1&2&4(2+\sqrt{2})
\end{pmatrix}.\]
Putting everything together, we have shown the asymptotic expansion stated in Example~\ref{ex:Apery1},
\[ A_k = \left(17+12\sqrt{2}\right)^k \, k^{-3/2} \pi^{-3/2} \, \frac{\sqrt{48+34\sqrt{2}}}{8}\left(1+O\left(\frac{1}{k}\right)\right). \]
\end{example}

The following assumptions give sufficient conditions for this
algorithm to work as expected:
\begin{itemize}
	\item[(A0)] $F(\bz)=G(\bz)/H(\bz)$ admits at least one minimal
	critical point;
	\item[(A1)] $\nabla H$ does not
	vanish at the minimal critical points;
	\item[(A2)] $G(\bz)$ is non-zero at at least one minimal critical
	point;
	\item[(A3)] all minimal critical points of $F(\bz)$ are non-degenerate;
	\item[(J1)] the Jacobian matrix of the system $\bbf$ in
	Equation~\eqref{eq:extended-sys} of $n+2$
	equations
	with respect to the $n+2$ variables $(\bz,\lambda,t)$ is
	non-singular at
	its solutions.
\end{itemize}
\begin{proposition}\label{prop:algo-combi}Let $F
(\bz)=G(\bz)/H(\bz)$ be a \emph{combinatorial} rational function.
With $G$ and $H$ as input, Algorithm
\textsf{ACSV} either fails or returns the
asymptotic behaviour of the diagonal coefficients of~$F$. The latter
occurs when the
assumptions (A0)--(A3)
and (J1) are satisfied. 
\end{proposition}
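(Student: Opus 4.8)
The plan is to prove the two halves of the statement in turn: first, that on a combinatorial input Algorithm \textsf{ACSV} is \emph{sound} (it either fails or returns the true diagonal asymptotics), and second, that assumptions (A0)--(A3) and (J1) rule out every failure.

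\emph{Soundness.} I would trace the three steps under the assumption that none of them fails. If Step~1 succeeds, $\mathcal{C}$ is a finite set of critical points. In the combinatorial sub-algorithm, success of Step~1 makes the zero set $\mathcal{S}$ of $\bbf$ in \eqref{eq:extended-sys} finite, and Step~2 then singles out a unique $\bzeta\in\mathbb{R}_{>0}^n$ occurring as the first block of a triple of $\mathcal{S}$, with associated value $\lambda\neq0$, such that every triple $(\bzeta,\lambda,t)\in\mathcal{S}$ has $t\notin(0,1)$. Since $\bzeta$ is real positive, $|\zeta_i|=\zeta_i$ and the triples of $\mathcal{S}$ over $\bzeta$ are exactly those with $H(t\zeta_1,\dots,t\zeta_n)=0$, so the $t$-condition says precisely that the open segment $\{(t\zeta_1,\dots,t\zeta_n):0<t<1\}$ misses $\mathcal{V}$; by Proposition~\ref{prop:lineMin} (this is where combinatoriality of $F$ is used) this is equivalent to $\bzeta$ being minimal, and $\lambda\neq0$ forces $\nabla H(\bzeta)\neq0$, so $\bzeta$ is a smooth minimal critical point. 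The returned set $\mathcal{U}$ is the set of critical points lying on the polytorus $T(\bzeta)$, and each such $\bz$ is itself a minimal critical point: $D(\bz)=D(\bzeta)\subseteq\mD$ contains no point of $\mathcal{V}$ by Lemma~\ref{lemma:domconv}(ii), hence $\bz\in\boundary$ by Lemma~\ref{lemma:domconv}(iii). If Step~3 also succeeds, $G$ does not vanish at all of $\mathcal{U}$ and the Hessian $\mathcal{H}$ of \eqref{eq:Hess} is nonsingular at every element of $\mathcal{U}$, so $\mathcal{U}$ consists of smooth, non-degenerate minimal critical points with nonzero coordinates. Applying Proposition~\ref{prop:add-crit-pts} to $\bzeta$ — for which $T(\bzeta)$ contains exactly the finitely many critical points of $\mathcal{U}$, all smooth and non-degenerate, and whose statement as recalled above (following the footnote there) carries no global dominance hypothesis — gives the diagonal asymptotics as the sum over $\mathcal{U}$ of the contributions \eqref{eq:diag_asm}, i.e. the algorithm's output.

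\emph{No failure under the hypotheses.} By (J1) the Jacobian of $\bbf$ is invertible at each of its zeros, so every zero is isolated and $\mathcal{S}$ is finite; since $H(\bz)=0$ at a critical point, $(\bz,\lambda)\mapsto(\bz,\lambda,1)$ embeds $\mathcal{C}$ into $\mathcal{S}$, so $\mathcal{C}$ is finite too and Step~1 of both routines succeeds. By (A0) a minimal critical point exists; by (A1) it is smooth, hence by Lemma~\ref{lem:combCase} its coordinate-wise modulus $\bzeta$ is a minimal critical point, again smooth by (A1), and with strictly positive coordinates — otherwise $\Abs(\bzeta)=0$ would contradict the local maximality of $\Abs$ at $\bzeta$ on $\boundary$ furnished by Lemma~\ref{lemma:min-crit}. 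Smoothness of $\bzeta$ makes the common value $\lambda=\zeta_1\partial H/\partial z_1(\bzeta)$ nonzero, and Lemma~\ref{lemma:finite-nb-min-crit} together with finiteness of $\mathcal{C}$ makes $\bzeta$ the only positive minimal critical point, so Step~2 of the sub-algorithm does not fail. Finally, every minimal critical point of $F$ lies on $T(\bzeta)$ — by Lemma~\ref{lem:combCase} and this uniqueness — hence in $\mathcal{U}$; so (A2) supplies an element of $\mathcal{U}$ where $G\neq0$, and (A3) makes $\mathcal{H}$ nonsingular throughout $\mathcal{U}$, so Step~3 does not fail. For the first half of the statement, this same correspondence shows that each failure branch of the algorithm negates one of (A0)--(A3) or (J1), and in the surviving cases the output is correct by the previous paragraph.

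\emph{Expected main obstacle.} The crux is identifying the concretely computed set $\mathcal{U}$ with the set of critical points that truly contribute in Proposition~\ref{prop:add-crit-pts}; this requires a careful treatment of minimal critical points that might have a vanishing coordinate or a vanishing gradient, and of the passage between the algorithm's positivity condition $\bzeta\in\mathbb{R}_{>0}^n$ and the smoothness/non-degeneracy hypotheses of the ACSV propositions. In particular one must verify that the hypothesis of Lemma~\ref{lemma:finite-nb-min-crit} ($\nabla H$ nonvanishing on $\mathcal{V}\cap\boundary$) is available from (A0)--(A3), (J1), or replace it by a direct argument from finiteness of $\mathcal{C}$. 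The remaining steps are bookkeeping together with quoting Propositions~\ref{prop:diag_asm} and~\ref{prop:add-crit-pts} for the asymptotic formula itself.
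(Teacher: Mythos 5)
Your proof follows the same two-part structure (soundness under the combinatorial hypothesis, then no failure under (A0)--(A3) and (J1)) and invokes exactly the same tools as the paper: Lemma~\ref{lem:combCase}, Proposition~\ref{prop:lineMin}, Lemma~\ref{lemma:finite-nb-min-crit}, and Propositions~\ref{prop:diag_asm}/\ref{prop:add-crit-pts}, with the additional (correct) detail that equal polydisks $D(\bz)=D(\bzeta)$ make every critical point on $T(\bzeta)$ minimal. The concern you flag about the hypothesis of Lemma~\ref{lemma:finite-nb-min-crit} is legitimate and is glossed over in the paper as well; the repair you sketch works, since any intermediate point $\bc_r$ with $\nabla H(\bc_r)=0$ is itself a critical point with $\lambda=0$, hence lies in the finite set $\mathcal C$, so all but finitely many $\bc_r$ are smooth and the argument of that lemma goes through.
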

\begin{proof}The system $\bff$ is obtained from the critical-point
equations~\eqref{eq:critpt} by adding one equation and one
unknown~$t$. In particular, all points~$(\bzeta,\lambda,1)$ where
$\bzeta$
is a critical point are solutions of~$\bff$. Thus, when $\mathcal{S}$
is finite, there are finitely many critical points.

The function $F$ being combinatorial, Lemma~\ref{lem:combCase}
and Proposition~\ref{prop:lineMin} imply that if an element $\bzeta$
is selected at Step~2, then it is a minimal critical point. Moreover,
all minimal critical points are smooth since solutions
with $\lambda=0$ are excluded.

Thus the set $\mathcal{U}$ constructed at Step~3 contains all the
minimal critical points in the torus $T(\bzeta)$ and they are all
smooth. The final tests of Step~3 ensure that Proposition~%
\ref{prop:add-crit-pts} applies. 
 All
contributions of the minimal critical points are of the form
\[(\zeta_1\dotsm\zeta_n)^{-k}k^{\frac{1-n}{2}}(\alpha+O(1/k)),\]
with $\alpha$ a constant that depends on the actual critical point and
is not zero if $G$ does not vanish at the critical point.
This concludes the proof of the first
part of the Proposition.

When Assumption (J1) holds, the Jacobian matrix of the
system $\bff$ is invertible at its solutions. By the inverse function
theorem, this implies that
these solutions are isolated. The system
being formed of
polynomials, this implies that there are finitely many of them and
thus
that the set $\mathcal S$ computed in Step~1 of the algorithm is
finite. 

Next, since the function~$F$ is combinatorial and since (A0) implies
the existence of a minimal critical point that is smooth by (A1),
Lemma~\ref{lem:combCase} applied to such a point
implies the existence of a critical point in its torus that has
real positive
coordinates and will be selected in Step~2 of the algorithm.
Uniqueness of the solution is a consequence of Lemma~%
\ref{lemma:finite-nb-min-crit} and the fact that~$\bff$ has finitely
many solutions.

Finally, since $\mathcal{S}$ is finite, there are finitely
many critical points on the torus $T(\bzeta)$. By assumptions (A1) and
(A3), they are smooth
and non-degenerate, so that Step~3 does not fail.
\end{proof}
\subsection{Discussion of the Assumptions in the Combinatorial Case}
Pemantle and Wilson~\cite{PemantleWilson2013} almost always
assume (A0). They have results when there are no minimal critical
points but no explicit asymptotic formulas for such
cases. 
All their asymptotic results in dimension $n>2$ need (A3).
They also require isolated critical points, which we obtain as a
consequence of (J1).
Assumption (A2) is not a strong requirement: as mentioned
above, it is
possible to obtain an expansion by computing further terms of the
local behaviour of $G$ at the critical points.

Assumption (J1) is slightly stronger than needed for
ACSV, even in the smooth cases. Our main motivation for it is that it
lets us compute a
Kronecker representation of $\bbf$ with explicit control over the bit
complexity in Section~\ref{sec:Algorithms}. Moreover, 
it
is not as restrictive as it might seem, as we now show.

Recall that there are $m_d := \binom{d+n}{n}$ monic monomials in $\mathbb{C}[\bz]$ of total degree at most $d$.
\begin{definition}
A property $\mathcal{P}$ of \emph{polynomials} in $\mathbb{C}[\bz]$ is
said to hold \emph{generically} if for every positive integer $d$ there exists a proper algebraic subset $\mathcal{C}_d \subsetneq \mathbb{C}^{m_d}$ such that any polynomial of degree $d$ satisfies $\mathcal{P}$ unless its vector of coefficients lies in $\mathcal{C}_d$.

A property of \emph{rational functions} holds \emph{generically} if for
every
pair of positive integers $(d_1,d_2)$ there exists a proper algebraic
subset $\mathcal{C}_{d_1,d_2} \subsetneq \mathbb{C}^{m_{d_1}+m_{d_2}}$
such that any rational function with numerator and denominator of degrees $d_1$ and $d_2$ satisfies $\mathcal{P}$ unless the the vector defined by the coefficients of its numerator and denominator lies in $\mathcal{C}_{d_1,d_2}$.  
\end{definition}

This definition implies that the conjunction of finitely many generic properties is generic.  In Section~\ref{sec:generic} we prove the following result.

\begin{proposition}
\label{prop:generic_assumptions}
The assumptions (A1)--(A3) and (J1) hold generically.
Assuming $F(\bz)$ is combinatorial then (A0) holds generically. 
\end{proposition}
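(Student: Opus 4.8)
The plan is to establish each of the five genericity statements (A1)--(A3), (J1), and---in the combinatorial setting---(A0) in turn, each time exhibiting an explicit polynomial in the coefficients of $G$ and $H$ whose non-vanishing guarantees the property, and then invoking that the finite conjunction of generic properties is generic. Throughout, I would parametrize the coefficients of $H$ (and of $G$, when needed) by independent indeterminates, viewing the quantities of interest as polynomials in these parameters together with the variables $\bz,\lambda,t$; the key recurring tool is the fact that if some polynomial expression is \emph{not identically zero} as a function of the coefficient-parameters, then its vanishing locus is a proper algebraic subset, hence the generic stratum is its complement. So for each assumption the real content is to produce \emph{one} choice of coefficients for which it holds.

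First I would treat (A1) and (J1) together, since both say that a certain Jacobian is non-singular at the solutions of a polynomial system. For (A1): the condition that $\nabla H$ vanishes simultaneously with the critical-point equations cuts out (generically) a set that is empty, and one verifies non-emptiness of the generic stratum by writing the resultant/elimination ideal of $\{\operatorname{Crit},\partial H/\partial z_1,\dots,\partial H/\partial z_n\}$ and checking it is nonzero for a sample $H$ (e.g. a generic linear or diagonal-type denominator such as $1-z_1-\cdots-z_n$, for which $\nabla H$ is a nonzero constant vector and (A1) is immediate). For (J1): the Jacobian of the system $\bbf$ in~\eqref{eq:extended-sys} with respect to $(\bz,\lambda,t)$ is a polynomial in the coefficients of $H$ and in $(\bz,\lambda,t)$; one forms the ideal generated by $\bbf$ together with this Jacobian determinant, and shows the elimination ideal in the coefficient-parameters is nonzero by evaluating at a convenient $H$. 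The combinatorial rational functions arising from the small examples in the paper (e.g. Example~\ref{ex:Apery1}, or $1/(1-x-y)$) already give witnesses where one can check $\bbf$ is a zero-dimensional radical system with everywhere-nonsingular Jacobian. Since non-vanishing of a nonzero polynomial in the parameters defines a dense open (hence generic) stratum, this gives (A1) and (J1).

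Next, (A2): $G$ vanishes at \emph{all} minimal critical points only if $G$ lies in the ideal of the critical locus (intersected with the minimal stratum); since the critical-point equations depend only on $H$, not on $G$, fixing any $H$ with a minimal critical point $\bzeta$ and letting the coefficients of $G$ vary, the linear functional ``$G\mapsto G(\bzeta)$'' is not identically zero on the coefficient space of $G$, so (A2) fails only on a proper linear subspace. For (A3): non-degeneracy of the Hessian $\mH$ at a critical point is, via formula~\eqref{eq:Hess}, the non-vanishing of a polynomial in the $U_{k,\ell}$ and $\lambda$, hence a polynomial in the coefficients of $H$ and the coordinates of the critical point; one needs a single $H$ whose critical points are all non-degenerate, and again the worked examples (where $\det\mH$ is computed explicitly and is nonzero) serve. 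Finally, for (A0) in the combinatorial case: this is the subtlest point, since ``combinatorial'' is not an algebraic condition and one must stay inside the combinatorial locus while perturbing. I would argue as follows: fix one combinatorial $F=G/H$ that does have a minimal critical point (again the examples provide this, e.g. $1/(1-z_1-\cdots-z_n)$, whose unique critical point $z_1=\cdots=z_n=1/n$ is minimal); then within the family of combinatorial rational functions of given degrees, existence of a minimal critical point is preserved under small deformations because the critical points vary continuously with the coefficients of $H$, minimality is an open condition in the coefficient-space restricted to the combinatorial locus by Lemma~\ref{lemma:domconv} and Proposition~\ref{prop:lineMin}, and the combinatorial locus itself has nonempty interior near the witness. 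Then one upgrades ``nonempty open near a point'' to ``generic'' using that the combinatorial locus is Zariski-dense in each relevant coefficient space (it contains a nonempty Euclidean-open set), so its complement inside any proper algebraic subset cannot absorb it.

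The main obstacle is precisely this last step, (A0) for combinatorial $F$: since combinatoriality is a semi-algebraic (indeed analytic) condition rather than an algebraic one, one cannot directly phrase ``has a minimal critical point'' as non-vanishing of a polynomial on all of coefficient-space. The plan is to isolate the claim that the \emph{bad} set --- combinatorial rational functions of given degrees with \emph{no} minimal critical point --- is contained in a proper algebraic subset $\mathcal C_{d_1,d_2}$. This requires showing that failure of (A0), \emph{when it happens on the combinatorial locus}, forces an algebraic degeneracy (for instance, that the critical-point system becomes positive-dimensional or empty, or that all its real positive solutions escape to a boundary stratum $z_i=0$), each of which is an algebraic condition on the coefficients; combined with the explicit combinatorial witness having a minimal critical point, this pins the bad set inside a proper algebraic subset and finishes the argument. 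Verifying that this exhausts all ways (A0) can fail on the combinatorial locus is where the careful case analysis lives; the remaining assumptions reduce to the routine ``nonzero polynomial has proper zero set'' pattern described above.
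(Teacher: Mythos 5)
Your high-level plan for (A1)--(A3) and (J1) --- encode each failure as the vanishing of a multivariate resultant in the coefficients and then exhibit one denominator for which the resultant is nonzero --- is exactly the paper's strategy. However, your choice of witnesses is inadequate: genericity is a statement \emph{for every degree $d$} (the definition quantifies over $d$, with a potentially different exceptional set $\mathcal{C}_d$ for each $d$), so a single example like $1/(1-x-y)$ or the Ap\'ery function cannot serve; one needs a family of witnesses, one per degree. The paper uses $H(\bz)=1-z_1^d-\cdots-z_n^d$ throughout, for which the critical-point system and the various Jacobians/Hessians collapse to explicitly solvable forms. Also, your argument for (A2) fixes $H$ and varies $G$, showing that $G\mapsto G(\bzeta)$ is not the zero functional; this only gives, for each fixed admissible $H$, a proper subset of the $G$-space, which is not the same as exhibiting a single proper algebraic subset of the joint coefficient space $\mathbb{C}^{m_{d_1}+m_{d_2}}$ as the definition requires. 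The paper avoids this by forming one resultant in both sets of coefficients and evaluating at the pair $(G,H)=(z_1^{d_1},\,1-z_1^{d_2}-\cdots-z_n^{d_2})$.

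The genuine gap is in your treatment of (A0). You correctly identify that combinatoriality is not an algebraic condition and that one must show the bad set (combinatorial $F$ with no minimal critical point) lies inside a proper algebraic subset, but your proposed ``openness near a witness, then upgrade to generic via Zariski density'' step does not accomplish this: knowing that (A0) holds on a nonempty Euclidean-open subset of the combinatorial locus says nothing about whether its complement is contained in a proper algebraic subset, unless one independently knows the failure set is constructible --- which is precisely what you left unproven. You gesture at ``failure of (A0) forces an algebraic degeneracy'' without identifying it. The paper supplies the missing content as a self-contained proposition: if a reduced combinatorial $F=G/H$ with $H(\bzer)\neq 0$ and $\deg H=d$ has no minimal critical point, then some degree-$d$ monomial of $H$ has zero coefficient. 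The proof writes $H=1-P$, uses Lemma~\ref{lemma:min-crit} and combinatoriality to show $\Abs$ has no maximizer on $\partial\mD\cap\mathbb{R}_{\geq0}^n$, deduces an unbounded sequence $\ba^{(k)}\in\mD\cap\mathbb{R}_{\geq 0}^n$ with $P(\ba^{(k)})\to 1$, passes to a subsequence with one coordinate $\to\infty$, and then uses nonnegativity of the other coordinates (via combinatoriality) to conclude the coefficient of that coordinate's $d$-th power in $P$ must vanish. This single concrete degeneracy is the algebraic subset you needed; without it, your argument for (A0) is incomplete.
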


This result means that our algorithm is generically correct for
combinatorial rational functions. This does not mean that it solves
all problems arising in applications: many interesting combinatorial
examples do exhibit a non-generic behaviour.

Combinatoriality of generating functions is a property of a different
nature.
Unfortunately, deciding it is still open even in the univariate
case. The
closest result that is known is recent: Ouaknine and Worrell~\cite{OuaknineWorrell2014}
have shown the decidability of the ultimate positivity problem 
(determining whether the coefficients of a rational power series
expansion are eventually all non-negative) for \emph{univariate}
rational functions with square-free denominators.  In practice, then,
one usually applies these results when $F(\bz)$ is the multivariate
generating function of a combinatorial class with parameters, or when
the form of $F(\bz)$ makes combinatoriality easy to prove (for
instance, when $F(\bz) = {G(\bz)}/({1-J(\bz)})$ with $J(\bz)$ a
polynomial vanishing at the origin with non-negative coefficients).

\subsection{Algorithm for Minimal Critical Points in the
Non-Combinatorial Case}
\label{sec:GenResults}
Determining minimal critical points in the general case is more
involved, as we can no longer simply test the line segment between the
origin and a finite set of points to determine minimality.
We thus revert to Lemma~\ref{lemma:domconv}~\emph{(iii)} 
and set up a polynomial system that encodes the fact that a critical
point is minimal if and only if its polydisk does not intersect $\mV$.
A direct use of algorithms on the emptiness of
semi-algebraic sets would lead to too high a complexity. 
Instead, we exploit the
geometry of
the boundary of the domain of convergence to produce a system with as
many equations as unknowns, which is dealt with efficiently in the
next section.

Given a polynomial $f(\bz) \in \mathbb{C}[\bz]$ we define $f(\bx + i\by) := f(x_1 + iy_1, \dots, x_n + iy_n)$, and note the unique decomposition
\[ f(\bx+i\by) = f^{(R)}(\bx,\by) + if^{(I)}(\bx,\by),\] 
for polynomials $f^{(R)}(\bx,\by),f^{(I)}(\bx,\by)$ in $\mathbb{R}[\bx,\by]$.  The Cauchy-Riemann equations imply
\[ \frac{\partial f}{\partial z_j}(\bx + i \by) = \frac{1}{2} \cdot \frac{\partial}{\partial x_j}\left(f^{(R)}(\bx,\by) + if^{(I)}(\bx,\by)\right) 
- \frac{i}{2} \cdot \frac{\partial}{\partial y_j}\left(f^{(R)}(\bx,\by) + if^{(I)}(\bx,\by)\right), \]
and it follows that the set of \emph{real} solutions of the system
\begin{align}
H^{(R)}(\ba,\bb) = H^{(I)}(\ba,\bb) &= 0  \label{eq:GenSys1} \\
a_j \left(\partial H^{(R)}/\partial x_j\right)(\ba,\bb) + b_j \left(\partial H^{(R)}/\partial y_j\right)(\ba,\bb) - \lambda_R&=0, \qquad j = 1,\dots, n  \label{eq:GenSys2} \\
a_j \left(\partial H^{(I)}/\partial x_j\right)(\ba,\bb) + b_j \left(\partial H^{(I)}/\partial y_j\right)(\ba,\bb) - \lambda_I&=0, \qquad j = 1,\dots, n  \label{eq:GenSys3}
\end{align}
in the variables $\ba,\bb,\lambda_R,\lambda_I$ corresponds exactly to
the real and imaginary parts of all \emph{complex} solutions of the
critical point equations 
with $\bz = \ba+i\bb$ and $\lambda = \lambda_R + i \lambda_I$.

The minimality of $\bz$  is
then encoded using Lemma~\ref{lemma:domconv}~\emph{(iii)} by demanding that the equations
\begin{align}
H^{(R)}(\bx,\by) = H^{(I)}(\bx,\by) &= 0 \label{eq:GenSys4} \\
x_j^2 + y_j^2 - t(a_j^2+b_j^2) &= 0, \qquad j=1,\dots,n \label{eq:GenSys5}
\end{align}
do not have a solution with $\bx,\by,t$ real and $0<t < 1$.  

\paragraph{Critical points of the projection $\pi_t$}
At this stage, the system of equations~\eqref{eq:GenSys1}--%
\eqref{eq:GenSys5} consists of $3n+4$ equations in the $4n+3$ unknowns
$\ba,\bb,\bx,\by$, $\lambda_R,\lambda_I,t$. We denote by $\mW$ its
\emph{complex} solutions and our interest is in its \emph{real} part
$\mWR := \mW \cap
\mathbb{R}^{4n+3}$
 and even in $\mWRs := \mWR \cap (a_1^2+b_1^2\neq0)\cap\dots\cap
 (a_n^2+b_n^2\neq0)$, the points in $\mWR$ with
non-zero coordinates, to which Proposition~\ref{prop:diag_asm} applies.
Testing whether
$\mWRs\cap(
\mathbb{R}^
{4n+2}\times(0,1))$ is empty can be achieved by a direct use of
algorithms from effective real algebraic geometry. However, these
algorithms have a complexity that is generally higher than what can be
achieved by exploiting the geometry of this particular system.
Thus, instead of
considering all the values of $t$ where \eqref{eq:GenSys4}-%
\eqref{eq:GenSys5} have a solution, we consider the extremal values that $t$ takes at these
solutions.
This is computed by finding the critical points of the projection map
$\pi_t:\mWR \rightarrow \mathbb{R}$
defined by
$\pi_t(\ba,\bb,\bx,\by,\lambda_R,\lambda_I,t)=t$. When $H
(\bzer)\neq0$, we know that $0\not\in\pi_t(\mWR)$ while $1\in\pi_t
(\mWR)$ so that
minimality of $(\ba,\bb)$ is equivalent to the absence of a critical
value of $\pi_t$ in~$(0,1)$.
These critical points
are points of $\mWR$ where the differential of 
$\pi_t$ is rank deficient. Since Equations~%
\eqref{eq:GenSys1}--\eqref{eq:GenSys3} do not
depend on~$\bx,\by,t$, the system has a block structure that
can be exploited in the computation. We make the following simplifying
assumption:
\begin{itemize}
	\item[(J2)] the Jacobian matrix of the system \eqref{eq:GenSys1}--%
	\eqref{eq:GenSys5} has full rank at its solutions.
\end{itemize}
As a consequence, it is sufficient to
consider the points where the following matrix
\[
J =
\begin{pmatrix}
\nabla H^{(R)}(\bx,\by) \\
\nabla H^{(I)}(\bx,\by) \\
\nabla (x_1^2 + y_1^2 - t(a_1^2+b_1^2)) \\
\vdots \\
\nabla (x_n^2 + y_n^2 - t(a_n^2+b_n^2)) \\
\nabla (t)
\end{pmatrix}
=
\begin{pmatrix}
\frac{\partial H^{(R)}}{\partial x_1} & \cdots & \frac{\partial H^{
(R)}}{\partial x_n} & \frac{\partial H^{(R)}}{\partial y_1} & \cdots
& \frac{\partial H^{(R)}}{\partial y_n} & 0  \\
\frac{\partial H^{(I)}}{\partial x_1} & \cdots & \frac{\partial H^{
(I)}}{\partial x_n} & \frac{\partial H^{(I)}}{\partial y_1} & \cdots
& \frac{\partial H^{(I)}}{\partial y_n} & 0 \\
2x_1 & \bzer & 0 & 2y_1 & \bzer & 0 & -(a_1^2+b_1^2) \\
\bzer & \ddots & \bzer & \bzer & \ddots & \bzer & \vdots \\
0 & \bzer & 2x_n & 0 & \bzer & 2y_n & -(a_n^2+b_n^2) \\
0 & \cdots & 0 & 0 & \cdots & 0 & 1
\end{pmatrix}
\]
is rank deficient. This is
equivalent to the existence of a non-zero vector $
(\nu_1,\nu_2,\lambda_1,\dots,\lambda_n,\mu)$ in the left
kernel of the matrix~$J$.

Using the Cauchy-Riemann equations to write 
\[ \frac{\partial H^{(I)}}{\partial x_j} = -\frac{\partial H^{
(R)}}{\partial y_j} \quad \text{and} \quad \frac{\partial H^{
(I)}}{\partial y_j} = \frac{\partial H^{(R)}}{\partial x_j}\]
and extracting coordinates leads to the system
\[
\nu_1\frac{\partial H^{(R)}}{\partial x_j} - \nu_2\frac{\partial H^{
(R)}}{\partial y_j} + 2\lambda_j x_j = 0,\quad
\nu_1\frac{\partial H^{(R)}}{\partial y_j} + \nu_2\frac{\partial H^{
(R)}}{\partial x_j} + 2\lambda_j y_j = 0, \qquad j=1,\dots,n.
\]
Eliminating $\lambda_1,\dots,\lambda_n$ by linear combination, this
implies
\[(\nu_1 y_j-\nu_2 x_j)\frac{\partial H^{(R)}}{\partial x_j}
- (\nu_1 x_j+\nu_2y_j)
\frac{\partial H^{
(R)}}{\partial y_j}=0.
\]
Moreover, for each $j$ for which the $j$th coordinate of the
critical point under consideration is non-zero, $\lambda_j$ is
deduced from the previous set of equations.

Another consequence of Assumption (J2) is that $\nu_1\nu_2\neq0$. 
Introducing $\nu=\nu_2/\nu_1$ then simplifies the system further and
this discussion results in the following effective criterion for
minimality.
\begin{proposition}\label{prop:crit-pts-proj}
Let $H\in\mathbb{Q}[\bz]$ be a polynomial that does not vanish at the
origin and $\mV=\{\bz\in\mathbb{C}^n\mid H(\bz)=0\}$. Under Assumption
(J2), the point $\bz = \ba + i\bb \in \left(\mathbb{C}^*\right)^n$ with
	$\ba,\bb \in \mathbb{R}^n$ is a minimal critical point if
	and only if there exists $(\lambda_R,\lambda_I)\in\mathbb{R}^2$ such that $
(\ba,\bb,\lambda_R,\lambda_I)$ satisfies 
Equations~\eqref{eq:GenSys1}--\eqref{eq:GenSys3}
 and there
	does not exist $(\bx,\by,\nu,t) \in \mathbb{R}^{2n+2}$ with $0 < t
	< 1$ satisfying Equations~\eqref{eq:GenSys4}, \eqref{eq:GenSys5}
	and
\begin{equation}
	(y_j-\nu x_j)\left(\partial H^{(R)}/\partial x_j\right)(\bx,\by) - (x_j+\nu y_j)\left(\partial H^{(R)}/\partial y_j\right)(\bx,\by) =0, \quad j = 1,\dots, n. \label{eq:GenSys6}
\end{equation} 
\end{proposition}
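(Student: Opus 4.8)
The plan is to treat the ``critical'' part and the ``minimal'' part of the equivalence separately. The equivalence between solvability of Equations~\eqref{eq:GenSys1}--\eqref{eq:GenSys3} in $(\ba,\bb,\lambda_R,\lambda_I)$ and $\bz=\ba+i\bb$ being a critical point (with $\lambda=\lambda_R+i\lambda_I$ the common value of the $z_j\,\partial H/\partial z_j$) is exactly the Cauchy--Riemann bookkeeping carried out just before the statement, so I would cite it; in particular any such $\bz$ lies on $\mV$. It then remains to show that, for $\bz\in(\mathbb{C}^*)^n\cap\mV$, the point $\bz$ is minimal if and only if Equations~\eqref{eq:GenSys4}--\eqref{eq:GenSys6} have no real solution $(\bx,\by,\nu,t)$ with $0<t<1$. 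First I would invoke Lemma~\ref{lemma:domconv}~(ii)--(iii) together with the polydisk characterisation of $\overline{\mD}$ to replace ``$\bz$ minimal'' by the cleaner condition $\mV\cap D(\bz)=\emptyset$. The forward implication is then immediate: a real solution $(\bx,\by,\nu,t)$ with $0<t<1$ makes $\bw:=\bx+i\by$ a point of $\mV$ by Equation~\eqref{eq:GenSys4}, with $|w_j|^2=t(a_j^2+b_j^2)=t|z_j|^2<|z_j|^2$ for every $j$ by Equation~\eqref{eq:GenSys5} (using $\bz\in(\mathbb{C}^*)^n$), hence a point of $\mV\cap D(\bz)$; Equation~\eqref{eq:GenSys6} is not even needed here, so a fortiori minimality of $\bz$ rules out all such triples.

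For the converse I would argue the contrapositive, so suppose $\bz\in\mV$ is not minimal and hence $\mV\cap D(\bz)\neq\emptyset$. The key geometric point is that one can always find a witness lying on a polytorus obtained from $\bz$ by a \emph{single} scalar dilation. Using the polydisk characterisation of $\overline{\mD}$ and the fact that $\mD$ is open and contains the origin, the set $\{s\ge0:(s|z_1|,\dots,s|z_n|)\in\mD\}$ is a half-open interval $[0,s^*)$ with $s^*\in(0,1)$ (non-minimality of $\bz$ forces $s^*<1$), and $s^*(|z_1|,\dots,|z_n|)\in\partial\mD$. Lemma~\ref{lemma:domconv}~(i) then produces a point $\bw\in\mV$ in the polytorus of this boundary point, i.e.\ with $|w_j|=s^*|z_j|$ for all $j$ at once; with $t_0:=(s^*)^2\in(0,1)$ the triple $(\operatorname{Re}\bw,\operatorname{Im}\bw,t_0)$ solves Equations~\eqref{eq:GenSys4}--\eqref{eq:GenSys5} with the single scalar $t_0$. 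Moreover, for $t<t_0$ the corresponding dilated polytorus of $\bz$ is contained in $\mD$ and so meets $\mV$ nowhere, while $t=0$ is impossible since $H(\bzer)\neq0$; hence, on the real solution set of Equations~\eqref{eq:GenSys4}--\eqref{eq:GenSys5} in the variables $(\bx,\by,t)$ with $(\ba,\bb)$ frozen to the coordinates of $\bz$, the projection $\pi_t$ onto the last coordinate attains its infimum precisely at $t_0\in(0,1)$.

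It remains to upgrade this attained minimum to a critical point of $\pi_t$ and then to Equation~\eqref{eq:GenSys6}. Under Assumption~(J2), the block structure of the Jacobian of \eqref{eq:GenSys1}--\eqref{eq:GenSys5} forces the Jacobian of \eqref{eq:GenSys4}--\eqref{eq:GenSys5} with respect to $(\bx,\by,t)$ alone to have full rank at its solutions, so that real solution set is a smooth manifold; a minimizer of $\pi_t$ on it is therefore a critical point of $\pi_t$, equivalently a point at which the matrix $J$ displayed before the statement (whose last row is $\nabla t$) is rank deficient. Writing out the left-kernel relations of $J$, rewriting them with the Cauchy--Riemann equations, eliminating $\lambda_1,\dots,\lambda_n$ (legitimate because each $(x_j,y_j)\neq(0,0)$ when $t_0>0$ and $\bz\in(\mathbb{C}^*)^n$) and dividing by $\nu_1$, which is nonzero by (J2), yields exactly Equations~\eqref{eq:GenSys6} --- this is precisely the computation already carried out in the paragraphs preceding the proposition, and the same computation run in reverse shows conversely that every real solution of \eqref{eq:GenSys4}--\eqref{eq:GenSys6} with $t\in(0,1)$ produces a rank-deficiency of $J$, hence a critical point of $\pi_t$ with $t$-value in $(0,1)$. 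Combining the two implications gives the stated equivalence.

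I expect the hard part to be the middle step of the converse, namely passing from ``$\mV$ meets $D(\bz)$'' to ``$\pi_t$ has a \emph{critical} point with $t$-value strictly between $0$ and $1$'', because of two subtleties that must not be glossed over. First, a generic point of $\mV\cap D(\bz)$ need not lie on a torus obtained from $\bz$ by a single scalar dilation, which is what Equation~\eqref{eq:GenSys5} demands; this is exactly why one runs along the radial segment to its exit point on $\partial\mD$ and invokes Lemma~\ref{lemma:domconv}~(i). Second, ``$\pi_t$ attains some value in $(0,1)$'' does not by itself imply ``$\pi_t$ has a critical value in $(0,1)$'' for a projection that need not be proper, so one must genuinely use that below $t_0$ the dilated tori avoid $\mV$ (forcing $t_0$ to be an attained infimum) together with $H(\bzer)\neq0$ (ruling out $t=0$). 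By contrast, the algebraic manipulation of $J$ into \eqref{eq:GenSys6}, the full-rank consequences of (J2), and the fact that (J2) gives $\nu_1\nu_2\neq0$ are routine and were essentially dispatched in the lead-up to the statement.
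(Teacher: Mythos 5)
Your proposal is correct and follows the same strategy as the paper's (somewhat informal) derivation preceding the proposition: criticality is handled via Cauchy--Riemann bookkeeping, minimality is rephrased via Lemma~\ref{lemma:domconv} as the emptiness of $\mV\cap D(\bz)$, and one passes to critical values of $\pi_t$ in $(0,1)$ under (J2), with the matrix $J$ and the elimination of the $\lambda_j$'s producing Equation~\eqref{eq:GenSys6}. You go somewhat further than the paper in making explicit why the infimum of $\pi_t$ on the fiber over $(\ba,\bb)$ is attained (boundedness of $\{t\le T\}$ slices via Equation~\eqref{eq:GenSys5}) and in noting that the ``easy'' direction needs only Equations~\eqref{eq:GenSys4}--\eqref{eq:GenSys5}, not~\eqref{eq:GenSys6} --- these are welcome clarifications rather than a different route.
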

Equations~\eqref{eq:GenSys1}--\eqref{eq:GenSys6} form a system of $4n+4$
equations in $4n+4$ unknowns, for which efficient algorithms are
available under mild assumptions, as discussed in the next section.

We can now state our algorithm in the non-combinatorial case.

\begin{algoboxed}[\textsf{Minimal Critical Points in
the Non-Combinatorial Case}]\leavevmode
\begin{enumerate}
\item Determine the set $\mathcal{S}$ of zeros of the
polynomial system~\eqref{eq:GenSys1}--\eqref{eq:GenSys6}
in the variables $\ba,\bb,\bx,\by,\lambda_R,\lambda_I,\nu,t$. If
$\mathcal S$ is not finite, FAIL.
\item Construct the set of minimal critical points
\[\mathcal{U}:=
\{\ba+i\bb\in(\mathbb{C^*})^{n}\mid\exists
(\ba,\bb,\bx,\by,\lambda_R,\lambda_I,\nu,t)\in\mathcal{S}\cap
\mathbb{R}^{4n+4}\text{
and for all such tuples, } t\not\in(0,1)\}.\] 
Return FAIL if either $\mathcal{U}$ is empty, or one of its elements
has $\lambda_R=\lambda_I=0$, or if the elements of $ \mathcal{U}$ do not all belong
to the same torus.
\item Identify the elements of~$\mathcal{U}$ within~$\mathcal{C}$
from Equation~\eqref{eq:critsys} and
return them.
\end{enumerate}
\end{algoboxed}
\noindent Again, we defer to Section~\ref{sec:MainAlgos} the details
of how
these steps can be carried out effectively and efficiently.

\begin{example}The rational function
\[F=\frac1{1-x(1-y)(1-z)(1-yz)}\]
is not combinatorial. Its diagonal coefficients~$(a_k)$ equal~0 for
odd~$k=2m+1$ and $(-1)^m(3m)!/m!^3$ for $k=2m$. Thus asymptotics
obtained by ACSV can be checked using Stirling's formula.

The critical point equations admit two solutions:
\[\lambda=-1,y_\pm=z_\pm,x_\pm=\frac{2+z_\pm}{9},1+z_\pm+z_\pm^2=0.\]
Correspondingly, the system formed with Equations~%
\eqref{eq:GenSys1}--\eqref{eq:GenSys3} admits the real solutions
corresponding to the real and imaginary parts of these two points, as
well as two other solutions with non-real coordinates.

The ideal generated by the polynomials in Equations~%
\eqref{eq:GenSys1}--\eqref{eq:GenSys6} contains a univariate
polynomial in~$t$, of degree~76, with only three real roots different
from~1, all larger than~1. Thus both critical points are minimal.

The Hessian matrices at these minimal critical points are the matrix
\[ -\left(1/9+i\sqrt{3}/27\right)\begin{pmatrix}1 & 1/2 \\ 1/2 & 1 \end{pmatrix} \]
and its conjugate, and adding the asymptotic contributions of each point gives
the dominant asymptotic term as
\[ \frac{\left(i 3\sqrt{3}\right)^k \sqrt{3}}{2 k \pi} + \frac{\left(-i 3\sqrt{3}\right)^k \sqrt{3}}{2 k \pi} 
= 
\begin{cases}
\frac{(-27)^m \sqrt{3}}{2 m \pi} &\text{if $k = 2m$ is even,} \\
0 &\text{otherwise,}
\end{cases}
\]
which matches what Stirling's formula provides.
\end{example}
\begin{proposition}\label{prop:correct-non-combi}
If the rational function $F(\bz)=G(\bz)/H(\bz)$ is
such that $H(\bzer)\neq0$, then with $G$ and $H$ as input
Algorithm~\textsf{ACSV} either fails or returns the 
asymptotic behaviour of the diagonal coefficients of~$F$.
The latter occurs when the assumptions
(A0)-(A3) and (J2) are satisfied.
\end{proposition}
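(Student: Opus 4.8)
The plan is to mirror the structure of the proof of Proposition~\ref{prop:algo-combi}, but with Step~2 of Algorithm~\textsf{ACSV} now realized by the \textsf{Minimal Critical Points in the Non-Combinatorial Case} algorithm and its correctness underpinned by Proposition~\ref{prop:crit-pts-proj} instead of Proposition~\ref{prop:lineMin}. First I would dispatch the easy half: whenever the algorithm does not fail, I must check that its output really is the asymptotic expansion of the diagonal. If Step~1 of the minimal-critical-points algorithm succeeds, then $\mathcal{S}$ is finite; in particular the critical-point system~\eqref{eq:critpt} has finitely many solutions (the points $(\ba,\bb,\ba,\bb,\lambda_R,\lambda_I,\nu,t)$ with $t=1$, suitably completed, sit inside $\mathcal{S}$), so Step~1 of Algorithm~\textsf{ACSV} does not fail either. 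By Proposition~\ref{prop:crit-pts-proj}, every $\bzeta=\ba+i\bb$ selected in Step~2 is genuinely a minimal critical point with nonzero coordinates, and the exclusion of solutions with $\lambda_R=\lambda_I=0$ forces it to be smooth. The requirement in Step~2 that all selected points lie on the same torus, together with the finiteness of the set $\mathcal{C}$ of critical points, means $\mathcal{U}$ as constructed in Step~3 is precisely the set of all critical points on the torus $T(\bzeta)$, all smooth. The final tests of Step~3 of Algorithm~\textsf{ACSV} rule out degenerate Hessians and a uniformly vanishing numerator, so Proposition~\ref{prop:add-crit-pts} applies and the returned sum of contributions~\eqref{eq:diag_asm} is indeed the dominant asymptotics of $f_{k,\dots,k}$.

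For the second half I would show that assumptions (A0)--(A3) and (J2) force the non-failure branch. Assumption (J2) says the Jacobian of the system~\eqref{eq:GenSys1}--\eqref{eq:GenSys5} has full rank at its solutions; the reduction carried out before Proposition~\ref{prop:crit-pts-proj} (eliminating the $\lambda_j$ and passing to $\nu=\nu_2/\nu_1$, which is legitimate since (J2) also gives $\nu_1\nu_2\neq0$) shows that the full system~\eqref{eq:GenSys1}--\eqref{eq:GenSys6} then has an invertible Jacobian at each of its solutions, hence by the inverse function theorem its complex solution set consists of isolated points; being defined by polynomials it is therefore finite, so Step~1 does not fail. Next, (A0) provides a minimal critical point and (A1) makes it smooth, hence it is a point $\bz=\ba+i\bb\in(\mathbb{C}^*)^n$ satisfying Equations~\eqref{eq:GenSys1}--\eqref{eq:GenSys3} for which, by Lemma~\ref{lemma:domconv}~\emph{(iii)} and minimality, the polydisk $D(\bz)$ meets no point of $\mV$; thus there is no solution of~\eqref{eq:GenSys4}--\eqref{eq:GenSys6} with $0<t<1$ associated to it, so $\ba+i\bb$ is placed into $\mathcal{U}$ and $\mathcal{U}\neq\emptyset$. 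The condition $\lambda_R=\lambda_I=0$ cannot occur on a smooth critical point, and the common-torus condition holds because all elements of $\mathcal{U}$ share the coordinate-wise moduli of the selected $\bzeta$. Finally (A1) and (A3) guarantee that every critical point on $T(\bzeta)$ is smooth and non-degenerate, and (A2) that $G$ does not vanish at one of them, so Step~3 of Algorithm~\textsf{ACSV} succeeds.

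The main obstacle, and the point that requires genuine care rather than bookkeeping, is verifying that the set $\mathcal{U}$ produced by Step~2 in the non-combinatorial algorithm \emph{equals} the set of minimal critical points and not merely a subset of it. In the combinatorial case this followed from the clean ``line-segment'' characterization of Proposition~\ref{prop:lineMin}; here one must instead argue that the critical points of the projection $\pi_t$ exhaust the extremal values of $t$ on $\mWR$, so that the absence of a critical value of $\pi_t$ in $(0,1)$ is equivalent to the absence of \emph{any} value of $t$ in $(0,1)$ for which~\eqref{eq:GenSys4}--\eqref{eq:GenSys5} have a real solution attached to the given critical point. This is where the hypothesis $H(\bzer)\neq0$ is essential: it guarantees $0\notin\pi_t(\mWR)$ while $1\in\pi_t(\mWR)$, so that if some real solution existed with $t_0\in(0,1)$ then, the fibre being bounded away from $t=0$, $\pi_t$ would attain a minimum at a value in $(0,t_0]\subseteq(0,1)$, which is a critical value. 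One must be slightly careful that this minimum is attained on the relevant (real, suitably nonzero) stratum and that the degeneracy locus of the differential of $\pi_t$ restricted to $\mWR$ is correctly captured by the rank-deficiency of the matrix $J$ above; Assumption (J2) is what makes that reduction valid. Once this equivalence is in hand, the rest is the routine combination of the lemmas of Section~\ref{sec:ACSV} sketched above.
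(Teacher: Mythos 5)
Your overall structure matches the paper's own proof: split the argument into (i) non-failure implies a correct output, handled by citing Proposition~\ref{prop:crit-pts-proj} to establish that the returned set $\mathcal{U}$ consists exactly of the minimal critical points with nonzero coordinates, and Proposition~\ref{prop:add-crit-pts} for the final asymptotic sum; and (ii) the assumptions imply non-failure, handled via (J2) for finiteness of $\mathcal{S}$, (A0)--(A1) for nonemptiness of $\mathcal{U}$, and (A1), (A3) to meet the hypotheses of Proposition~\ref{prop:add-crit-pts}.

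Two points deserve flagging. First, your justification of the ``same-torus'' test passing is circular: you write that the condition holds ``because all elements of $\mathcal{U}$ share the coordinate-wise moduli of the selected $\bzeta$,'' but that is precisely what is being tested, and nothing in (A0)--(A3), (J2) rules out that minimal critical points lie on several distinct tori. The paper's own proof also does not justify this step; it simply asserts that Proposition~\ref{prop:add-crit-pts} applies. In fact, Theorem~\ref{thm:final-complexity} lists hypotheses (A0)--(A4) for the non-combinatorial case while the proposition says (A0)--(A3), which strongly suggests an additional hypothesis was intended precisely to cover this situation; in the combinatorial case Lemma~\ref{lemma:finite-nb-min-crit} played this role, and there is no analogue offered here. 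Second, your ``main obstacle'' paragraph re-derives the content of Proposition~\ref{prop:crit-pts-proj} (that absence of a critical value of $\pi_t$ in $(0,1)$ is equivalent to absence of any value, because $0\notin\pi_t(\mWR)$ and $1\in\pi_t(\mWR)$). That equivalence is already established and the paper simply cites the proposition; you should do the same rather than present it as a gap to fill.
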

\begin{proof} The proof is similar to that of Proposition~\ref{prop:algo-combi}.

The system~\eqref{eq:GenSys1}--\eqref{eq:GenSys3} has for real
solutions the real and imaginary parts of the solutions of the
critical-point equations. Thus, when $\mathcal{S}$ is finite, there
are finitely many critical points. By Proposition~%
\ref{prop:crit-pts-proj}, the set $\mathcal{U}$ computed in Step~2
contains exactly the minimal critical points with non-zero
coordinates. Moreover, all these points are smooth since solutions
with $\lambda=0$ are excluded. The final tests in Step~3 ensure that
Proposition~\ref{prop:add-crit-pts} applies. This finishes the proof
of the first part of the Proposition.

When Assumption~(J2) holds, the Jacobian matrix
of the system~\eqref{eq:GenSys1}--\eqref{eq:GenSys6} is invertible at
its solutions, which
implies that there are finitely many of them. Thus the set~$\mathcal S$
computed in Step~1 is finite. By assumption (A0), the set $
\mathcal{U}$ computed in Step~2 is not empty. Moreover, the previous
proposition shows that it contains exactly the minimal critical points
with non-zero coordinates.
Then
Assumptions (A1) and (A3) show that they are smooth and non-degenerate,
so that Proposition~\ref{prop:add-crit-pts} applies.
\end{proof}

At this stage, we leave the following to future work.
\begin{conjecture}
Assumption (J2) holds generically.
\end{conjecture}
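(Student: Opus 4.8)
The plan is a parametric transversality argument over the space of denominators. Since the system~\eqref{eq:GenSys1}--\eqref{eq:GenSys5} involves only $H$, fix a degree bound $d$, write $\mathbb{P}=\mathbb{C}^{m_d}$ for the space of coefficient vectors of polynomials of degree at most $d$, and let $\mathcal{E}\colon\mathbb{P}\times\mathbb{C}^{4n+3}\to\mathbb{C}^{3n+4}$ be the map sending $(H,p)$ to the vector of left-hand sides of~\eqref{eq:GenSys1}--\eqref{eq:GenSys5}; for fixed $H$ the complex solution set $\mW$ of these equations is $\{p:\mathcal{E}(H,p)=0\}$, and we set $\Sigma=\mathcal{E}^{-1}(0)$. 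The conjecture would follow from the single claim that $d\mathcal{E}$ has full rank $3n+4$ at every point of $\Sigma$: this makes $\Sigma$ smooth and pure of dimension $m_d+n-1$, and generic smoothness of the projection $\Sigma\to\mathbb{P}$ (valid in characteristic zero) then yields, for all $H$ outside a proper algebraic subset of $\mathbb{P}$, a solution set $\mW$ that is empty or smooth of dimension $n-1$, equivalently one on which $d\Phi_H$ has full rank everywhere --- precisely Assumption (J2). One must shrink this dense open set to avoid the images of the finitely many components of $\Sigma$ not dominating $\mathbb{P}$.

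First I would establish the rank claim on the generic part of $\Sigma$. Fix $(H,p)$ with $p=(\ba,\bb,\bx,\by,\lambda_R,\lambda_I,t)$ and write $\bz=\ba+i\bb$, $\bw=\bx+i\by$, $\bar\bw=\bx-i\by$. The $3n+4$ equations split into three blocks: the $2n+2$ equations~\eqref{eq:GenSys1}--\eqref{eq:GenSys3}, which are the real and imaginary parts of the critical-point equations at $\bz$; the $2$ equations~\eqref{eq:GenSys4}, which say $\bw\in\mV$; and the $n$ equations~\eqref{eq:GenSys5}. For generic $H$ the Jacobian of the critical-point system is nonsingular, so the first block already has rank $2n+2$ in the variables $(\ba,\bb,\lambda_R,\lambda_I)$ alone, and these do not occur in~\eqref{eq:GenSys4}. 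For generic $H$ the hypersurface $\mV$ is smooth, so the second block has rank $2$ in $(\bx,\by)$. The third block has Jacobian entry $(2x_j,2y_j)$ in the columns $(x_j,y_j)$, nonzero whenever $w_j\neq0$; since $H(\bzer)\neq0$ forces $t\neq0$, one has $w_j\neq0$ as soon as $a_j^2+b_j^2\neq0$. A short computation shows that, generically, the only obstruction to the second and third blocks being jointly of rank $n+2$ is a relation $w_j\,\partial_jH(\bw)=\bar w_j\,\partial_jH(\bar\bw)$ for all $j$, which does not hold for generic $H$; and on the diagonal locus $\bz=\bw$ (so $t=1$), where this relation holds trivially, the $t$-column of~\eqref{eq:GenSys5}, whose entries are $-(a_j^2+b_j^2)$, supplies the missing rank. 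Thus $d\mathcal{E}$ is onto at every point where $\bz,\bw\in(\mathbb{C}^*)^n$, where $\nabla H$ vanishes at neither $\bz$ nor $\bw$, and where the critical-point Jacobian is nonsingular.

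The hard part --- and the reason this is only a conjecture --- is the book-keeping on the remaining, degenerate loci of $\Sigma$. Complexifying the real variables $\ba,\bb,\bx,\by,t$ enlarges $\mW$ far beyond its geometrically meaningful real, non-zero-coordinate part, yet (J2) is imposed at \emph{all} of $\mW$; one must show that the bad loci --- chiefly $\{a_j^2+b_j^2=0\}$ (which forces $w_j=0$ and corresponds to a vanishing coordinate $z_j=0$, hence to a degenerate critical point), the diagonal $\{\bz=\bw\}$, and components of $\Sigma$ that do not dominate $\mathbb{P}$ --- either carry no solution for generic $H$ (plausible by a dimension count, since each has positive codimension in $\Sigma$, though simultaneous degeneracies at several coordinates leave less room as $n$ grows) or still satisfy the rank bound through a separate argument, and that this holds uniformly in $n$ and $d$ (small $d$, where $m_d$ is too small for the interpolation underlying the rank claim, being checked directly). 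I expect the scheme above to go through, possibly after weakening (J2) to the statement that $d\Phi_H$ has full rank at all solutions with $\bz\in(\mathbb{C}^*)^n$, which is all that the algorithm of Section~\ref{sec:GenResults} actually uses.
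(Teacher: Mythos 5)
The paper does not prove this statement---it is explicitly left as an open conjecture, with the preceding sentence ``At this stage, we leave the following to future work,'' so there is no reference proof to compare against. Your write-up is consistent with that status: it is a strategy sketch, not a proof, and you say so yourself. The parametric-transversality framework you set up (make $\mathcal{E}$ a submersion on the incidence variety $\Sigma$, then invoke generic smoothness in characteristic zero and prune the non-dominating components) is a standard route, and the obstacles you flag are the genuine ones---but none of them is closed. In particular, (J2) as defined in the paper concerns all \emph{complex} solutions of~\eqref{eq:GenSys1}--\eqref{eq:GenSys5}, and it is used in Corollary~\ref{kro-non-combi} and Proposition~\ref{prop:correct-non-combi} to conclude that the system is zero-dimensional and that the Kronecker representation of Proposition~\ref{prop:kronrep} captures every solution; your proposed weakening to the locus $\bz\in(\mathbb{C}^*)^n$ would therefore not suffice for the way the paper actually consumes the assumption. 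The locus $\{a_j^2+b_j^2=0\}$ is, over $\mathbb{C}$, a union of hyperplanes $\{a_j=\pm ib_j\}$ and grows when several indices degenerate simultaneously, so dismissing it by a dimension count needs more than the hand-wave you give. And the parameter space is delicate: $H^{(R)}$ and $H^{(I)}$ arise from splitting $H$ into real and imaginary parts, so you should say whether $\mathbb{P}$ is the space of real or complex coefficient vectors and explain why the transversality statement transfers to the paper's notion of genericity over $\mathbb{C}^{m_d}$.

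For comparison, the genericity proofs the paper does give (for (A1)--(A3) and (J1), Section~\ref{sec:generic}) all follow a different, constructive template: build an explicit multivariate resultant whose vanishing encodes failure of the property, then exhibit a concrete witness family, typically $H=1-z_1^d-\cdots-z_n^d$, at which the resultant is nonzero. That algebraic route sidesteps the bookkeeping about singular strata of $\Sigma$ and is almost certainly the argument the authors had in mind but did not carry out for (J2); the difficulty is that the real-and-imaginary-part formulation doubles the variable count and introduces exactly the conjugate loci $\{a_j=\pm ib_j\}$ and the diagonal $\{\bz=\bw\}$ that you flag, making the resultant construction and the witness computation substantially more involved than in the other cases. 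Either approach---your transversality argument or the paper's resultant template---would need to handle these degenerate strata explicitly, and neither you nor the paper does so; the conjecture remains open.
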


\section{Kronecker Representation}
\label{sec:Algorithms}
We now consider the algorithms presented in the previous section from
the computer algebra perspective, making more explicit the statements
involving the determination of the set of zeros of a polynomial system
and the more delicate ones of selecting those zeros with specific
properties among them. We produce an estimate for the
bit complexity of these algorithms. First, we recall the complexity
model.

\subsection{Complexity Model}
The \emph{bit complexity} of an algorithm whose input is encoded
by integers (for instance, a multivariate polynomial over the
integers) is obtained by considering the binary representations of
these integers and counting the number of additions, subtractions, and
multiplications of bits performed by the algorithm. This is a 
complexity measure closer to the time complexity than the algebraic
complexity, where the operations over coefficients are counted at unit
cost. In particular, the sizes of the integers in intermediate
computations have to be taken into account in the analysis.
Our algorithms
typically take as input polynomials in $\mathbb{Z}[z_1,\dots,z_n]$,
and the bit complexity of the algorithms is expressed in terms
of the number of
variables $n$, the degrees and the heights of these polynomials. 
Here, the \emph{height} $h(P)$ of a polynomial $P\in\mathbb{Z}[\bz]$
is the maximum of 0 and the base~2 logarithms of the
absolute values of the coefficients of $P$. (Some care is
needed with the literature in this area, as some authors
define the height in terms of the maximum of the moduli
of the coefficients rather than their logarithms.) Unless otherwise
specified we assume that $d$ denotes an integer that is
at least 2 (typically corresponding to polynomial degree) and define $D := d^n$.

For two functions $f$ and $g$ defined and positive over~$(
\mathbb{N}^*)^m$, the notation $f(a_1,\dots,a_m) = O(g
(a_1,\dots,a_m))$ states the existence of a constant~$K$ such that $|f
(a_1,\dots,a_m)|\le Kg(a_1,\dots,a_m)$ over~$(\mathbb{N}^*)^m$. 
 Furthermore, we write $f = \tilde{O}(g)$ when $f=O(g\log^kg)$ for
 some $k\ge0$; for instance, $O(nD)=\tilde{O}(D)$ since $D=d^n$ and we
 assume $d \geqslant 2$. The dominant factor in the complexity of most operations we consider grows like $\tilde{O}(D^c)$ for some constant $c$, and our goal is typically to bound the exponent $c$ as tightly as possible.  
 
It is often convenient to consider a system of polynomials $\bbf = 
(f_1,\dots,f_n)$ of degree at most $d$ as given by a straight-line
program (a program using only assignments, constants, $+,-,$ and
$\times$) that evaluates the elements of $\bbf$ simultaneously at any
point~$\bz$ using at most $L$ arithmetic operations (see Section 4.1
of the book by Burgisser et al.~\cite{BurgisserClausenShokrollahi1997} for additional
details on this complexity model).  For instance, this can allow one
to take advantage of sparsity in the polynomial system. The quantity
$L$ is called the \emph{length} of the straight-line program. An upper
bound on $L$ is obtained by considering $n$ dense polynomials in $n$
variables, leading to $L=O\left(n\binom{n+d}{d}\right)=\tilde{O}(D)$. 
\begin{example}\label{ex:slpcombi}
Let $F(\bz)=G(\bz)/H(\bz)\in \mathbb{Q}(\bz)$ be a
rational function
with $\deg H=d$ and $L$ be the length of a straight-line program
evaluating the polynomial~$H$. By a
classical result of Baur and Strassen (see Burgisser et al.~\cite[Section 7.2]{BurgisserClausenShokrollahi1997}), 
it is possible to construct
a straight-line program
evaluating not only~$H$ but also all its partial derivatives
$(\partial H/\partial z_1,\dots,\partial H/\partial z_n)$, of length
less than $4L$. Then the system of Equations~\eqref{eq:extended-sys},
consisting of $n+2$ polynomials of degree $d$ in $n+2$ variables can
be evaluated in less than~$5L+2n+1=O(L+n)$ operations.
\end{example}
\begin{example}\label{ex:slpnoncombi}
Let $F(\bz)=G(\bz)/H(\bz)\in\mathbb{Q}(\bz)$ be a rational function
with $\deg H=d$ and $L$ be the length of a straight-line program
evaluating~$H$. The system~\eqref{eq:GenSys1}--\eqref{eq:GenSys6} of
$4n+4$ equations in $4n+4$ unknowns can be evaluated by a
straight-line program of length~$O(L+n)$. Indeed, starting from the
straight-line program evaluating~$H$, one constructs a straight-line
program evaluating the real and imaginary parts~$H^{(R)}$
and $H^{(I)}$ in $O(L)$ operations, replacing
each addition and multiplication by the corresponding operations on
real and imaginary parts. From there, a program of length less
than~$O(L)$ computing
simultaneously these polynomials and their gradients with respect to
the variables $x_j$ and $y_j$ is again obtained by the result of
Baur and Strassen. Thus the whole system is evaluated with
$O(L+n)$ operations.
\end{example}

\subsection{Kronecker Representation}
Our complexity estimates rely on the use of a Kronecker representation
for the solutions of zero-dimensional polynomial systems. 

\begin{definition}
Given a zero-dimensional (i.e., finite) algebraic set 
\[ \mV(\bbf) = \{\bz \mid f_1(\bz) = \cdots = f_n(\bz)=0\} \]
defined by the polynomial system $\bbf = (f_1,\dots,f_n) \in 
\mathbb{Z}[z_1,\dots,z_n]^n$, a \emph{Kronecker representation} $\left
[P(u),\bQ\right]$ of this set consists of an integer linear form
\[ u = \lambda_1 z_1 + \cdots + \lambda_n z_n \in \mathbb{Z}[\bz] \] 
that takes distinct values at the elements of $\mV(\bbf)$, a
square-free
polynomial $P \in \mathbb{Z}[u]$, and $Q_1,\dots,Q_n \in \mathbb{Z}[u]$ of degrees smaller than the degree of $P$ such that the elements of $\mV(\bbf)$ are given by projecting the solutions of the system
\begin{equation}
\label{eq:KronRep}
	P(u)=0 ,\qquad 
    \left\{\begin{array}{ll}
    P'(u)z_1 - Q_1(u) &=0 ,\\ 
    &\,\vdots \\
    P'(u)z_n - Q_n(u) &=0,\end{array}\right.
\end{equation}
onto the coordinates $z_1,\dots,z_n$.  
\end{definition}
The \emph{degree} of a Kronecker representation is the
degree of $P$, and its \emph{height} is
the maximum height of its polynomials $P,Q_1,\dots,Q_n$.  Kronecker
representations of zero-dimensional systems date back to work of 
Kronecker~\cite{Kronecker1882} and Macaulay~\cite{Macaulay1916} on polynomial system
solving. We refer to Castro et al.~\cite{CastroPardoHageleMorais2001}
for a detailed history and account of this approach to solving polynomial systems.
\begin{example}\label{ex:apery-kro1}
For the system of critical-point equations for
the Ap\'ery generating function from Example~\ref{ex:criteqApery},
the linear form $u=a$ takes distinct values on the roots and
leads to the following Kronecker representation of them:
\[u^2-2u-1=0,\quad\lambda=-1,\quad a=\frac{u+1}{u-1},\quad b=c=
\frac{1}{u-1},\quad z=-\frac{2(41u-99)}{u-1}.\]
\end{example}
An important observation is that $u$ being a linear form with integer
coefficients in the coordinates
$z_j$, since the polynomials in the Kronecker representation have
integer coefficients, a root of $P(u)$ is real if and only if every
coordinate $z_j$ in the corresponding solution is real. 

\subsection{Bounds and Complexity}
A probabilistic algorithm computing a Kronecker representation of the
solutions of $\bbf$ under mild regularity assumptions, and with a
good complexity, was given by Giusti et al.~\cite{GiustiLecerfSalvy2001}. In our
context, however, it is possible to take advantage of the
multi-homogeneous structure of the systems under study and obtain an algorithm with a
better complexity estimate. This relies on precise bounds on
the coefficient sizes of the polynomials $P$ and
the $Q_j$ appearing in the Kronecker representation in this situation.
Such bounds have
been provided recently by Safey El Din and Schost~\cite{Safey-El-DinSchost2018}, extending
earlier results of
Schost~\cite{Schost2001} thanks to new height bounds by D'Andrea et al.~\cite{DAndreaKrickSombra2013}.
They allow
us
to
determine the complexity of rigorously deciding several properties of
the solutions to the original polynomial system needed in Steps~2 and
3 of the Algorithms \textsf{Minimal Critical Points in the
Combinatorial Case} and
\textsf{Minimal Critical Points in the Non-Combinatorial Case}. 

Consider a polynomial $f(\bz) \in \mathbb{Z}[\bz]$ and let
$\bZ_1,\dots, \bZ_m$ be a partition of the variables $\bz$.  The
polynomial $f$ has \emph{multi-degree at most} $(v_1,\dots,v_m)$ if
the total degree $\deg_{\bZ_j}(f)$ of $f$ considered as a polynomial only in the variables of $\bZ_j$ is at most $v_j$, for each $j=1,\dots,m$.  

When $\bbf=(f_1,\dots,f_n)$ is a polynomial system where $f_j$ has
multi-degree at most $\bd_j\in\mathbb{N}^m$, and the block of
variables $\bZ_j$ contains $n_j$ elements, 
Safey El Din and Schost~\cite{Safey-El-DinSchost2018} give an upper bound $\sC_{\bn}(\bd)$ on
the degrees of
the polynomials appearing in a Kronecker representation of the
non-singular solutions of $\bbf$, where $\bd=(\bd_1,\dots,\bd_n)$.
Writing $\bd_j=\left(d_{j1},\dots,d_{jm}\right)$, this upper bound is
given by 
\begin{multline}\label{eq:Cnd}
\sC_{\bn}(\bd)=\text{sum of the non-zero coefficients of the
truncated power series}\\
(d_
{11}\theta_1+\dots+d_{1m}\theta_m)\dotsm(d_
{n1}\theta_1+\dots+d_{nm}\theta_m)\mod
\left(\theta_1^
{n_1+1},\dots,\theta_m^{n_m+1}\right).
\end{multline}

The heights of the polynomials in the Kronecker representation
are controlled in terms of a similar quantity $\sH_{\bn}
(\bETA,\bd)$ defined as follows. If $h(f)$ denotes the height of a
polynomial $f(\bz)
\in \mathbb{Z}[\bz]$, let
\begin{equation} \eta(f) := h(f) + \sum_{j=1}^m \log(1+n_j)\deg_{\bZ_j}(f). \label{eq:etaHeight} \end{equation}
Given $\bETA \in \mathbb{R}^n$ such that $\eta(f_j) \leq \eta_j$ for
each $j=1,\dots,n$, then
\begin{multline}\label{eq:Hnd}
\sH_{\bn}(\bETA,\bd)=\text{sum of
the non-zero coefficients of the
truncated power series}\\
(\eta_1\zeta+d_{11}\theta_1+\dots+d_{1m}\theta_m)\dotsm
(\eta_n\zeta+d_
{n1}\theta_1+\dots+d_{nm}\theta_m)\mod
\left(\zeta^2,\theta_1^
{n_1+1},\dots,\theta_m^{n_m+1}\right).
\end{multline}
\begin{example} This notation simplifies a lot in the important
special case when the variables are considered as a single block with
$\bd
= (d,\dots,d)$ and $\eta_j =\eta := h + d\log(1+n)$. Then $\sC_{n}(\bd) = d^n=D$ as it is the sum of
the coefficients in
\[ (d\theta_1)^n \mod \left(\theta_1^{n+1}\right). \]
Furthermore, $\sH_{n}(\bETA,\bd) = \tilde{O}(hd^{n-1} + D)$ as it is the sum of the coefficients in 
\[ (\eta \zeta + d\theta_1)^n \mod \left(\zeta^2,\theta_1^{n+1}\right). \]
\end{example}
\noindent Using these notations, Safey El Din and Schost~\cite{Safey-El-DinSchost2018} obtain
the
following.
\begin{lemma}[{\cite[Lemma~23 and top of p.~205]{Safey-El-DinSchost2018}}]
\label{lemma:Chow}
Let $\bbf\in\mathbb{Z}[z_1,\dots,z_n]^n$ be a polynomial system and
let $Z(\bbf)$ be the
solutions of $\bbf$ where the Jacobian matrix of $\bbf$ is invertible. 
Then, fixing a partition $\bZ$ of the variables such that $f_j$ has
multi-degree at most $\bd_j$ and height satisfying $\eta(f_j) \leq
\eta_j$ (with $\eta$ defined in Equation~\eqref{eq:etaHeight}), there
exists a non-zero $a\in\mathbb{Z}$ such that the product
\[a\prod_{\bx\in Z(\bbf)}{(T_0-x_1T_1-\dots-x_nT_n)}\]
is a polynomial in $\mathbb{Z}[T_0,\dots,T_n]$
(a primitive Chow form of~$Z(\bbf)$), of height bounded by
$\sH_{n}(\bETA,\bd)+2\log(n+2)\sC_{\bn}(\bd)$.
\end{lemma}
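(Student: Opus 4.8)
The plan is to realize $Z(\bbf)$ as a reduced zero-dimensional subscheme obtained as an isolated part of the intersection of the hypersurfaces $\{f_1=0\},\dots,\{f_n=0\}$ inside the multiprojective space $\mathbb{P}^{n_1}\times\cdots\times\mathbb{P}^{n_m}$ attached to the partition $\bZ$, and then to bound the height of its Chow form by an \emph{arithmetic multihomogeneous B\'ezout inequality}. The first thing to note is that for a reduced $\mathbb{Q}$-rational zero-cycle $\{\bx^{(1)},\dots,\bx^{(\delta)}\}$ the coefficients of $\prod_k(T_0-x^{(k)}_1T_1-\dots-x^{(k)}_nT_n)$ are the (signed) elementary symmetric functions of the linear forms $\sum_i x^{(k)}_iT_i$ and are therefore rational; the constant $a\in\mathbb{Z}$ is simply chosen to clear denominators and make the resulting integer polynomial primitive, so the statement is genuinely a height estimate for this elimination polynomial.

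Next I would record the base cases that seed the recursion. For a single hypersurface $\{f_j=0\}$ the Chow form is (a normalization of) $f_j$ itself: it has multidegree $(\deg_{\bZ_1}f_j,\dots,\deg_{\bZ_m}f_j)$ and height essentially $h(f_j)$. This explains the weight $\eta_j\zeta+\sum_\ell d_{j\ell}\theta_\ell$ attached to $f_j$ in the bookkeeping: the variables $\theta_\ell$ track the multidegree block by block, while $\zeta$, truncated at order two because heights are additive only to first order, tracks the height; the extra term $\sum_\ell\log(1+n_\ell)\deg_{\bZ_\ell}f_j$ in the definition of $\eta$ absorbs the size loss one pays passing between affine and multiprojective coordinates and between a cycle and its Chow form.

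Then I would run the intersection inductively. Given a bound for the Chow form of the proper part of $\{f_1=0\}\cap\dots\cap\{f_k=0\}$, intersect with $\{f_{k+1}=0\}$ and invoke an arithmetic B\'ezout inequality of Bost--Gillet--Soul\'{e} / Philippon type in the sharpened multiprojective form of D'Andrea--Krick--Sombra~\cite{DAndreaKrickSombra2013}: the multidegree of the new cycle is bounded by the corresponding product truncated at $n_\ell$ in each block (beyond that the component leaves the multiprojective space), and its height grows by roughly $h_{\mathrm{old}}\cdot\deg(f_{k+1})+h(f_{k+1})\cdot\deg_{\mathrm{old}}$ up to a $\log$-of-dimension correction. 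Unwinding this recursion is precisely what evaluating $\prod_j(\eta_j\zeta+\sum_\ell d_{j\ell}\theta_\ell)$ modulo $(\zeta^2,\theta_1^{n_1+1},\dots,\theta_m^{n_m+1})$ does, with ``sum of the non-zero coefficients'' being the passage from the graded multidegree-and-height data to a single number; the additive $2\log(n+2)\,\sC_{\bn}(\bd)$ accounts for the $O(\sC_{\bn}(\bd))$ linear factors appearing in the Chow form, each contributing an $O(\log n)$ normalization loss.

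The step I expect to be the main obstacle is isolating $Z(\bbf)$, the simple solutions where the Jacobian is invertible, from the possibly larger and more singular intersection $\{f_1=0\}\cap\dots\cap\{f_n=0\}$, which may carry embedded points or positive-dimensional components: a naive B\'ezout bound controls the Chow form of the \emph{entire} intersection cycle, not of a distinguished reduced zero-dimensional piece. Handling this cleanly requires the isolated-component refinement of the arithmetic Nullstellensatz due to D'Andrea--Krick--Sombra (or a generic deformation of the $f_j$ together with a specialization argument), and checking that the constants in the recursion do not degrade under this isolation is the delicate point. Once the arithmetic B\'ezout input is in place, the remaining ingredients --- the combinatorial identity packaging the recursion into $\sC_{\bn}(\bd)$ and $\sH_{\bn}(\bETA,\bd)$, and the integrality of $a\prod_k(T_0-\sum_i x^{(k)}_iT_i)$ --- are routine bookkeeping.
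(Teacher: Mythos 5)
The paper does not prove this lemma: it is stated as a direct citation of Safey El Din and Schost (Lemma~23 and p.~205 of that reference), so there is no in-paper argument to compare against. Your sketch is a plausible reconstruction of the strategy used in the cited literature --- multiprojective Chow forms, the arithmetic multihomogeneous B\'ezout inequality of D'Andrea--Krick--Sombra, and a recursion whose bookkeeping produces exactly the truncated power series $\sC_{\bn}(\bd)$ and $\sH_{\bn}(\bETA,\bd)$ --- and you correctly identify the interpretation of the extra $\sum_\ell \log(1+n_\ell)\deg_{\bZ_\ell}(f_j)$ term in $\eta$ and the additive $2\log(n+2)\sC_{\bn}(\bd)$ correction. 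That said, you also correctly flag, and leave unresolved, the genuinely hard step: a na\"ive arithmetic B\'ezout argument bounds the Chow form of the whole intersection cycle, whereas the lemma concerns only the isolated simple solutions $Z(\bbf)$; passing from one to the other without degrading the height bound is the actual content of Safey El Din and Schost's Lemma~23, and your proposal gestures at (deformation/specialization or the isolated-component arithmetic Nullstellensatz) but does not carry out this reduction. So this is a reasonable roadmap to the cited result rather than a self-contained proof.
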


\noindent From there, they deduce the following result, which we state
here in a less general form, sufficient for our needs.
\begin{proposition}{Safey El Din and Schost~\cite[Theorem~1]{Safey-El-DinSchost2018}}
\label{prop:kronrep}
Let $\bbf\in\mathbb{Z}[z_1,\dots,z_n]^n$ be a polynomial system and
let $Z(\bbf)$ be the
solutions of $\bbf$ where the Jacobian matrix of $\bbf$ is invertible.
Given a partition $\bZ_1,\dots,\bZ_m$ of the variables such that $f_j$ has
multi-degree at most $\bd_j$ and height satisfying $\eta(f_j) \leq
\eta_j$, the set $Z(\bbf)$
admits a Kronecker representation of degree at most $\sC_{\bn}(\bd)$
and height $\tilde{O}\left( \sH_{\bn}(\bETA,\bd) + n\sC_{\bn}(\bd)
\right)$ (using the notation from Eqs.~\eqref{eq:Cnd},\eqref{eq:Hnd}).

If~$\bbf$ is given by a straight-line program $\Gamma$ of length $L$
that uses integer constants of height at most $h$,
the algorithm {\sf NonSingularSolutionsOverZ} of
Safey El Din and Schost~\cite{Safey-El-DinSchost2018}
takes $\Gamma$ and $(\bd_1,\dots,\bd_n)$ as input and
produces either
such a Kronecker representation of $Z(\bbf)$,
or a Kronecker representation of smaller degree,
or \textsc{FAIL}.
The first outcome occurs with probability at least 21/32.  In any
case, the algorithm has bit complexity
\[ \tilde{O}\left(\sC_{\bn}(\bd)\sH_{\bn}(\bETA,\bd)\left(L +
n\mu + n^2\right) n \left(n + \log h \right) \right), \quad
\text{where}\quad
\mu = \max_{1 \leq j \leq n} \left(\deg_{\bZ_1}(f_j) + \cdots +
\deg_{\bZ_m}(f_j)\right).\]  

In the case when the variables consist of a single block
and $\bbf$ is formed of polynomials of
degrees at most $d$ and heights at most $h$, this gives an algorithm with bit complexity $\tilde{O}(D^3+hD^2d^{n-1}) \subset \tilde{O}(hD^3)$, whose output consists of polynomials of degrees at most $D$ and heights in $\tilde{O}(D+hd^{n-1}) \subset \tilde{O}(hD)$.
\end{proposition}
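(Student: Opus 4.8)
The plan is to obtain the first three assertions from Theorem~1 of Safey El Din and Schost~\cite{Safey-El-DinSchost2018}, of which they are a specialization, and then carry out the bookkeeping that reduces everything to a single block of variables. For the first part: the existence of a Kronecker representation of $Z(\bbf)$ of degree at most $\sC_{\bn}(\bd)$ and height $\tilde{O}(\sH_{\bn}(\bETA,\bd)+n\sC_{\bn}(\bd))$, the behaviour and success probability of \textsf{NonSingularSolutionsOverZ}, and the bit-complexity estimate are exactly what that theorem provides once its notation is matched with Eqs.~\eqref{eq:Cnd} and~\eqref{eq:Hnd}; the height bound there is obtained by passing through the primitive Chow form of Lemma~\ref{lemma:Chow}, whose height bounds that of the Kronecker representation up to the additive term $n\sC_{\bn}(\bd)$. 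Nothing new is argued here.

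It remains to specialize to one block of variables with all $f_j$ of degree at most $d$ and height at most $h$. Then $m=1$, $n_1=n$, each $\bd_j=(d)$, and one may take $\eta_j=\eta:=h+d\log(1+n)$. As in the example preceding the proposition, $\sC_n(\bd)$ is the sum of the non-zero coefficients of $(d\theta_1)^n\bmod\theta_1^{n+1}$, that is $d^n=D$, so the Kronecker representation has degree at most $D$; and $\sH_n(\bETA,\bd)$ is the sum of the non-zero coefficients of $(\eta\zeta+d\theta_1)^n\bmod(\zeta^2,\theta_1^{n+1})$, where only the $\zeta^0$- and $\zeta^1$-parts of the expansion survive, giving $d^n+n\eta d^{n-1}$. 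Since $d\ge 2$ forces $n\le\log_2 D$, the term $n\eta d^{n-1}$ splits as $nh d^{n-1}+n\log(1+n)\,d^n$ with $n\log(1+n)=\tilde{O}(1)$, so $\sH_n(\bETA,\bd)=\tilde{O}(hd^{n-1}+D)$. Feeding these into the height bound and using $n\sC_n(\bd)=nD=\tilde{O}(D)$ gives output height $\tilde{O}(hd^{n-1}+D)$, which is contained in $\tilde{O}(hD)$ because $d^{n-1}=D/d<D$.

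For the complexity one substitutes into $\tilde{O}\bigl(\sC_{\bn}(\bd)\,\sH_{\bn}(\bETA,\bd)\,(L+n\mu+n^2)\,n(n+\log h)\bigr)$. For dense input in a single block, $L=O(n\binom{n+d}{d})=\tilde{O}(D)$ as recalled in Section~\ref{sec:Algorithms}, and $\mu=\max_j\deg_{\bz}(f_j)\le d$. The inequality $n\le\log_2 D$ makes every factor polynomial in $n$ equal to $\tilde{O}(1)$, and likewise the $\log h$ sitting inside $n(n+\log h)$ since $\log(D^3+hD^2d^{n-1})=O(\log(hD))$; hence $L+n\mu+n^2=\tilde{O}(D)$ and $n(n+\log h)=\tilde{O}(1)$. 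The product then collapses to $\tilde{O}\bigl(D\cdot(hd^{n-1}+D)\cdot D\bigr)=\tilde{O}(D^3+hD^2d^{n-1})$, and $d^{n-1}<D$ (together with $h\ge 1$, the degenerate case $h=0$ being trivial) puts this in $\tilde{O}(hD^3)$. The one step needing genuine care is precisely this last accounting: one must be sure that the joint $\tilde{O}$ over $(n,d,h)$ truly absorbs all factors polynomial in $n$ and logarithmic in $h$, and this rests entirely on the inequality $n\le\log_2 D$ coming from $d\ge 2$; everything else is a citation of~\cite{Safey-El-DinSchost2018} plus the two power-series evaluations already performed.
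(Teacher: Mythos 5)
Your proposal is correct and follows the same route as the paper: the first three assertions are a direct citation of Safey El Din and Schost's Theorem~1 (the paper likewise does not reprove them), and the single-block specialization is exactly the power-series computation done in the example just before the proposition, namely $\sC_n(\bd)=d^n=D$ and $\sH_n(\bETA,\bd)=d^n+n\eta d^{n-1}=\tilde{O}(hd^{n-1}+D)$, followed by the substitution into the generic complexity bound. Your identification of $n\le\log_2 D$ (from $d\ge2$) as the step that lets the $\tilde{O}$ absorb all the $n$-polynomial and $\log h$ factors is the same observation the paper makes in its complexity-model preamble ("$O(nD)=\tilde{O}(D)$ since $D=d^n$ and we assume $d\geqslant2$"), so nothing is missing.
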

Note that when the Jacobian of $\bbf$ is invertible at
each of its solutions then Proposition~\ref{prop:kronrep} gives a Kronecker representation of all solutions of $\bbf$. 

Repeating the algorithm $k$ times, and taking the output with highest
degree, allows one to obtain a Kronecker representation of $Z(\bbf)$
with probability $1-\left({11}/{32}\right)^k$ which can be made
as close to~1 as desired. The probabilistic aspects are mostly related
to the
choices of prime numbers and of the linear form~$u$ with integer
coefficients that should
not be too large. 
In practice, these  probabilistic aspects are
minor and the bound on the probability given in the proposition is
very pessimistic. We refer to Giusti et al.~\cite{GiustiLecerfSalvy2001},
Safey El Din and Schost~\cite{Safey-El-DinSchost2018} for a discussion of these questions.

\paragraph{Note on the quality of the bounds}
The bounds provided by Proposition~\ref{prop:kronrep} are the best
available to this day and the only ones known to take into account the
multi-homogeneous structure of the input. Unfortunately, they
are not always tight. In particular, if the input system is itself a
Kronecker representation of degree~$\mD$ and height~$\mH$, the
expansion of
\[(\mH\zeta+\mD\theta_u+\theta_{1:n})^n
(\mH\zeta+\mD\theta_u)\bmod(\zeta^2,\theta_{1:n}^{n+1},\theta_u^2)\]
gives a height bound of~$O(n\mH\mD)$, which is bigger than expected.
Thus in our algorithms below, we do not take a Kronecker
representation as input, but the original system, so as to take
advantage of 
the better bounds of Proposition~\ref{prop:kronrep} and Lemma~\ref{lemma:Chow}.

\subsection{Applications to ACSV}
\begin{corollary}\label{coro:kro-combi} Let $H(\bz)\in\mathbb{Z}[\bz]$
of degree~$d$ and
height~$h$ be
evaluated by a straight-line program of length~$L$ using integer
constants of height at most~$h$. Under Assumption~(J1), the
system of critical-point equations~\eqref{eq:critsys}
admits a Kronecker representation of degree at most~$nD$ and 
height~$\tilde{O}(D(d+h))$ that can be computed by a probabilistic
algorithm in~$\tilde{O}(D^3(d+h))$ bit
operations. 
Under the same assumptions, the
extended system of critical-point equations~\eqref{eq:extended-sys}
admits a Kronecker representation of degree at
most~$ndD$ and
height~$\tilde{O}(Dd(d+h))$, that can be computed
by a
probabilistic
algorithm in 
$\tilde{O}
(D^3d^2(d+h))$
bit operations.
\end{corollary}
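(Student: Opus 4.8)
The plan is to derive both Kronecker representations directly from Proposition~\ref{prop:kronrep}, applied to the system $\operatorname{Crit}$ of~\eqref{eq:critsys} and to the system $\bbf$ of~\eqref{eq:extended-sys} with a suitably chosen partition of the variables into blocks; the degree and height bounds then come from evaluating the combinatorial quantities $\sC_{\bn}(\bd)$ and $\sH_{\bn}(\bETA,\bd)$ of~\eqref{eq:Cnd}--\eqref{eq:Hnd}, and the running time from feeding in a straight-line program as in Example~\ref{ex:slpcombi}. The first thing to check is that Assumption~(J1) makes Proposition~\ref{prop:kronrep} produce a representation of \emph{all} solutions, not just the non-singular ones. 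For~\eqref{eq:extended-sys} this is literally (J1). For~\eqref{eq:critsys} I would observe that every critical point $(\bzeta,\lambda)$ yields the solution $(\bzeta,\lambda,1)$ of~\eqref{eq:extended-sys}, and that at $t=1$ the rows of the Jacobian of~\eqref{eq:extended-sys} coming from $H$ and from $H(t\bz)$ read $(\nabla H\mid 0\mid 0)$ and $(\nabla H\mid 0\mid n\lambda)$; subtracting the first from the second and expanding along the resulting row shows $\det J_{\bbf}(\bzeta,\lambda,1)=\pm\, n\lambda\cdot\det J_{\operatorname{Crit}}(\bzeta,\lambda)$. Hence (J1) forces both $\lambda\neq0$ and non-singularity of the Jacobian of~\eqref{eq:critsys} at each of its solutions, so Proposition~\ref{prop:kronrep} applies to the full zero set of~\eqref{eq:critsys} as well.

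Next I would run the bookkeeping for~\eqref{eq:critsys} using the two blocks $\bZ_1=\bz$ (with $n_1=n$) and $\bZ_2=\{\lambda\}$ (with $n_2=1$). Then $H$ has multi-degree $(d,0)$ and each $z_i\,\partial H/\partial z_i-\lambda$ has multi-degree $(d,1)$, so by~\eqref{eq:Cnd} the quantity $\sC_{\bn}(\bd)$ is the sum of the coefficients of $(d\theta_1)(d\theta_1+\theta_2)^n\bmod(\theta_1^{n+1},\theta_2^2)=nD\,\theta_1^n\theta_2$, i.e.\ $\sC_{\bn}(\bd)=nD$. Differentiation only multiplies the coefficients of $H$ by exponents bounded by $d$, so $h(f_j)\le h+\log d$ for every equation, and with the $\log(1+n_j)\deg_{\bZ_j}$ correction of~\eqref{eq:etaHeight} one gets $\eta_j=h+O(d\log(1+n))=\tilde O(d+h)$. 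Expanding $(\eta\zeta+d\theta_1)(\eta\zeta+d\theta_1+\theta_2)^n$ modulo $(\zeta^2,\theta_1^{n+1},\theta_2^2)$ and summing coefficients gives $\sH_{\bn}(\bETA,\bd)=\tilde O\big((d+h)D\big)$, the dominant monomials being those with $n$ factors contributing $d\theta_1$ and one contributing $\eta\zeta$ or $\theta_2$. With $\mu=O(d)$, a straight-line program for~\eqref{eq:critsys} of length $O(L+n)$ as in Example~\ref{ex:slpcombi} (hence $\tilde O(D)$ in the dense case), and using $n=\tilde O(1)$ relative to $D$ since $D=d^n\ge 2^n$, Proposition~\ref{prop:kronrep} yields a Kronecker representation of degree $\le nD$ and height $\tilde O(\sH_{\bn}+n\sC_{\bn})=\tilde O(D(d+h))$, computed in $\tilde O\big(\sC_{\bn}\sH_{\bn}(L+n\mu+n^2)\,n(n+\log h)\big)=\tilde O(D^3(d+h))$ bit operations.

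For~\eqref{eq:extended-sys} I would add a third block $\bZ_3=\{t\}$ (with $n_3=1$) and note that $H(tz_1,\dots,tz_n)=\sum_{\ba}c_{\ba}\,t^{|\ba|}\bz^{\ba}$ has multi-degree $(d,0,d)$ and the same coefficient height $h$ as $H$, hence $\eta(H(t\bz))=\tilde O(d+h)$ too. The same kind of expansion now gives $\sC_{\bn}(\bd)=ndD$, the unique surviving monomial of $(d\theta_1)(d\theta_1+\theta_2)^n(d\theta_1+d\theta_3)\bmod(\theta_1^{n+1},\theta_2^2,\theta_3^2)$ being $ndD\,\theta_1^n\theta_2\theta_3$, and, tracking the extra $\theta_3$ and $\zeta$ slots, $\sH_{\bn}(\bETA,\bd)=\tilde O\big(dD(d+h)\big)$, the order-setting term being the one in which the $\zeta$-slot sits on the $d\theta_3$-carrying factor, keeping the weight $d\cdot d^n$ from the remaining $\theta_1$ and $\theta_3$ contributions. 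Here $\mu=O(d)$ again and the straight-line program of~\eqref{eq:extended-sys} still has length $\tilde O(D)$ (prepend the $n$ products $tz_i$ to the program for $H$ and apply Baur--Strassen as in Example~\ref{ex:slpcombi}), so Proposition~\ref{prop:kronrep} gives degree $\le ndD$, height $\tilde O(Dd(d+h))$, and complexity $\tilde O(D^3d^2(d+h))$.

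The only delicate point is the power-series bookkeeping for $\sH_{\bn}(\bETA,\bd)$ in the extended case: one must determine exactly which mixed monomials in $\zeta,\theta_1,\theta_2,\theta_3$ survive all four truncations and verify that the largest contribution is $\tilde O(\eta\,dD)$ rather than something bigger. The quantities to compare are of orders $ndD$, $\eta D$, and $\eta dD$, and the point is that the last one dominates because it retains both $d^n$ (from $n$ factors carrying $d\theta_1$) and the extra $d$ from the $d\theta_3$ factor while also absorbing one $\eta$ from a $\zeta$-slot. Everything else then reduces to substitution into Proposition~\ref{prop:kronrep} and the routine observation that all powers of $n$ and of $\log h$ are absorbed by $\tilde O$.
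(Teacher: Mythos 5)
Your proof is correct and follows the same route as the paper: identify the block partition of variables (here $\bz$, $\lambda$ for~\eqref{eq:critsys} and $\bz$, $\lambda$, $t$ for~\eqref{eq:extended-sys}), compute $\sC_{\bn}(\bd)$ and $\sH_{\bn}(\bETA,\bd)$ from the truncated power-series formulas~\eqref{eq:Cnd}--\eqref{eq:Hnd}, and plug the resulting bounds and the straight-line program lengths from Example~\ref{ex:slpcombi} into Proposition~\ref{prop:kronrep}. Your power-series bookkeeping agrees with the paper's: $\sC_{\bn}(\bd)=nD$ and $\sH_{\bn}(\bETA,\bd)=\tilde O(D(d+h))$ for the critical-point system, $\sC_{\bn}(\bd)=nd^{n+1}=ndD$ and $\sH_{\bn}(\bETA,\bd)=\tilde O(dD(d+h))$ for the extended system, and the final complexities match after absorbing the polynomial-in-$n$ factors into $\tilde O(\cdot)$.

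One genuine improvement in your write-up is worth noting. The paper's proof merely invokes the proof of Proposition~\ref{prop:algo-combi} to conclude that both systems are \emph{zero-dimensional} under~(J1), but Proposition~\ref{prop:kronrep} only guarantees a Kronecker representation of the solutions at which the Jacobian is \emph{invertible}; to conclude that the Kronecker representation of~\eqref{eq:critsys} captures \emph{all} critical points one needs the Jacobian of~\eqref{eq:critsys} itself to be non-singular at its solutions, a fact the paper leaves implicit. Your determinant computation supplies exactly this: at a solution $(\bzeta,\lambda,1)$ of~\eqref{eq:extended-sys} the last row of $J_{\bbf}$ is the row $\nabla_{\bz,\lambda,t}H(t\bz)\big|_{t=1}$, whose $t$-entry is $\sum_j z_j\partial H/\partial z_j=n\lambda$ at a critical point, and whose other entries coincide with the first row; subtracting and expanding gives $\det J_{\bbf}(\bzeta,\lambda,1)=n\lambda\,\det J_{\operatorname{Crit}}(\bzeta,\lambda)$, so~(J1) forces both $\lambda\neq0$ (hence smoothness) and non-degeneracy of $J_{\operatorname{Crit}}$ at every critical point. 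This makes the application of Proposition~\ref{prop:kronrep} to~\eqref{eq:critsys} fully rigorous, rather than resting on the weaker zero-dimensionality assertion.
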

\begin{proof}
As shown in the proof of Proposition~\ref{prop:algo-combi},
Assumption~(J1) implies that both systems are zero-dimensional.

We start with the extended system, noting that the other one is
similar.
This system
has $n+2$ equations of degree at
most $2d$ in $n+2$ variables. 
Example~\ref{ex:slpcombi} shows that
it is
evaluated by a straight-line program of length~$O(L+n)$.
We partition the variables into three blocks~$\bz,\lambda,t$. 

The
bound~$\sC_{\bn}(\bd)$ is obtained as the sum of the non-zero
coefficients of $\theta_z,\theta_\lambda,\theta_t$ in 
\[d\theta_z(d\theta_z+\theta_\lambda)^n(d\theta_z+d\theta_t)\bmod
(\theta_z^{n+1}\theta_\lambda^2\theta_t^2)=
d(d^n\theta_z^n+nd^{n-1}\theta_z^{n-1}\theta_\lambda)
(d\theta_z+d\theta_t)\bmod
\theta_z^{n}=
nd^{n+1}\theta_z^{n-1}\theta_\lambda\theta_t,
\]
leading to~$\sC_{\bn}(\bd)=nd^{n+1}.$
The computation for the height is similar. With
\[\eta_1=h+\log(n+1)d,\quad \eta_2=\dots=\eta_{n+1}=h+d+\log
(n+1)d+1,\quad\eta_{n+2}=h+2\log(n+1)d\]
and
\begin{multline*}
(\eta_1\zeta+d\theta_z)(\eta_2\zeta+d\theta_z+\theta_\lambda)^n
(\eta_{n+2}\zeta+d\theta_z+d\theta_t)\bmod
(\zeta^2\theta_z^{n+1}\theta_\lambda^2\theta_t^2)=\\
nd^n(\eta_1+(n-1)\eta_2)\theta_z^{n-1}\zeta\theta_t\theta_\lambda
+nd^{n+1}\theta_t\theta_\lambda\theta_z^n
+d^n\left((d\theta_t+n\theta_\lambda)\eta_1+n
(d\theta_t+(n-1)\theta_\lambda)\eta_2+n\eta_
{n+2}\theta_\lambda\right)\zeta\theta_z^n,
\end{multline*}
it follows that $\sH_{n}(\bETA,\bd)=O(Dn(d+n)(h+\log(n+1)d))=
\tilde{O}(dD(h+d)).$
The complexity then follows from injecting these quantities in the
previous proposition.

The bounds for the system formed by the critical point equations
only are derived as above with simpler
computations
\begin{gather*}
d\theta_z(d\theta_z+\theta_\lambda)^n\bmod
(\theta_z^{n+1}\theta_\lambda^2)=
d(d^n\theta_z^n+nd^{n-1}\theta_z^{n-1}\theta_\lambda)
\bmod
\theta_z^{n}=
nd^{n}\theta_z^{n-1}\theta_\lambda,
\\
(\eta_1\zeta+d\theta_z)(\eta_2\zeta+d\theta_z+\theta_\lambda)^n
\bmod
(\zeta^2\theta_z^{n+1}\theta_\lambda^2)=
(\eta_1+(n-1)\eta_2)d^{n-1}\zeta\theta_\lambda\theta_z^{n-1}
+(\zeta\eta_1+n\theta_\lambda+n\zeta\eta_2)d^n\theta_z^n,
\end{gather*}
leading to $nD$ for the degree and 
$\tilde{O}(D(h+d))$ for the height.
\end{proof}
\begin{example}
Our stated bounds reflect a perceptible growth of the sizes with the
number of variables in computations.

Starting from the same system as in Example~\ref{ex:apery-kro1}
and adding the variable~$t$ and the extra equation for the extended
system, using the linear form~$u=a+t$ that takes distinct values on
the roots leads to a Kronecker
representation with
\begin{multline*}
P(u)=u^{14}-18u^{13}+151u^{12}-788u^{11}+2878u^
{10}-7796u^9+16006u^8-24756u^7+27929u^6\\
-21546u^5+9851u^4-1104u^3-1616u^2+1000u-196,
\end{multline*}
the other polynomials having a very similar size. This polynomial
factors into two irreducible factors, one of which, $u^2-4u+2$,
corresponds to the critical points and the solutions with~$t=1$. (It
can also be recovered as the gcd of $P$ and $P'-Q_t$, where
$t=Q_t(u)/P'(u)$ parametrizes $t$ in the Kronecker representation.) Reducing the
Kronecker representation modulo this factor recovers a Kronecker
representation for the critical-point system of height similar to that
of Example~\ref{ex:apery-kro1}.

\end{example}
\begin{corollary}\label{kro-non-combi} Let $H(\bz)\in\mathbb{Z}[\bz]$
of degree~$d$ and
height~$h$ be
evaluated by a straight-line program of length~$L$ using integer
constants of height at most~$h$. Under Assumption~(J2), the
system~\eqref{eq:GenSys1}--\eqref{eq:GenSys6}
admits a Kronecker representation of degree at
most $2^{n-1}dn^4D^3$ and
height $\tilde{O}(h2^nd^3D^3)$, which can be computed by a
probabilistic
algorithm in 
$\tilde{O}(4^{n}hd^4D^7)$
 operations.
\end{corollary}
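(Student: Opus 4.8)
The plan is to follow the proof of Corollary~\ref{coro:kro-combi}: put a multi-homogeneous structure on the square system~\eqref{eq:GenSys1}--\eqref{eq:GenSys6}, read the degree bound $\sC_{\bn}(\bd)$ and the height bound $\sH_{\bn}(\bETA,\bd)$ off the truncated power series of Equations~\eqref{eq:Cnd} and~\eqref{eq:Hnd}, and then invoke Proposition~\ref{prop:kronrep}. As in the proof of Proposition~\ref{prop:correct-non-combi}, Assumption~(J2) makes the Jacobian of this system of $4n+4$ equations in the $4n+4$ unknowns $\ba,\bb,\bx,\by,\lambda_R,\lambda_I,\nu,t$ invertible at every solution, so the system is zero-dimensional and Proposition~\ref{prop:kronrep} returns a Kronecker representation of \emph{all} of its solutions. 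By Example~\ref{ex:slpnoncombi} the system is evaluated by a straight-line program of length $O(L+n)=\tilde{O}(D)$, and each equation has total degree at most $d+1$, so $\mu=O(d)$ in the notation of Proposition~\ref{prop:kronrep}.

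The delicate point is the choice of partition, since a careless grouping inflates the multi-homogeneous Bézout number: I would place every auxiliary variable in its own block,
\[\bZ_1=(\ba,\bb),\quad \bZ_2=\{\lambda_R\},\quad \bZ_3=\{\lambda_I\},\quad \bZ_4=(\bx,\by),\quad \bZ_5=\{\nu\},\quad \bZ_6=\{t\},\]
of sizes $2n,1,1,2n,1,1$ (merging $\{\lambda_R,\lambda_I\}$ roughly doubles the $2^n$-factor, while merging $\{\nu,t\}$ costs an extra power of $d$). With $\theta_1,\dots,\theta_6$ the formal variables of the blocks, the equations contribute the multi-degree linear forms $d\theta_1$ (twice, from~\eqref{eq:GenSys1}), $d\theta_1+\theta_2$ and $d\theta_1+\theta_3$ ($n$ times each, from~\eqref{eq:GenSys2} and~\eqref{eq:GenSys3}), $d\theta_4$ (twice, from~\eqref{eq:GenSys4}), $2\theta_1+2\theta_4+\theta_6$ ($n$ times, from~\eqref{eq:GenSys5}), and $d\theta_4+\theta_5$ ($n$ times, from~\eqref{eq:GenSys6}). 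Their product is homogeneous of degree $4n+4$; since the truncation exponents $2n,1,1,2n,1,1$ sum to exactly $4n+4$, reduction modulo $(\theta_1^{2n+1},\theta_2^2,\theta_3^2,\theta_4^{2n+1},\theta_5^2,\theta_6^2)$ leaves only $\theta_1^{2n}\theta_2\theta_3\theta_4^{2n}\theta_5\theta_6$, whose coefficient is forced: one must take $\theta_2,\theta_3,\theta_5,\theta_6$ once each (four factors $n$), and from the copies of $2\theta_1+2\theta_4+\theta_6$ the remaining powers must be $(2\theta_4)^{n-1}$ (a factor $2^{n-1}$), all other choices being determined. This gives $\sC_{\bn}(\bd)=n^4\,2^{n-1}d^{3n+1}=2^{n-1}d\,n^4D^3$, the claimed degree bound.

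For the heights, every polynomial in~\eqref{eq:GenSys1}--\eqref{eq:GenSys6} has height $\tilde{O}(h+d)$: taking real and imaginary parts of $H$ raises its height by at most $d$ bits, differentiating adds $O(\log d)$, the relations~\eqref{eq:GenSys5} have coefficients $\pm1$, and the term $\sum_k\log(1+n_k)\deg_{\bZ_k}$ of~\eqref{eq:etaHeight} is $\tilde{O}(d)$; so one takes $\eta_j\le\eta=\tilde{O}(h+d)$ for all $j$. The analogous analysis of the $\zeta$-linear part of the power series~\eqref{eq:Hnd} — only $O(1)$ monomials survive, each obtained from the monomial above by lowering one exponent — bounds every surviving coefficient by $\tilde{O}(\eta)$ times a partial product dominated by $\sC_{\bn}(\bd)$, yielding $\sH_{\bn}(\bETA,\bd)=\tilde{O}(h\,2^nd^3D^3)$. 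Proposition~\ref{prop:kronrep} then returns a Kronecker representation of degree at most $2^{n-1}d\,n^4D^3$ and height $\tilde{O}(\sH_{\bn}(\bETA,\bd)+n\,\sC_{\bn}(\bd))=\tilde{O}(h\,2^nd^3D^3)$, computed in $\tilde{O}(\sC_{\bn}(\bd)\,\sH_{\bn}(\bETA,\bd)\,D)=\tilde{O}(4^nh\,d^4D^7)$ bit operations, the remaining factors $L+n\mu+n^2=\tilde{O}(D)$ and $n(n+\log h)$ of Proposition~\ref{prop:kronrep} being absorbed. The main obstacle is the one already flagged: isolating the partition that yields the exponent $d^{3n+1}$ rather than a larger power of $d$ (cf.\ the remark after Proposition~\ref{prop:kronrep} that these bounds are sensitive to the block structure), together with the routine but intricate bookkeeping needed to extract $\sH_{\bn}(\bETA,\bd)$.
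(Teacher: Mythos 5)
Your proposal matches the paper's proof: same six-block partition $(\ba,\bb),(\bx,\by),\lambda_R,\lambda_I,t,\nu$ (in a permuted order), the same multi-degree factors $(d\theta_{a,b})^2(d\theta_{a,b}+\theta_{\lambda_R})^n(d\theta_{a,b}+\theta_{\lambda_I})^n(d\theta_{x,y})^2(2\theta_{x,y}+2\theta_{a,b}+\theta_t)^n(d\theta_{x,y}+\theta_\nu)^n$, the same reading of Example~\ref{ex:slpnoncombi} for the straight-line program length, and the same reduction to Proposition~\ref{prop:kronrep}. Your explicit extraction of $\sC_\bn(\bd)=n^42^{n-1}d^{3n+1}$ (noting that the truncation exponents sum to the total degree $4n+4$ so only one monomial survives) is correct; the paper treats the height estimate with the same informality ("similar but more technical"), and your final complexity count matches.
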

\begin{proof}
By assumption (J2), the system is zero-dimensional.
It has $4n+4$ equations of degree at most~$d+1$ in $4n+4$
unknowns. By Example~\ref{ex:slpnoncombi}, it can be evaluated by a
straight-line program of length $O(L+n)$. The variables are
partitioned into~6 blocks~$(\ba,\bb),
(\bx,\by),\lambda_R,\lambda_I,t,\nu$. A straightforward computation
gives
\[(d\theta_{a,b})^2(d\theta_{a,b}+\theta_{\lambda_R})^n
(d\theta_{a,b}+\theta_{\lambda_I})^n
(d\theta_{x,y})^2
(2\theta_{x,y}+2\theta_{a,b}+\theta_t)^n
(d\theta_{x,y}+\theta_\nu)^n
\bmod
(\theta_{x,y}^{2n+1}\theta_{a,b}^{2n+1}\theta_{\lambda_R}^2\theta_
{\lambda_I}^2\theta_t^2\theta_\nu^2)=2^{n-1}dn^4D^3
\]
leading to~$\sC_{\bn}(\bd)=2^{n-1}dn^4D^3.$ The computation for the
height is similar but more technical. It leads to
$\sH_{n}(\bETA,\bd)=O((h+\log(n+1)d)n^62^nd^2D^3)=\tilde{O}
(h2^nd^3D^3)$.
The complexity then follows from injecting these quantities in the
previous proposition.
\end{proof}

\subsection{Polynomial Values at Points of Kronecker Representations}
There are several situations in our computations where we need to
compute information concerning the values of another polynomial at the
roots of a polynomial system, either to isolate intersections or to
estimate signs. The following result gives useful bounds on degrees,
heights and complexity for these operations.

\begin{proposition}
\label{prop:KronReduce2}
Let $\bbf\in\mathbb{Z}[z_1,\dots,z_n]^n$ be a polynomial system that
satisfies the hypotheses of Proposition~%
\ref{prop:kronrep}, and let $q \in
\mathbb{Z}[\bz]$ have height $\eta$ and degree
$\delta_i$ in the block of variables $\bZ_i$ for each $i \in 
\{1,\dots,m\}$ and be evaluated by a straight-line program of
length~$\ell$ using constants of height at most~$\eta$. If 
$P(u)$ is the polynomial appearing in a Kronecker
representation with bounds given in Proposition~%
\ref{prop:kronrep} then
\begin{enumerate}
	\item  there
exists a parametrization~$P'(u)T-Q_q(u)$
of the values taken by~$q$ on~$Z(\bbf)$ with~$Q_q\in
\mathbb{Z}[u]$ a polynomial of degree at most~$\sC_\bn(\bd)$ and
height~$\tilde{O}(\sH_{\bn}(\bETA,\bd)
(\eta+\delta)+n\delta\sC_\bn(\bd))$
where~$\delta=\delta_1+\dots+\delta_m+1$.
The polynomial~$Q_q$ can be determined
in
\[\sL:=\tilde{O}(\sC_\bn(\bd)\sH_\bn(\bETA,\bd)\delta
(\delta+\eta)
(L+\ell+n(\eta+\delta)+n^2)n(n+\log h+\log\eta))
\]
bit operations.
	\item there exists a polynomial~$\Phi_q\in\mathbb{Z}[T]$ which
vanishes on the values taken by~$q$ at the elements of~$Z(\bbf)$, of
degree
at most~$\sC_\bn(\bd)$ and height
$\tilde{O}(\sC_\bn(\bd)(\sH_\bn(\bETA,\bd)(\eta+\delta)+n\delta\sC_\bn
(\bd)))$. It can be computed in~$\tilde{O}(\sC_\bn(\bd)^2(\sH_\bn
(\bETA,\bd)(\eta+\delta)+n\delta\sC_\bn
(\bd)))$ bit operations.

\item when $q$ has degree~1, better bounds hold: the height
of~$\Phi_q$ is $O(\sH_\bn(\bETA,\bd)+(2\log(n+2)+\eta)\sC_\bn(\bd))$; 
it can be computed in $O(\sC_\bn(\bd)(\sH_\bn(\bETA,\bd)+(2\log
(n+2)+\eta)\sC_\bn(\bd)))$ bit operations.
\end{enumerate}
In the case when the variables consist of a single block, and $q$ and the
elements of $\bbf$ have degrees at most $d$ and heights at most $h$,
then $Q_q$ has degree at most $D$ and height $\tilde{O}(D(h+d))$, and can
be computed in 
$\tilde{O}(D^2(D+h+d)(h+d)^2)$ bit operations; the polynomial~$\Phi_q$
has
degree at most~$D$ and height 
$\tilde{O}(D^2(h+d))$ and can be computed in 
$\tilde{O}(D^3(h+d))$
bit operations.

In the same conditions, if moreover the degree of~$q$ is~1, $\Phi_q$
has height~$\tilde{O}(Dh)$ and can be computed in~$\tilde{O}(D^2h)$
bit operations.
\end{proposition}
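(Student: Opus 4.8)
The plan is to construct the parametrization $P'(u)T - Q_q(u)$ and the annihilating polynomial $\Phi_q$ explicitly, and then push the bounds of Proposition~\ref{prop:kronrep} and Lemma~\ref{lemma:Chow} through the construction. First, I would recall that the Kronecker representation gives, for each element $\bx \in Z(\bbf)$, the identity $z_i = Q_i(u)/P'(u)$ evaluated at the corresponding root $u_\bx$ of $P$. Substituting these rational expressions into $q$ and clearing denominators, one obtains $q(\bx) = R(u_\bx)/P'(u_\bx)^{\delta-1}$ where $R(u) := P'(u)^{\delta-1} q\!\left(Q_1(u)/P'(u),\dots,Q_n(u)/P'(u)\right) \in \mathbb{Z}[u]$ and $\delta = \delta_1 + \cdots + \delta_m + 1$ accounts for the total degree of $q$. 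Since $\gcd(P,P')=1$ as $P$ is square-free, $P'$ is invertible modulo $P$, and one can reduce $R(u) \cdot (P'(u))^{\text{stuff}}$ to get $Q_q(u)$ of degree $< \deg P \le \sC_\bn(\bd)$ with $P'(u)T - Q_q(u)$ vanishing at $T = q(\bx)$ when $u = u_\bx$. The degree bound on $Q_q$ is immediate; the height bound comes from tracking that substituting Kronecker polynomials of height $\tilde O(\sH_\bn(\bETA,\bd) + n\sC_\bn(\bd))$ into $q$ of height $\eta$ and multi-degree $\delta_i$, then reducing mod $P$, multiplies heights essentially multiplicatively in $\delta$ and additively in $\eta$, giving $\tilde O(\sH_\bn(\bETA,\bd)(\eta+\delta) + n\delta\sC_\bn(\bd))$ after the mod-$P$ reduction contributes the extra $n\delta\sC_\bn(\bd)$ term via the usual bound on coefficient growth in polynomial division. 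The complexity $\sL$ follows by running the straight-line program for $q$ (length $\ell$, constants of height $\le \eta$) over the residue ring $\mathbb{Z}[u]/(P(u))$ of univariate polynomials of degree $< \sC_\bn(\bd)$ and height $\tilde O(\sH_\bn(\bETA,\bd))$, where each ring operation costs $\tilde O(\sC_\bn(\bd)(\sH_\bn(\bETA,\bd)+\cdots))$ bit operations by fast univariate arithmetic; assembling the factors $L + \ell + n(\eta+\delta) + n^2$ for the total straight-line length and $n+\log h+\log\eta$ for the coefficient bit-size gives the stated $\sL$.

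For item~2, the annihilating polynomial $\Phi_q$ is obtained as the resultant $\mathrm{Res}_u(P(u), P'(u)T - Q_q(u))$ (or equivalently the characteristic polynomial of multiplication by $Q_q(u)/P'(u)$ in $\mathbb{Q}[u]/(P(u))$, cleared of denominators). Its roots are exactly the values $q(\bx)$ for $\bx \in Z(\bbf)$, so $\deg \Phi_q \le \deg P \le \sC_\bn(\bd)$. The height bound $\tilde O(\sC_\bn(\bd)(\sH_\bn(\bETA,\bd)(\eta+\delta) + n\delta\sC_\bn(\bd)))$ comes from the standard estimate that the resultant of two univariate polynomials of degrees bounded by $\sC_\bn(\bd)$ and heights bounded by $H_P$ and $H_Q$ has height $O(\sC_\bn(\bd)(H_P + H_Q))$; here $H_Q$ is the height of $Q_q$ from item~1 and $H_P = \tilde O(\sH_\bn(\bETA,\bd) + n\sC_\bn(\bd))$ is dominated by it, so the $\sC_\bn(\bd)$-fold multiplication gives the claimed bound. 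The complexity $\tilde O(\sC_\bn(\bd)^2(\cdots))$ is the cost of computing this resultant by evaluation-interpolation or subresultant methods on univariate polynomials of these sizes.

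For item~3, when $\deg q = 1$ so $\delta = 2$ and $q = c_0 + c_1 z_1 + \cdots + c_n z_n$, I would not substitute into the Kronecker parametrization at all but instead invoke Lemma~\ref{lemma:Chow} directly: the value of a linear form $q$ at the points of $Z(\bbf)$ is read off from the primitive Chow form $a\prod_{\bx \in Z(\bbf)}(T_0 - x_1 T_1 - \cdots - x_n T_n)$ by specializing $(T_0,T_1,\dots,T_n)$ so that $T_0 - \sum x_j T_j = T - q(\bx)$, i.e.\ $T_0 = T - c_0$ and $T_j = -c_j$ (adjusting for the constant term by a shift in $T$). This specialization of the Chow form is exactly $\Phi_q(T)$ up to the leading constant $a$, and since the Chow form has height bounded by $\sH_n(\bETA,\bd) + 2\log(n+2)\sC_\bn(\bd)$ by Lemma~\ref{lemma:Chow}, specializing the $T_j$ to integers of height $\le \eta$ (the height of the coefficients of $q$) adds at most $\eta \sC_\bn(\bd)$ to the height, giving $O(\sH_\bn(\bETA,\bd) + (2\log(n+2) + \eta)\sC_\bn(\bd))$; the cost is that of the specialization, $O(\sC_\bn(\bd)(\sH_\bn(\bETA,\bd) + (2\log(n+2)+\eta)\sC_\bn(\bd)))$ bit operations. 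The single-block specializations with $\sC_\bn(\bd) = D$, $\sH_\bn(\bETA,\bd) = \tilde O(hd^{n-1} + D) = \tilde O(hD)$, $\eta \le h$, $\delta \le d$ then follow by direct substitution into each of the three general bounds. The main obstacle is the careful bookkeeping in item~1: tracking how the height grows under the composition of the straight-line program evaluation (multiplicative in the $\delta_i$), the clearing of the $P'$ denominators (contributing the $(\eta+\delta)$ factor against $\sH_\bn$), and the final reduction modulo $P$ (contributing the additive $n\delta\sC_\bn(\bd)$), while keeping the $\tilde O$ notation from absorbing a genuinely needed polynomial factor — this is routine but error-prone, and is where I would spend most of the verification effort.
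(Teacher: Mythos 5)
The proof in the paper handles Part~1 by a different and substantially slicker route than yours: rather than substituting the Kronecker parametrization $z_i = Q_i(u)/P'(u)$ into $q$, clearing denominators, and reducing modulo $P$, the paper adjoins the single new equation $T-q=0$ to the system $\bbf$ to obtain an augmented system $\bbf'$ with the same zero set (and the same separating linear form $u$), and applies Proposition~\ref{prop:kronrep} directly to $\bbf'$ with one extra variable-block $\{T\}$. The desired polynomial $Q_q$ is then the component of the Kronecker representation of $\bbf'$ attached to $T$, and its degree bound, its height bound, and the complexity $\sL$ all fall out of the existing black-box estimates by computing the new $\sC_{\bn'}(\bd')$ and $\sH_{\bn'}(\bETA',\bd')$ (the paper points out that the extra factor $(\eta\zeta+\theta_T+\delta_1\theta_1+\dots+\delta_m\theta_m)$ evaluated at $\mathbf 1$ contributes a multiplicative $(\eta+\delta)$ to the height bound). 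Your route would require controlling, by hand, the integer heights through the straight-line evaluation in $\mathbb{Z}[u]/(P(u))$, the clearing of $P'$ denominators, and the modular reduction; you acknowledge this bookkeeping is "error-prone," and indeed your sketch does not actually produce the specific complexity formula $\sL$ — that formula is exactly the complexity estimate of the {\sf NonSingularSolutionsOverZ} algorithm on the augmented system, which the paper invokes as a black box. In particular, the claim that "each ring operation costs $\tilde O(\sC_\bn(\bd)(\sH_\bn(\bETA,\bd)+\cdots))$" does not straightforwardly yield the stated $\sL$: you would need to bound intermediate coefficient growth across the length-$\ell$ straight-line program, and the factor $(L+\ell+n(\eta+\delta)+n^2)n(n+\log h+\log\eta)$ is not something one reads off from a naive residue-ring evaluation. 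This is the one genuine gap in your argument.

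For Part~2, you and the paper agree that $\Phi_q$ is read off from $\operatorname{Res}_u(P(u),P'(u)T-Q_q(u))$ (with the minor caveat that $\Phi_q$ divides this resultant, so Mignotte's bound on factors, Lemma~\ref{lemma:height}, is needed on top of Lemma~\ref{lemma:resultant}); for the computation the paper uses the Kedlaya--Umans fast minimal polynomial algorithm where you say "evaluation-interpolation or subresultant methods," which should give a comparable complexity but is left unverified. For Part~3 your Chow-form specialization argument coincides with the paper's. The single-block specializations are also as in the paper.
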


Our proof uses results of the Appendix, which collects properties
and bounds of univariate polynomials and their roots.

\begin{proof}
Adding the polynomial $T-q$ to a polynomial system $\bbf$ gives a new
polynomial system $\bbf'$ with the same number of solutions as $\bbf$,
and any separating linear form $u$ for the solutions of $\bbf$ is a
separating linear form for the solutions of $\bbf'$.  Thus, the degree
of a Kronecker representation of $\bbf'$ is at most the degree of a
Kronecker representation of $\bbf$, which is bounded by $\sC_{\bn}
(\bd)$.  The variables can be partitioned as before, with
one extra block~$T$. The bounds of Proposition~\ref{prop:kronrep}
lead us to consider the previous product multiplied by an extra factor,
i.e.,
\[
(\eta_1\zeta+d_{11}\theta_1+\dots+d_{1m}\theta_m)\dotsm
(\eta_n\zeta+d_
{n1}\theta_1+\dots+d_{nm}\theta_m)
(\eta\zeta+\theta_T+\delta_1\theta_1+\dots+\delta_m\theta_m)\mod
\left(\zeta^2,\theta_1^
{n_1+1},\dots,\theta_m^{n_m+1},\theta_T^2\right).\]
Then the sum of coefficients is bounded by the product of the previous
sum by the value of the last
factor at~$\mathbf{1}$.
Thus the height of~$Q_q$ is
bounded as announced and the complexity follows from
Proposition~\ref{prop:kronrep}.

The minimal polynomial $\Phi_q$ divides the resultant of the
polynomials $P'(u)T-Q_q(u)$ and $P(u)$ with respect to~$u$, so the stated
height and degree bounds on $\Phi_q$ follow from classical bounds
recalled in Lemma~\ref{lemma:resultant} and Lemma~\ref{lemma:height}.
From there, the computation can be obtained by modular methods, using
the fact that~$\Phi_q$ is the minimal polynomial of~$Q_q/P'\bmod P$
and the fast algorithm for this operation due to Kedlaya and Umans~\cite[\S8.4]{KedlayaUmans2011}.

When $q$ has degree~1, the height of~$\Phi_q$ is obtained by
evaluating the primitive Chow form from Lemma~\ref{lemma:Chow} at the
coefficients of~$q$, each monomial of degree at most~$\sC_\bn(\bd)$
contributing an extra $O(\sC_\bn(\bd)\eta)$.

The case of a single block is obtained by specializing these estimates
with the values~$\sC_\bn(\bd)=D$ and~$\sH_\bn(\bETA,\bd)=\tilde{O}
(hd^{n-1}+D)$.
\end{proof}

\section{Numerical Kronecker Representation}\label{sec:numkro}
In order to test minimality we must be able to isolate and argue about
individual elements of a zero-dimensional set. The Kronecker
representation allows us to reduce these questions to problems
involving only univariate polynomials, whose degrees and heights are
under control thanks to the results recalled in the previous section.
Our approach is semi-numerical: we determine a precision such that
questions about elements of the algebraic set can be answered exactly
by determining the zeros of $P(u)$ numerically to such precision. 
Using standard results on univariate polynomial root solving and root
bounds we obtain complexity estimates for basic operations on these
numerical representations. Our estimates always relate to 
absolute rather than relative precision. This is motivated by the
use of separation bounds from Lemma~\ref{lemma:roots}~(%
\ref{item:roots2}) to detect distinct roots and, when they are
real, order them.

\subsection{Definition and Complexity}
\begin{definition}A \emph{numerical Kronecker representation} $[P
(u),\bQ,\bU]$ of a zero-dimensional polynomial system is a Kronecker representation $[P(u),\bQ]$ of the system together with a sequence $\bU$ of isolating intervals for the real roots of the polynomial $P$ and/or isolating disks for the non-real roots of $P$. 
\end{definition}
The \emph{size of an interval} is its length, while
the \emph{size of a disk} is its radius.  In practice the elements of
$\bU$ are stored as approximate roots, whose accuracy is certified to
a specified precision. Most statements below take exactly the same
form for the case of disks or intervals. When a distinction is
necessary, we qualify the numerical
Kronecker
representation as real in the case of intervals and complex otherwise.
\begin{theorem}
\label{prop:numKron}
Suppose the zero-dimensional system $\bbf \subset 
\mathbb{Z}[z_1,\dots,z_n]$ is given by a Kronecker representation $[P
(u),\bQ]$ of degree $\mD$ and height $\mH$.  Given $[P(u),\bQ]$ and
$\kappa>0$, a numerical Kronecker representation $[P
(u),\bQ,\bU]$ with isolating regions in $\bU$ of size at most $2^
{-\kappa}$ can be computed in $\tilde{O}(\mD^3+
\mD^2\mH+\mD\kappa)$ bit
operations. Furthermore, approximations to the elements of $Z
(\bbf)$
whose coordinates are accurate to precision $2^{-\kappa}$ can be
determined in $\tilde{O}(
\mD^3+n(\mD^2\mH+\mD\kappa))$ bit
operations.
\end{theorem}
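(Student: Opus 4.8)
The plan is to reduce everything to univariate operations on $P(u)$ and the $Q_j(u)$, invoking the standard complexity results on numerical root isolation and root refinement collected in the Appendix. First I would isolate the roots of $P(u)$. Since $P$ has degree $\mD$ and height $\mH$ and is square-free, a root-isolation algorithm (e.g.\ of the type analyzed by Mehlhorn--Sagraloff--Yap or Pan) produces isolating regions --- intervals for the real roots, disks for the non-real ones --- at cost $\tilde O(\mD^3 + \mD^2\mH)$; crucially, the isolating regions it returns already have size below the root separation bound, and by Lemma~\ref{lemma:roots}\,(\ref{item:roots2}) that separation bound is $2^{-\tilde O(\mD\mH)}$, so $\tilde O(\mD^2\mH)$ bits of accuracy suffice to guarantee that the regions are genuinely isolating. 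To further shrink each region down to size $2^{-\kappa}$ one runs a certified Newton/bisection refinement, which on a square-free polynomial of degree $\mD$ costs $\tilde O(\mD)$ bit operations per bit of target accuracy, hence $\tilde O(\mD\kappa)$ for all $\mD$ roots once they are isolated (the number of roots is at most $\mD$). Adding the three contributions gives the stated $\tilde O(\mD^3 + \mD^2\mH + \mD\kappa)$ for the first assertion.

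For the second assertion I would propagate the numerical information through the Kronecker parametrization~\eqref{eq:KronRep}: a coordinate of a solution is $z_j = Q_j(u^\star)/P'(u^\star)$ for the corresponding root $u^\star$ of $P$. So for each of the at most $\mD$ roots, and for each of the $n$ coordinates, one evaluates $Q_j$ and $P'$ at an approximation of $u^\star$ of sufficiently high precision and divides. Two things need care. (i) \emph{How accurate must $u^\star$ be?} Since $Q_j$ and $P'$ have degree $<\mD$ and height $\tilde O(\mH)$, their derivatives are bounded by $2^{\tilde O(\mH)}$ on the isolating regions, and $|P'(u^\star)|$ is bounded below by the reciprocal of a resultant-type quantity, again $2^{-\tilde O(\mD\mH)}$ (this is exactly the kind of bound in Lemma~\ref{lemma:roots} and the height lemmas of the Appendix). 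Hence to get $z_j$ to absolute precision $2^{-\kappa}$ it suffices to know $u^\star$ to precision $2^{-\tilde O(\mD\mH + \kappa)}$; refining each of the $\mD$ roots to that precision costs $\tilde O(\mD(\mD\mH+\kappa)) = \tilde O(\mD^2\mH + \mD\kappa)$. (ii) \emph{Cost of the evaluations.} Evaluating a degree-$\mD$, height-$\tilde O(\mH)$ polynomial at a point known to precision $2^{-p}$ with $p = \tilde O(\mD\mH+\kappa)$ costs $\tilde O(\mD(p + \mH)) = \tilde O(\mD^2\mH + \mD\kappa)$ bit operations via fast multipoint-free Horner with interval arithmetic; doing this for $n$ coordinates (and for $P'$ once) at each of $\mD$ roots multiplies by $n\mD$ but multipoint evaluation of all $\mD$ points at once keeps it at $\tilde O(n(\mD^2\mH + \mD\kappa))$. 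Combining with the $\tilde O(\mD^3 + \mD^2\mH)$ isolation cost from the first part yields the claimed $\tilde O(\mD^3 + n(\mD^2\mH + \mD\kappa))$.

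The main obstacle, and the place where the Appendix lemmas do the real work, is step (i): controlling the \emph{loss of precision} incurred when dividing by $P'(u^\star)$ and, more generally, certifying that the finite-precision values of the coordinates are correct to the requested absolute accuracy. This requires an a priori lower bound on $|P'(u^\star)|$ (equivalently, on the minimal gap between roots of $P$, since $P$ is square-free) and an upper bound on the evaluation of $Q_j$; both are polynomial-height estimates of the form $2^{\pm\tilde O(\mD\mH)}$, so the precision needed is only linear in $\mD\mH$ and in $\kappa$, which is what keeps the final complexity at the stated order rather than something worse. Everything else --- the root-isolation complexity, the refinement complexity, fast polynomial evaluation --- is quotable off the shelf from the results recalled in Section~\ref{sec:Algorithms} and the Appendix.
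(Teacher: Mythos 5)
Your proposal is correct and follows essentially the same route as the paper: the first assertion is a direct quotation of Lemma~\ref{lemma:fsolve}, and the second unwinds what the paper packages as Lemma~\ref{lemma:univar}\,(i), keeping track of the fact that the $\tilde O(\mD^3)$ root-isolation cost is a one-time expense shared across the $n$ coordinates. The lower bound on $|P'(u^\star)|$ that you flag as the ``main obstacle'' is precisely Lemma~\ref{lemma:roots}\,(\ref{item:roots4}), and the conclusion that precision $\tilde O(\mD\mH+\kappa)$ on $u^\star$ suffices, together with the per-coordinate evaluation cost $\tilde O(\mD^2\mH+\mD\kappa)$ via Lemma~\ref{lemma:feval}, matches the paper's bookkeeping. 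The only blemish is a slightly garbled middle step (you quote $\tilde O(\mD^2\mH+\mD\kappa)$ as the cost of evaluating at a single point, then say multipoint evaluation ``keeps'' the total at $\tilde O(n(\mD^2\mH+\mD\kappa))$; in fact $\tilde O(\mD^2\mH+\mD\kappa)$ is already the cost of one polynomial evaluated at all $\mD$ roots, as in Lemma~\ref{lemma:feval}), but the final count is correct and the reasoning is sound.
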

\begin{proof}The first part of the theorem follows from the complexity
estimates of modern
algorithms for finding numerical roots of polynomials, as recalled in
Lemma~\ref{lemma:fsolve}. 

The second statement of the theorem relies
on Lemma~\ref{lemma:univar} below.
Part~(i) of Lemma~\ref{lemma:univar}, applied to each of the~$Q_j$, gives a
larger complexity than announced in the theorem. The improved
complexity comes from observing that the cost~$\mD^3$ is related to
the high-precision computation of the roots of~$P$, which is only
performed once.
\end{proof}

\begin{lemma}\label{lemma:univar}
Let $P$ be a square-free polynomial in $\mathbb{Z}[u]$ of 
degree at most~$\mD$ and height at most $\mH$. Let also $Q\in
\mathbb{Z}[u]$
have degree at most~$\mD$ and height at most $\mH'$ and let
$\Phi_q=\operatorname{Res}_u(P'(u)T-Q(u),P(u))$ have height~$h
(\Phi_q)$ ($=\tilde{O}(\mD(\mH+\mH'))$). Then
\begin{enumerate}
	\item[(i)] the values of $R(u):=Q(u)/P'(u)$
	can be obtained to
	precision~$2^{-\kappa}$ at all roots of~$P$ in $\tilde{O}
(\mD^3+\mD^2\mH+\mD(\kappa+\mH'))$ bit operations;
	\item[(ii)] given $\tilde{O}(\mD(\mH+
	h(\Phi_q))+\mH')$ bits of the roots of~$P$ after the binary point,
	these roots can be grouped according to the
	distinct
	values they give to $R(u)$ in $\tilde{O}(\mD^2(\mH+
	h(\Phi_q))+\mD\mH')\subset\tilde{O}(\mD^3(\mH+\mH'))$ bit
	operations;
	\item[(iii)] given~$\tilde{O}(\mD(\mH+\mH'))$ bits of the roots
	of~$P$ after
	the binary point, one can decide which of these roots make~$R
	(u)=0$  and which of them  correspond to real roots
	making $R(u)>0$ (or $R(u)<1$) in $\tilde{O}(\mD^2(\mH+\mH'))$ bit
	operations.
\end{enumerate}
\end{lemma}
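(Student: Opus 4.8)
The plan is to reduce all three parts to one common primitive: compute floating-point approximations of the roots $\alpha_1,\dots,\alpha_{\deg P}$ of $P$ to a precision $2^{-p}$ that depends on the task at hand, then recover the needed information about $R(\alpha_j)=Q(\alpha_j)/P'(\alpha_j)$ by evaluating the integer polynomials $Q$ and $P'$ at these approximations and dividing, reading off a sign or testing for vanishing with the help of separation and root bounds from the Appendix. Computing all roots of $P$ to precision $2^{-p}$ costs $\tilde{O}(\mD^3+\mD^2\mH+\mD p)$ bit operations by Lemma~\ref{lemma:fsolve}; the evaluations of $Q$ and $P'$ at the $\deg P$ approximate roots will be carried out by a single fast multi-point evaluation rather than root-by-root Horner, so that their cost does not dominate.

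For part~(i) the precision analysis rests on three ingredients. By the root bound of Lemma~\ref{lemma:roots}, every root of $P$ has modulus at most $2^{\mH+1}$, hence $|P'(\alpha_j)|\le 2^{\tilde{O}(\mD\mH)}$, $|Q(\alpha_j)|\le 2^{\tilde{O}(\mD\mH+\mH')}$, and (since $R(\alpha_j)$ is a root of $\Phi_q$) $|R(\alpha_j)|\le 2^{\tilde{O}(\mD(\mH+\mH'))}$. Next, because $P$ is square-free its discriminant is a nonzero integer, and combining the identity $\prod_j P'(\alpha_j)=\pm\operatorname{disc}(P)/\operatorname{lc}(P)^{\deg P-2}$ with the Landau-type bound $\prod_j\max(1,|P'(\alpha_j)|)\le\|P'\|_1^{\deg P}\,\|P\|_2^{\deg P-1}=2^{\tilde{O}(\mD\mH)}$ yields the crucial lower bound $\min_j|P'(\alpha_j)|\ge 2^{-\tilde{O}(\mD\mH)}$. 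Propagating errors through $R=Q/P'$ with these bounds shows that taking $p=\kappa+\tilde{O}(\mD\mH+\mH')$ suffices: evaluating $Q$ and $P'$ at the $p$-bit approximations of the $\alpha_j$ and dividing returns each $R(\alpha_j)$ to absolute precision $2^{-\kappa}$. The total cost is then $\tilde{O}(\mD^3+\mD^2\mH+\mD p)=\tilde{O}(\mD^3+\mD^2\mH+\mD(\kappa+\mH'))$.

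For part~(ii), I would observe that two values $R(\alpha_i),R(\alpha_j)$, being roots of $\Phi_q$, are either equal or differ by at least the root separation of $\Phi_q$, which is $2^{-\tilde{O}(\mD\,h(\Phi_q))}=2^{-\tilde{O}(\mD^2(\mH+\mH'))}$ by Lemma~\ref{lemma:roots}. Hence $\tilde{O}(\mD(\mH+h(\Phi_q))+\mH')$ bits of the $\alpha_j$ are enough: by part~(i) this pins down every $R(\alpha_j)$ to precision finer than a third of this separation, and one groups the roots by sorting and comparing. Evaluating $Q$ and $P'$ at all roots by fast multi-point evaluation at this precision costs $\tilde{O}(\mD^2(\mH+h(\Phi_q))+\mD\mH')$, and the $\tilde{O}(\mD)$ comparisons of $\tilde{O}(\mD^2(\mH+\mH'))$-bit numbers fit within the same bound, which lies in $\tilde{O}(\mD^3(\mH+\mH'))$ since $h(\Phi_q)=\tilde{O}(\mD(\mH+\mH'))$. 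For part~(iii), I would note that $R(\alpha_j)=0$ iff $Q(\alpha_j)=0$ (as $P'(\alpha_j)\ne 0$), and $Q(\alpha_j)$ is a root of the integer polynomial $\operatorname{Res}_u(P(u),Y-Q(u))$, whose height is $\tilde{O}(\mD(\mH+\mH'))$ by Lemma~\ref{lemma:resultant} and Lemma~\ref{lemma:height}; hence $Q(\alpha_j)$ is either $0$ or of modulus at least $2^{-\tilde{O}(\mD(\mH+\mH'))}$ by the lower bound on nonzero roots in Lemma~\ref{lemma:roots}. Likewise, for a real root $\alpha_j$ the signs of $R(\alpha_j)$ and of $1-R(\alpha_j)$ are governed by the signs of the real algebraic numbers $Q(\alpha_j)$, $P'(\alpha_j)$ and $P'(\alpha_j)-Q(\alpha_j)$, each of which — when nonzero — has modulus at least $2^{-\tilde{O}(\mD(\mH+\mH'))}$ for the same reason, while $|P'(\alpha_j)|\ge 2^{-\tilde{O}(\mD\mH)}$ from part~(i). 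So $\tilde{O}(\mD(\mH+\mH'))$ bits of the roots of $P$ suffice; evaluating $Q$, $P'$ and $P'-Q$ at all roots by fast multi-point evaluation at this precision, together with a $\gcd(P,Q)$ computation to cross-check the vanishing set, costs $\tilde{O}(\mD^2(\mH+\mH'))$ bit operations.

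I expect the main obstacle to be the precision bookkeeping of part~(i), and in particular establishing the tight lower bound $\min_j|P'(\alpha_j)|\ge 2^{-\tilde{O}(\mD\mH)}$: this requires both the observation that $\operatorname{disc}(P)$ is a nonzero integer and the use of Landau's inequality to control $\prod_j\max(1,|P'(\alpha_j)|)$, since multiplying a per-root Mignotte bound over the $\deg P$ roots would cost an extra factor of $\mD$ in the exponent and break the claimed complexities. A secondary point needing care is that the per-root evaluations must be performed by a single fast multi-point evaluation rather than root-by-root Horner, otherwise the $\mD^2\kappa$ and $\mD^3\mH$ terms introduced would exceed the stated bounds.
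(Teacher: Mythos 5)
Your proposal is correct and follows essentially the same route as the paper's proof: compute approximate roots of $P$, evaluate $Q$ and $P'$ at them via the fast multi-point evaluation of Lemma~\ref{lemma:feval}, and calibrate the precision via the root-separation and magnitude bounds of Lemma~\ref{lemma:roots} applied to $P$, $\Phi_q$, and the auxiliary resultant whose roots are the $Q(\alpha_j)$. The one difference is that you re-derive the lower bound $|P'(\alpha)|\ge 2^{-\tilde O(\mD\mH)}$ from the nonvanishing of the integer discriminant combined with a Mahler-measure/Landau argument, whereas the paper simply invokes Lemma~\ref{lemma:roots}(\ref{item:roots4}), which packages exactly this bound (your concern about the naive ``per-root separation'' product losing a factor of $\mD$ in the exponent is well placed, and is precisely what that lemma avoids).
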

Note that although this lemma is stated with $\tilde{O}(\cdot)$ estimates
on precision, explicit bounds follow from our proof, allowing for exact algorithms.
\begin{proof}
The bound on $h(\Phi_q)$ is a direct consequence of Lemma~\ref{lemma:resultant}.

Fix a root $v \in \mathbb{C}$ of $P(u)=0$.  
Assume that we have computed approximations $q$ and $p$ to
$Q(v)$ and $P'(v)$ such that $|Q(v)-q|$ and $|P'(v)-p|$ are both
less than $2^{-a}$ for some natural number $a$.  Then the error
on~$Q(v)/P'(v)$ is bounded by
\[ \left|\frac{Q(v)}{P'(v)} - \frac{q}{p} \right| = \left|\frac{(Q
(v)-q)p+ q(p - P'(v))}{P'(v)p}\right| \leq \frac{2^{-a}}{|P'
(v)|} \left(1+
\frac{|q|}
{|p|}\right). \]
Lemma~\ref{lemma:roots} gives the lower bound $|P'(v)|\ge 2^
{-2\mD\mH-(5/2)\mD\log(\mD+1)}$ and the upper bound $|v|\le 2^\mH+1$.
It follows that for any $a>2\mD\mH+(5/2)\mD\log(\mD+1)$,
\[
|p| \geq |P'(v)| - 2^{-a} \geq 2^{-2\mD\mH-(5/2)\mD\log(\mD+1)-1},\qquad
|q|  \leq |Q(v)| + 2^{-a} \leq 2^{\mH'}(1+\cdots+|v|^
\mD) +2^
{-a} \leq
2^{\mH'}(2^\mH+1)^{\mD}(\mD+2).
\]
Thus,
\[ \left|\frac{Q(v)}{P'(v)} - \frac{q}{p} \right|
\le 2^{-a+2\mD\mH+(5/2)\mD\log(\mD+1)}
\left(1+2^{2\mD\mH+(5/2)\mD\log(\mD+1)+1}2^{\mH'}(2^{
\mH}+1)^{\mD}(\mD+2)
\right)=2^{-a+\mH'+\tilde{O}(\mD\mH)}. \]
This implies that a value of $a=\kappa+\mH'+\tilde{O}(\mD\mH)$ 
is sufficient to
evaluate $R(v)$ to precision~$2^{-\kappa}$. By Lemma~%
\ref{lemma:feval}, the
simultaneous
evaluation
of the values of~$Q$ and~$P'$ at this precision~$a$ can be achieved in
$\tilde{O}(\mD(\kappa+\mH'+\mD\mH))$ bit operations,
given~$O(\kappa+\mH'+\mD\mH)$ bits of the roots of~$P(u)$ after
the binary point, which can be computed in~$\tilde{O}
(\mD^3+\mD^2\mH+\mD(\kappa+\mH'))$ by
Lemma~\ref{lemma:fsolve}. This proves part~(i).

Lemma~\ref{lemma:roots}~(\ref{item:roots2}) shows that the distinct roots of
$\Phi_q$ are
at distance at least $2^{-a}$ with 
$a=\frac12(\mD+2)\log\mD+\mD(h(\Phi_q)+\frac12\log\mD)$.
Thus precision $a+1$ is sufficient to separate the distinct values.
Using part~(i) with $\kappa=a+1$ then proves part~(ii) of the lemma.

Finally, in order to evaluate the sign, it is sufficient to determine
the sign of both~$P'$ and~$Q$ at the roots of~$P$. By Lemma~%
\ref{lemma:roots}(\ref{item:roots3}),(\ref{item:roots4}), this can be done by
computing these values
with~$
\tilde{O}(\mD
(\mH+\mH'))$ bits after the binary point. Lemma~\ref{lemma:feval}
shows that is can be achieved in~$\tilde{O}(\mD^2
(\mH+\mH'))$ bit operations. The case $R(u)<1$ is obtained by
computing the signs of~$P'$ and $Q-P'$, which obey the same bounds,
this last polynomial being computed in~$O(\mD\mH)$ bit operations.
\end{proof}

\subsection{Polynomial Equalities and Inequalities from Numerical
Kronecker Representations}
We now give numerical analogues of the results in Proposition~%
\ref{prop:KronReduce2} concerning the values taken by a polynomial at
points defined by a Kronecker representation.

\begin{proposition}\label{prop:signq}
With the same hypotheses and
notation as in
Proposition~\ref{prop:KronReduce2}, let 
\begin{gather*}
\mD=\sC_\bn(\bd),\quad
\mH=\sH_\bn(\bETA,\bd)+n\sC_\bn(\bd),\quad
\mH'=\sH_\bn
(\bETA,\bd)(\eta+\delta)+n\delta\sC_\bn(\bd),\\
\mH_\Phi=\begin{cases}O(\sH_\bn
(\bETA,\bd)+\sC_\bn(\bd)(\eta+\log n)),\quad&\text{if $\delta=1$,}\\
\tilde{O}(\sC_\bn(\bd)(\sH_\bn(\bETA,\bd)(\eta+\delta)+n\delta\sC_\bn
(\bd))),&\text{otherwise.}\end{cases}
\end{gather*}
Then,
\begin{enumerate}
	\item[(i)] given~$\tilde{O}(\mD\mH+\mH'+\kappa)$ bits of the roots
	of~$P$ after the binary point, 
	the values of $q(\bz)$ at the elements of~$Z(\bbf)$
	can be obtained to precision~$2^{-\kappa}$ in $\tilde{O}
	(\mD(\mD\mH+\mH'+\kappa)+\sL)$ bit operations;

	\item[(ii)] given~$\tilde{O}(\mD(\mH+\mH_\Phi)+\mH')$ bits of
	the roots of $P$ after the binary point, these roots
	of $P$ can be grouped
	according to the
	distinct values they give to~$\displaystyle R(u):=q\left(\frac{Q_1
	(u)}{P'
	(u)},\dots,\frac{Q_n(u)}{P'(u)}\right)$
	in $\tilde{O}(\mD^2(\mH+\mH_\Phi)+\mD\mH'+\sL)$ bit
	operations;

	\item[(iii)]given~$\tilde{O}(\mD\mH')$ bits of the
	roots
	of~$P$ after the binary point, one can decide which of
	these
	roots make~$R(u)=0$ and which of them correspond to real
	roots making~$R(u)>0$ (or $R(u)<1)$ in~$\tilde{O}
	(\mD^2\mH'+\sL)$ bit
	operations;

	\item[(iv)]when $q$ is one of the coordinates, then the
	required number of bits of the roots of $P$ in the previous items
	can also be obtained within the stated bit complexities.
\end{enumerate}
In the case when the variables consist of a single block, and $q$ and
the elements of~$\bff$ have degrees at most~$d$ and heights at
most~$h$, then these decisions all take
$\tilde{O}(D^2(D+h+d)(d+h)^2+D\kappa)$ bit operations except for (ii)
when $\delta>1$, where the cost increases to $\tilde{O}(D^4
(h+d)^2+D\kappa)$
bit operations.
(Here as before, $D$ is $d^n$.)
\end{proposition}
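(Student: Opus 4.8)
The plan is to reduce everything to the univariate situation of Lemma~\ref{lemma:univar} by using the polynomials $Q_q$ and $\Phi_q$ supplied by Proposition~\ref{prop:KronReduce2}, and then to read off the announced bounds by bookkeeping degrees and heights. Concretely, I would start from the Kronecker representation $[P(u),\bQ]$ of $\bbf$ of degree $\mD=\sC_\bn(\bd)$ and height $\mH=\tilde O(\sH_\bn(\bETA,\bd)+n\sC_\bn(\bd))$ given by Proposition~\ref{prop:kronrep} under the standing hypotheses, invoke Proposition~\ref{prop:KronReduce2}(1) to obtain $Q_q\in\mathbb{Z}[u]$ of degree at most $\mD$ and height $\mH'=\tilde O(\sH_\bn(\bETA,\bd)(\eta+\delta)+n\delta\sC_\bn(\bd))$, computable in $\sL$ bit operations, and Proposition~\ref{prop:KronReduce2}(2)--(3) to obtain the minimal polynomial $\Phi_q$ of $Q_q/P'\bmod P$, of degree at most $\mD$ and height $\mH_\Phi$ (with the two cases according to whether $\delta=1$). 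The key identity is that for every root $v$ of $P$ the attached solution of $\bbf$ is $(Q_1(v)/P'(v),\dots,Q_n(v)/P'(v))$ and $q$ takes there the value $Q_q(v)/P'(v)$; hence $R(v)=Q_q(v)/P'(v)$ on the zero set of $P$, so that throughout one may replace $R$ by the univariate rational function $Q_q/P'$, with $\Phi_q$ playing the role of the polynomial likewise called $\Phi_q$ in Lemma~\ref{lemma:univar} (so $h(\Phi_q)=\mH_\Phi$ there).

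With this identification, each of items (i)--(iii) is a direct application of the corresponding part of Lemma~\ref{lemma:univar} to the pair $(P,Q_q)$, to which one adds the $\sL$ bit operations needed to produce $Q_q$: item (i) uses Lemma~\ref{lemma:univar}(i) to evaluate $Q_q/P'$ to precision $2^{-\kappa}$; item (ii) uses Lemma~\ref{lemma:univar}(ii), the distinct values of $Q_q/P'$ on the roots of $P$ being the roots of $\Phi_q$, separated by Lemma~\ref{lemma:roots}~(\ref{item:roots2}); item (iii) uses Lemma~\ref{lemma:univar}(iii), testing the signs of $Q_q$, $P'$ and $Q_q-P'$ at the roots of $P$. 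In item (iii) I would simplify the bound using $\mH=O(\mH')$, which holds because $\delta\ge1$ forces $\eta+\delta\ge1$ and $\delta\sC_\bn(\bd)\ge\sC_\bn(\bd)$. For item (iv), when $q$ is a coordinate $z_i$ the polynomial $Q_q$ is just the component $Q_i$ already present in $\bQ$, so $\sL$ can be taken to be $0$ and $\mH'\le\mH$; the number of bits of the roots of $P$ demanded in (i)--(iii) can then itself be produced, by Lemma~\ref{lemma:fsolve}, in $\tilde O(\mD^3+\mD^2\mH+\mD N)$ bit operations for the relevant $N$, which stays within the budgets already established.

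The last step is the specialization to a single block of variables, where I would substitute $\sC_\bn(\bd)=D$, $\sH_\bn(\bETA,\bd)=\tilde O(hd^{n-1}+D)$, the single-block values $\mH'=\tilde O(D(h+d))$ and $\mH_\Phi\in\{\tilde O(Dh),\ \tilde O(D^2(h+d)^2)\}$, and $\sL=\tilde O(D^2(D+h+d)(h+d)^2)$ (these coming from Propositions~\ref{prop:kronrep} and~\ref{prop:KronReduce2}), using $n=O(\log D)$ to discard the factors of $n$. Collecting dominant terms then gives $\tilde O(D^2(D+h+d)(h+d)^2+D\kappa)$ for items (i), (iii) and (ii) with $\delta=1$, whereas for (ii) with $\delta>1$ the contribution $\mD^2\mH_\Phi=\tilde O(D^4(h+d)^2)$ takes over, yielding $\tilde O(D^4(h+d)^2+D\kappa)$.

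The main obstacle, and genuinely the only nontrivial part, is this height bookkeeping: one must verify that plugging the height $\mH'$ of $Q_q$ and the height $\mH_\Phi$ of $\Phi_q$ into Lemma~\ref{lemma:univar} and then collapsing all the terms in the single-block regime really produces the claimed exponents — in particular that $\sL$ absorbs the root-finding cost $\mD^3$ and the evaluation costs $\mD^2(\mH+\mH_\Phi)$ and $\mD\mH'$ in items (i), (iii) and in (ii) when $\delta=1$, and that $\sL$ is in turn dominated by $\mD^2\mH_\Phi$ in (ii) when $\delta>1$. Everything else is a mechanical transcription of the univariate lemma through the parametrization $R=Q_q/P'$.
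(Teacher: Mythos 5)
Your proposal is correct and follows essentially the same route as the paper's own proof, which simply invokes Proposition~\ref{prop:KronReduce2} for the degree, height, and cost bounds on $Q_q$ (noting that $Q_q$ is already part of $\bQ$ when $q$ is a coordinate) and then applies Lemma~\ref{lemma:univar}. You have fleshed out the bookkeeping that the paper leaves implicit---in particular the observation that $\mH = O(\mH')$ since $\eta+\delta \ge 1$ and $\delta \ge 1$, and the verification via Lemma~\ref{lemma:fsolve} that root-finding stays within budget in item (iv), using $\mD \le \mH$---and these checks are all sound.
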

\begin{proof}
Proposition~\ref{prop:KronReduce2} shows that there exists a
polynomial $Q_q$ such that the values of~$q(\bz)$ at the
points of~$Z(\bbf)$ are given by the
parametrization~$P'(u)T-Q_q(u)$, that this polynomial has degree at
most~$\mD$ and height~$\tilde{O}(\mH')$, and that it can be
computed in~$\sL$ bit operations. When~$q$ is a coordinate,
then~$Q_q$ is already part of the Kronecker representation and does
not need recomputation.
Combining
these bounds with Lemma~\ref{lemma:univar} gives the result.
\end{proof}
\subsection{Applications to ACSV}
\begin{corollary}\label{coro:tin01}
For the extended system of
critical-point
equations~\eqref{eq:extended-sys}, using a Kronecker
representation~$[P(u),\bQ]$ from Corollary~\ref{coro:kro-combi}, in
$\tilde{O}(D^3d^3(d+h))$
bit operations, one can:
(i) select the
roots of~$P$ corresponding to real solutions
with positive coordinates~$(z_1,\dots,z_n)$; (ii) group these roots by
the
distinct
values
they give to each of the coordinates~$z_i,\lambda,t$;
(iii) select those roots for which $t$ is exactly~1, or lies in
the interval~$
(0,1)$.
\end{corollary}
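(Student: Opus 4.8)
The plan is to reduce each of the three tasks to univariate operations on the polynomial $P(u)$ of a numerical Kronecker representation built from $[P(u),\bQ]$, exploiting the fact that the coordinates $z_1,\dots,z_n,\lambda,t$ of the extended system~\eqref{eq:extended-sys} are \emph{linear} forms in the ambient variables, so that the improved bounds of Proposition~\ref{prop:KronReduce2}(3) and Proposition~\ref{prop:signq} apply with $\sL=0$, each $Q_j$ of the representation already serving as the parametrizing polynomial of the $j$th coordinate.

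First I would record the quantitative input. By Corollary~\ref{coro:kro-combi}, the Kronecker representation $[P(u),\bQ]$ of~\eqref{eq:extended-sys} has degree $\mD\le ndD=\tilde{O}(dD)$ and height $\mH=\tilde{O}(Dd(d+h))$; in particular, for every coordinate $q\in\{z_1,\dots,z_n,\lambda,t\}$ the polynomial $Q_q$ has degree $<\deg P$ and height $\tilde{O}(\mH)$, and, $q$ being linear, the associated polynomial $\Phi_q$ of Proposition~\ref{prop:KronReduce2}(3) has height $\mH_\Phi=\tilde{O}(\mH)$. I would then set $\kappa:=\tilde{O}(\mD\mH)=\tilde{O}(d^2D^2(d+h))$; by Lemma~\ref{lemma:roots} this precision separates the non-real roots of $P$ from the real ones, and by Proposition~\ref{prop:signq}(ii)--(iv) it suffices to group the roots of $P$ by the value of any single coordinate and to decide the sign of any coordinate as well as whether a coordinate equals a given rational. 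By Theorem~\ref{prop:numKron}, a numerical Kronecker representation $[P(u),\bQ,\bU]$ whose roots of $P$ are known to precision $2^{-\kappa}$ is computed once, in $\tilde{O}(\mD^3+\mD^2\mH+\mD\kappa)=\tilde{O}(d^3D^3(d+h))$ bit operations, and all three tasks share this computation.

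For task (i): since $u$ is an integer linear form in the coordinates and the $Q_j$ lie in $\mathbb{Z}[u]$, a root $v$ of $P$ is real if and only if the corresponding solution of~\eqref{eq:extended-sys} has all coordinates real, so the real roots are read off from the isolating regions of $\bU$; among these I would apply Proposition~\ref{prop:signq}(iii),(iv) to each of $q=z_1,\dots,z_n$ to keep exactly the real roots $v$ with $Q_i(v)/P'(v)>0$ for all $i$, each sign test costing $\tilde{O}(\mD^2\mH)=\tilde{O}(d^3D^3(d+h))$, hence $\tilde{O}(d^3D^3(d+h))$ over the $n=O(\log D)$ coordinates. For task (ii): for each $q\in\{z_1,\dots,z_n,\lambda,t\}$, Proposition~\ref{prop:signq}(ii),(iv) partitions the selected roots according to the common value of $Q_q(v)/P'(v)$, costing $\tilde{O}(\mD^2(\mH+\mH_\Phi)+\mD\mH')=\tilde{O}(\mD^2\mH)=\tilde{O}(d^3D^3(d+h))$ per coordinate because $\mH_\Phi=\tilde{O}(\mH)$, hence $\tilde{O}(d^3D^3(d+h))$ over the $n+2=O(\log D)$ coordinates. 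For task (iii): $t(v)=1$ exactly iff $(Q_t-P')(v)=0$, so I would form the integer polynomial $\tilde{Q}:=Q_t-P'$, of degree $<\deg P$ and height $\tilde{O}(\mH)$, in $\tilde{O}(\mD\mH)$ operations and apply the decision $R(u)=0$ of Proposition~\ref{prop:signq}(iii) with numerator $\tilde{Q}$; similarly $t(v)\in(0,1)$ is the conjunction $Q_t(v)/P'(v)>0$ and $Q_t(v)/P'(v)<1$, handled by the decisions $R(u)>0$ and $R(u)<1$ of the same proposition, all costing $\tilde{O}(\mD^2\mH)=\tilde{O}(d^3D^3(d+h))$.

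Summing the one-time root computation and the three contributions yields the claimed $\tilde{O}(D^3d^3(d+h))$. The step that needs care is task (ii): were one to use the generic resultant-height bound $h(\Phi_q)=\tilde{O}(\mD\mH)$ of Lemma~\ref{lemma:univar}(ii), the grouping would cost $\tilde{O}(\mD^2h(\Phi_q))=\tilde{O}(d^4D^4(d+h))$, a factor $dD$ too large; the remedy is precisely that every coordinate of~\eqref{eq:extended-sys} is a linear form, so the sharper Chow-form bound $\mH_\Phi=\tilde{O}(\mH)$ of Proposition~\ref{prop:KronReduce2}(3) is available. One must also bear in mind throughout that the number $n+2$ of coordinates is $O(\log D)$ and hence invisible to the $\tilde{O}(\cdot)$ bookkeeping.
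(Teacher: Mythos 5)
Your proof is correct and takes essentially the same route as the paper's, which is a terse invocation of Proposition~\ref{prop:signq} with the parameters $\mathcal{D}=\sC_\bn(\bd)=ndD$, $\mathcal{H}=\mathcal{H}'=\mathcal{H}_\Phi=\tilde{O}(Dd(d+h))$ coming from Corollary~\ref{coro:kro-combi} and $\delta=\eta=1$. You fill in the accounting the paper leaves implicit; in particular you correctly isolate the one place where care is needed, namely that the Chow-form height bound of Proposition~\ref{prop:KronReduce2}(3) for linear $q$ (rather than the resultant bound $h(\Phi_q)=\tilde{O}(\mathcal{D}\mathcal{H})$) is what keeps the grouping step at $\tilde{O}(\mathcal{D}^2\mathcal{H})=\tilde{O}(D^3d^3(d+h))$. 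Two small presentational quibbles: the paper does not set $\sL=0$ but rather bounds $\sL=\tilde{O}(D^3d^2(d+h))$, which is dominated by $\mathcal{D}^2\mathcal{H}$ anyway, so your shortcut changes nothing; and for the ``$t=1$'' test, rather than feeding the ad hoc numerator $Q_t-P'$ into Proposition~\ref{prop:signq}(iii) (which is stated for polynomials $q$ in $\bz$, not arbitrary univariate numerators), it is cleaner to take $q=t-1$, still linear of height~$0$, whose Kronecker parametrization is exactly $Q_t-P'$ — this is precisely the $R(u)<1$ / $R(u)=1$ mechanism from the proof of Lemma~\ref{lemma:univar}(iii). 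Neither issue affects correctness or the final bound.
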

\begin{proof}
The values~$\mathcal{D}=\sC_\bn(\bd)=ndD$
and~$\mathcal{H}=\tilde{O}(Dd(d+h))$ are
provided by Corollary~\ref{coro:kro-combi}.
Proposition~\ref{prop:signq} is applied to each of the coordinates
$z_1,\dots,z_n,\lambda,t$, i.e., in cases where
$\delta=\eta=1$.
This
leads to $
\mathcal{H}'=\tilde{O}(Dd(d+h))$ and similarly for~$\mathcal{H}_\Phi$.
Using the value $L=\tilde{O}
(D)$ coming from Example~\ref{ex:slpcombi} leads to
$\sL=\tilde{O}(D^3d^2(d+h))$. The conclusion is then a direct
application of the proposition.
\end{proof}
\begin{corollary}\label{coro:tin01-complex}For the system~%
\eqref{eq:GenSys1}--\eqref{eq:GenSys6}, 
using a Kronecker representation~$[P(u),\bQ]$ from
Corollary~\ref{kro-non-combi}, in $\tilde{O}(2^{3n}D^9d^5h)$ bit
operations, one can: (i) group
the roots of~$P$ corresponding to
real solutions by
the
distinct values they give to each of the coordinates~$a_i$ and $b_i$; 
(ii) select those roots for which
$t$ is exactly~1, or lies in
the interval~$
(0,1)$.
\end{corollary}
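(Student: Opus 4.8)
The plan is to follow the proof of Corollary~\ref{coro:tin01} verbatim, substituting the degree and height bounds of the \emph{non}-combinatorial Kronecker representation for those of the combinatorial one. From Corollary~\ref{kro-non-combi} I would record that the system~\eqref{eq:GenSys1}--\eqref{eq:GenSys6} admits a Kronecker representation $[P(u),\bQ]$ with
\[
\mD=\sC_\bn(\bd)=2^{n-1}dn^4D^3=\tilde O\!\left(2^ndD^3\right),\qquad
\sH_\bn(\bETA,\bd)=\tilde O\!\left(2^nd^3D^3h\right),
\]
where the polynomial factors in $n$ are absorbed into $\tilde O(\cdot)$ since $n\le\log_2 D$, whereas $2^n$ is genuinely kept. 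The coordinates to be tested, namely $a_1,\dots,a_n,b_1,\dots,b_n$ and $t$, are single variables; in the language of Propositions~\ref{prop:KronReduce2} and~\ref{prop:signq} this is the favourable case $\delta=\eta=1$, in which the parametrising polynomial $Q_q$ is already one of the $Q_j$ of $[P(u),\bQ]$ and needs no recomputation. Moreover a root $v$ of $P$ corresponds to a real solution of the system exactly when $v$ is real, so identifying the real roots of $P$ is subsumed in the numerical Kronecker computation of Theorem~\ref{prop:numKron}.

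Next I would instantiate Proposition~\ref{prop:signq} with these values. For a coordinate $q$ one gets $\mH=\sH_\bn(\bETA,\bd)+n\sC_\bn(\bd)=\tilde O(2^nd^3D^3h)$, $\mH'=\sH_\bn(\bETA,\bd)(\eta+\delta)+n\delta\sC_\bn(\bd)=\tilde O(2^nd^3D^3h)$, and (because $\delta=1$) $\mH_\Phi=O(\sH_\bn(\bETA,\bd)+\sC_\bn(\bd)(\eta+\log n))=\tilde O(2^nd^3D^3h)$; using the straight-line program of length $L=\tilde O(D)$ from Example~\ref{ex:slpnoncombi} then gives $\sL=\tilde O(4^nd^4D^7h)$. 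Part~(iv) of Proposition~\ref{prop:signq} ensures that the $\tilde O(\mD(\mH+\mH_\Phi)+\mH')$ bits of the roots of $P$ required for its part~(ii), and the $\tilde O(\mD\mH')$ bits required for part~(iii), can be produced within the claimed cost via Theorem~\ref{prop:numKron}. For claim~(i) I would apply part~(ii) of Proposition~\ref{prop:signq} to each of $a_1,b_1,\dots,a_n,b_n$ and take the common refinement of the $2n=\tilde O(1)$ resulting partitions of the real roots; each application costs, dominated by its first term,
\[
\tilde O\!\left(\mD^2(\mH+\mH_\Phi)\right)
=\tilde O\!\left(\left(2^ndD^3\right)^2\cdot 2^nd^3D^3h\right)
=\tilde O\!\left(2^{3n}d^5D^9h\right).
\]
For claim~(ii) I would apply part~(iii) with $q=t$ to decide $R(u)>0$ and $R(u)<1$, and with $q=t-1$ (still of degree~$1$, height~$0$) to decide $R(u)=0$; each costs $\tilde O(\mD^2\mH'+\sL)=\tilde O(2^{3n}d^5D^9h)$. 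Adding the two claims gives the stated bound $\tilde O(2^{3n}D^9d^5h)$.

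There is no genuine conceptual obstacle here: the whole argument is a substitution into Proposition~\ref{prop:signq}, exactly as in Corollary~\ref{coro:tin01}. The only point demanding care is the bookkeeping of the $2^n$ and $\operatorname{poly}(n)$ factors, i.e.\ checking that $\mD^2(\mH+\mH_\Phi)$ is the dominant contribution. This holds because $\mH,\mH_\Phi,\mH'$ are all $\tilde O(2^nd^3D^3h)$, while $\sL=\tilde O(4^nd^4D^7h)$ carries only a $4^n$, and the required root precision $\kappa=\tilde O(\mD(\mH+\mH_\Phi)+\mH')=\tilde O(4^nd^4D^6h)$ contributes $\mD\kappa=\tilde O(8^nd^5D^9h)$, $\mD^2\sH_\bn(\bETA,\bd)=\tilde O(8^nd^5D^9h)$ and $\mD^3=\tilde O(8^nd^3D^9)$ to Theorem~\ref{prop:numKron} — each of which is $\tilde O(2^{3n}d^5D^9h)$.
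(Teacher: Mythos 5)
Your proposal is correct and follows exactly the same route as the paper: the proof of Corollary~\ref{coro:tin01-complex} is a one-line instantiation of Proposition~\ref{prop:signq} with the degree bound $\mD=\sC_\bn(\bd)=2^{n-1}dn^4D^3$ and height bound $\sH_\bn(\bETA,\bd)=\tilde O(h2^nd^3D^3)$ from Corollary~\ref{kro-non-combi}, applied to each coordinate (so $\delta=\eta=1$), giving $\sL=\tilde O(2^{2n-1}D^7d^4h)$ and a final bit complexity of $\tilde O(2^{3n}D^9d^5h)$. The extra accounting you carry out — verifying that $\mD^2(\mH+\mH_\Phi)$, $\mD\kappa$, $\mD^2\mH$, and $\mD^3$ all land at $\tilde O(2^{3n}D^9d^5h)$, and that $\sL$ only contributes $4^n$ — is exactly the bookkeeping the paper's terse "whence a bit complexity of..." relies on, and you get it right.
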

\begin{proof}
The values $\sC_\bn(\bd)=2^{n-1}dn^4D^3$ and $\sH_\bn(\bETA,\bd)=\tilde{O}
(h2^nd^3D^3)$ are provided by Corollary~\ref{kro-non-combi}.
Proposition~\ref{prop:signq} is applied to the coordinates $a_1,\dots,b_n$ and~$t$,
i.e., in cases  where
$\delta=\eta=1$. This
leads to 
$\mD=2^{n-1}dn^4D^3$, $\mH'=\tilde{O}(h2^nd^3D^3)$, 
$\sL=\tilde{O}(2^{2n-1}D^7d^4h)$, whence a bit complexity of
$\tilde{O}(2^{3n}D^9d^5h).$
\end{proof}

\subsection{Grouping Roots by Modulus}
\label{sec:groupmod}
Grouping roots with the same modulus will turn out to be the most
costly operation in the combinatorial case.  Unlike the separation
bound given in Lemma~\ref{lemma:roots}(\ref{item:roots2}) between distinct complex
roots
of a polynomial, which has order~$2^{-\tilde{O}(hd)}$, the best
separation bound for the \emph{moduli} of roots that we know of has
order $2^{-\tilde{O}(hd^3)}$~\cite[Theorem 1]{GourdonSalvy1996}, and
computing the coordinates of a Kronecker representation to this
accuracy would be costly.  Fortunately, for the cases in which we
need to group roots of a polynomial by modulus it will always be the case that the modulus itself is a root of $P$.  In this situation, we have a better bound.

\begin{lemma}
\label{lemma:modsep}
For a polynomial $A\in\mathbb{Z}
[T]$ of degree $d\ge2$ and
height $h$, if $A(\alpha)=0$ and $A(\pm|\alpha|)\neq 0$, then 
\[\bigl|A(|\alpha|)A(-|\alpha|)\bigr|\ge 
(d+1)^{2(2h+\log(d+1))(1-d^2)}\left(2^{hd+2\log((2d)!)}(d+1)\right)^{-d}
= 2^{-\tilde{O}(hd^2)}
.\]
\end{lemma}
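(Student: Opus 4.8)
The plan is to bound $|A(|\alpha|)A(-|\alpha|)|$ from below by exhibiting it as (a divisor of) the absolute value of a nonzero \emph{integer} multiplied by a controlled denominator, and then to show that the integer in question is nonzero precisely under the hypothesis $A(\pm|\alpha|)\neq 0$. The key observation is that $|\alpha|^2 = \alpha\bar\alpha$ is an algebraic number of degree at most $d^2$ over $\mathbb{Q}$: it is a root of the resultant-type polynomial $B(S) := \Res_U\!\bigl(A(U),\, U^{2}\text{-elimination}\bigr)$, or more concretely of the polynomial $B(S)$ whose roots are the products $\alpha_i\overline{\alpha_j}=\alpha_i\alpha_{\bar j}$ (since the conjugates of $\alpha$ are closed under complex conjugation, $|\alpha|^2$ is among these). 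Such a $B$ can be built as a subresultant/resultant of $A(U)$ and $U^2 V - W$-type constructions, with degree $\le d^2$ and height controlled by the classical resultant bounds (Lemma~\ref{lemma:resultant}, Lemma~\ref{lemma:height}): one gets $\deg B\le d^2$ and $h(B)=\tilde O(hd)$, more precisely the explicit bound $h(B)\le (2h+\log(d+1))(d^2-1)+\cdots$ that accounts for the $(d+1)^{2(2h+\log(d+1))(1-d^2)}$ factor in the statement.

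First I would set $r=|\alpha|$ and consider the product $A(r)A(-r)$. Writing $A(T)=\sum a_k T^k$, this product is a polynomial in $r^2=r\cdot r$ with integer coefficients: $A(r)A(-r) = \bigl(\sum_{k \text{ even}} a_k r^k\bigr)^2 - \bigl(\sum_{k\text{ odd}} a_k r^k\bigr)^2 \cdot \text{(nothing)}$ — more cleanly, $A(r)A(-r)=A_{\mathrm{e}}(r^2)^2 - r^2 A_{\mathrm{o}}(r^2)^2$ where $A_{\mathrm{e}},A_{\mathrm{o}}$ are the even/odd parts, so it equals $C(r^2)$ for an explicit $C\in\mathbb{Z}[S]$ with $\deg C\le d$ and $h(C)=O(hd)$ (the coefficients are sums of at most $d+1$ products of pairs of coefficients of $A$, times small combinatorial factors, whence the $2^{hd+2\log((2d)!)}(d+1)$ bound). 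Now $r^2$ is a root of the integer polynomial $B$ of degree $\le d^2$ and height $\tilde O(hd)$ described above, and $C(r^2)=A(r)A(-r)\neq 0$ by hypothesis. Therefore $\Res_S(B(S),C(S))$ is a nonzero integer, hence $\bigl|\Res_S(B(S),C(S))\bigr|\ge 1$.

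The second step is to turn this into the desired lower bound on $|C(r^2)|$ alone. Since $\Res_S(B,C) = \mathrm{lc}(B)^{\deg C}\prod_{B(\beta)=0} C(\beta)$, and all roots $\beta$ of $B$ are the $\alpha_i\overline{\alpha_j}$, each satisfies $|\beta|\le (2^h)^{?}$ — bounded via the Cauchy/Mahler bound on roots of $A$, $|\alpha_i|\le 1+2^h$, so $|\beta|\le (1+2^h)^2$. Thus $|C(\beta)|\le 2^{h(C)}\sum_{k\le d}|\beta|^k \le 2^{O(hd)}$ for every root $\beta$ of $B$; there are at most $d^2$ of them, and discarding the single factor corresponding to $\beta=r^2$ gives
\[
1\le \bigl|\Res_S(B,C)\bigr| \le \mathrm{lc}(B)^{d}\cdot \bigl|C(r^2)\bigr|\cdot \prod_{\beta\neq r^2}|C(\beta)| \le \mathrm{lc}(B)^d\cdot \bigl|C(r^2)\bigr|\cdot 2^{O(hd^3)} ,
\]
and since $\mathrm{lc}(B)\le 2^{h(B)}=2^{\tilde O(hd)}$ one obtains $\bigl|C(r^2)\bigr|=\bigl|A(r)A(-r)\bigr|\ge 2^{-\tilde O(hd^2)}$, matching the announced shape. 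To extract the \emph{explicit} constants $(d+1)^{2(2h+\log(d+1))(1-d^2)}\bigl(2^{hd+2\log((2d)!)}(d+1)\bigr)^{-d}$ one tracks: the $(d+1)^{2(2h+\log(d+1))(1-d^2)}$ term comes from the leading-coefficient and height contribution of $B$ (degree $d^2-1$ shift), and the $\bigl(2^{hd+2\log((2d)!)}(d+1)\bigr)^{d}$ term comes from bounding $|C(\beta)|$ for $d$ of the roots together with $h(C)\le hd+2\log((2d)!)+\log(d+1)$ via the explicit Mahler-measure/coefficient bounds for $C$.

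The main obstacle I anticipate is not the resultant argument itself but the careful bookkeeping needed to (a) construct $B$ with provably the stated degree $d^2$ and explicit height rather than just $\tilde O(hd)$, since the claimed constant $(d+1)^{2(2h+\log(d+1))(1-d^2)}$ is sharp enough that one must use the tightest available resultant-height estimate (Mahler-measure-based, as in Lemma~\ref{lemma:height}), and (b) verify the combinatorial factor $2^{2\log((2d)!)}$ in $h(C)$ — this arises because $C(S)=A_{\mathrm e}(S)^2-S\,A_{\mathrm o}(S)^2$ involves squaring a polynomial of degree about $d/2$, and the coefficient bound for a product of two degree-$\le d$ integer polynomials of height $\le h$ is $h+h+\log\binom{2d}{d}\le 2h+2\log((2d)!)$-flavored; threading this through to the final product of $d$ evaluations gives the $\bigl(2^{hd+2\log((2d)!)}(d+1)\bigr)^{d}$ factor. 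Everything else is a routine application of Cauchy root bounds and the $\Res\in\mathbb{Z}\setminus\{0\}\Rightarrow|\Res|\ge 1$ principle.
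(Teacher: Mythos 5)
Your construction is the same as the paper's: your $C(S)=A_{\mathrm e}(S)^2-S\,A_{\mathrm o}(S)^2$ is precisely the Graeffe polynomial $G(T)=A(\sqrt T)A(-\sqrt T)$ and your $B$ is the resultant $R(u)=\Res_T\bigl(A(T),T^dA(u/T)\bigr)$, whose roots are the pairwise products of roots of $A$. The paper then finishes in one line by applying Lemma~\ref{lemma:roots}\eqref{item:roots3} with $Q=G$ and the role of ``$A$'' played by $R$; you instead try to reprove that final lower-bound step from scratch via $\Res_S(B,C)$, and this is where the gaps are.

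First, the claim that $\Res_S(B,C)$ is a nonzero integer does not follow from $C(r^2)\neq 0$ alone: the resultant vanishes whenever $B$ and $C$ share \emph{any} common root, and $B$ has up to $d^2$ roots other than $r^2$ at which $C$ could well vanish. To make the argument sound you must pass to the minimal polynomial $\mu\mid B$ of $r^2$ (so that irreducibility and $C(r^2)\neq0$ force $\gcd(\mu,C)=1$), and then control $\deg\mu$ and $h(\mu)$ via the factor bound of Lemma~\ref{lemma:height}. This is precisely what the proof of Lemma~\ref{lemma:roots}\eqref{item:roots3} (cited to Bugeaud) does, so in effect you are re-deriving that lemma but skipping its hardest point.

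Second, your chain of inequalities gives a weaker exponent than the statement. Bounding each of the $\le d^2-1$ factors $|C(\beta)|$ by $2^{O(hd)}$ yields $\prod_{\beta\neq r^2}|C(\beta)|\le 2^{O(hd^3)}$, as you write, and together with $\operatorname{lc}(B)^d\le 2^{\tilde O(hd^2)}$ this gives only $|C(r^2)|\ge 2^{-\tilde O(hd^3)}$ --- your conclusion that it ``matches the announced shape'' $2^{-\tilde O(hd^2)}$ does not follow from the displayed inequality. To recover the quadratic exponent one must not bound each root of $B$ separately; instead, the product $\prod_\beta \max(1,|\beta|)$ over roots of (the minimal polynomial dividing) $B$ is the Mahler measure, bounded by $\|B\|_2\le 2^{h(B)}\sqrt{\deg B+1}=2^{\tilde O(hd)}$, which when raised to $\deg C=d$ gives $2^{\tilde O(hd^2)}$. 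This Mahler-measure saving of a factor of $d$ is the crux of Lemma~\ref{lemma:roots}\eqref{item:roots3} and is missing from your plan. (Relatedly, you assert $h(C)=O(hd)$, but $C=G$ actually satisfies $h(C)\le 2h+\log(d+1)$; the $2^{hd+2\log((2d)!)}$ factor in the statement comes from $h(R)$, not from $h(C)$.)
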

\begin{proof}
By Lemma~\ref{lemma:resultant}, the resultant $R(u)=\Res_T(A(T),T^dA
(u/T))$ has degree at most $d^2$ and height at most $2hd+2\log(
(2d)!)$.  This resultant vanishes at the products $\alpha\beta$ of
roots of $A$, and in particular at the square
$|\alpha|^2=\alpha\overline\alpha$.  By Lemma~\ref{lemma:height}, the
Graeffe polynomial
$G(T):=A(\sqrt{T})A(-\sqrt{T})$ has degree $d$, height
at most $2h+\log(d+1)$ and its positive real roots are the squares of
the real roots of~$A$. The conclusion
follows from Lemma~\ref{lemma:roots}~(\ref{item:roots3}) applied to~$Q=G$
and~$A=R$.
\end{proof}

\begin{corollary}
\label{cor:moduli}
Given $A(T)$ satisfying the same hypotheses as in Lemma~%
\ref{lemma:modsep},
the real positive roots $0<r_1\le\dots\le r_k$ of $A(T)$ and all roots of moduli exactly $r_1,\dots,r_k$ can be computed, with isolating regions of size~$2^{-\tilde{O}(hd^2)}$, in $\tilde{O}(hd^3)$ bit operations.
\end{corollary}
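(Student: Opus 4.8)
The plan is to reduce the whole computation to finding the complex roots of $A$ numerically to a single precision $2^{-\kappa}$ with $\kappa=\tilde{O}(hd^2)$, and then to exploit the Graeffe polynomial $G(T):=A(\sqrt T)A(-\sqrt T)$ introduced in the proof of Lemma~\ref{lemma:modsep} — which has degree $d$, height $2h+\log(d+1)=\tilde{O}(h)$, and whose positive real roots are exactly the squares of the real roots of $A$ — to decide, for each root of $A$, whether its modulus is one of the $r_i$.

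First I would form $G$ from $A$: this is one multiplication of two polynomials of degree $d$ with $\tilde{O}(h)$-bit coefficients, hence $\tilde{O}(hd)$ bit operations, which is negligible. Then I would fix $\kappa=\tilde{O}(hd^2)$ large enough to exceed (a) the root-separation bound $2^{-\tilde{O}(hd)}$ of $A$ and the lower bound on the size of a nonzero imaginary part of a root of $A$ (Lemma~\ref{lemma:roots}(\ref{item:roots2}) and the associated bounds), (b) the root-separation bound of $G$, and (c) a precision loss of $\tilde{O}(hd)$ incurred below, and call the numerical univariate solver (Lemma~\ref{lemma:fsolve}) to produce isolating regions of size $2^{-\kappa}$ for all complex roots of $A$ together with approximations accurate to $2^{-\kappa}$. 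By Lemma~\ref{lemma:fsolve} this costs $\tilde{O}(d^3+d^2h+d\kappa)=\tilde{O}(hd^3)$ bit operations, and since $\kappa$ exceeds the separation bound these regions are genuinely isolating. Using the root bounds of Lemma~\ref{lemma:roots} I would classify each approximate root as real or non-real and, for the real ones, read off its sign, obtaining the list $0<r_1\le\dots\le r_k$ of positive real roots of $A$ (discarding $0$ if $A(0)=0$).

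It remains to group the roots by modulus. For a root $\alpha\neq0$ of $A$, note that $|\alpha|^2=\alpha\overline\alpha$ and $G(|\alpha|^2)=A(|\alpha|)A(-|\alpha|)$, so by Lemma~\ref{lemma:modsep} either $G(|\alpha|^2)=0$ or $\bigl|G(|\alpha|^2)\bigr|\ge 2^{-\tilde{O}(hd^2)}$. Since $|\alpha|\le 2^h+1$ and $G$ has degree $d$ and height $\tilde{O}(h)$, evaluating $G$ at an approximation of $|\alpha|^2$ accurate to $2^{-\kappa}$ yields its value with error $2^{-\kappa+\tilde{O}(hd)}$, so the gap of Lemma~\ref{lemma:modsep} lets us decide which case holds. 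If $G(|\alpha|^2)\neq0$ then $|\alpha|$ is not the modulus of any real root of $A$, so $|\alpha|\notin\{r_1,\dots,r_k\}$. If $G(|\alpha|^2)=0$ then $|\alpha|^2$ is a positive real root of $G$; the positive real roots of $G$ are the $2^{-\tilde{O}(hd)}$-separated numbers $\rho^2$ with $\rho$ a real root of $A$ — all of which we have already computed — and since $\kappa$ exceeds that separation the approximation of $|\alpha|^2$ pins down a unique such $\rho$, whence $|\alpha|=r_i$ precisely when $\rho=\pm r_i$. Doing this for every root $\alpha$ partitions the roots of $A$ into the desired groups. The evaluations of $G$ at the $\le d$ points $|\alpha|^2$ are done simultaneously by fast multipoint evaluation (Lemma~\ref{lemma:feval}) in $\tilde{O}(d(\kappa+h))=\tilde{O}(hd^3)$ bit operations; together with the steps above this gives the total bound $\tilde{O}(hd^3)$, and all isolating regions returned have size $2^{-\kappa}=2^{-\tilde{O}(hd^2)}$.

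The delicate step is the last one: a direct numerical comparison of $|\alpha|$ with the $r_i$ would be governed by the modulus-separation bound $2^{-\tilde{O}(hd^3)}$ (indeed $|\alpha|^2$ and $r_i^2$ are both roots of the resultant $\Res_T(A(T),T^dA(u/T))$, of degree $d^2$), forcing precision $\tilde{O}(hd^3)$ and cost $\tilde{O}(hd^4)$. Routing the test through the value $G(|\alpha|^2)$ and invoking the sharper bound of Lemma~\ref{lemma:modsep} is exactly what keeps the required precision at $\tilde{O}(hd^2)$ and the running time at $\tilde{O}(hd^3)$; everything else is bookkeeping.
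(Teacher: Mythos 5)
Your proposal is correct and rests on the same key idea as the paper: push all the roots of $A$ to precision $2^{-\tilde{O}(hd^2)}$, pass through the Graeffe polynomial $G$, and use the quantitative gap of Lemma~\ref{lemma:modsep} to decide whether $G(|\alpha|^2)=0$ at the claimed precision, so that the overall cost stays at $\tilde{O}(hd^3)$. The only divergence is the final disambiguation step once $G(|\alpha|^2)=0$ is established: the paper's proof evaluates $A(|\alpha|)$ and $A(-|\alpha|)$ separately and invokes Lemma~\ref{lemma:roots}(iii) to bound the nonzero one by $2^{-\tilde{O}(hd)}$, while you instead compare $|\alpha|^2$ against the already-computed values $\rho^2$ (roots of $G$) and use the $2^{-\tilde{O}(hd)}$ root separation of the degree-$d$ polynomial $G$. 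Both routes need only $\tilde{O}(hd)$ extra bits, so the complexities agree; your variant avoids two extra polynomial evaluations per root at the cost of a membership test over the computed real roots, and your phrasing ``pins down a unique such $\rho$'' should be read as pinning down the value $\rho^2$ (the pair $\pm|\alpha|$ may both be roots), after which one simply checks whether a positive $r_i$ belongs to that pair.
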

\begin{proof}
Let $G(T)$ be the polynomial in the proof of Lemma~\ref{lemma:modsep}, $b=\tilde{O}(hd^2)$ be the negative of the logarithm of the bound in Lemma~%
\ref{lemma:modsep},
and $\alpha$ be a root of $A$, so that if $\left|G(|\alpha|^2)\right|
< 2^{-b}$ then actually $\left|G(|\alpha|^2)\right|=0$ and at least one of $\pm|\alpha|$
is a root of $A$.  If we know an approximation $a$ to $\alpha$ such that
$|\alpha-a| < 2^{-(b+h+3)}$, then $|\overline{\alpha} - \overline{a}|
< 2^{-(b+h+3)}$ and using the bound $|\alpha| \leq 2^h + 1$ from
Lemma~\ref{lemma:roots}(\ref{item:roots1}) shows that
\[ \left| |\alpha|^2 - a \overline{a} \right| 
= \left| \alpha \overline{\alpha} + \alpha \overline{a} - \alpha \overline{a} - a \overline{a} \right| 
\leq |\alpha| 2^{-b-h-3} + |\overline{\alpha}|2^{-b-h-3} + 2^
{-2b-2h-6} \leq 2^{-b}.\]
The roots of $A(T)$ can be computed to precision~$2^{-(b+h+2)}$
in~$\tilde{O}(hd^3)$ bit operations by Lemma~\ref{lemma:fsolve}.
By Lemma~\ref{lemma:feval}, this accuracy is sufficient to evaluate
$G (|\alpha|^2)$ to accuracy $2^{-b}$ at all the roots
in~$\tilde{O}(hd^3)$ bit operations.

 Thus, knowing an approximation to $\alpha$ of accuracy $2^{-\tilde{O}
 (hd^2)}$ is sufficient to decide whether or not at least one of
 $\pm|\alpha|$ is a root of $A$, and to decide which real $\alpha$ are
 positive.  With these same roots and that same complexity, one can
 evaluate both~$A(|\alpha|)$ and $A(-|\alpha|)$ separately, to an accuracy $2^
 {-\tilde{O}(hd^2)}$. When only one of them is~0, Lemma~%
 \ref{lemma:roots}~(iii) applied to $Q(T)=A(T)$ or $Q(T)=A(-T)$ shows
 that the other one is at least~$2^{-\tilde{O}(hd)}$, which makes the
 decision possible.
\end{proof}

In practice, one would first compute roots only at precision $\tilde{O}(hd)$, in $\tilde{O}(hd^2)$ bit operations, and then check whether any of the non-real roots has a modulus that could equal one of the real positive roots in view of its isolating interval. Only those roots need to be refined to higher precision before invoking Lemma~\ref{lemma:modsep}. 

\begin{corollary}
\label{cor:equalmoduli}
With the hypotheses and notation of Proposition~%
\ref{prop:signq}, finding for all real roots~$u$ of~$P$ all the
elements~$(z_1,\dots,z_n)$ of~$Z(\bbf)$ such that $|z_i|=|Q_i
(u)|/|P'(u)|$ for $i=1,\dots,n$ can be performed in~$\tilde{O}
(\mH\mD^3)$ bit
operations.

In the case when the variables consist of a single block, and the
elements of~$\bbf$ have degrees at most~$d$ and heights at most~$h$,
then this decision takes~$\tilde{O}(hD^4)$ bit operations.
\end{corollary}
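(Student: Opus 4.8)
The plan is to reduce the problem, separately for each coordinate, to the univariate situation handled by Corollary~\ref{cor:moduli}, and then to recombine the per-coordinate answers through the common parameter~$u$. The observation that makes this work efficiently is that the moduli we must compare — $|z_i| = |Q_i(u)|/|P'(u)|$ for a \emph{real} root $u$ of $P$ — are absolute values of real roots of a univariate integer polynomial with a known small defining polynomial, so that the weak $2^{-\tilde{O}(hd^3)}$ modulus-separation bound of~\cite{GourdonSalvy1996} is replaced by the bound of Lemma~\ref{lemma:modsep}.

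Fix $i \in \{1,\dots,n\}$. First I would apply Proposition~\ref{prop:KronReduce2} to the coordinate $q = z_i$, which has total degree one, so that the sharper estimates of item~3 of that proposition apply: this produces a square-free $\Phi_i \in \mathbb{Z}[T]$ whose roots are exactly the values taken by $z_i$ on $Z(\bbf)$, of degree at most $\mD$ and height $\tilde{O}(\mH)$, computable in $\tilde{O}(\mH\mD^3)$ bit operations. The degenerate case $\deg \Phi_i \le 1$ (a single $z_i$-value, hence a single modulus) and the roots $v$ of $P$ with $z_i(v) = 0$ (isolated via Lemma~\ref{lemma:univar}~(iii)) are dealt with directly, so I may assume $\deg \Phi_i \ge 2$ and that all moduli involved are positive. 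For a real root $u$ of $P$ the value $z_i(u)$ is a \emph{real} root of $\Phi_i$, so $|z_i(u)| = |\rho|$ for some real root $\rho$ of $\Phi_i$; I would therefore apply Corollary~\ref{cor:moduli} to $\Phi_i$ and, separately, to $\Phi_i(-T)$ (which has the same degree and height). In $\tilde{O}(\mH\mD^3)$ bit operations this yields isolating regions of size $2^{-\tilde{O}(\mH\mD^2)}$ for the real roots of $\Phi_i$ and, for each such real root $\rho$, isolating regions for \emph{all} roots of $\Phi_i$ of modulus exactly $|\rho|$. Collecting these gives the partition $\mathcal{M}_i$, into ``modulus classes,'' of the set of roots of $\Phi_i$ whose modulus is attained by a real root of $\Phi_i$.

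It remains to transfer this information to the roots of $P$. Using Lemma~\ref{lemma:univar}~(i) applied to $Q_i$ (which is already part of the Kronecker representation) I would compute $z_i(v) = Q_i(v)/P'(v)$ at every root $v$ of $P$ to precision finer than the root-separation bound $2^{-\tilde{O}(\mH\mD)}$ of $\Phi_i$ (Lemma~\ref{lemma:roots}), at cost $\tilde{O}(\mD^3 + \mD^2\mH)$; since $z_i(v)$ is exactly a root of $\Phi_i$, a nearest-root comparison unambiguously identifies it with one root of $\Phi_i$, hence with its class in $\mathcal{M}_i$ if that root lies in one. (Truncating to the $\tilde{O}(\mH\mD)$-bit separation precision keeps all $O(\mD^2)$ comparisons within $\tilde{O}(\mH\mD^3)$; alternatively one first groups the roots of $P$ by the value of $z_i$ via Lemma~\ref{lemma:univar}~(ii).) Then, for each real root $u$ of $P$, the elements $(z_1,\dots,z_n)$ of $Z(\bbf)$ with $|z_j| = |Q_j(u)|/|P'(u)|$ for all $j$ are exactly those corresponding to the roots $v$ of $P$ such that $z_j(v)$ lies in the class of $z_j(u)$ in $\mathcal{M}_j$ for every $j = 1,\dots,n$, obtained by intersecting these $n$ sets of roots of $P$. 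Summing over $i = 1,\dots,n$ and absorbing the factor $n \le \log_2 \mD$ (since $\mD = d^n$) into the estimate, the total cost is $\tilde{O}(\mH\mD^3)$; in the single-block case $\mD = D$ and $\mH = \tilde{O}(hD)$, giving $\tilde{O}(hD^4)$.

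I expect the only genuine content to be the reduction that routes the modulus test through $\Phi_i$, so that Corollary~\ref{cor:moduli} — equivalently, the favorable separation bound of Lemma~\ref{lemma:modsep} for a modulus that is the absolute value of an algebraic integer with a known small defining polynomial — becomes applicable; the rest is bookkeeping whose cost stays below $\tilde{O}(\mH\mD^3)$ by Proposition~\ref{prop:KronReduce2} and Lemma~\ref{lemma:univar}. The point that needs a little care is the precision at which $z_i(v)$ must be evaluated and compared: fine enough to pin it to a single root of $\Phi_i$, yet coarse enough to keep the $O(\mD^2)$ comparisons within budget, which is guaranteed by the root-separation bound for $\Phi_i$ in Lemma~\ref{lemma:roots}.
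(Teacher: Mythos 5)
Your proposal is correct and follows the same route as the paper: compute the characteristic polynomials $\Phi_{z_i}$ of each coordinate via Proposition~\ref{prop:KronReduce2} (using the degree-one case for sharper bounds) and apply Corollary~\ref{cor:moduli} to each. The paper's proof is a one-liner that leaves implicit the step of transferring the per-coordinate modulus classes back to roots of $P$ through the Kronecker parametrization and intersecting over $i$; your write-up supplies that bookkeeping, and the precision/complexity accounting you give for it is consistent with Lemma~\ref{lemma:univar}.
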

\begin{proof}This is obtained by computing the polynomials~$\Phi_q$
with $q=z_1,\dots,z_n$ from Proposition~\ref{prop:KronReduce2} and
applying the previous corollary.
\end{proof}
\begin{corollary}\label{coro:eqmodcombi}
For the system of
critical-point
equations~\eqref{eq:critsys},  finding all solutions $\mathbf{r}_1,\dots,\mathbf{r}_k$ with
positive real coordinates and determining all solutions with the same
coordinate-wise moduli as each $\mathbf{r}_j$ can be done in~$
\tilde{O}(D^4(d+h))$ bit operations.
\end{corollary}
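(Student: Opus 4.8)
The plan is to compose two tools already in hand: the Kronecker representation of the critical-point system from Corollary~\ref{coro:kro-combi}, and the equal-moduli grouping of Corollary~\ref{cor:equalmoduli}; it then only remains to single out the solutions with positive real coordinates. Under Assumption~(J1) the system~\eqref{eq:critsys} is zero-dimensional, and Corollary~\ref{coro:kro-combi} (using the length-$\tilde{O}(D)$ straight-line program for the critical-point equations, cf.\ Example~\ref{ex:slpcombi}) yields a Kronecker representation $[P(u),\bQ]$ of all the critical points, of degree $\mD = nD = \tilde{O}(D)$ and height $\mH = \tilde{O}(D(d+h))$, in $\tilde{O}(D^3(d+h))$ bit operations. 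These are precisely the parameters $\mD = \sC_{\bn}(\bd)$ and $\mH = \sH_{\bn}(\bETA,\bd)+n\sC_{\bn}(\bd)$ used in Proposition~\ref{prop:signq} and Corollary~\ref{cor:equalmoduli}, so those statements apply directly to $[P(u),\bQ]$.

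Next I would invoke Corollary~\ref{cor:equalmoduli} on $[P(u),\bQ]$: for every real root $u$ of $P$ it produces the complete list of elements $(z_1,\dots,z_n)\in Z(\bbf)$ with $|z_i| = |Q_i(u)|/|P'(u)|$ for all $i$, at cost $\tilde{O}(\mH\mD^3) = \tilde{O}\bigl(D(d+h)\cdot D^3\bigr) = \tilde{O}(D^4(d+h))$ bit operations. By the integrality remark following Example~\ref{ex:apery-kro1}, a real root of $P$ is exactly a root whose corresponding solution has all real coordinates; among those I would keep the ones for which every coordinate $z_i(u)=Q_i(u)/P'(u)$ is positive, applying the sign test of Proposition~\ref{prop:signq}(iii)--(iv) to $q = z_1,\dots,z_n$. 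Since each $Q_{z_i}$ already belongs to the representation, this costs $\tilde{O}(\mD^2\mH) = \tilde{O}(D^3(d+h))$ in total, which is absorbed by the previous step. The kept roots give $\mathbf r_1,\dots,\mathbf r_k$, and the grouping already computed records, for each $\mathbf r_j$, all solutions with the same coordinate-wise moduli; adding up the costs gives the claimed $\tilde{O}(D^4(d+h))$ bound.

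I expect the only genuinely delicate ingredient to be the precision management in the equal-moduli step, which is, however, already handled upstream. A naive test of ``$|z_i|$ equals $r_{j,i}$'' would rely on the general separation $2^{-\tilde{O}(hd^3)}$ between moduli of roots of an integer polynomial, too coarse for the target complexity; the gain, as in Section~\ref{sec:groupmod}, is that here every reference value $r_{j,i}$ is itself a \emph{positive real root} of the integer polynomial $\Phi_{z_i}$ from Proposition~\ref{prop:KronReduce2}, so Lemma~\ref{lemma:modsep} and Corollary~\ref{cor:moduli} furnish the sharper bound $2^{-\tilde{O}(hd^2)}$ --- which is exactly what Corollary~\ref{cor:equalmoduli} packages. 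Everything else is a routine substitution of $\mD = nD$ and $\mH = \tilde{O}(D(d+h))$ into bounds already established.
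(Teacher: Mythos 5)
Your proof is correct and follows essentially the same route as the paper, which simply says ``a direct use of the values provided by Corollary~\ref{coro:kro-combi}'': you substitute $\mD=nD$ and $\mH=\tilde{O}(D(d+h))$ into Corollary~\ref{cor:equalmoduli} and use Proposition~\ref{prop:signq}(iii)--(iv) for the positivity test. Your elaboration (especially the observation that the equal-modulus step absorbs the positivity check and the reminder that the $2^{-\tilde O(hd^2)}$ bound, not the generic $2^{-\tilde O(hd^3)}$ one, is what makes the budget work) merely spells out what the paper leaves implicit.
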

\begin{proof}A direct use of the values provided by 
Corollary~\ref{coro:kro-combi}.
\end{proof}


\section{Algorithms for Effective Asymptotics}
\label{sec:MainAlgos}

The algebraic toolbox provided by the Kronecker
representation and its numerical extension can now be applied to
perform the decisions needed in our algorithms from Section~%
\ref{sec:algo-overview},
both in the
combinatorial and non-combinatorial cases. The latter one has a
higher complexity, with a difference of exponent only.

\begin{theorem}\label{thm:final-complexity}
Let $F(\bz)=G(\bz)/H(\bz)$ be a rational function with numerator and denominator in~$\mathbb{Z}
[z_1,\dots,z_n]$ of degrees at most~$d$
and heights at most~$h$. Let $D=d^n$ and $\sL=D^2(D+h)(d+h)^2d=O
(D^3d^3h^3)$.

If $F$ is combinatorial and Assumptions (A0)--(A3) and (J1) hold, then
Algorithm~\ref{alg:EffectiveCombAsm} is a probabilistic algorithm that computes
the minimal critical points of~$F$
in $\tilde{O}(D^4(d+h))$ bit operations.

When $F$ is not combinatorial but Assumptions (A0)--(A4) and (J2)
hold, then Algorithm~\ref{alg:MinimalCritical} is a probabilistic
algorithm that computes the minimal critical points of~$F$
in $\tilde{O}(2^{3n}D^9d^5h)$ bit operations.

In both cases, Algorithm~\ref{alg:ACSV} is a probabilistic
algorithm that uses these results and
$\tilde{O}(\sL)$ bit operations to compute three rational
functions $A,B,C \in \mathbb{Z}(u)$, a square-free polynomial $P \in \mathbb{Z}[u]$ and a list $U$ of roots of $P(u)$ (specified by isolating regions) such that
\begin{equation} f_{k,\dots,k} =
 (2\pi)^{(1-n)/2}\left(\sum_{u \in U} A(u)\sqrt{B(u)} \cdot C(u)^k \right)k^{(1-n)/2}
 \left(1 + O\left(\frac{1}{k}\right) \right). \label{eq:EffectiveCombAsm}
 \end{equation}
The values of $A(u),B(u)$ and $C(u)$ can be refined to precision $2^
{-\kappa}$ at all elements of $U$ in $\tilde{O}(\sL+D\kappa)$ bit
operations.
\end{theorem}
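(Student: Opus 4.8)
The plan is to prove Theorem~\ref{thm:final-complexity} by assembling the machinery developed in Sections~\ref{sec:Algorithms} and~\ref{sec:numkro}, applied to the two algorithms for minimal critical points (combinatorial and non-combinatorial) followed by the asymptotic-extraction algorithm. The proof decomposes into three parts matching the three complexity claims, and each part is essentially a bookkeeping exercise in substituting the system-specific degree and height bounds into the generic complexity statements already proved.

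\emph{Combinatorial case.} First I would recall that, by Proposition~\ref{prop:algo-combi}, Assumption (J1) guarantees the extended system~\eqref{eq:extended-sys} is zero-dimensional and that Algorithm~\textsf{Minimal Critical Points in the Combinatorial Case} correctly identifies the minimal critical points under (A0)--(A3) and (J1). The cost breaks down as: (Step 1) compute a Kronecker representation of~\eqref{eq:extended-sys}, which by Corollary~\ref{coro:kro-combi} costs $\tilde{O}(D^3d^2(d+h))$ bit operations and has degree $\le ndD$, height $\tilde{O}(Dd(d+h))$; (Step 2) select roots with positive real coordinates, group by the values of each coordinate, and decide whether $t\in(0,1)$ or $t=1$, which by Corollary~\ref{coro:tin01} costs $\tilde{O}(D^3d^3(d+h))$; (Steps 3--4) identify the selected $\bzeta$ among the critical points from~\eqref{eq:critsys} and recover the full torus $\mathcal{U}$ of minimal critical points having the same coordinate-wise moduli, which by Corollary~\ref{coro:eqmodcombi} costs $\tilde{O}(D^4(d+h))$ bit operations. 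The dominant term is the grouping by modulus, giving $\tilde{O}(D^4(d+h))$ overall, as claimed.

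\emph{Non-combinatorial case.} Here the bookkeeping is identical in structure but uses the larger system~\eqref{eq:GenSys1}--\eqref{eq:GenSys6}. By Proposition~\ref{prop:correct-non-combi}, Assumption (J2) makes this system zero-dimensional and the non-combinatorial algorithm correct under (A0)--(A3) and (J2). Corollary~\ref{kro-non-combi} gives a Kronecker representation of degree $\le 2^{n-1}dn^4D^3$ and height $\tilde{O}(h2^nd^3D^3)$, computable in $\tilde{O}(4^nhd^4D^7)$ bit operations; Corollary~\ref{coro:tin01-complex} then performs the coordinate-grouping and the $t\in(0,1)$ test in $\tilde{O}(2^{3n}D^9d^5h)$ bit operations, which dominates and yields the stated bound. (The torus-recovery step is subsumed: the selection condition and the $\lambda_R=\lambda_I=0$ test are covered by the same Proposition~\ref{prop:signq} applications.)

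\emph{Asymptotic extraction.} For the final statement, once $\mathcal{U}$ is known inside the Kronecker representation of the critical-point system~\eqref{eq:critsys}, the contributions~\eqref{eq:diag_asm} are rational expressions in the coordinates $\zeta_i$, the common value $\lambda$, and the Hessian entries $U_{k,\ell}$ from~\eqref{eq:Ukl}, all of which are polynomials in $\bz$ of degree $O(d)$ and height $O(h)$ evaluated at points of $Z(\mathrm{Crit})$. Applying Proposition~\ref{prop:KronReduce2} (single-block specialization: $\sC=D$, $\sH=\tilde{O}(hd^{n-1}+D)$) produces the univariate parametrizations $A(u),B(u),C(u)$; here $C(u)=\prod_i Q_i(u)/P'(u)$ (up to the $-k$-th power convention), $B(u)$ packages the Hessian determinant via~\eqref{eq:Hess}, and $A(u)$ the remaining scalar factor; the branch of the square root is fixed as in Proposition~\ref{prop:diag_asm}. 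The cost of assembling these, plus one final grouping to collapse the list $U$ to distinct exponential-growth values, is dominated by $\tilde{O}(\sL)$ with $\sL = D^2(D+h)(d+h)^2d$, matching the quoted $O(D^3d^3h^3)$; and evaluating $A,B,C$ at all $u\in U$ to precision $2^{-\kappa}$ costs an extra $\tilde{O}(D\kappa)$ by Lemma~\ref{lemma:univar}(i), giving $\tilde{O}(\sL+D\kappa)$.

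\emph{Main obstacle.} The genuinely delicate point is \emph{not} any single complexity estimate but ensuring that the output list $U$ is correct modulo the subtlety flagged around Proposition~\ref{prop:add-crit-pts}: the asymptotics is a sum over \emph{all} critical points on the dominant torus $T(\bzeta)$, not just $\bzeta$ itself, and these must be detected by equality of coordinate-wise moduli rather than equality of the separating-form value. This is exactly what the modulus-grouping corollaries (\ref{cor:moduli}--\ref{coro:eqmodcombi}) are built to handle, and the reason the $\tilde{O}(D^4)$ term (via the $2^{-\tilde{O}(hd^2)}$ separation bound of Lemma~\ref{lemma:modsep}) dominates the combinatorial cost. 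I would therefore devote the bulk of the write-up to verifying that the selection in Step~2 of each algorithm, combined with the modulus-grouping step, returns precisely the set $\mathcal{U}$ required by Propositions~\ref{prop:diag_asm} and~\ref{prop:add-crit-pts}, so that summing~\eqref{eq:diag_asm} over $u\in U$ gives~\eqref{eq:EffectiveCombAsm}; the complexity bounds then follow by direct substitution as above.
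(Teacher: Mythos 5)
Your proposal is correct and follows essentially the same route as the paper's own proof: the paper likewise cites Propositions~\ref{prop:algo-combi} and~\ref{prop:correct-non-combi} for correctness, then aggregates Corollary~\ref{coro:kro-combi}, Corollary~\ref{coro:tin01}, and Corollary~\ref{coro:eqmodcombi} (combinatorial case) and Corollaries~\ref{kro-non-combi}, \ref{coro:tin01-complex} (non-combinatorial case), and finishes by applying Propositions~\ref{prop:KronReduce2} and~\ref{prop:signq} to bound the cost of assembling $A,B,C$. Your identification of the modulus-grouping step as the dominant-cost operation in the combinatorial case, and of the $t\in(0,1)$ filtering as dominant in the non-combinatorial case, matches the paper exactly.
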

\begin{algorithm}[t]
\DontPrintSemicolon
\KwIn{Rational function $F(\bz) = G(\bz)/H(\bz)$ which satisfies 
one of the sets of assumptions of Theorem~\ref{thm:final-complexity}}
\KwOut{Polynomials $A,B,C,P$ and a set $\bU$ of roots of $P$ such
that the coefficients $f_{k,\dots,k}$ of $F$ behave
asymptotically as
in Equation~\eqref{eq:EffectiveCombAsm}, or \textsc{fail}}
\tcc{\qquad{\color{blue} Step 1: determine the set $
\mathcal{C}$ of critical points}}
$\operatorname{Crit}\gets \left\{H,  z_1(\partial H/\partial z_1)-\lambda, 
\dots ,z_n(\partial H/\partial z_n)-\lambda\right\}$\;
$[P,\bQ,u] \gets {\sf KroneckerRep}(\operatorname{Crit},
((z_1,\dots,z_n),(\lambda)))$\;
\tcc{\qquad{\color{blue}Step 2: Construct the subset $\bU
\subset\mathcal{C}$ of minimal critical points}}
\lIf{$F$ is known to be combinatorial}{call Algorithm~\ref{alg:EffectiveCombAsm}
with $H,P,\bQ,u$}\lElse{call Algorithm~\ref{alg:MinimalCritical} with
$H,P,\bQ,u$}
\tcc{\qquad{\color{blue}Step 3: check $\mH$ and $G$ at the elements of
$\mathcal{U}$ and return}}
$\tilde{\mH} \gets$ determinant of the matrix $
\lambda\mH$ with $\mH_{i,j}$ defined by Equation~\eqref{eq:Hess}\;
$\operatorname{Sys}\gets\operatorname{Crit}\cup\left\{h-
\tilde{\mH}(\lambda,\bz),T-z_1\dotsm z_n,g+G(\bz)\right\}$ with new
variables $h,T,g$\;
$[P,(Q_1,\dots,Q_n,Q_\lambda,Q_{\tilde{\mH}},Q_{T},Q_{-G}),u] \gets{\sf KroneckerRep}(\operatorname{Sys},
((z_1,\dots,z_n),(\lambda),(h),(T),(g)))$ using the linear form~$u$
from Step~1; the polynomials $Q_{\tilde{\mH}},Q_{T},Q_{-G}$ 
are described in Prop.~\ref{prop:KronReduce2}\;
\lIf{$Q_{\tilde{\mH}}(u)=0$ at any element of $\bU$ or $Q_{-G}(u)=0$
at all of them}{\Return{\textsc{fail}}}
\Return $(Q_{-G}/Q_{\lambda}, Q_n^2Q_{\lambda}^{n-1}Q_T^{n-3}/Q_{\tilde{\mH}}\cdot (P')^{3-2n}, P'/Q_T, P, \bU)$
\caption{{\sf ACSV}}
\label{alg:ACSV}
\end{algorithm}

\begin{algorithm}
\DontPrintSemicolon
\KwIn{Polynomial $H(\bz)$ and Kronecker representation $P,\bQ,u$ of
the set of critical points as in Eq.~\eqref{eq:KronRep}}
\KwOut
{Set $\bU$ of roots of $P$ giving the \emph{minimal}
critical points assuming combinatoriality, or \sc{fail}}
\tcc{\qquad{\color{blue} Step 1: determine the set $\mathcal{S}$ (using separating linear form $u$)}}
$\bbf \gets \left\{H,  z_1(\partial H/\partial z_1)-\lambda, 
\dots ,z_n(\partial H/\partial z_n)-\lambda, H
(tz_1,\dots,tz_n)\right\}$\;
$[\tilde{P},\tilde{\bQ}] \gets {\sf KroneckerRep}(\bbf,
((z_1,\dots,z_n),(\lambda),(t)))$\;
\tcc{\qquad{\color{blue} Step 2: find $\bzeta$ a minimal critical
point with
real positive coords}}
$[\tilde{P},\tilde{\bQ},\tilde{\mathbf{U}}] \gets {\sf
NumericalKroneckerRep}
(\tilde{P},\tilde{\bQ},\kappa)$
with $\kappa=\tilde{O}(D^2d(d+h))$ given by Corollary~%
\ref{coro:tin01}\;
Use this to group the roots of~$\tilde{P}$ according to the distinct
values
they give to each~$\tilde{Q}_i/\tilde{P}'$ at the real roots of
$\tilde{P}$ and to compute the
corresponding signs\;
$S\gets\{\{v\in\tilde{\mathbf{U}}\cap\mathbb{R}\mid \tilde{Q}_1(v)/
\tilde{P}'
(v)>0\wedge\dots\wedge
\tilde{Q}_n
(v)/\tilde{P}'(v)>0\}\}$\;

\ForEach{coordinate $v\in\{1,\dots,n\}$}{
	refine the partition $S$ according to the distinct values
	taken by~$\tilde{Q}_v/\tilde{P}'$ on its elements
}
\lForEach{set $s\in S$}{
	\lIf{one of the values taken by~$t$ on the elements of $s$ is in $
	(0,1)$}{$S\gets S\setminus\{s\}$}
}
\vskip-1em 
\lIf{the partition $S$ is not of the form $\{\{u_\zeta\}\}$ or if
$Q_\lambda(u_\zeta)=0$}{\Return\textsc{fail}}

\tcc{\qquad{\color{blue}Step 3: identify~$\bzeta$ among the elements
of~$\mathcal{C}$}}
$[{P},{\bQ},{\mathbf{U}}] \gets {\sf
NumericalKroneckerRep}
({P},{\bQ},\kappa)$
with $\kappa=\tilde{O}(D^2(d+h))$ given by Corollary~%
\ref{coro:tin01}\;
Use this to group the roots of~$P$ by the distinct values they give
to each coordinate\;
Identify the value $u_\zeta$ of~$u$ corresponding to $\bzeta$ from its
numerical
coordinates obtained in Step~2\;
\tcc{\qquad{\color{blue}Step 4: construct the set $\mathcal{U}$ of
minimal
critical points and return}}
Refine $\mathbf{U}$ to isolating regions for the complex roots of size
at most $2^{-\kappa}$
with $\kappa=\tilde{O}(D^3(d+h))$ given by Corollary~\ref{cor:moduli}.

\Return{subset of $\bU$ such that
$|z_1|=\zeta_1,\dots,|z_n|=\zeta_n$, where $\zeta_1,\dots,\zeta_n$
are given by $u_\zeta$.}
\caption{{\sf Minimal Critical Points in the Combinatorial Case}}
\label{alg:EffectiveCombAsm}
\end{algorithm}
\begin{algorithm}
\DontPrintSemicolon
\KwIn{Polynomial~$H$ and Kronecker representation $P,\bQ,u$ of the set
of critical
points, as in Eq.~\eqref{eq:KronRep}}
\KwOut
{Set $\bU$ of roots of $P$ corresponding to the \emph{minimal}
critical points, or \sc{fail}}
\tcc{\qquad{\color{blue}Step 1: determine the set $\mathcal S$}}
$\tilde{\bbf} \gets $ Polynomials in Equations~\eqref{eq:GenSys1} -- 
\eqref{eq:GenSys6}\;
$[\tilde{P},\tilde{\bQ}] \gets {\sf KroneckerRep}
(\tilde{\bbf},((\ba,\bb),
(\bx,\by),(\lambda_R),(\lambda_I),(t),(\nu))$ using a linear
form in
$(\ba,\bb,\lambda_R,\lambda_I)$ only\;
\tcc{\qquad{\color{blue} Step 2: construct the set $\mathcal U$ of
minimal
critical points}}
$[\tilde{P},\tilde{\bQ},\tilde{\mathbf{U}}] \gets {\sf
NumericalKroneckerRep}
(\tilde{P},\tilde{\bQ},\kappa)$
with $\kappa=\tilde{O}(2^{2n}D^6d^4h)$ given by
Corollary~\ref{coro:tin01-complex}\;
Use this to group the roots of~$\tilde{P}$ according to the distinct
values
they give to each $\tilde{Q}_{a_i}/\tilde{P}'$ and $\tilde{Q}_
{b_i}/\tilde{P}'$\;
$S\gets\{\tilde{\mathbf{U}}\}$\;

\ForEach{coordinate $v\in\{a_1,\dots,a_n,b_1,\dots,b_n\}$}{
	refine the partition $S$ according to the distinct values
	taken by~$\tilde{Q}_v/\tilde{P}'$ on its elements
}
\ForEach{set $s\in S$}{
	\lIf{$a_i=b_i=0$ on the elements of~$s$ for some~$i$}{\Return
	{\textsc{FAIL}}}
	\lIf{one of the values taken by~$t$ on the elements of $s$ is in $
	(0,1)$}{$S\gets S\setminus\{s\}$}
}
$\mathcal{S}_{a,b}\gets$ values of $
(\tilde{Q}_{a_i}/\tilde{P}',\tilde{Q}_{b_i}/\tilde{P}')$ for
$i=1,\dots,n$ at the elements of~$\cup_
{s\in
S}s$\;
\tcc{\qquad{\color{blue} Step 3: identify the elements
of~$\mathcal{U}$ within~$\mathcal{C}$ and return}}
$[P,\bQ,\mathbf{U}] \gets {\sf NumericalKroneckerRep}
(P,\bQ,\kappa)$
with $\kappa=\tilde{O}(D^2d(d+h))$ given by Corollary~%
\ref{coro:tin01}\;
\Return{roots of~$P$ giving
$z_1=a_1+ib_1,\dots,z_n=a_n+ib_n$ for some $(\ba,\bb)\in\mathcal{S}_
{a,b}$}
\caption{{\sf Minimal Critical Points in the Non-Combinatorial Case}}
\label{alg:MinimalCritical}
\end{algorithm}

Algorithm~\ref{alg:ACSV} implements the general strategy: first,
a Kronecker representation of the solution set of the critical point
equations is
computed; next, the set of minimal critical points is extracted from
that set using Algorithm~\ref{alg:EffectiveCombAsm} in the
combinatorial case and Algorithm~\ref{alg:MinimalCritical} otherwise;
finally, the conditions of regularity of the Hessian
and non-vanishing of the numerator are checked  and the
polynomials~$A$, $B$ and $C$ are computed and returned.

The correctness of this algorithm follows from Propositions~\ref{prop:algo-combi} 
and~\ref{prop:correct-non-combi}. We now turn to the complexity
analysis of each of the steps.

\subsection{Complexity of the Steps in Algorithm~\ref{alg:ACSV}}
\paragraph{Step~1}
By Corollary~\ref{coro:kro-combi},
the Kronecker representation $[P,\bQ]$ can be computed
in $\tilde{O}(D^3(d+h))=\tilde{O}(\sL)$ bit operations.

\paragraph{Step 3}
The entries of the $(n-1)\times(n-1)$ matrix $\lambda{\mH}$ are
polynomials of degrees at
most $2d-2$ and heights at most $h + \log(d^2) + 2$. Thus its
determinant, equal to the determinant of ${\mH}$ multiplied by
$\lambda^{n-1}$, has degree at most
$2(n-1)(d-1)$ and height in $\tilde{O}(n(h+\log d))$.
Proposition~\ref{prop:KronReduce2}, together with Corollary~\ref{coro:kro-combi}, then
implies that the
polynomial $Q_{\tilde{\mH}}$ has degree at most~$nD$ and
height in $\tilde{O}(D(d+h)^2)$. The complexity of evaluation of the
second derivatives is bounded by~$O(nD)$ and a straight-line program
of length only~$O(n^4)$ for the determinant is given by Berkowitz's
algorithm. The complexity for the computation of~$Q_{\tilde{\mH}}$ is
therefore~$\tilde{O}(\sL)$ bit
operations, with $\sL$ as in the theorem.
The other polynomials, $Q_T$ and $Q_{-G}$ are obtained in a similar
way. The polynomial $G(\bz)$ has degree at
most $d$ and height at most $h$, and the polynomial $T = z_1 \cdots
z_n$ has degree $n$ and height 1.  Thus, Proposition~%
\ref{prop:KronReduce2} shows that the bounds for~$Q_{\tilde{\mH}}$ also
hold for them.

Testing that the Hessian is not singular is also achieved with a
complexity bound of~$O(\sL)$ bit operations, by
Proposition~\ref{prop:signq}, and similarly for the vanishing of~$G$.

Finally, 
part~(i) of Proposition~\ref{prop:signq} shows that
\[ A(u) = \frac{Q_{-G}(u)}{Q_{\lambda}(u)}, \qquad B(u) = \frac{Q_n(u)^2Q_{\lambda}(u)^{n-1}Q_T(u)^{n-3}}{Q_{\tilde{\mH}}(u)} \cdot P'(u)^{3-2n}, \qquad C(u) = \frac{P'(u)}{Q_T(u)} \] 
can be computed at all roots of $P$ to $\kappa$ bits of precision
in $\tilde{O}(\sL+D\kappa)$
bit operations.

\subsection{Complexity of Algorithm~\ref{alg:EffectiveCombAsm}}

\paragraph{Step~1}
By Corollary~\ref{coro:kro-combi},
the Kronecker representation $[P,\bQ]$ can be computed
in $\tilde{O}(D^3d^2(d+h))$ bit operations. 

\paragraph{Step~2}
Corollary~\ref{coro:tin01} then shows that all the necessary decisions
in Step~2 can be performed in~$\tilde{O}(D^3d^2(d+h))$ bit operations.

\paragraph{Steps 3 and 4} One must determine and refine the roots
of $P$ to accuracy $\tilde{O}(D^3(d+h))$.
Lemma~\ref{lemma:fsolve} and Corollary~\ref{coro:eqmodcombi} show that the
necessary computations can be done in~$\tilde{O}(D^4(d+h))$ bit operations.

\subsection{Complexity of Algorithm~\ref{alg:MinimalCritical}}

\paragraph{Step 1} The computation of the Kronecker representation has bit
complexity~$\tilde{O}(4^nD^7d^4h)$ by
Corollary~\ref{kro-non-combi}.

\paragraph{Step 2} This step starts by a numerical resolution and
grouping
of roots in~$\tilde{O}(D^6
(h+d))$ bit operations by Corollary~\ref{coro:tin01-complex}.
The refinement of the partition has negligible cost  and the most
expensive operation is the filtering by the values of~$t$ in~$(0,1)$,
which is performed in~$\tilde{O}(2^{3n}D^9d^5h)$ bit operations, by
Corollary~\ref{coro:tin01-complex} again. This complexity dominates
that of all the previous and subsequent steps:
the next operation is a numerical Kronecker representation, whose
complexity is much smaller.

\section{Additional Examples}
\label{sec:Examples}

We now discuss additional examples highlighting the above
techniques\footnote{The calculations for these examples, together with
a preliminary Maple implementation of our algorithms for the
combinatorial case and automated examples using that implementation,
can be found in accompanying Maple worksheets at 
\url{http://diagasympt.gforge.inria.fr}. This preliminary implementation computes the Kronecker
representation through Gr{\"o}bner bases computations, meaning it does not run in the complexity stated above, and does not use certified numerical computations.}.

\begin{example}
\label{ex:Apery3a}
The generating function of Ap{\'e}ry's sequence is the diagonal of the combinatorial rational function $F(w,x,y,z) = 1/H(w,x,y,z)$, where
\[H(w,x,y,z) = 1-w(1+x)(1+y)(1+z)(1+y+z+yz+xyz)\]
defines a smooth algebraic set $\mV(H)$. Taking the system 
\[ H(w,x,y,z), \quad w(\partial H/\partial w) - \lambda, \quad \dots \quad , z(\partial H/\partial z) - \lambda, \quad  H(tw,tx,ty,tz), \]
we try the linear form $u=w+x+y+z+t$ and find that it is separating and a Kronecker representation is given by
\begin{itemize}
	\item a polynomial $P(u)$ of degree 14 and coefficients of absolute value less than $2^{65}$;
	\item polynomials $Q_w,Q_x,Q_y,Q_z,Q_{\lambda},Q_t$ of degrees at most 13 and coefficients of absolute value less than $2^{68}$.  
\end{itemize}

The critical points of $F$ are determined by the roots of 
\[ \tilde{P}(u) = \gcd(P,P'-Q_t) = u^2+160u-800,\] 
as these are the solutions of the polynomial system where $t=1$.  Substituting the roots 
\[ u_1 = -80+60\sqrt{2}, \qquad u_2 =  -80-60\sqrt{2} \]
of $\tilde{P}$ (which can be solved exactly since $\tilde{P}$ is quadratic) into the Kronecker representation determines the two critical points
\begin{align*}
\bp = \left(\frac{Q_w(u_1)}{P'(u_1)},\frac{Q_x(u_1)}{P'(u_1)},\frac{Q_y(u_1)}{P'(u_1)},\frac{Q_z(u_1)}{P'(u_1)}\right) &= \left( -82+58\sqrt{2}, 1+\sqrt{2}, \frac{\sqrt{2}}{2},\frac{\sqrt{2}}{2} \right)\\
\bs = \left(\frac{Q_w(u_2)}{P'(u_2)},\frac{Q_x(u_2)}{P'(u_2)},\frac{Q_y(u_2)}{P'(u_2)},\frac{Q_z(u_2)}{P'(u_2)}\right) &= \left( -82-58\sqrt{2}, 1-\sqrt{2}, \frac{-\sqrt{2}}{2},\frac{-\sqrt{2}}{2} \right)
\end{align*}
of which only $\bp$ has non-negative coordinates and thus could be minimal.  Determining the roots of $P(u)=0$ to sufficient precision shows that there are 6 real values of $t$, and none lie in $(0,1)$.  Thus, $\bp$ is a smooth \emph{minimal} critical point, and there are no other critical points with the same coordinate-wise modulus.  

Once minimality has been determined, the Kronecker representation of this system can be reduced to a Kronecker representation which encodes only critical points.  This is done using $\tilde{P}$ by determining the inverse $P'(u)^{-1}$ of $P'$ modulo $\tilde{P}$ (which exists as $\tilde{P}$ is a factor of $P$, and $P$ and $P'$ are co-prime) and setting 
\[ \tilde{Q}_v(u) := Q_v(u) \tilde{P}'(u)P'(u)^{-1} \text{ mod } \tilde{P}(u) \] 
for each variable $v \in \{w,x,y,z,\lambda\}$.  In this case we obtain a Kronecker representation of the critical point equations given by
\[ \tilde{P}(u) = u^2+160u-800=0 \]
and
\[ w=\frac{-164u+800}{2u+160}, \quad x=\frac{2u+400}{2u+160}, \quad y=z=\frac{120}{2u+160}, \quad \lambda = \frac{-2u-160}{2u+160}. \]
Computing the determinant of the polynomial matrix $\tilde{\mH}$ obtained from multiplying each row of the matrix in Equation~\eqref{eq:Hess} by $\lambda$ shows that the values of this determinant, together with the polynomial $T=wxyz$, can be represented at solutions of the Kronecker representation by
\[ \frac{Q_{\tilde{\mH}}}{\tilde{P}'} = \frac{96u-480}{2u+160}, \qquad \frac{Q_T}{\tilde{P}'} = \frac{34u-160}{2u+160}. \]
Ultimately, noting that $-G=-1$ for this example, we obtain diagonal asymptotics
\[ f_{k,k,k,k} = \left(\frac{u+80}{17u-80}\right)^k \cdot k^{-3/2} \cdot \frac{\sqrt{6u+480}}{48\pi^{3/2} \sqrt{5-u}}\left(1+O\left(\frac{1}{k}\right)\right), \quad u \in \bU \]
where $\bU = \{u_1\} = \{-80+60\sqrt{2}\}$. In general, when $\tilde{P}$ is not quadratic, $\bU$ contains isolating intervals of roots of $\tilde{P}$.  Since we have $u$ exactly here we can determine the leading asymptotic term exactly,
\[ \frac{(17+12\sqrt{2})^k}{k^{3/2}} \cdot \frac{\sqrt{48+34\sqrt{2}}}{8\pi^{3/2}}\left(1+O\left(\frac{1}{k}\right)\right)=\frac{(33.97056\ldots)^k}{k^{3/2}}\left(0.22004\ldots+O\left(\frac{1}{k}\right)\right).\]
\end{example}

\begin{example}
\label{ex:Apery3b}
A second Ap{\'e}ry sequence $(c_k)$, related to the irrationality of
$\zeta(3)$, has for generating function $C(z)$ the diagonal of the combinatorial rational function
\[ \frac{1}{1-x-y-z(1-x)(1-y)} = \frac{1}{1-x-y} \cdot \frac{1}{1-z} \cdot \frac{1}{1-\frac{xyz}{(1-x-y)(1-z)}}. \]  
An argument analogous to the one in Example~\ref{ex:Apery3a}, detailed in the accompanying Maple worksheet, shows that there are two critical points
\[ \bp = \left(\frac{3-\sqrt{5}}{2}, \frac{3-\sqrt{5}}{2}, \frac{-1+\sqrt{5}}{2} \right) \quad \text{and} \quad
\bs = \left(\frac{3+\sqrt{5}}{2}, \frac{3+\sqrt{5}}{2}, \frac{-1-\sqrt{5}}{2} \right), \]
of which $\bp$ is minimal.  Ultimately, one obtains 
\[ c_k = \left(\frac{2(5-u)}{11u-30}\right)^k \cdot k^{-1} \cdot \frac{(10-2u)(2u-5)}{\pi(4u-10)\sqrt{10(5u-14)(u-5)}}\left(1+O\left(\frac{1}{k}\right)\right), \]
where $u = 5-\sqrt{5}$ is a root of the polynomial $P(u) = u^2-7u+12$ which can be determined explicitly as $P$ is quadratic, so 
\[ c_k = \frac{\left(\frac{11}{2}+\frac{5\sqrt{5}}{2}\right)^k}{k} \cdot \frac{\sqrt{250+110\sqrt{5}}}{20\pi}\left(1 + O\left(\frac{1}{k}\right)\right). \]
When combined with the BinomSums Maple package of Lairez\footnote{Available at \url{https://github.com/lairez/binomsums}.}, our preliminary implementation allows one to automatically go from the specification 
\[ c_k := \sum_{j=0}^k \binom{k}{j}^2\binom{k+j}{j} \]
to asymptotics of $c_k$, proving the main result of Hirschhorn~\cite{Hirschhorn2015}.  
\end{example}

\begin{example}
The rational function
\[F(x,y) = \frac{1}{(1-x-y)(20-x-40y)-1},\]
has a smooth denominator. It is combinatorial as can be seen by
writing
\[ F(x,y) = \frac{1}{1-x-y} \cdot \frac{1}{20 - x - 4y - \frac{1}{1-x-y}}. \]
A Kronecker representation of the system 
\[ H(x,y), \quad x(\partial H/\partial x) - \lambda, \quad y(\partial H/\partial y) - \lambda, \quad  H(tx,ty), \]
using the linear form $u=x+y$ (which separates the solutions of the
system) shows that the system has 8 solutions, of which 4 have $t=1$ and correspond to critical points.  There are two critical points with positive coordinates: 
\[ (x_1,y_1)\approx (0.548, 0.309) \qquad \text{and} \qquad (x_2,y_2)\approx (9.997, 0.252).\] 
Since $x_1<x_2$ and $y_1>y_2$, it is not immediately clear which (if any) should be a minimal critical point. However, examining the full set of solutions, not just those where $t=1$, shows there is a point with approximate coordinates $(0.092x_2,0.092y_2)$ in $\mV$, so that $x_1$ is the minimal critical point.  To three decimal places the diagonal asymptotics have the form 
\[f_{k,k} = (5.884\ldots)^k k^{-1/2} \left(0.054\ldots + O\left(\frac{1}{k}\right)\right).  \]
\end{example}

\begin{example}
Straub~\cite{Straub2008}, following work of Gillis et al.~\cite{GillisReznickZeilberger1983}, studied the coefficients of
\[ F_c(x,y,z) = \frac{1}{1-(x+y+z)+cxyz} \]
for real parameter $c$, showing that $F_c$ has non-negative coefficients if and only if $c \leq 4$. Baryshnikov et al.~\cite{BaryshnikovMelczerPemantleStraub2018} studied this family of functions, and related families, through the asymptotic lens of ACSV. We can give asymptotics for any fixed $c$ using our algorithms. 

For instance, when\footnote{This choice simplifies the constants involved, making it easier to display the results here, but the runtime of Algorithm~\ref{alg:MinimalCritical} is largely independent of $c$.} $c=81/8$ there are three critical points, where $x=y=z$ and
\[ x \in \left\{ -2/3, -1/3 \pm i/\left(3\sqrt{3}\right) \right\}. \] 
The ideal $J$ generated by Equations
\eqref{eq:GenSys1}--\eqref{eq:GenSys6} contains a degree~37
polynomial $P(t) = \left(81t^3+36t^2+4t-9\right)(1-3t)Q(t)$, 
where $Q(t)$ contains no real roots in $(0,1)$. 
Adding $\left(81t^3+36t^2+4t-9\right)(1-3t)$ to the ideal $J$
shows that the critical point $(-2/3,-2/3,-2/3)$ is the only one which is 
non-minimal.

The Hessian of $\psi$ takes the values
\[ \left(2/21 \pm i 10\sqrt{3}/63\right) \begin{pmatrix} 2 & 1\\1 & 2\end{pmatrix} \]
at the minimal critical points, and the diagonal has dominant asymptotic term 
\[ \left(i 81 \sqrt{3} / 8\right)^k k^{-1} \frac{3\sqrt{3}+3i}{8\pi} + 
\left(-i 81 \sqrt{3} / 8\right)^k k^{-1} \frac{3\sqrt{3}-3i}{8\pi}
=\left(\frac{81\sqrt{3}}{8}\right)^k\frac{3\cos\!\left
(\frac\pi6+k\frac\pi2\right)}{2k\pi}. \]
\end{example}

\section{Genericity Results}
\label{sec:generic}

In this section we show that our assumptions in the combinatorial
case hold generically. We expect that the non-combinatorial case can
be dealt with in a similar way, but have not done so yet.

Given a collection of polynomials $f_1(\bz),
\dots, f_r(\bz)$ of degrees at most $d_1,\dots,d_r$, respectively, we write
\[ f_j(\bz) = \sum_{|\bi| \leq d_j} c_{j,\bi}\bz^{\bi}\]  
for all $j=1,\dots,r$, where $|\bi| = i_1+\cdots+i_n$ for indices $\bi \in \mathbb{N}^n$.  Given a polynomial $P$ in the set of variables $\{ u_{j,\bi} : |\bi| \leq d_j, 1 \leq j \leq r\}$ we let $P(f_1,\dots,f_r)$ denote the evaluation of $P$ obtained by setting the variable $u_{j,\bi}$ equal to the coefficient $c_{j,\bi}$.

Our proofs of genericity make use of multivariate
resultants and discriminants, for which we refer to 
Cox et al.~\cite{CoxLittleOShea2005} and Jouanolou~\cite{Jouanolou1991}.  For all positive
integers $d_0,\dots,d_n$ the resultant defines an explicit polynomial
$\Res = \Res_{d_0,\dots,d_n}\in \mathbb{Z}[u_{j,\bi}]$ such that $n+1$
homogeneous polynomials $f_0,\dots,f_n \in \mathbb{C}[z_0,\dots,z_n]$
of degrees $d_0,\dots,d_n$ share a non-zero solution in $\mathbb{C}^n$
if and only if $\Res(f_0,\dots,f_n)=0$. The general pattern of the
proofs is to first construct a resultant whose vanishing encodes the
property to be shown generic and then exhibit a system where that
resultant is not~0.

\subsection{Generically, \texorpdfstring{$H$}{H} and its partial
derivatives do not
vanish simultaneously} We prove that this statement, stronger than (A1), holds
generically.
Suppose $H$ has degree $d$ and let 
\[ E(z_0,z_1,\dots,z_n) = z_0^d H(z_1/z_0,\dots,z_n/z_0) \]
be the homogenization of $H$.  As
\[ \partial E/\partial z_j = z_0^{d-1} (\partial H/\partial z_j)(z_1/z_0,\dots,z_n/z_0)  \]
for $j=1,\dots,n$, and Euler's relations for homogeneous polynomials states
\[ \sum_{j=0}^n z_j(\partial E/\partial z_j)(z_0,\dots,z_n)= d \cdot E(z_0,\dots,z_n), \]
it follows that the polynomial $H$ and its partial derivatives vanish
at some point $(p_1,\dots,p_n)$ only if the system
\begin{equation} \partial E/\partial z_0 = \cdots = \partial E/\partial z_n=0 \label{eq:Eres} \end{equation}
admits the non-zero solution $(1,p_1,\dots,p_n)$.  Thus, assumption (A1) holds unless the multivariate resultant $P_d$ of the polynomials in Equation~\eqref{eq:Eres}, which depends only on the degree $d$, is zero when evaluated at the coefficients of $H$.  

It remains to show that $P_d$ is not identically zero for any $d$. If $H_d(\bz) = 1-z_1^d - \cdots -z_n^d$ then Equation~\eqref{eq:Eres} becomes
\[ z_0^{d-1} = -z_1^{d-1} = \cdots = -z_n^{d-1} = 0, \]
which has only the zero solution.  This implies the multivariate resultant $P_d$ is non-zero when evaluated at the coefficients of $H_d$, so it is a non-zero polynomial.

\subsection{Generically, \texorpdfstring{$G(\bz)$}{G(z)} is non-zero
at all critical points}
Again, this statement is stronger than (A2) and thus it is sufficient
to prove its genericity. Homogenizing
\[ H(\bz), \quad G(\bz), \quad z_1(\partial H/\partial z_1) - z_2(\partial H/\partial z_2), \dots, 
z_1(\partial H/\partial z_1) - z_n(\partial H/\partial z_n) \]
gives a system of $n+1$ homogeneous polynomials in $n+1$ variables
\footnote{As in all arguments using the multivariate resultant in this section, $G$ and $H$ are considered as dense polynomials of the specified degrees whose coefficients are indeterminates.}.  The multivariate resultant of this system is a polynomial $P_{d_1,d_2}(G,H)$ in the coefficients of $H$ and $G$, depending only on the degrees $d_1$ and $d_2$ of $G$ and $H$, which must be zero whenever $G(\bz)$ vanishes at a critical point.  

It remains to show that $P_{d_1,d_2}$ is non-zero for all $d_1,d_2 \in \mathbb{N}^*$, which we do by showing it is non-zero for an explicit family of polynomials of all degrees.  If 
\[ G(\bz) = z_1^{d_1} \quad \text{ and } \quad H(\bz) = 1-z_1^{d_2} - \cdots - z_n^{d_2} \]
then the system of homogeneous polynomial equations
\begin{align*}
u^{d_2}H\left(z_1/u,\dots,z_n/u\right) = u^{d_2}-z_1^{d_2} - \cdots - z_n^{d_2} &= 0 \\
G = z_1^{d_1} &= 0\\
-d_2z_1^{d_2} + d_2z_j^{d_2} &= 0, \qquad j=2,\dots,n
\end{align*}
has only the trivial solution $(u,z_1,\dots,z_n) = \bzer$. This implies the multivariate resultant $P_{d_1,d_2}$ is non-zero when evaluated on the coefficients of the polynomials $G$ and $H$ given here, so it is a non-zero polynomial.

\subsection{Generically, all critical points are non-degenerate}
We prove that the matrix $\mH$ in Equation~\eqref{eq:Hess} is
generically non-singular at every critical point (here we let
$\zeta_j$ in Equation~\eqref{eq:Hess} be the variable $z_j$, which
will be eliminated from the critical point equations).  After multiplying every entry of $\mH$ by $\lambda = z_1(\partial H/\partial z_1)$, which is non-zero at any minimal critical point, we obtain a polynomial matrix $\tilde{\mH}$ whose determinant vanishes if and only if an explicit polynomial $D$ in the variables $\bz$ and the coefficients of $H$ vanishes. After homogenizing the system of $n+1$ equations consisting of $D=0$ and the critical point equations~\eqref{eq:critpt} we can compute the multivariate resultant to determine an integer polynomial $P_d$ in the coefficients of $H$, depending only on the degree $d$ of $H$, which must be zero at any degenerate critical point.

It remains to show that the polynomial $P_d$ is non-zero for all $d \in \mathbb{N}^*$. Fix a non-negative integer $d$ and consider the polynomial $H(\bz) = 1 - z_1^d - \cdots - z_n^d$.  Calculating the quantities in Equation~\eqref{eq:Hess}, and substituting $z_j^d=z_1^d$ for each $j=2,\dots,n$, shows that $\tilde{\mH}$ is the polynomial matrix with entries of value $a:=-d^2z_1^d$ on its main diagonal and entries of value $b := -2d^2z_1^d$ off the main diagonal.  Such a matrix has determinant 
\[ D  = a^{n-1}(a+(n-1)b) = (-z_1^dd^2)^n(1-2n) \]
so the only solution to the homogenized smooth critical point
equations and $D$ is the trivial zero solution.  This implies that the
polynomial $P_d$ is non-zero when evaluated at the coefficients of
$H$, and thus it is a non-zero polynomial.
\vspace{-0.1in}

\subsection{Generically, the Jacobian of the smooth critical
point equations is
non-singular at the critical points}

The Jacobian of the system
\[ \bbf := \left( H, \quad z_1(\partial H/\partial z_1)-\lambda, \quad \dots \quad ,z_n(\partial H/\partial z_n)-\lambda, \quad H(tz_1,\dots,tz_n)\right) \]
with respect to the variables $\bz,\lambda,$ and $t$ is a square matrix which is non-singular at its solutions if and only if its determinant $D(\bz,t)$ (which is independent of $\lambda$) is non-zero at its solutions.  Any solution of $\bbf$ has $t \neq 0$, so the existence of a solution to $\bbf=D=0$ gives the existence of a non-zero solution to the system obtained by homogenizing the polynomials $H, z_1(\partial H/\partial z_1) - z_j(\partial H/\partial z_j),t^dH(\bz/t),$ and $t^{d-1}D(\bz,1/t),$ where $d$ is the degree of $H$ (note $D$ has degree $d-1$ in $t$).  The multivariate resultant of this system is an integer polynomial $P_d$, depending only on the degree $d$ of $H$, which must vanish if the Jacobian is singular at at least one of its solutions.
\vspace{0.1in}

It remains to show that the polynomial $P_d$ is non-zero for all $d \in \mathbb{N}^*$. Fix a non-negative integer $d$ and consider the polynomial $H(\bz) = 1 - z_1^d - \cdots - z_n^d$.  The Jacobian of $\bbf$ is the matrix
{\small\[
J := 
\begin{pmatrix}
-dz_1^{d-1} & \cdots & -dz_n^{d-1} & 0 & 0 \\
-d^2z_1^{d-1} & \bzer & 0 & -1 & 0 \\
\bzer & \ddots & \bzer & \vdots & \vdots \\ 
0 & \bzer & -d^2 z_n^{d-1} & -1 & 0 \\ 
-dt^dz_1^{d-1} & \cdots & -dt^d z_n^{d-1} & 0 & -dt^{d-1}(z_1^d + \cdots + z_n^d)
\end{pmatrix},
\]}

\noindent
and a short calculation shows
$D = \det J = -(z_1 \cdots z_n t)^{d-1}(z_1^d + \cdots + z_n^d)(-d)^{n+1} \cdot \det M,$
where $M$ is the $(n+1)\times(n+1)$ matrix 
{\small \[
M := 
\begin{pmatrix}
1 & \cdots & 1 & 0 \\
d & \bzer & 0 & 1\\
0 & \bzer & 0 & \vdots\\
0 & \cdots & d & 1
\end{pmatrix}.
\]}

\noindent
The matrix $M$ is invertible, so $\det M$ is a non-zero constant. The system of homogeneous equations under consideration thus simplifies to
\[ u^d-z_1^d-\cdots-z_n^d = -(z_1^d-z_j^d) = t^d - z_1^d-\cdots-z_n^d = (z_1 \cdots z_n)^{d-1}(z_1^d + \cdots + z_n^d) = 0, \]
which has only the trivial zero solution.  This implies that the
polynomial $P_d$ is non-zero when evaluated at the coefficients of
$H$, and is thus a non-zero polynomial.

\subsection{Generically in the combinatorial case, there is a minimal critical point}

Example~\ref{ex:12} shows the case of a combinatorial rational
function with a
denominator of degree~2 in~$\mathbb{Q}[z,w]$, without any minimal
critical point. In this example, the coefficient of the monomial~$w^2$
in the denominator is~0. This is a reflection of more general
phenomenon: the absence of minimal critical points can only occur if
some of the coefficients of the denominators are~0, making their
presence a generic property. We now show the following.

\begin{proposition}If the reduced combinatorial rational function
$F
(\bz)=G
(\bz)/H(\bz)$ with $H(\mathbf{0})\neq0$ has a denominator of degree~$d$ and no
minimal critical point, then at least one of the monomials~$\bz^\bi$
of degree~$d$ has a 0-coefficient in~$H$.
\end{proposition}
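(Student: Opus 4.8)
I would prove the contrapositive: assuming that \emph{every} monomial $\bz^\bi$ with $|\bi|=d$ has a nonzero coefficient in $H$, I would exhibit a minimal critical point of $F$. Since $F$ is combinatorial, $\mD$ is a complete logarithmically convex Reinhardt domain, so $\Abs(\bz)=|z_1\dotsm z_n|$ depends only on $(|z_1|,\dots,|z_n|)$ and $|\bz|\in\overline{\mD}$ whenever $\bz\in\overline{\mD}$ (as $D(\bz)=D(|\bz|)$). The plan is to produce a point $\bw$ maximizing $\Abs$ over $\overline{\mD}\cap\mathbb{R}_{\ge0}^n$. Because $\mD$ is open and contains $\bzer$, it meets $\mathbb{R}_{>0}^n$, so $\Abs(\bw)>0$ and hence $\bw\in\mathbb{R}_{>0}^n$; $\bw\notin\mD$ (a coordinatewise increase inside the open set $\mD$ would raise $\Abs$), so $\bw\in\partial\mD$; and $\Abs(\bz)=\Abs(|\bz|)\le\Abs(\bw)$ for all $\bz\in\overline{\mD}$, so $\bw$ is a global, hence local, maximizer of $\Abs$ on $\partial\mD$. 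By Lemma~\ref{lem:combCase} applied to $\bw=|\bw|$, the point $\bw$ is minimal, so $\bw\in\mV$. If $\bw$ is smooth, the corollary of Proposition~\ref{prop:lambda} (a smooth minimal point with positive real coordinates that maximizes $\Abs$ on $\partial\mD$ is critical) finishes the argument; if $\bw$ is not smooth, then $\nabla H(\bw)=\bzer$ and $\bw$ solves the critical-point equations trivially. Either way $\bw$ is a minimal critical point, contradicting the hypothesis.

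Thus everything reduces to showing that $\Abs$ \emph{attains} its supremum on $\overline{\mD}\cap\mathbb{R}_{\ge0}^n$, and this is exactly where the assumption on the degree-$d$ coefficients enters. I would pass to the logarithmic image $\mathcal{L}:=\{(\log|z_1|,\dots,\log|z_n|):\bz\in\mD\cap(\mathbb{C}^*)^n\}$, an open convex set (logarithmic convexity) whose recession cone contains $\mathbb{R}_{\le0}^n$ (completeness). The recession cone of $\mathcal{L}$ equals the normal cone of the Newton polytope $\mathcal{N}(H)$ at its vertex $\bzer$ (which is a vertex since $H(\bzer)\ne0$): $\mathcal{L}$ is the component of the complement of the amoeba of $H$ over which the Taylor series of $F$ at the origin converges. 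The full top-degree support of $H$ means $d\mathbf{e}_1,\dots,d\mathbf{e}_n\in\mathcal{N}(H)$, so this normal cone lies in $\bigcap_j\{c_j\le0\}=\mathbb{R}_{\le0}^n$, and since it also contains $\mathbb{R}_{\le0}^n$ it \emph{equals} $\mathbb{R}_{\le0}^n$. Hence the linear form $\mathbf{y}\mapsto y_1+\dots+y_n$ has no recession direction along which it increases, so it is bounded above on $\mathcal{L}$ by some constant $C$; equivalently $\Abs\le e^{C}$ on $\overline{\mD}\cap\mathbb{R}_{\ge0}^n$. Moreover, for any $c\in(0,e^{C})$ the closed set $\{\mathbf{y}\in\overline{\mathcal{L}}:\sum_j y_j\ge\log c\}$ has recession cone $\mathbb{R}_{\le0}^n\cap\{\sum_j u_j\ge0\}=\{\bzer\}$, hence is compact, so $\{\bx\in\overline{\mD}\cap\mathbb{R}_{\ge0}^n:\Abs(\bx)\ge c\}$ is compact and $\Abs$ attains a finite positive maximum there.

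The main obstacle is this last step, and more precisely the identification of the recession cone of $\mathcal{L}$ with the normal cone of $\mathcal{N}(H)$ at $\bzer$ within the elementary framework of Section~\ref{sec:ACSV}, which deliberately avoids amoebas. One option is to cite the classical amoeba description just used. An alternative that stays closer to the tools already developed is to bound $\sum_j\log x_j$ directly on $\mD\cap\mathbb{R}_{>0}^n$: a recession direction $\mathbf{u}$ of $\mathcal{L}$ with $\sum_j u_j>0$ would give, for a base point $\bx_0\in\mD\cap\mathbb{R}_{>0}^n$, a family of points $(x_{0,1}e^{tu_1},\dots,x_{0,n}e^{tu_n})\in\mD$ ($t\ge0$) whose product of coordinates tends to infinity; choosing $\bx_0$ with the top homogeneous part $H_d$ nonvanishing at $\bx_0$ (possible since $\deg H=d$) so that $H(t\bx_0)$ has degree exactly $d$ in $t$, and using that $F$ is combinatorial, reduced, and $d\ge2$ to argue that $F$ cannot remain analytic arbitrarily far along such rays, one reaches a contradiction --- though making ``cannot remain analytic'' precise uniformly in all directions is the delicate point. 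Everything else --- the reduction to a maximizer of $\Abs$ and its recognition as a (smooth or singular) minimal critical point --- follows directly from Lemma~\ref{lemma:domconv}, Lemma~\ref{lemma:extrema-crit-pts}, Lemma~\ref{lem:combCase}, and the corollary of Proposition~\ref{prop:lambda}.
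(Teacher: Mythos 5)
Your proposal is correct in spirit but takes a genuinely different, and heavier, route than the paper. You argue the contrapositive: full top-degree support of $H$ (in fact only $z_j^d$ with nonzero coefficient for each $j$) forces the recession cone of the log-image $\mathcal{L}$ of $\mD$ to be exactly $\mathbb{R}_{\le0}^n$, hence the sublevel sets of $\Abs$ in $\overline{\mD}\cap\mathbb{R}_{\ge0}^n$ are compact, the supremum is attained at some $\bw\in\partial\mD\cap\mathbb{R}_{>0}^n$, and Lemma~\ref{lem:combCase} plus the corollary of Proposition~\ref{prop:lambda} (or $\nabla H(\bw)=\bzer$ in the non-smooth case) make $\bw$ a minimal critical point. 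That chain of deductions is sound. The weak point is exactly the one you flag: the identification of the recession cone of $\mathcal{L}$ with the normal cone of $\mathcal{N}(H)$ at $\bzer$ is genuine amoeba theory, which Section~2 explicitly avoids. Your sketched ``direct'' fallback does not patch this: the ray $t\bx_0$ in physical space corresponds to the single log-direction $(1,\dots,1)$, which is not in the putative bad recession cone at all, so controlling $\deg_t H(t\bx_0)$ does not bear on the existence of a recession direction $\mathbf{u}$ with $\sum_j u_j>0$; one would need an estimate uniform over all directions, which you rightly note is the delicate point.

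The paper instead stays elementary and does not pass to the contrapositive. From the non-attainment of $\sup\Abs$ on $\partial\mD\cap\mathbb{R}_{\ge0}^n$ (no minimal critical point $\Rightarrow$ no maximizer, via Lemma~\ref{lemma:min-crit} and the corollary to Proposition~\ref{prop:lambda}), it deduces that, for any $\epsilon\in(0,1)$, the set $P^{(-1)}([1-\epsilon,1])\cap\overline{\mD}\cap\mathbb{R}_{\ge0}^n$ is unbounded (were it compact, $\Abs$ would attain its supremum there). Extracting a sequence $\ba^{(k)}\in\mD\cap\mathbb{R}_{\ge0}^n$ with $a_n^{(k)}\to\infty$ and then invoking the complete Reinhardt property of $\mD$ to scale the other coordinates to zero, one finds $(0,\dots,0,a_n^{(k)})\in\mD$ for all $k$; by Lemma~\ref{lemma:domconv}\emph{(ii)}, $H(0,\dots,0,z_n)$ has no complex zeros and is therefore the constant $1$, so $c_{0,\dots,0,d}=0$. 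This is shorter, entirely self-contained within the toolkit of Section~2, and like yours actually yields the sharper conclusion that one of the pure monomials $z_j^d$ has vanishing coefficient. If you want to keep your contrapositive framing but remain amoeba-free, the cleanest repair is to borrow precisely this Reinhardt projection trick to rule out unboundedness of the near-boundary set, rather than reasoning directly about recession cones of $\mathcal{L}$.
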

\begin{proof}Up to normalization by a constant, one can write $H(\bz) = 1 - P(\bz)$
with $P=\sum_{|\bi|\le d} c_\bi \bz^\bi$ of degree~$d$ such that $P(0)
= 0$.
By Lemma~\ref{lemma:min-crit}, the continuous map $
\operatorname{Abs}:\bz\mapsto|z_1\dotsm z_n|$ does not reach its
maximum on the boundary $\partial\mD$ of the domain of convergence.
Combinatoriality implies that it does not reach its maximum
on~$\partial\mD\cap\mathbb{R}_{\geq0}^n$.
For any set $S \subset \mathbb{R}$ let $P^{(-1)}(S)$ denote the points mapping to 
$S$ under $P$. Then for any~$\epsilon\in
(0,1)$, the closed set $P^{(-1)}(
[1-\epsilon,1])\cap\overline{\mD}\cap\mathbb{R}_{\geq0}^n$ is unbounded. By
continuity of $P$ this implies 
$P^{(-1)}((1-\epsilon,1))\cap\overline{\mD}\cap\mathbb{R}_{\geq0}^n\subset\mD$
 is unbounded, so there
exists a sequence $(\ba^{(k)})$ in $\mD \cap \mathbb{R}_{\geq0}^n$ with $P
(\ba^{(k)})\in(1-\epsilon,1)$ and $\|\ba^{(k)}\|\rightarrow\infty$. Up
to
extracting a subsequence and renumbering the coordinates, we can
assume that the last coordinate $(a^{(k)}_n)$ tends to infinity. By
combinatoriality, for any $t\in[0,1]$, the points $(ta^{
(k)}_1,\dots,t a^{(k)}_{n-1},a^{(k)}_n)$ where the last coordinate is
fixed to~$a^{(k)}_n$ all belong to~$\mD$. Taking~$t=0$ gives a
sequence $(a^{(k)}_n)$ tending to infinity where $c_{0,\dots,0,d}(a^{
(k)}_n)^d$ is bounded (by~1), which implies that $c_{0,\dots,0,d}=0$.
\end{proof}

\section{Conclusion}
The computation of asymptotics of the diagonal sequences of rational
functions reduces to semi-numerical questions concerning roots of
polynomial systems. If $d$ is the degree of the rational function and
$n$ its number of variables, we have shown that this computation has a
bit complexity that is polynomial in~$d^n$, even in the
non-combinatorial case. Thus this approach is in the same complexity
class as the creative telescoping method, which is more general, but
does not provide an explicit form for the leading constant in the
asymptotic behavior. 

In this article, we have dealt with the simplest geometry of minimal
critical points, which covers many practical examples and allows for an
elementary presentation of the mathematical background. Work is in
Of particular interest for computations, Baryshnikov et al.~\cite{BaryshnikovMelczerPemantle2018a}
give explicit algebraic conditions, checkable using Gr{\"o}bner basis algorithms,
under which the Morse theory framework of Pemantle and Wilson~\cite{PemantleWilson2013} can be
applied rigorously.  With much milder conditions than we consider here, 
one can then write diagonal asymptotics as an \emph{integer} linear combination 
of integrals around (possibly non-minimal) critical points 
which can be asymptotically approximated using saddle-point methods.
Although it is only known how to use ACSV to find these integer coefficients at critical
points which are minimal, or when the singular set is a union of hyperplanes, 
one can apply the creative telescoping methods mentioned above.
This sets up a promising framework for diagonal coefficients, 
however additional work on topics like effective stratifications of 
algebraic varieties is needed before practical implementations
can be created.

\section*{Appendix: Polynomial Bounds and Complexity Estimates}
\label{sec:BoundsEstimates}

In this appendix, we gather a symbolic-numeric toolkit describing
the root separation bounds and algorithms used in our semi-numerical
algorithms. Unless otherwise stated, all logarithms are taken in base~2.

\subsection*{Bounds on Polynomials}
Recall that the \emph{height} $h(P)$ of a polynomial $P\in\mathbb{Z}
[\bz]$
is the maximum of 0 and the base~2 logarithms of the
absolute values of the coefficients of $P$.
\begin{lemma}
\label{lemma:height} 
For univariate polynomials $P_1,\dots,P_{k},P,Q \in \mathbb{Z}[z]$,
\begin{align*}
  h(P_1 + \cdots + P_k) &\leqslant \max_i  h(P_i) + \log k,\\
  h(P_1 \dotsm P_k) & \leqslant \sum_{i=1}^k {h(P_i)} + \sum_{i=1}^{k-1} \log(\deg P_i + 1),\\
  h(P)&\leqslant \deg P + h(PQ) + \log\sqrt{\deg(PQ)+1}.
\end{align*}
\end{lemma}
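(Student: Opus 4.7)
All three inequalities are classical bounds on polynomial heights, and the plan is to dispatch them one at a time, using only elementary estimates on coefficients together with standard properties of the Mahler measure for the third (which is the only non-obvious one).

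For the first inequality, I would simply read off that the coefficient of $z^j$ in $P_1+\dotsb+P_k$ is the sum of at most $k$ integers each of absolute value at most $2^{\max_i h(P_i)}$, so its absolute value is at most $k\cdot 2^{\max_i h(P_i)}$. Taking the $\log_2$ of the maximum over $j$ yields the bound.

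For the second inequality, by a straightforward induction it suffices to handle the case $k=2$. There, the coefficient of $z^j$ in $PQ$ is $\sum_{i} a_i b_{j-i}$, a sum of at most $\deg P+1$ products of coefficients of $P$ and $Q$; hence its absolute value is at most $(\deg P+1)\,2^{h(P)+h(Q)}$, giving $h(PQ)\le h(P)+h(Q)+\log(\deg P+1)$. Iterating on $P_1\dotsm P_i$ against $P_{i+1}$ and using $\deg(P_1\dotsm P_i)\ge \deg P_i$ at each step gives the stated bound (in fact one can even take $\min$'s, but the stated version is enough).

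The third inequality is the only one requiring a real tool: the Mahler measure $M(f):=|a|\prod_{i}\max(1,|\alpha_i|)$ of a polynomial $f=a\prod_i(z-\alpha_i)\in\mathbb{C}[z]$. I would use three well-known facts: (i) for any nonzero $f\in\mathbb{Z}[z]$, $M(f)\ge 1$, because the leading coefficient is a nonzero integer and the product $\prod_i\max(1,|\alpha_i|)\ge 1$; (ii) the upper bound $\|f\|_\infty\le 2^{\deg f}M(f)$, obtained from expanding $f$ in terms of its roots and bounding the elementary symmetric functions; (iii) the inequality $M(f)\le\|f\|_2\le\sqrt{\deg f+1}\,\|f\|_\infty$ (the first part is Landau's inequality, the second is Cauchy--Schwarz). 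Applied to $P$ and then to $PQ$, these give
\[
2^{h(P)}=\|P\|_\infty\le 2^{\deg P}M(P)\le 2^{\deg P}M(PQ)\le 2^{\deg P}\sqrt{\deg(PQ)+1}\,\|PQ\|_\infty = 2^{\deg P+h(PQ)}\sqrt{\deg(PQ)+1},
\]
the second inequality using $M(PQ)=M(P)M(Q)$ and $M(Q)\ge1$. Taking $\log_2$ yields the claim. The main (minor) obstacle is simply to recall or quickly re-prove the Mahler-measure facts (i)--(iii); everything else is bookkeeping.
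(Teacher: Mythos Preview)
Your approach matches the paper's: it too says the first bound is immediate, the second is by induction on $k$, and the third is Mignotte's bound on factors (which is indeed proved via the Mahler measure exactly as you outline). Your treatment of parts one and three is correct.

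There is, however, a slip in your induction for the second inequality. Applying your $k=2$ bound to $P=P_1\dotsm P_i$ and $Q=P_{i+1}$ contributes $\log(\deg(P_1\dotsm P_i)+1)$ at step $i$, and the inequality $\deg(P_1\dotsm P_i)\ge \deg P_i$ that you invoke points the wrong way: it gives a \emph{lower} bound on that logarithm, not the upper bound you need. As written, the induction yields $\sum_{i=1}^{k-1}\log\bigl(\deg(P_1\dotsm P_i)+1\bigr)$, which can be strictly larger than the stated $\sum_{i=1}^{k-1}\log(\deg P_i+1)$. The fix is immediate: use the symmetric form of your $k=2$ estimate, $h(PQ)\le h(P)+h(Q)+\log(\deg Q+1)$, so that step $i$ contributes $\log(\deg P_{i+1}+1)$; this gives $\sum_{i=2}^{k}\log(\deg P_i+1)$, and iterating in the reverse order (or relabelling) yields exactly the stated $\sum_{i=1}^{k-1}\log(\deg P_i+1)$.
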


The first result follows directly from the definition of polynomial
height. The final one---sometimes referred to as `Mignotte's bound on
factors'---follows from Theorem 4 in Chapter 4.4 of Mignotte~\cite{Mignotte1992}.
The second one can be proved by induction on~$k$.

\subsection*{Root Separation Bounds}
\begin{lemma}[Mignotte~\cite{Mignotte1992}]
\label{lemma:roots} 
Let $A \in \mathbb{Z}[z]$ be a polynomial of degree $d\geqslant 2$ and height $h$. If $A(\alpha)=0$ then
\begin{enumerate}[(i)]
  \item if $\alpha\neq0$, then $1/(2^h+1)\le |\alpha|\le 2^h+1$; \label{item:roots1}
  \item if $A(\beta)=0$ and $\alpha\neq\beta$, then $|\alpha-\beta|\ge d^{-(d+2)/2} \cdot \|A\|_2^{1-d}$; \label{item:roots2}
\item if $Q(\alpha)\neq0$ for $Q\in\mathbb{Z}[T]$, then $|Q
(\alpha)|\ge ((\deg Q+1)2^{h(Q)})^{1-d} \cdot (2^h\sqrt{d+1})^{-\deg
Q}$; \label{item:roots3}
  \item if $A$ is square-free then $|A'(\alpha)| \ge 2^{-2dh \, + \, 2h \, + \, 2(1-d)\log d \, + \, (1-d) \log \sqrt{d+1}}$, \label{item:roots4}
\end{enumerate}
where $\|A\|_2$ is the 2-norm of the vector of coefficients, bounded by $2^h\sqrt{d+1}$.
\end{lemma}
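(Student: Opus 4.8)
\emph{Proof plan.} These are the classical root bounds of Mignotte~\cite{Mignotte1992}, and the plan is to derive all four from the single principle that a nonzero rational integer has absolute value at least~$1$, applied successively to a Cauchy-type estimate, to $\operatorname{disc}(A)$, and to a resultant. For part~(i), write $A = a_dz^d+\dots+a_0$ with $|a_d|\ge1$; Cauchy's bound gives $|\alpha|\le 1+\max_{i<d}|a_i/a_d|\le 1+\max_i|a_i|\le 2^h+1$, and applying the same estimate to the reciprocal polynomial $z^dA(1/z)\in\mathbb{Z}[z]$, whose nonzero roots are the $1/\alpha$, gives $|\alpha|\ge 1/(2^h+1)$ when $\alpha\ne0$. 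The final inequality $\|A\|_2\le 2^h\sqrt{d+1}$ is just $\|A\|_2^2=\sum_j a_j^2\le(d+1)\max_j a_j^2$.

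For part~(ii) I would first assume $A$ square-free, reducing the general case to this one by passing to the radical $A/\gcd(A,A')$ and controlling its $2$-norm through Mignotte's bound on factors (the last inequality of Lemma~\ref{lemma:height}). Then $\operatorname{disc}(A)=a_d^{2d-2}\prod_{i<j}(\alpha_i-\alpha_j)^2$ is a nonzero integer, so $\prod_{i<j}|\alpha_i-\alpha_j|\ge|a_d|^{1-d}$. Ordering the roots so that $s:=|\alpha_1-\alpha_2|$ is the minimal gap and isolating this factor, the lower bound on $s$ follows once one has an upper bound on the product of the remaining root differences; that comes from Hadamard's inequality applied to the Vandermonde matrix $(\alpha_i^{k})_{0\le k\le d-1}$, which yields $\prod_{i<j}|\alpha_i-\alpha_j|\le d^{d/2}\prod_i\max(1,|\alpha_i|)^{d-1}$, combined with Landau's inequality $\prod_i\max(1,|\alpha_i|)=M(A)/|a_d|\le\|A\|_2/|a_d|$. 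Tracking the powers of $|a_d|$ (which cancel) and of $\|A\|_2$, and the $d^{d/2}$ against the $\binom d2$ pairs, produces the constant $d^{-(d+2)/2}$; this bookkeeping is the technical heart of the lemma.

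For part~(iii), in the relevant case $\gcd(A,Q)=1$ --- to which the general statement reduces by replacing $A$ with the product of its irreducible factors not dividing $Q$, which still vanishes at $\alpha$ and whose height stays controlled by Lemma~\ref{lemma:height} --- the resultant $\operatorname{Res}(A,Q)=a_d^{\deg Q}\prod_{i=1}^dQ(\alpha_i)$ is a nonzero integer, hence $\prod_i|Q(\alpha_i)|\ge|a_d|^{-\deg Q}$. Factoring out the single factor $|Q(\alpha)|$ and bounding each remaining one by $|Q(\alpha_i)|\le(\deg Q+1)2^{h(Q)}\max(1,|\alpha_i|)^{\deg Q}$, together with $\prod_i\max(1,|\alpha_i|)\le\|A\|_2/|a_d|$, makes the powers of $|a_d|$ cancel once more and leaves $|Q(\alpha)|\ge((\deg Q+1)2^{h(Q)})^{1-d}\|A\|_2^{-\deg Q}$; substituting $\|A\|_2\le 2^h\sqrt{d+1}$ gives the claim. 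Part~(iv) is then exactly part~(iii) applied with $Q=A'$: here $A'(\alpha)\ne0$ and $\gcd(A,A')=1$ because $A$ is square-free, while $\deg A'=d-1$ and $h(A')\le h+\log d$ since the coefficients of $A'$ are the $ja_j$; substituting these into the bound of~(iii) and simplifying the exponents reproduces the stated estimate verbatim.

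The routine parts are the Cauchy/triangle-inequality coefficient estimates and the $|a_d|$-bookkeeping; the one genuinely delicate point is extracting the sharp constant $d^{-(d+2)/2}$ in~(ii), which forces the use of Hadamard's inequality (or an equivalent Mahler-measure argument) rather than the crude bound $2^{\binom d2}$ on the Vandermonde product. A secondary nuisance is the coprimality reduction that makes $\operatorname{disc}(A)$ and $\operatorname{Res}(A,Q)$ genuinely nonzero; wherever these bounds are invoked in this paper, $A$ is square-free and shares no root with the relevant $Q$, so there the reduction can be skipped.
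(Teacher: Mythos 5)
The paper gives no proof of this lemma; it cites Mignotte's textbook for parts~(i) and~(ii), Bugeaud's book for~(iii), and observes that~(iv) is the specialization of~(iii) to $Q=A'$. Your reconstruction follows the same classical lines, and parts~(i),~(iii), and~(iv) are correct; in particular your derivation of~(iv) from~(iii) by substituting $\deg A'=d-1$ and $h(A')\le h+\log d$ reproduces the exponent exactly, which is exactly the reduction the paper has in mind.

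Part~(ii), however, has a genuine gap in the step you label "the technical heart". Hadamard's inequality applied to the plain Vandermonde matrix $(\alpha_i^k)$ bounds the \emph{full} product $\prod_{i<j}|\alpha_i-\alpha_j|$ from above, but what you need is an upper bound on the \emph{truncated} product $P=\prod_{(i,j)\neq(1,2)}|\alpha_i-\alpha_j|$, i.e.\ the full product divided by the small gap $s$; an upper bound on the full product gives no upper bound on $P$. Indeed, combining the discriminant lower bound with your Hadamard upper bound on the full product yields only the tautology $1\le d^{d/2}\,\|A\|_2^{\,d-1}$, with no information about $s$. The Mahler--Mignotte argument instead applies Hadamard to a \emph{modified} Vandermonde matrix in which the $\alpha_1$-row has been replaced by the divided-difference row with entries $(\alpha_1^k-\alpha_2^k)/(\alpha_1-\alpha_2)$; that matrix has determinant $\pm P$, and bounding the norm of its modified row (whose $k$th entry is the symmetric sum $\sum_{j}\alpha_1^j\alpha_2^{k-1-j}$) is where the exponent $-(d+2)/2$ actually arises. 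As written, your plan does not produce the stated constant because it bounds the wrong product.
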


The upper bound of statement~\eqref{item:roots1} comes from Theorem
4.2(ii) in Chapter 4 of Mignotte~\cite{Mignotte1992}, and the lower bound is a
consequence of applying the upper bound to the reciprocal polynomial
$z^dA(1/z)$.  Statement~\eqref{item:roots2} comes from Theorem 4.6 in
Section 4.6 of that text.  A proof of~\eqref{item:roots3} can be found
in~\cite[Theorem~A.1]{Bugeaud2004}, while Item~\eqref{item:roots4} is a
special case of~\eqref{item:roots3}. 

\subsection*{Resultant and GCD Bounds}
A height bound on the greatest common divisor of two univariate polynomials is given by Lemma~\ref{lemma:height}, and the complexity of computing gcds is well known~\cite[Corollary 11.11]{GathenGerhard2003}. 

\begin{lemma}
\label{lemma:gcd}
For $P$ and $Q$ in ${\mathbb Z}[U]$ of heights at most $h$ and degrees at most $d$, $\gcd(P,Q)$ has height $\tilde{O}(d+h)$ and can be computed in $\tilde{O}(d^2+hd)$ bit operations. 
\end{lemma}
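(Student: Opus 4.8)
The plan is to read the height bound off Lemma~\ref{lemma:height} and to get the bit-complexity bound from the classical modular gcd algorithm, as stated in \cite[Corollary~11.11]{GathenGerhard2003}. Throughout I take $\gcd(P,Q)$ to be the gcd in the unique factorisation domain $\mathbb{Z}[U]$ --- the primitive polynomial with positive leading coefficient dividing both $P$ and $Q$ --- so that there is $R\in\mathbb{Z}[U]$ with $P=R\cdot\gcd(P,Q)$ when $P\neq0$. For the height bound: if $Q=0$ then $\gcd(P,Q)$ is the primitive part of $P$, of height at most $h$, so there is nothing to do; otherwise assume $P\neq0$, set $G:=\gcd(P,Q)$ and $R:=P/G\in\mathbb{Z}[U]$, and apply the third inequality of Lemma~\ref{lemma:height} to the factorisation $P=G\cdot R$ (with ``$P$'' $=G$ and ``$PQ$'' $=P$):
\[ h(G)\ \le\ \deg G+h(P)+\log\sqrt{\deg P+1}\ \le\ d+h+\tfrac12\log(d+1)\ =\ \tilde{O}(d+h). \]

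For the complexity, I would run the small-primes modular gcd algorithm (von zur Gathen and Gerhard~\cite[Chapters~6 and~11]{GathenGerhard2003}). Put $b:=\gcd(\operatorname{lc}(P),\operatorname{lc}(Q))$ and let $B$ be the Landau--Mignotte bound on the moduli of the coefficients of $b\cdot G$; by the height estimate above $\log B=\tilde{O}(d+h)$, so it suffices to use $\tilde{O}(d+h)$ primes of $O(\log(d+h))$ bits, none dividing $b$, whose product exceeds $2B$. For each such prime $p$ one computes $\gcd(P\bmod p,\,Q\bmod p)\in\mathbb{F}_p[U]$ by the fast half-Euclidean algorithm in $\tilde{O}(d)$ field operations, hence $\tilde{O}(d)$ bit operations, for a total of $\tilde{O}(d^2+hd)$ over all the primes; reducing the coefficients of $P$ and $Q$ modulo all the primes by fast multi-modular reduction, and reconstructing the $O(d)$ coefficients of $b\cdot G$ by fast Chinese remaindering, each cost $\tilde{O}(d^2+hd)$ as well, and extracting the primitive part is cheaper still. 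Summing these contributions gives the bound $\tilde{O}(d^2+hd)$, which is exactly \cite[Corollary~11.11]{GathenGerhard2003}.

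The step that I expect to require care --- and the reason one cannot simply run a half-gcd over $\mathbb{Z}$ --- is the treatment of the \emph{unlucky primes}, those $p$ with $\deg\gcd(P\bmod p,Q\bmod p)>\deg G$: these divide a fixed nonzero subresultant coefficient of $P$ and $Q$, so there are boundedly many of them, and they are recognised \emph{a posteriori} as exactly the primes producing a modular gcd of larger-than-minimal degree, whence one discards them and keeps enough of the remaining primes for the Chinese remaindering to be valid. Ensuring that enough lucky primes survive the degree test (by drawing the primes at random from a set comfortably larger than the number of potential unlucky primes, or by processing sufficiently many of the smallest primes) is where the ``probabilistic'' qualifier enters, consistently with the rest of the paper, while tracking $b\cdot G$ rather than $G$ disposes of the unknown leading coefficient. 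Beyond this standard bookkeeping the lemma is a routine combination of Lemma~\ref{lemma:height} with fast integer and polynomial arithmetic, so I anticipate no genuine obstacle.
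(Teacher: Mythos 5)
Your proof is correct and follows the same two-pronged route the paper takes, which simply cites Lemma~\ref{lemma:height} (Mignotte's factor bound) for the height of the gcd and von zur Gathen--Gerhard~\cite[Corollary~11.11]{GathenGerhard2003} for the bit complexity of the modular gcd algorithm. Your elaboration of the small-primes algorithm and the handling of unlucky primes is a faithful unpacking of that reference rather than a different argument.
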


Similarly, a degree bound for the resultant of two polynomials follows from a direct expansion of the determinant of the Sylvester matrix, and Lemma~\ref{lemma:height} combined with this expansion gives a bound on the resultant height. 

\begin{lemma}
\label{lemma:resultant}
For $P$ and $Q$ in $\mathbb{Z}[T,U]$ let $R = \Res_T(P,Q)$ and
\begin{align*}
\delta &:= \deg_TP\deg_UQ+\deg_TQ\deg_UP \\
\eta &:= h(P)\deg_TQ+ h(Q)\deg_TP+\log((\deg_TP+\deg_TQ)!) + \log(\deg_UP+1) \deg_TQ\\
&\hspace{3.9in} + \log(\deg_UQ+1) \deg_T P.
\end{align*}
Then $\deg R \leq \delta$ and $h(R) \leq \eta$. Furthermore, if all coefficients of $P$ and $Q$ as polynomials in $T$ are monomials in $U$ then $h(R) \leq h(P)\deg_TQ+ h(Q)\deg_TP+\log((\deg_TP+\deg_TQ)!)$.
\end{lemma}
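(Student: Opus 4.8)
\emph{Proof plan.} The plan is to read both bounds off the Sylvester-matrix expression for the resultant. Set $p:=\deg_T P$ and $q:=\deg_T Q$ and write
\[ P = \sum_{i=0}^{p} a_i(U)\,T^i,\qquad Q = \sum_{j=0}^{q} b_j(U)\,T^j, \]
so that each $a_i\in\mathbb{Z}[U]$ has $\deg_U a_i\le \deg_U P$ and $h(a_i)\le h(P)$, and likewise $\deg_U b_j\le\deg_U Q$ and $h(b_j)\le h(Q)$. Then $R=\Res_T(P,Q)=\det S$, where $S$ is the $(p+q)\times(p+q)$ Sylvester matrix whose first $q$ rows are shifts of the coefficient vector $(a_p,\dots,a_0)$ and whose last $p$ rows are shifts of $(b_q,\dots,b_0)$, all padded with zeros.

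For the degree bound I would expand $\det S$ over the $(p+q)!$ permutations. Each term picks exactly one entry from each row, hence at most $q$ of the $a_i$'s (one per $P$-row) and at most $p$ of the $b_j$'s (one per $Q$-row); its $U$-degree is therefore at most $q\deg_U P + p\deg_U Q = \delta$. Summing terms does not raise the degree, so $\deg R\le\delta$.

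For the height bound I would combine the two parts of Lemma~\ref{lemma:height}. A single determinant term is a product of at most $q$ factors of height $\le h(P)$ and $U$-degree $\le\deg_U P$ and at most $p$ factors of height $\le h(Q)$ and $U$-degree $\le\deg_U Q$; the product part of Lemma~\ref{lemma:height} bounds its height by $q\,h(P)+p\,h(Q)+q\log(\deg_U P+1)+p\log(\deg_U Q+1)$, where I crudely replace each of the at most $p+q-1$ degree-logarithm terms by the appropriate one of $\log(\deg_U P+1)$ or $\log(\deg_U Q+1)$. Applying the sum part of Lemma~\ref{lemma:height} over the at most $(p+q)!$ such terms adds $\log((p+q)!)$, which gives precisely $h(R)\le\eta$.

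Finally, in the special case where every $a_i$ and every $b_j$ is a single monomial $c\,U^m$ with $c\in\mathbb{Z}$, a product of such factors is again a single monomial whose integer coefficient is the product of the $c$'s; so there is no degree-logarithm correction and each determinant term has height $\le q\,h(P)+p\,h(Q)$. Summing over the $(p+q)!$ terms yields $h(R)\le h(P)\deg_T Q + h(Q)\deg_T P + \log((\deg_T P+\deg_T Q)!)$, as claimed. No step is a real obstacle here; the only point requiring care is the Sylvester-matrix bookkeeping — that each determinant term involves at most $\deg_T Q$ coefficients of $P$ and at most $\deg_T P$ coefficients of $Q$, and not the reverse — which is exactly what makes the crude per-term estimates assemble into the stated $\delta$ and $\eta$.
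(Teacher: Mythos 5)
Your proof is correct and follows the same route as the paper, which only sketches the argument ("a direct expansion of the determinant of the Sylvester matrix, and Lemma~\ref{lemma:height} combined with this expansion"). You supply exactly the missing bookkeeping: each Leibniz term of $\det S$ draws $\deg_T Q$ entries from $P$-coefficient rows and $\deg_T P$ from $Q$-coefficient rows, the product rule of Lemma~\ref{lemma:height} bounds the height of each such term, the sum rule adds $\log((\deg_T P+\deg_T Q)!)$ over the permutations, and the monomial case drops the degree-logarithm correction since a product of monomials is a monomial.
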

\subsection*{Algorithms for polynomial roots}
\begin{lemma}[Sagraloff and Mehlhorn~\cite{SagraloffMehlhorn2016} and Mehlhorn et al.~\cite{MehlhornSagraloffWang2015}]
\label{lemma:fsolve}
Let $A \in \mathbb{Z}[T]$ be a square-free polynomial of degree $d$ and height $h$.  Then for any positive integer $\kappa$ 
\begin{itemize}
\item[] isolating disks of radius less than $2^{-\kappa}$ can be
computed for all roots of $A(T)$ in $\tilde{O}(d^3+d^2h+d\kappa)$ bit operations;
\item[] isolating intervals of length less than $2^{-\kappa}$ can be
computed for all real roots of $A(T)$ in $\tilde{O}(d^3+d^2h+d\kappa)$ bit operations.
\end{itemize}
\end{lemma}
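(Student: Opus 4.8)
The plan is to obtain this statement as a direct consequence of the near-optimal root-isolation algorithms of Sagraloff and Mehlhorn~\cite{SagraloffMehlhorn2016} (for real roots) and of Mehlhorn, Sagraloff and Wang~\cite{MehlhornSagraloffWang2015} (for complex roots), after checking that their hypotheses are met and reconciling the size parameters. Write $\tau := \lceil h\rceil + 2$ for the coefficient bitsize of $A$, so that $\tau = \tilde{O}(h+1)$ and each stated bound has the shape $\tilde{O}(d^3 + d^2\tau + d\kappa)$. Since $A$ is square-free, Lemma~\ref{lemma:roots}(\ref{item:roots2}) shows that its complex roots are pairwise at distance at least $2^{-\mu}$ for some $\mu = \tilde{O}(d\tau)$, so ``isolating region of radius $2^{-\kappa}$'' is a meaningful target for every $\kappa$, and it suffices to treat the two regimes $\kappa\le\mu$ (reaching isolation) and $\kappa>\mu$ (further refinement) separately.

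For the real roots I would invoke the \textsc{ANewDsc} algorithm: a subdivision of an initial interval that contains all real roots (of length $2^{\tilde{O}(\tau)}$ by Lemma~\ref{lemma:roots}(\ref{item:roots1})), where on each subinterval an \emph{approximate} Descartes test --- a Taylor shift evaluated in fixed-precision interval arithmetic --- is used to bound the number of roots, accelerated by a Newton step whenever a cluster is detected. The two technical inputs are a Davenport--Mahler-type bound on the product of the root separations, which bounds the total size of the subdivision tree and, crucially, makes the cost \emph{summed over all nodes} telescope to $\tilde{O}(d^3 + d^2\tau)$ rather than the naive product of worst cases; and the fact that a Taylor shift at precision $p$ costs $\tilde{O}(dp)$ bit operations. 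This gives the $\tilde{O}(d^3+d^2\tau)$ cost of isolating all real roots to regions below the separation bound, and I would then quote the refinement statement of~\cite{SagraloffMehlhorn2016}: shrinking every isolating interval to width $2^{-\kappa}$ adds $\tilde{O}(d\kappa)$ bit operations, the amortization again using the Davenport--Mahler bound together with simultaneous evaluation of $A$ at the current approximations.

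For the complex roots I would instead start from Pan's near-optimal approximate factorization, which produces in $\tilde{O}(d^3 + d^2\tau + d\kappa)$ bit operations approximations $\tilde z_1,\dots,\tilde z_d$ with $\prod_j (T-\tilde z_j)$ within $2^{-\kappa}$ of $A/\mathrm{lc}(A)$ in a suitable coefficient norm. Taking $\kappa\ge\mu+O(\log d)$ and invoking the separation bound once more, the clustering argument of Mehlhorn--Sagraloff--Wang~\cite{MehlhornSagraloffWang2015} turns these into pairwise-disjoint certified isolating disks, each of radius $O(d\cdot 2^{-\kappa}) < 2^{-\mu}$; for a larger target $\kappa$ one increases the factorization precision accordingly, at the advertised extra cost $\tilde{O}(d\kappa)$. (Alternatively, and with the identical final bound, one may cite the complex subdivision algorithm based on the Pellet test and Newton iteration of Becker, Sagraloff, Sharma and Yap.) In every case it is square-freeness, via Lemma~\ref{lemma:roots}(\ref{item:roots2}), that guarantees the disks can be made pairwise disjoint with the stated radius.

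I do not expect a new obstacle in this paper: the genuinely hard part is the one delegated to the cited references --- the adaptive-precision amortization proving that the subdivision tree has only $\tilde{O}(d)$ nodes and that the working precision stays $\tilde{O}(d\tau)$ on average, which is what yields the near-optimal exponents. For a self-contained write-up it is therefore enough to (i) state the two cited theorems, (ii) verify that a square-free $A\in\mathbb{Z}[T]$ of degree $d\ge 2$ and height $h$ meets their hypotheses with bitsize $\tau=\tilde{O}(h+1)$, using Lemma~\ref{lemma:roots}(\ref{item:roots1}) for the a priori root magnitude bound, and (iii) observe that the $+\,d\kappa$ term in each cited complexity bound is exactly the refinement cost asserted here.
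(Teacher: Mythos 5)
Your proposal is correct and takes essentially the same route as the paper: the paper's proof is simply a citation of Theorem~3 of Sagraloff and Mehlhorn~\cite{SagraloffMehlhorn2016} and Theorem~5 of Mehlhorn et al.~\cite{MehlhornSagraloffWang2015}, which is exactly what you do, with the added (and harmless) detail of sketching the internals of those algorithms and verifying the hypotheses via Lemma~\ref{lemma:roots}.
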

\begin{proof}
The statement for real roots is Theorem 3 of 
Sagraloff and Mehlhorn~\cite{SagraloffMehlhorn2016}; an implementation is discussed in 
Kobel et al.~\cite{KobelRouillierSagraloff2016}.  The part concerning
intervals follows from Theorem 5 of Mehlhorn et al.~\cite{MehlhornSagraloffWang2015}.
\end{proof}
\begin{lemma}[Kobel and Sagraloff~\cite{KobelSagraloff2015}]
\label{lemma:feval} 
Let $P \in \mathbb{Z}[T]$ be a square-free polynomial of degree $d$
and height $h$, and $t_1,\dots,t_m \in \mathbb{C}$ be a sequence of
length $m=O(d)$.  Then for any positive integer $\kappa$,
approximations $a_1,\dots,a_m \in \mathbb{C}$ such that $|P
(t_j)-a_j|<2^{-\kappa}$ for all $1 \leq j \leq m$ can be computed in $
\tilde{O}(d (h+\kappa+d\log\max_j|t_j| ))$ bit operations, given
$t_1,\dots,t_m$ with $
\tilde{O}(h+\kappa+d\log\max_j |t_j|)$ bits after
the binary point. If
all $t_j$ are real, the approximations $a_j$ are also real.
\end{lemma}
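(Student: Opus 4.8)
The plan is to prove the lemma by the classical subproduct-tree algorithm for multipoint evaluation~\cite{GathenGerhard2003}, run in approximate fixed-point arithmetic, together with a precision analysis that is insensitive to the spacing of the points $t_j$. Set $\tau := \max\{1,\lceil\log_2\max_j|t_j|\rceil\}$, so the target reads: running time $\tilde{O}(d(h+\kappa+d\tau))$ and input accuracy $\tilde{O}(h+\kappa+d\tau)$ bits after the binary point. The computation rests on one primitive, \emph{approximate polynomial multiplication}: given $A,B\in\mathbb{C}[T]$ of degrees $\le n$ with coefficients of modulus $\le 2^{\sigma}$ supplied with $p$ bits after the binary point, all coefficients of $AB$ can be produced with $p$ bits after the point in $\tilde{O}(n(\sigma+p))$ bit operations --- round to fixed point, pack each integer polynomial into a single integer by Kronecker substitution with digit width $O(\sigma+p+\log n)$, perform one fast integer multiplication of bit-length $\tilde{O}(n(\sigma+p))$, unpack, the truncation being $O(n\,2^{\sigma-p})$ and absorbed by $O(\log n)$ guard bits. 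Approximate Euclidean division of a degree-$n$ polynomial by a monic divisor of degree $\le n$ then reduces, through the reversal trick and a Newton iteration for the reciprocal power series of the reversed divisor, to $O(\log n)$ such multiplications, hence obeys the same bound up to logarithmic factors, with the iteration costing only an $O(\log n)$ loss of accuracy.

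Next I would build the subproduct tree on the linear forms $T-t_j$, $j=1,\dots,m$: a node at height $\ell$ is a monic polynomial of degree $2^{\ell}$, a product of $2^{\ell}$ of the factors, hence with coefficients of modulus at most $\sum_{i=0}^{2^\ell}\binom{2^\ell}{i}2^{i\tau}\le 2^{2^\ell(\tau+1)}$, i.e.\ of height $O(2^\ell\tau)$. Building the tree level by level at a uniform working precision of $p$ bits after the point costs $\sum_{\ell=0}^{\lceil\log_2 m\rceil}\frac{m}{2^\ell}\cdot\tilde{O}\!\big(2^\ell(2^\ell\tau+p)\big)=\tilde{O}\!\big(d(d\tau+p)\big)$. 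Then I would run the downward remainder pass: the root carries $P$; at an internal node carrying a polynomial $R$ whose two children hold the subproducts $M_L,M_R$, replace $R$ by $R\bmod M_L$ and $R\bmod M_R$ and hand these to the children; at a leaf the attached constant is the value $P(t_j)$. The structural facts that the polynomial at any node of height $\ell$ agrees with $P$ at all $2^\ell$ leaves below it, and that reduction by a monic degree-$2^\ell$ subproduct multiplies the coefficient sup-norm by at most $2^{O(2^\ell\tau)}$ (write $R\bmod M=R-QM$ and bound $\|Q\|_\infty$ via the reversed reciprocal), give inductively that all node polynomials have height $O(h+d\tau)$ and degree $<2d$; since $\sum_\ell 2^\ell\tau=O(d\tau)$ no assumption on root separation is needed here. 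The descent therefore also costs $\tilde{O}(d(d\tau+p))$ once $p$ is fixed, and each carried number occupies $O(h+d\tau+p)$ bits.

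The only delicate point --- and the one I expect to be the main obstacle --- is to choose $p$ so that accumulated round-off still leaves every leaf value within $2^{-\kappa}$ of the truth, again without invoking root separation (which would otherwise enter through the condition number of the divisions). The key estimate is that an absolute perturbation of size $\varepsilon$ introduced at any node propagates to the leaf values below it with amplification at most $2^{O(d\tau)}$: evaluating a polynomial of degree $<2d$ and sup-norm $\delta$ at a point of modulus $\le 2^\tau$ gives a value of modulus $\le 2d\,\delta\,2^{2d\tau}$, and the same $2^{O(2^\ell\tau)}$ bound as above controls how a perturbation of a computed quotient or of a child subproduct moves a remainder. As the tree has $O(d)$ nodes and each contributes a rounding error of $2^{-p}$ plus a division error of $2^{-p}$, the total error at a leaf is at most $O(d)\,2^{-p}\,2^{O(d\tau)}=2^{O(d\tau)+\log d-p}$, so $p=\kappa+\tilde{O}(d\tau)$ suffices; then every carried number has bit-length $\tilde{O}(h+d\tau+\kappa)$, each of the $O(\log d)$ levels costs $\tilde{O}(d(h+d\tau+\kappa))$, and the total is $\tilde{O}(d(h+\kappa+d\tau))$, as claimed.

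It remains to pin down the input precision and the real case. Perturbing one $t_j$ by $2^{-q}$ changes $P(t_j)$ by at most $|P'(t_j)|\,2^{-q}\le d(d+1)2^{h+(d-1)\tau}2^{-q}$ and changes each subproduct containing the factor $T-t_j$ by a polynomial of height $O(d\tau)$, which composed with the $2^{O(d\tau)}$ amplification above moves the leaf values by at most $2^{h+O(d\tau)-q}$; hence supplying each $t_j$ with $q=\kappa+\tilde{O}(h+d\tau)$ bits after the binary point is enough, matching the hypothesis. Finally, if all $t_j$ are real then every polynomial arising in the subproduct and remainder trees has real coefficients, fixed-point rounding preserves this, and so the computed $a_j$ are real. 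Isolating the single clean inequality ``$2^{O(2^\ell\tau)}$ amplification per reduction'', which is what makes the whole precision analysis independent of how clustered the points are, is in my view the technical heart; the rest is routine fast polynomial arithmetic.
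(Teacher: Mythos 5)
Your proposal is correct in outline, but it takes a genuinely different route from the paper: the paper's entire proof of Lemma~\ref{lemma:feval} is a citation (Theorem~10 of Kobel and Sagraloff, plus their Appendix~B for the reality of the output), whereas you reconstruct the underlying argument from scratch via approximate subproduct-tree multipoint evaluation. Your reconstruction is essentially the argument behind the cited theorem, and you correctly isolate its crux --- that the $2^{O(2^\ell\tau)}$ amplification per reduction depends only on the magnitudes of the $t_j$ and not on their separation, so clustered evaluation points cost nothing. What your version buys is self-containedness; what it costs is that the precision bookkeeping needs more care than you give it. Two specific points: (i) your multiplication primitive as stated (``output with $p$ bits after the point from input with $p$ bits after the point'') silently loses $\sigma+\log n$ bits, not just $O(\log n)$ guard bits, so the per-level accuracy loss must be tracked as $O(\sigma_\ell + \log d)$ with $\sigma_\ell$ the height at that level; (ii) in the downward pass an error of $2^{-p}$ in a child subproduct $M$ perturbs $R\bmod M$ by roughly $2^{-p}\lVert Q\rVert_\infty\deg M = 2^{-p}\,2^{O(h+2^\ell\tau)}$, so the working precision should be $p=\kappa+\tilde{O}(h+d\tau)$ rather than $\kappa+\tilde{O}(d\tau)$ --- the $h$ cannot be dropped from the amplification, only from the subproduct heights. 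Neither issue changes the final bound, since the carried numbers already have integer parts of $O(h+d\tau)$ bits and the total per-coefficient bit-length remains $\tilde{O}(h+\kappa+d\tau)$, but a full write-up must carry the $h$ through the error propagation. With those corrections your argument is a valid substitute for the citation.
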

\begin{proof}
This follows from Theorem 10 of Kobel and Sagraloff~\cite{KobelSagraloff2015}; the
statement about real roots follows from the proof given in Appendix B of that paper. 
\end{proof}

\section*{Acknowledgments} The authors would like to thank
Mohab~Safey~El~Din and
\'Eric~Schost for several discussions during the preparation of this
work, and the anonymous referees for their close readings and helpful suggestions. SM was
partially funded by an NSERC Canadian Graduate Scholarship, an 
Eiffel Fellowship, and the Canada-France Research Fund.
BS has been supported in part by FastRelax ANR-14-CE25-0018-01.

\bibliographystyle{plain}
\bibliography{bibl}

\begin{thebibliography}{10}

\bibitem{Andre2000a}
Yves Andr{\'e}.
\newblock S\'eries {G}evrey de type arithm\'etique. {I}. {T}h\'eor\`emes de
  puret\'e et de dualit\'e.
\newblock {\em Annals of Mathematics. Second Series}, 151(2):705--740, 2000.

\bibitem{BaryshnikovMelczerPemantle2018a}
Y.~Baryshnikov, S.~Melczer, and R.~Pemantle.
\newblock Non-proper morse theory for complex algebraic varieties.
\newblock In preparation, 2018.

\bibitem{BaryshnikovPemantle2011}
Y.~Baryshnikov and R.~Pemantle.
\newblock Asymptotics of multivariate sequences, part iii: quadratic points.
\newblock {\em Adv. Math.}, 228:3127--3206, 2011.

\bibitem{BaryshnikovMelczerPemantleStraub2018}
Yuliy Baryshnikov, Stephen Melczer, Robin Pemantle, and Armin Straub.
\newblock {Diagonal Asymptotics for Symmetric Rational Functions via ACSV}.
\newblock In James~Allen Fill and Mark~Daniel Ward, editors, {\em 29th
  International Conference on Probabilistic, Combinatorial and Asymptotic
  Methods for the Analysis of Algorithms (AofA 2018)}, volume 110 of {\em
  Leibniz International Proceedings in Informatics (LIPIcs)}, pages
  12:1--12:15, Dagstuhl, Germany, 2018. Schloss Dagstuhl--Leibniz-Zentrum fuer
  Informatik.

\bibitem{BasuPollackRoy2006}
S.~Basu, R.~Pollack, and M.-F. Roy.
\newblock {\em Algorithms in real algebraic geometry}.
\newblock Springer-Verlag, Berlin, 2006.

\bibitem{Bender1973}
E.~A. Bender.
\newblock Central and local limit theorems applied to asymptotic enumeration.
\newblock {\em J. Combin. Theory Ser. A}, 15:91--111, 1973.

\bibitem{BenderRichmond1983}
E.~A. Bender and L.~B. Richmond.
\newblock Central and local limit theorems applied to asymptotic enumeration.
  ii. multivariate generating functions.
\newblock {\em J. Combin. Theory Ser. A}, 34:255--265, 1983.

\bibitem{BenderRichmond1999}
Edward~A. Bender and L.~Bruce Richmond.
\newblock Multivariate asymptotics for products of large powers with
  applications to {L}agrange inversion.
\newblock {\em Electron. J. Combin.}, 6:Research Paper 8, 21, 1999.

\bibitem{BostanLairezSalvy2013}
Alin Bostan, Pierre Lairez, and Bruno Salvy.
\newblock Creative telescoping for rational functions using the
  {G}riffiths-{D}work method.
\newblock In {\em ISSAC '13}, pages 93--100. ACM Press, 2013.

\bibitem{BostanLairezSalvy2016}
Alin Bostan, Pierre Lairez, and Bruno Salvy.
\newblock Multiple binomial sums.
\newblock {\em J. Symbolic Comput.}, (To appear), 2016.

\bibitem{BouzidiLazardPougetRouillier2015}
Yacine Bouzidi, Sylvain Lazard, Marc Pouget, and Fabrice Rouillier.
\newblock Separating linear forms and rational univariate representations of
  bivariate systems.
\newblock {\em J. Symbolic Comput.}, 68(part 1):84--119, 2015.

\bibitem{Brandwood1983}
D.~H. Brandwood.
\newblock A complex gradient operator and its application in adaptive array
  theory.
\newblock {\em Proc. IEE-H}, 130(1):11--16, 1983.

\bibitem{Bugeaud2004}
Yann Bugeaud.
\newblock {\em Approximation by algebraic numbers}, volume 160 of {\em
  Cambridge Tracts in Mathematics}.
\newblock Cambridge University Press, Cambridge, 2004.

\bibitem{BurgisserClausenShokrollahi1997}
Peter B{\"u}rgisser, Michael Clausen, and M.~Amin Shokrollahi.
\newblock {\em Algebraic complexity theory}, volume 315 of {\em Grundlehren der
  Mathematischen Wissenschaften}.
\newblock Springer-Verlag, 1997.

\bibitem{CastroPardoHageleMorais2001}
D.~Castro, L.~M. Pardo, K.~H{{\"a}}gele, and J.~E. Morais.
\newblock Kronecker's and {N}ewton's approaches to solving: a first comparison.
\newblock {\em J. Complexity}, 17(1):212--303, 2001.

\bibitem{Christol1990}
G.~Christol.
\newblock Globally bounded solutions of differential equations.
\newblock In {\em Analytic number theory ({T}okyo, 1988)}, volume 1434 of {\em
  Lecture Notes in Math.}, pages 45--64. Springer Berlin Heidelberg, 1990.

\bibitem{Christol1984}
Gilles Christol.
\newblock Diagonales de fractions rationnelles et equations
  diff{\'e}rentielles.
\newblock In {\em Study group on ultrametric analysis, 10th year: 1982/83,
  {N}o. 2}, pages Exp. No. 18, 10. Inst. Henri Poincar{\'e}, Paris, 1984.

\bibitem{ChudnovskyChudnovsky1985}
D.~V. Chudnovsky and G.~V. Chudnovsky.
\newblock Applications of {P}ad{\'e} approximations to {D}iophantine
  inequalities in values of {$G$}-functions.
\newblock In {\em Number theory ({N}ew {Y}ork, 1983--84)}, volume 1135 of {\em
  Lecture Notes in Math.}, pages 9--51. Springer, Berlin, 1985.

\bibitem{CoxLittleOShea2005}
David~A. Cox, John Little, and Donal O'Shea.
\newblock {\em Using algebraic geometry}, volume 185 of {\em Graduate Texts in
  Mathematics}.
\newblock Springer, New York, second edition, 2005.

\bibitem{DAndreaKrickSombra2013}
Carlos D'Andrea, Teresa Krick, and Mart{\'{\i}}n Sombra.
\newblock Heights of varieties in multiprojective spaces and arithmetic
  {N}ullstellens{\"a}tze.
\newblock {\em Ann. Sci. {\'E}c. Norm. Sup{\'e}r. (4)}, 46(4):549--627 (2013),
  2013.

\bibitem{De-Bruijn1981}
N.~G. De~Bruijn.
\newblock {\em Asymptotic Methods in Analysis}.
\newblock Dover, 1981.
\newblock A reprint of the third North Holland edition, 1970 (first edition,
  1958).

\bibitem{DeVriesHoevenPemantle2011}
T.~DeVries, J.~van~der Hoeven, and R.~Pemantle.
\newblock Automatic asymptotics for coefficients of smooth, bivariate rational
  functions.
\newblock {\em Online J. Anal. Comb.}, 6:24 pages, 2011.

\bibitem{FlajoletSedgewick2009}
P.~Flajolet and R.~Sedgewick.
\newblock {\em Analytic Combinatorics}.
\newblock Cambridge University Press, 2009.

\bibitem{Furstenberg1967}
Harry Furstenberg.
\newblock Algebraic functions over finite fields.
\newblock {\em J. Algebra}, 7:271--277, 1967.

\bibitem{GaoRichmond1992}
Zhicheng Gao and L.~Bruce Richmond.
\newblock Central and local limit theorems applied to asymptotic enumeration.
  {IV}. {M}ultivariate generating functions.
\newblock {\em J. Comput. Appl. Math.}, 41(1-2):177--186, 1992.
\newblock Asymptotic methods in analysis and combinatorics.

\bibitem{GillisReznickZeilberger1983}
J.~Gillis, B.~Reznick, and D.~Zeilberger.
\newblock On elementary methods in positivity theory.
\newblock {\em SIAM J. Math. Anal.}, 14(2):396--398, 1983.

\bibitem{GiustiHeintzMoraisMorgensternPardo1998}
M.~Giusti, J.~Heintz, J.~E. Morais, J.~Morgenstern, and L.~M. Pardo.
\newblock Straight-line programs in geometric elimination theory.
\newblock {\em J. Pure Appl. Algebra}, 124(1-3):101--146, 1998.

\bibitem{GiustiLecerfSalvy2001}
Marc Giusti, Gr{\'e}goire Lecerf, and Bruno Salvy.
\newblock A {G}r{\"o}bner free alternative for polynomial system solving.
\newblock {\em J.Complexity}, 17(1):154--211, 2001.

\bibitem{GourdonSalvy1996}
Xavier Gourdon and Bruno Salvy.
\newblock Effective asymptotics of linear recurrences with rational
  coefficients.
\newblock {\em Discr. Math.}, 153(1--3):145--163, 1996.

\bibitem{Hadamard1954}
J.~Hadamard.
\newblock History of science and psychology of invention.
\newblock {\em Mathematika}, 1:1--3, 1954.

\bibitem{Hirschhorn2015}
Michael~D. Hirschhorn.
\newblock A connection between {$\pi$} and {$\phi$}.
\newblock {\em Fibonacci Quart.}, 53(1):42--47, 2015.

\bibitem{Hormander1990}
Lars H{\"o}rmander.
\newblock {\em An introduction to complex analysis in several variables},
  volume~7 of {\em North-Holland Mathematical Library}.
\newblock North-Holland Publishing Co., Amsterdam, third edition, 1990.

\bibitem{Jouanolou1991}
J.-P. Jouanolou.
\newblock Le formalisme du r{\'e}sultant.
\newblock {\em Adv. Math.}, 90(2):117--263, 1991.

\bibitem{Jungen1931}
R.~Jungen.
\newblock Sur les s{\'e}ries de {T}aylor n'ayant que des singularit{\'e}s
  alg{\'e}brico-logarithmiques sur leur cercle de convergence.
\newblock {\em Commentarii Mathematici Helvetici}, 3:266--306, 1931.

\bibitem{Katz1970}
Nicholas~M. Katz.
\newblock Nilpotent connections and the monodromy theorem: {A}pplications of a
  result of {T}urrittin.
\newblock {\em Inst. Hautes {\'E}tudes Sci. Publ. Math.}, (39):175--232, 1970.

\bibitem{KedlayaUmans2011}
Kiran~S. Kedlaya and Christopher Umans.
\newblock Fast polynomial factorization and modular composition.
\newblock {\em SIAM J. Comput.}, 40(6):1767--1802, 2011.

\bibitem{KobelRouillierSagraloff2016}
Alexander Kobel, Fabrice Rouillier, and Michael Sagraloff.
\newblock Computing real roots of real polynomials ... and now for real!
\newblock In {\em Proceedings of the ACM on International Symposium on Symbolic
  and Algebraic Computation}, ISSAC '16, pages 303--310, New York, NY, USA,
  2016. ACM.

\bibitem{KobelSagraloff2015}
Alexander Kobel and Michael Sagraloff.
\newblock On the complexity of computing with planar algebraic curves.
\newblock {\em J. Complexity}, 31(2):206--236, 2015.

\bibitem{Krantz1992}
Steven~G. Krantz.
\newblock {\em Function theory of several complex variables}.
\newblock The Wadsworth \& Brooks/Cole Mathematics Series. Wadsworth \&
  Brooks/Cole Advanced Books \& Software, Pacific Grove, CA, second edition,
  1992.

\bibitem{KrickPardoSombra2001}
Teresa Krick, Luis~Miguel Pardo, and Mart{\'{\i}}n Sombra.
\newblock Sharp estimates for the arithmetic {N}ullstellensatz.
\newblock {\em Duke Math. J.}, 109(3):521--598, 2001.

\bibitem{Kronecker1882}
L.~Kronecker.
\newblock Grundz{\"u}ge einer arithmetischen theorie der algebraischen
  gr{\"o}ssen.
\newblock {\em Journal f{\"u}r die reine und angewandte Mathematik}, 92:1--122,
  1882.

\bibitem{Lairez2016}
Pierre Lairez.
\newblock Computing periods of rational integrals.
\newblock {\em Math. Comp.}, 85(300):1719--1752, 2016.

\bibitem{Macaulay1916}
F.~S Macaulay.
\newblock {\em The algebraic theory of modular systems}.
\newblock Cambridge tracts in mathematics and mathematical physics ; no. 19.
  Cambridge Univ. Press, Cambridge, Cambridge [Eng.], 1916.

\bibitem{McIntosh1996}
Richard~J. McIntosh.
\newblock An asymptotic formula for binomial sums.
\newblock {\em Journal of Number Theory}, 58(1):158--172, 1996.

\bibitem{MehlhornSagraloffWang2015}
Kurt Mehlhorn, Michael Sagraloff, and Pengming Wang.
\newblock From approximate factorization to root isolation with application to
  cylindrical algebraic decomposition.
\newblock {\em J. Symbolic Comput.}, 66:34--69, 2015.

\bibitem{MelczerSalvy2016}
Stephen Melczer and Bruno Salvy.
\newblock Symbolic-numeric tools for analytic combinatorics in several
  variables.
\newblock {\em I{SSAC} 2016---{P}roceedings of the 41st {I}nternational
  {S}ymposium on {S}ymbolic and {A}lgebraic {C}omputation}, 2016.

\bibitem{Mezzarobba2010}
Marc Mezzarobba.
\newblock Numgfun: a package for numerical and analytic computation with
  {D}-finite functions.
\newblock In {\em Proceedings of the 2010 International Symposium on Symbolic
  and Algebraic Computation (ISSAC 2010)}, pages 139--145. ACM, 2010.

\bibitem{Mezzarobba2016}
Marc Mezzarobba.
\newblock Rigorous multiple-precision evaluation of {D}-finite functions in
  {S}age{M}ath.
\newblock Technical Report 1607.01967, arXiv, 2016.
\newblock Extended abstract of a talk at the 5th International Congress on
  Mathematical Software.

\bibitem{Mignotte1992}
Maurice Mignotte.
\newblock {\em Mathematics for Computer Algebra}.
\newblock Springer New York, 1992.

\bibitem{Odlyzko1995}
A.~M. Odlyzko.
\newblock Asymptotic enumeration methods.
\newblock In R.~Graham, M.~Gr{\"o}tschel, and L.~Lov{\'a}sz, editors, {\em
  Handbook of combinatorics, Vol.~2}, pages 1063--1229. Elsevier, Amsterdam,
  1995.

\bibitem{Olver1974}
F.~W.~J. Olver.
\newblock {\em Asymptotics and Special Functions}.
\newblock Academic Press, 1974.

\bibitem{OuaknineWorrell2012}
Jo{{\"e}}l Ouaknine and James Worrell.
\newblock Decision problems for linear recurrence sequences.
\newblock In {\em Reachability problems}, volume 7550 of {\em Lecture Notes in
  Comput. Sci.}, pages 21--28. Springer, Heidelberg, 2012.

\bibitem{OuaknineWorrell2014}
Jo{{\"e}}l Ouaknine and James Worrell.
\newblock Ultimate positivity is decidable for simple linear recurrence
  sequences.
\newblock In {\em Automata, languages, and programming. {P}art {II}}, volume
  8573 of {\em Lecture Notes in Comput. Sci.}, pages 330--341. Springer,
  Heidelberg, 2014.

\bibitem{PemantleWilson2002}
R.~Pemantle and M.~C. Wilson.
\newblock Asymptotics of multivariate sequences: {I. S}mooth points of the
  singular variety.
\newblock {\em J. Comb. Theory, Ser. A}, 97(1):129--161, 2002.

\bibitem{PemantleWilson2004}
R.~Pemantle and M.~C. Wilson.
\newblock Asymptotics of multivariate sequences. {II}. {M}ultiple points of the
  singular variety.
\newblock {\em Combin. Probab. Comput.}, 13(4-5):735--761, 2004.

\bibitem{PemantleWilson2008}
R.~Pemantle and M.~C. Wilson.
\newblock Twenty combinatorial examples of asymptotics derived from
  multivariate generating functions.
\newblock {\em SIAM Review}, 50:199--272, 2008.

\bibitem{PemantleWilson2013}
R.~Pemantle and M.~C. Wilson.
\newblock {\em Analytic Combinatorics in Several Variables}.
\newblock Cambridge University Press, 2013.

\bibitem{Raichev2012}
A.~Raichev.
\newblock amgf documentation -- release 0.8.
\newblock \url{https://github.com/araichev/amgf}, 2012.

\bibitem{Remmert1991}
Reinhold Remmert.
\newblock {\em Theory of complex functions}, volume 122 of {\em Graduate Texts
  in Mathematics}.
\newblock Springer-Verlag, New York, 1991.
\newblock Translated from the second German edition by Robert B. Burckel,
  Readings in Mathematics.

\bibitem{Rouillier1999}
Fabrice Rouillier.
\newblock Solving zero-dimensional systems through the rational univariate
  representation.
\newblock {\em Appl. Algebra Engrg. Comm. Comput.}, 9(5):433--461, 1999.

\bibitem{Safey-El-DinSchost2018}
Mohab Safey El~Din and Eric Schost.
\newblock Bit complexity for multi-homogeneous polynomial system solving
  application to polynomial minimization.
\newblock {\em Journal of Symbolic Computation}, 87:176--206, 2018.

\bibitem{SagraloffMehlhorn2016}
Michael Sagraloff and Kurt Mehlhorn.
\newblock Computing real roots of real polynomials.
\newblock {\em J. Symbolic Comput.}, 73:46--86, 2016.

\bibitem{Schost2001}
{{\'E}}ric Schost.
\newblock {\em Sur la r{\'e}solution des syst{\`e}mes polynomiaux {\`a}
  param{\`e}tres}.
\newblock PhD thesis, {\'E}cole polytechnique, 2001.

\bibitem{SchreierScharf2010}
Peter~J. Schreier and Louis~L. Scharf.
\newblock {\em Statistical signal processing of complex-valued data}.
\newblock Cambridge University Press, Cambridge, 2010.
\newblock The theory of improper and noncircular signals.

\bibitem{Straub2008}
Armin Straub.
\newblock Positivity of {S}zeg{\"o}'s rational function.
\newblock {\em Adv. in Appl. Math.}, 41(2):255--264, 2008.

\bibitem{HoevenLecerf2018}
Joris van~der Hoeven and Gr{\'e}goire Lecerf.
\newblock On the complexity exponent of polynomial system solving.
\newblock Preprint hal-01848572, HAL, 2018.

\bibitem{Vivanti1893}
G.~Vivanti.
\newblock Sulle serie di potenze.
\newblock {\em Annali di Matematica Pura ed Applicata (1867-1897)},
  21(1):193--194, Jan 1893.

\bibitem{GathenGerhard2003}
Joachim von~zur Gathen and J{\"u}rgen Gerhard.
\newblock {\em Modern computer algebra}.
\newblock Cambridge University Press, 2003.

\bibitem{Wong1989}
Roderick Wong.
\newblock {\em Asymptotic Approximations of Integrals}.
\newblock Academic Press, 1989.

\end{thebibliography}

\end{document}